\documentclass[11pt]{article}

    \usepackage[T1]{fontenc}
    \usepackage{lmodern}
    \usepackage{amsmath,amssymb,amsthm,mathtools}
    \usepackage{microtype}
    \usepackage[margin=1in]{geometry}
    \usepackage{enumitem}
    \usepackage{aliascnt}
    \usepackage{algorithm}
\usepackage{algpseudocode}
    \usepackage[hidelinks]{hyperref}
    \hypersetup{pdftitle={The Exact Online Threshold for the Asymmetric Binary Perceptron}}
    \usepackage[nameinlink,noabbrev]{cleveref}

    \allowdisplaybreaks

    \newtheorem{theorem}{Theorem}[section]

    \newaliascnt{proposition}{theorem}
    \newtheorem{proposition}[proposition]{Proposition}
    \aliascntresetthe{proposition}

    \newaliascnt{lemma}{theorem}
    \newtheorem{lemma}[lemma]{Lemma}
    \aliascntresetthe{lemma}

    \newaliascnt{corollary}{theorem}
    \newtheorem{corollary}[corollary]{Corollary}
    \aliascntresetthe{corollary}

    \theoremstyle{definition}
    \newaliascnt{definition}{theorem}
    \newtheorem{definition}[definition]{Definition}
    \aliascntresetthe{definition}

    \theoremstyle{remark}
    \newaliascnt{remark}{theorem}
    \newtheorem{remark}[remark]{Remark}
    \aliascntresetthe{remark}

    \crefname{theorem}{Theorem}{Theorems}
    \Crefname{theorem}{Theorem}{Theorems}
    \crefname{proposition}{Proposition}{Propositions}
    \Crefname{proposition}{Proposition}{Propositions}
    \crefname{lemma}{Lemma}{Lemmas}
    \Crefname{lemma}{Lemma}{Lemmas}
    \crefname{corollary}{Corollary}{Corollaries}
    \Crefname{corollary}{Corollary}{Corollaries}
    \crefname{definition}{Definition}{Definitions}
    \Crefname{definition}{Definition}{Definitions}
    \crefname{remark}{Remark}{Remarks}
    \Crefname{remark}{Remark}{Remarks}
    \crefname{section}{Section}{Sections}
    \Crefname{section}{Section}{Sections}
    \crefname{appendix}{Appendix}{Appendices}
    \Crefname{appendix}{Appendix}{Appendices}
    \crefname{algorithm}{Algorithm}{Algorithms}
    \Crefname{algorithm}{Algorithm}{Algorithms}

    \newcommand{\R}{\mathbb{R}}
    \newcommand{\N}{\mathbb{N}}
    \newcommand{\E}{\mathbb{E}}
    \newcommand{\Pp}{\mathbb{P}}
    \newcommand{\1}{\mathbf{1}}
    \newcommand{\cF}{\mathcal{F}}
    \newcommand{\cG}{\mathcal{G}}
    \newcommand{\cH}{\mathcal{H}}
    \newcommand{\cL}{\mathcal{L}}
    \newcommand{\norm}[1]{\lVert #1\rVert}
    \newcommand{\abs}[1]{\lvert #1\rvert}
    \newcommand{\esssup}{\operatorname*{ess\,sup}}
    \newcommand{\dist}{\operatorname{dist}}
    \newcommand{\aon}{\alpha_{\mathrm{on}}}
    \newcommand{\Estar}{E_\star}
    \newcommand{\cgauss}{c_{\mathrm G}}
    \newcommand{\ca}{c_{\alpha}}
    \newcommand{\ph}{\varphi}
    \newcommand{\Ph}{\Phi}

    \title{The Exact Online Threshold for the Asymmetric Binary Perceptron}
    \author{%
  Sunghyeon Jo\\
  {\small Georgia Institute of Technology and QED Audit}\\
  {\small\texttt{sjo65@gatech.edu}}
  \and
  Taekyun Lee\\
  {\small The University of Texas at Austin}\\
  {\small\texttt{taekyun@utexas.edu}}
}
    \date{August 2026}

\begin{document}
    \maketitle
 \begin{abstract}

  Let $G\in\R^{M\times N}$ have independent standard Gaussian entries.  For a
  fixed margin $\kappa\in\R$, the asymmetric binary perceptron asks for
  $\sigma\in\{\pm1\}^N$ such that $G\sigma/\sqrt N\ge\kappa\1_M$.  We study
  the online version of this problem, in which the columns of $G$ arrive
  sequentially and each sign must be chosen irrevocably before future columns
  are revealed.

  We determine the exact threshold $\aon(\kappa)$ for every fixed $\kappa$:
  for $M/N\to\alpha$ with $\alpha<\aon(\kappa)$, there is a deterministic
  online algorithm, using $O(MN)$ arithmetic operations and polynomial
  bit complexity, that succeeds with high probability, while for $\alpha>\aon(\kappa)$, no online algorithm succeeds
  with high probability.  The threshold is characterized by a one-dimensional
  stochastic control problem for Brownian motion.  The main difficulty is to upgrade a single-coordinate
  Brownian limit to simultaneous feasibility of all $M=\Theta(N)$ constraints,
  which we do with half-line monotonicity and a short final correction block.

  At zero margin, we give a computer-assisted proof that
  $0.32747<\aon(0)<0.36664$.  
  In particular, every density below $0.32747$ is achievable online by such
  an algorithm, more than tripling the best density previously proved
  attainable by any polynomial-time algorithm, online or offline
  ($\alpha\le0.1$, Li--Schramm--Zhou~\cite{LiSchrammZhou25}).
  As
  $\kappa\to+\infty$, the online threshold agrees to first order with the
  offline storage capacity.  As $\kappa\to-\infty$, it has the same
  asymptotic scale as the best known offline polynomial-time guarantee, while
  the storage capacity is larger by a factor of order $\kappa^2$.
  \end{abstract}

\section{Introduction}
\label{sec:introduction}

The asymmetric binary perceptron (ABP), also called the asymmetric Ising perceptron, is a dense random constraint satisfaction problem in which one seeks a vertex of the hypercube lying in the intersection of random half-spaces.  Given $G\in\R^{M\times N}$ with independent standard Gaussian entries and a margin $\kappa\in\R$, it asks for $\sigma\in\{\pm1\}^N$ with $G\sigma/\sqrt N\ge\kappa\1_M$.  When $M/N\to\alpha$, the classical feasibility question asks for the supremal constraint density at which such a $\sigma$ exists with high probability.  This critical density is the storage capacity (or satisfiability threshold) of the ABP~\cite{KrauthMezard89,Xu21,NakajimaSun23}.  A second line of work asks for the largest constraint density achievable by efficient algorithms~\cite{KimRoche98,LiSchrammZhou25}.

In this paper, we impose an additional information constraint.  Write $G=(g_1,\ldots,g_N)$ by columns.  At time $k$, an algorithm observes $g_1,\ldots,g_k$ and must choose $\sigma_k\in\{\pm1\}$ before $g_{k+1}$ is revealed.  The choice is irrevocable.  We ask for the supremal constraint density at which an algorithm with this information pattern can satisfy all $M$ inequalities with probability tending to one.  Denote this threshold by $\aon(\kappa)$.

This online model is closely related to online stochastic vector balancing: a column is an arriving Gaussian vector and its sign must be chosen immediately.  The objective here, however, is feasibility rather than minimization of a symmetric discrepancy.  The one-column geometry already suggests the limiting control problem. If $Z\sim N(0,I_M/N)$ is the next normalized Gaussian column, then, conditional on the information revealed so far, every rule for choosing $\varepsilon\in\{\pm1\}$ after observing $Z$ induces a mean drift $b$ satisfying
\[
    \norm{b}_2\le\sqrt{\frac{2}{\pi N}}.
\]
Conversely, every predictable $b$ in this Euclidean ball can be realized by a sign choice while retaining a fresh Gaussian innovation.  When $M/N\to\alpha$, the corresponding squared $L^2$ budget per coordinate is
\[
    \frac{2}{\pi\alpha}.
\]
This leads to the following one-dimensional Brownian control problem.  Let $B$ be standard Brownian motion with its completed natural filtration and set
\begin{equation}
    \Estar(\kappa)
    :=
    \inf\left\{
        \esssup_{0<t<1}\E u_t^2:
        B_1+\int_0^1u_t\,dt\ge\kappa\ \text{a.s.}
    \right\},
\label{eq:estar}
\end{equation}
where the infimum is over progressively measurable controls.  We refer to $\Estar(\kappa)$ as the Brownian control value.

Our main result is that this control problem gives the exact online threshold.

\begin{theorem}[Exact online threshold]
\label{thm:main}
For every fixed $\kappa\in\R$,
\begin{equation}
    \aon(\kappa)
    =
    \frac{2}{\pi\Estar(\kappa)}.
\label{eq:main-threshold}
\end{equation}
Moreover, if
\[
    \alpha
    <
    \frac{2}{\pi\Estar(\kappa)},
\]
then there is a constant $\rho>0$ and a deterministic online algorithm such that, for every integer sequence $M_N/N\to\alpha$,
\[
    \Pp\left(
        \min_{1\le i\le M_N}
        \frac{(G\sigma)_i}{\sqrt N}
        \ge\kappa+\rho
    \right)
    \longrightarrow1.
\]
This algorithm uses $O(MN)$ arithmetic operations.
\end{theorem}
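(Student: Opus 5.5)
The proof splits into a converse (no online algorithm succeeds for $\alpha>2/(\pi\Estar(\kappa))$) and an achievability part (an explicit $O(MN)$ algorithm with margin $\kappa+\rho$ for $\alpha<2/(\pi\Estar(\kappa))$). Both reduce to the one-column geometry described above. Let $\cF_k$ be generated by $g_1,\dots,g_k$. If $\sigma_{k+1}$ is any $\cF_{k+1}$-measurable sign, then $b_{k+1}:=\E[\sigma_{k+1}g_{k+1}/\sqrt N\mid\cF_k]$ satisfies $\norm{b_{k+1}}_2\le\sqrt{2/(\pi N)}$. Indeed, for a unit $v$, $\langle v,b_{k+1}\rangle\le\E\abs{\langle v,g_{k+1}\rangle}/\sqrt N=\sqrt{2/(\pi N)}$. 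Conversely, a predictable target $b$ in this ball is realized by the threshold rule $\sigma=\mathrm{sign}(\langle g_{k+1},b\rangle+c)$, with $c$ chosen so that the drift has the right length. The remainder $\sigma g_{k+1}/\sqrt N-b$ is then a martingale increment with conditional covariance at most $I_M/N$, and exactly the identity up to $O(1/N)$ in rank-one directions.

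\emph{Converse.} Write $S_k=\sum_{j\le k}\sigma_jg_j/\sqrt N=\sum_j b_j+W_k$, where $W$ is a martingale. Summing the budget over coordinates gives $\sum_i (b_j)_i^2\le 2/(\pi N)$ at each step. The average over $i$ of $N\E(b_j)_i^2$ is therefore at most $2/(\pi\alpha N)\cdot N/M\cdot N\to 2/(\pi\alpha)$ per unit time. I would pick a uniformly random row $i$ and rescale time $t=k/N$. The pair (coordinate $i$ of $W$, coordinate $i$ of the drift) should then converge, along subsequences, to a Brownian motion plus an integrated control $u$ adapted to a filtration in which $B$ is still a Brownian motion. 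The bound $\E u_t^2\le 2/(\pi\alpha)$ must hold for a.e.\ $t$, and for this I need to average carefully over $i$ at each time separately, using Jensen and convexity. Success with high probability forces $B_1+\int u\ge\kappa$ a.s.\ in the limit. A weak (relaxed) formulation of \eqref{eq:estar} has the same value as the strong one; I would prove this by a standard approximation or convexity argument showing that the enlarged filtration does not help, since the extra randomness is independent of future increments. This gives $\Estar(\kappa)\le 2/(\pi\alpha)$, i.e.\ $\alpha\le 2/(\pi\Estar)$. I expect the technical care here to go into the row-averaging and the tightness of the quadratic variation, which is close to $t$ by concentration of $\sum_j g_{ij}^2/N$.

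\emph{Achievability.} For $\alpha<2/(\pi\Estar(\kappa))$, choose a control with $\esssup\E u_t^2<2/(\pi\alpha)$ that reaches $\kappa+2\rho$ with slack. By approximation, take it Markov and bounded with Lipschitz feedback $u(t,x)$ for $t\le 1-\delta$; this is possible because the value is continuous in $\kappa$ and in the budget. The algorithm runs feedback on all rows simultaneously. At step $k$ it sets the target $(b)_i=u(k/N,(S_k)_i)/\sqrt N$ for $k\le(1-\delta)N$, rescaled into the ball if needed, and realizes it by the threshold rule. The squared norm $\sum_i u(k/N,S_{k,i})^2/N$ concentrates around $\alpha\E u_t^2<2/\pi$ by a law of large numbers over rows, which requires propagation of chaos for the weakly coupled rows. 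The rows are coupled only through the common $g$ and through the scalar threshold, so a coupling to independent diffusions gives empirical-measure convergence.

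\textbf{Main obstacle.} The hard part is upgrading the single-coordinate limit to simultaneous feasibility of all $M$ rows: the typical-row convergence controls only a $1-o(1)$ fraction. Here I would use half-line monotonicity. Choose $u(t,\cdot)$ nonincreasing in $x$ and vanishing for $x$ far above the target, so that rows below the typical trajectory receive at least as much drift, and couple each row comparison-wise with the diffusion. This yields a uniform Gaussian-type tail bound $\Pp(S_{(1-\delta)N,i}<\kappa+\rho-\eta)\le e^{-c/\delta}$ that is small but not $o(1/M)$. The final block of $\delta N$ columns then corrects the exceptional rows: there are at most $e^{-c/\delta}M$ of them, each with bounded deficit. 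I would steer the signs with a discrepancy/greedy rule that pushes those rows up by $\Theta(\sqrt\delta)$ while moving the other rows only by $O(\sqrt{\delta})$ Gaussian fluctuations, and then take a union bound over rows. Each step uses $O(M)$ operations, for $O(MN)$ total.
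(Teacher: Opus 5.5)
Your converse follows the paper: Doob decomposition, the one-column ball, a uniformly random row, and compactness. The enlarged filtration is then removed by the projection $u_t=\E[a_t\mid\cF_t^B]$. Independence of future increments gives $\E[\int_0^1a_t\,dt\mid\cF_1^B]=\int_0^1u_t\,dt$, so the half-line constraint survives, and Jensen gives $\E u_t^2\le\E a_t^2$. (Your budget computation should read $\frac{N^2}{M}\norm{b_j}_2^2\le\frac{2N}{\pi M}\to\frac{2}{\pi\alpha}$.)

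The gap is in achievability, exactly at the step you flag. In your final block the non-exceptional rows move only by fresh Gaussian fluctuations of variance about $\delta$. Each of them therefore loses its available margin, of order $\rho$, with probability of order $e^{-\rho^2/(2\delta)}$. This is a constant independent of $N$, so $\Theta(M)$ rows fail. The union bound over $M=\Theta(N)$ rows needs per-row failure probability $o(1/N)$ and cannot close; equivalently, the minimum over rows of the block fluctuation is about $-\sqrt{2\delta\log M}\to-\infty$. With the first-stage control fixed, letting $\delta=\delta_N\to0$ does not help: lifting the exceptional rows then costs their deficit divided by $\delta_N$, whose empirical $L^2$ norm diverges. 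Two further claims are also off. With bounded first-stage drift the worst deficits are of order $\sqrt{2\log M}$, not bounded. Nothing supports the $e^{-c/\delta}$ bound on the bad fraction either; that fraction is governed by the approximation error, not by $\delta$.

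The missing idea is that \emph{every} row must be actively controlled in the final block, which the paper does with two drifts.
\begin{itemize}
\item Row $i$ gets the deterministic drift $(a-X_{k_0,i})_+/h_N$. This is affordable because the one-column constraint is an $\ell^2$ constraint across rows, so a few deficits of size $\sqrt{\log N}$ are harmless once the shortfall is small in $L^p$ with $p>4$.
\item Superimposed on this is the F\"ollmer drift of a compactly supported terminal law (the shrinking tent). It confines every row's fresh fluctuation to $[-s_p(h_N,c_F),s_p(h_N,c_F)]$ at $L^p$ cost $c_F$, with Euler error $O(M^{1/p}N^{-1})$ in expected maximum.
\end{itemize}
The conditional Gaussian coupling then transfers this to all signed sums simultaneously.

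Your first stage also does not deliver the small $L^p$ shortfall this repair needs. Approximating by a bounded Lipschitz Markov feedback ``because the value is continuous in $\kappa$ and in the budget'' is not an argument. Continuity of $\Estar$ produces no approximant, and perturbing a feedback changes the closed-loop law, so you get no quantitative control of the terminal shortfall. The paper instead replaces the control by $u_++\eta$ and approximates it by nonnegative simple predictable controls depending on finitely many Brownian values. The domination $(a-Y_T^\beta)_+\le(a-B_T)_+$ then upgrades $L^2$ convergence of the drift to $L^p$ convergence of the shortfall. These controls are evaluated on the innovation state $B$ returned by the steering map, whose coordinates are exactly independent Brownian motions. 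As a result the paper needs no propagation-of-chaos argument for rows coupled through the common sign.
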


The theorem is an exact characterization of the online threshold, rather than only an upper bound or an analysis of a particular algorithm.  Above the value in \eqref{eq:main-threshold}, no online algorithm can succeed with probability tending to one.  Below it, the proof gives a deterministic online algorithm.

A finite-precision implementation with polynomial bit complexity is given in \cref{app:finite-precision}.

\paragraph{Zero margin.}
At $\kappa=0$ we give rigorous numerical bounds on $\Estar(0)$.  The two sides use independent computer-assisted arguments.

\begin{theorem}[Zero margin]
\label{thm:zero}
One has
\begin{equation}
    1.73639
    <
    \Estar(0)
    \le\frac{243}{125}=1.944.
\label{eq:zero-E}
\end{equation}
Consequently,
\begin{equation}
    \frac{250}{243\pi}
    =
    0.3274793\ldots
    \le
    \aon(0)
    <
    \frac{200000}{173639\pi}
    =
    0.3666340\ldots.
\label{eq:zero-bracket}
\end{equation}
\end{theorem}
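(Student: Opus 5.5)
The bracket \eqref{eq:zero-bracket} follows from \eqref{eq:zero-E} and \cref{thm:main}, since $\aon(0)=2/(\pi\Estar(0))$ and $x\mapsto 2/(\pi x)$ is decreasing. Indeed $2/(\pi\cdot 243/125)=250/(243\pi)$ and $2/(\pi\cdot 1.73639)=200000/(173639\pi)$. So the work is the two-sided bound on $\Estar(0)$, and I would treat the two sides independently.

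\emph{Upper bound $\Estar(0)\le 243/125$.} It suffices to exhibit one progressively measurable control $u$ with $B_1+\int_0^1u_t\,dt\ge0$ a.s. and $\sup_t\E u_t^2\le 243/125$. The fraction $243/125=3^5/5^3$ suggests an explicit feedback of power type. My first attempt is a Markov control $u_t=f(t,X_t)$ with $X_t=B_t+\int_0^tu_s\,ds$. The drift $f$ should push $X$ upward near a moving barrier $-c(1-t)^{\beta}$ that closes at $0$ at time $1$, and vanish well above it. A cleaner candidate is the time-change $u_t=\lambda\,(X_t^-)/(1-t)$ plus a constant bias, or a piecewise-constant-in-time linear feedback. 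For such a control, $\E u_t^2$ reduces to a Gaussian (or Ornstein--Uhlenbeck) moment, and optimising over $\beta$ and $\lambda$ should give exactly $243/125$. The terminal constraint would be checked pathwise: near $t=1$ a drift of order $X^-/(1-t)$ with coefficient above $1$ forces $X_1^-=0$ a.s., as in the Brownian-bridge argument. The remaining task is an exact calculus maximisation of $t\mapsto\E u_t^2$, which I expect to be routine once the right ansatz is chosen.

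\emph{Lower bound $\Estar(0)>1.73639$.} This is the main obstacle and is computer-assisted. I would use duality. Suppose $u$ is feasible with $\sup_t\E u_t^2\le E$. Then for any test functional one gets an inequality violating feasibility unless $E$ is large. Concretely, I would discretise time into $n$ blocks. On each block the budget bounds $\E(\int_{I_j}u)^2\le |I_j|^2E$ by Cauchy--Schwarz, and the constraint requires the final displacement to beat $-B_1$.

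This yields a finite-horizon dynamic program. Let $V_j(x)$ be the minimal probability of failure, or an $L^2$ Lagrangian penalty, given position $x$ at block $j$ with per-block drift energy budgets. Lagrangian relaxation gives $\E u_t^2\cdot\mu(dt)$ with weights $\mu$, and then a Hamilton--Jacobi value function $V(t,x)=\sup_u\E[\ldots]$ solving $V_t+\tfrac12V_{xx}+\tfrac{1}{4\mu(t)}V_x^2=0$. Via Cole--Hopf this becomes a heat equation with terminal data $-\infty\cdot\1_{x<0}$ (a hard constraint).

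A valid certificate is a weight function $\mu$ (say piecewise constant) such that the resulting value at $(0,0)$ exceeds $\int\mu\,E\,dt$ unless $E>1.73639$. With piecewise constant $\mu$, each Cole--Hopf step is a Gaussian convolution of an explicit function. I would evaluate these with interval arithmetic: rigorous quadrature with tail bounds, and error bounds for truncating the spatial domain. Choosing good weights is a numerical optimisation. Rigor only needs verified evaluation of the final inequality.

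The delicate points are three. First, justifying the weak-duality inequality for general progressively measurable $u$ via a verification argument with Itô's formula and a localisation near $t=1$. Second, handling the singular terminal condition. Third, controlling the floating-point error in the nested convolutions tightly enough to certify the fifth digit.
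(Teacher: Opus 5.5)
Your deduction of \eqref{eq:zero-bracket} from \eqref{eq:zero-E} and \cref{thm:main} matches the paper, and so does your lower-bound plan. That plan is the weighted relaxation $\Estar(0)\ge V_0(\eta)$, the Cole--Hopf recursion $H_i=P_{\Delta_i}(H_{i+1}^{a_{i+1}/a_i})$ for piecewise-constant weights, and a verified one-sided enclosure of $H_0(0)$. The paper gets the needed inequality from the Gibbs variational principle and Girsanov's entropy bound rather than an It\^o verification argument.

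The genuine gap is the upper bound $\Estar(0)\le 243/125$, which is the achievability half $\aon(0)\ge 250/(243\pi)$. You never exhibit a control, and the route you sketch cannot produce one as described.

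A linear feedback makes $X$ Gaussian, so $X_1\ge0$ a.s.\ forces $X_1\equiv c$, i.e.\ $\int_0^1u_t\,dt=c-B_1$. Test this against $B_1-B_{1-\delta}$, using that $B_1-B_t$ is independent of $\cF_t^B$:
\[
    \delta
    =-\E\Bigl[(B_1-B_{1-\delta})\int_0^1u_t\,dt\Bigr]
    \le\int_{1-\delta}^1\sqrt{t-1+\delta}\,\bigl(\E u_t^2\bigr)^{1/2}\,dt
    \le\tfrac23\,\delta^{3/2}\sqrt{E}.
\]
Hence $E:=\sup_t\E u_t^2\ge 9/(4\delta)$ for every $\delta\in(0,1]$, so any such control has infinite energy.

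The nonlinear choice $u_t=\lambda X_t^-/(1-t)$ (with or without a bias) does not give Gaussian or OU moments. With $\tau=-\log(1-t)$ and $Z_\tau=X_t/\sqrt{1-t}$ it becomes the nonlinear diffusion $dZ=dW+(Z/2+\lambda Z^-)\,d\tau$, whose marginals have no elementary formula. So $\sup_t\E u_t^2$ is not a routine calculus maximisation. Also, $243/125=3^5/5^3$ is not the optimum of any closed-form family; in the paper it is just the rational target $\mathcal E_+$ of a certificate.

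What you are missing is a way to bound the moments of a nonlinear self-similar feedback without solving for its law, and a way to stop paying for the time-variation of $\E u_t^2$ under a supremum. The paper takes $h_t=(1-t)^{-1/2}g\bigl(X_t/\sqrt{1-t}\bigr)$ for a spline profile $g$. Your bridge comparison is exactly how $g(z)+z\ge0$ yields $X_1\ge0$.

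The profile inequalities \eqref{eq:profile-inequalities} then serve as Lyapunov bounds. These are $-\Lambda g\le\cL g\le -g$, a companion inequality for $\mathfrak r=Cg-(g')^2$, and nonnegative jumps of $\mathfrak r'$. They give $\gamma e^{-\Lambda\tau}\le\E g(Z_\tau)\le\gamma e^{-\tau}$ and a differential inequality for $\E g(Z_\tau)^2$. Together these yield explicit bounds on the mean $m(t)$ and the variance $v(t)\le\overline v(1-t)$ of $h_t$.

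The deterministic correction $u_t=h_t+\ell(t)$, with $\ell=\sqrt{\mathcal E_+-v}-m$, then makes $\E u_t^2=\mathcal E_+$ at every time. Meanwhile $\int_0^1\ell\,dt>0$ by \eqref{eq:barrier-integral} and \eqref{eq:mean-integral}, so the terminal constraint is preserved.

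Checking these inequalities for a concrete $g$ is a second computer-assisted proof, independent of the lower-bound one. It uses exact Bernstein bounds on the spline pieces, analytic tail estimates, and a rational lower sum for the integral. Your proposal has no substitute for it, so as written it proves only $\Estar(0)>1.73639$, i.e.\ $\aon(0)<200000/(173639\pi)$.
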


In particular, \cref{thm:main,thm:zero} give a deterministic online algorithm at every fixed density
\[
    \alpha<\frac{250}{243\pi}=0.3274793\ldots.
\]
For comparison, the classical multistage majority algorithm of Kim and Roche is proved to succeed at zero margin for $\alpha<0.005$~\cite{KimRoche98}, while Li, Schramm, and Zhou give an offline polynomial-time algorithm for $\alpha\le0.1$~\cite{LiSchrammZhou25}.  Thus a deterministic algorithm that must fix each sign before the later columns arrive still surpasses, by more than a factor of three, the best density previously proved attainable in polynomial time.  The storage capacity is at least as large, since it records only whether a satisfying $\sigma$ exists, which every successful online run exhibits.  Write $\alpha_\star(\kappa)$ for this storage capacity.  A recent computer-assisted proof of Shmalo, closing the outstanding numerical conditions in the matching bounds of Ding--Sun and Huang, gives
\[
    \alpha_\star(0)
    \in
    [0.833078599,0.833078600]
\]
\cite{DingSun25,Huang24,Shmalo26}.  Hence the exact online threshold is separated by a constant factor from the information-theoretic storage capacity.

\paragraph{Large margins.}
The Brownian characterization also gives sharp first-order behavior for large positive margin:
\begin{equation}
    \Estar(\kappa)
    =
    \kappa^2\bigl(1+O(\kappa^{-1})\bigr),
    \qquad
    \aon(\kappa)
    =
    \frac{2}{\pi\kappa^2}
    \bigl(1+O(\kappa^{-1})\bigr)
    \qquad
    (\kappa\to+\infty).
\label{eq:large-positive-intro}
\end{equation}
Li, Schramm, and Zhou give an offline polynomial-time construction with the same leading constant, while Stojnic's rigorous upper bound on the storage capacity has the same asymptotic form \cite{LiSchrammZhou25,Stojnic13}.  Since an algorithmic lower bound is also a lower bound on satisfiability, these two results imply that the offline storage capacity itself has leading term $2/(\pi\kappa^2)$.  Thus the online threshold and the offline storage capacity agree to first order as $\kappa\to+\infty$.

For large negative $\kappa$, the picture is different:
\[
    \aon(\kappa)
    =
    \Theta\!\left(
        \frac1{\kappa^2\Ph(\kappa)}
    \right)
    \qquad
    (\kappa\to-\infty),
\]
where $\Ph$ is the standard normal distribution function.  \Cref{thm:large-negative} gives the two-sided constants.

This is the same order as the best currently proved offline polynomial-time guarantee of Li, Schramm, and Zhou \cite{LiSchrammZhou25}.  In contrast, by \cite[Theorem~2.4]{LiSchrammZhou25} the offline storage capacity satisfies
\[
    \alpha_\star(\kappa)
    \sim
    \frac{\log 2}{\Ph(\kappa)}
    \qquad
    (\kappa\to-\infty),
\]
so it is larger than the online threshold by a factor of order $\kappa^2$.  Li, Schramm, and Zhou also obtain a multi-overlap-gap obstruction for stable algorithms at a scale larger by a $\log^2|\kappa|$ factor than their algorithmic guarantee.  The exact online threshold therefore singles out, without a computational assumption, the scale $1/(\kappa^2\Ph(\kappa))$ for algorithms constrained by sequential information.

\paragraph{Relation to online vector balancing.}
Online vector balancing and online discrepancy have been studied under several adversarial and stochastic input models \cite{BansalSpencer20,BansalJiangSinglaSinha20,GamarnikKizildagPerkinsXu23,KulkarniReisRothvoss24}, including a recent linear-time algorithm attaining the optimal prefix discrepancy \cite{AdenAli26}. Our starting point is the recent work of Fiedler, Jackson, Lacker, and Niles-Weed \cite{FiedlerJacksonLackerNilesWeed26}, who determine the mean-field limit of online stochastic vector balancing.  Their limiting problem asks for the narrowest terminal interval containing a controlled Brownian motion under a uniform-in-time $L^2$ constraint on the drift.  We use several components of their framework: the Doob decomposition and compactness argument identifying the Brownian limit, the sharp one-column Gaussian drift constraint and its realization by a sign, the Gaussian coupling controlling the maximum over coordinates, and the F\"ollmer drift construction.  Their results already allow the number of coordinates and the number of arrivals to grow proportionally, and we use that extension directly.

The main constructive difficulty is specific to the perceptron: the Brownian limit describes a typical coordinate, whereas success requires all $M=\Theta(N)$ inequalities to hold simultaneously.  The half-line terminal set $[\kappa,\infty)$ is what makes this tractable.  If a Brownian control is feasible, replacing its drift by its positive part preserves feasibility, and the terminal shortfall of the resulting nonnegative controls is dominated pointwise by the corresponding Brownian shortfall.  These two monotonicity properties let us approximate an arbitrary feasible Brownian control by simple predictable controls without assuming regularity of an optimizer, and a final-block construction then enforces all $M$ inequalities simultaneously.

\paragraph{Prior online results for the asymmetric perceptron.}
Closest to the present problem, Benedetti et al.~\cite{BenedettiBogdanovEtAl26} give an online algorithm for positive-tail activations.  For the ABP, its single-vector specialization finds a preimage of the all-positive output at sufficiently small constant density, with constant success probability, a guarantee subsumed by our exact online threshold.  Their main result instead constructs extensively separated colliding inputs, and their online lower bound concerns a different randomized non-periodic activation and the collision problem, rather than ABP feasibility.

\paragraph{Concurrent and independent work.}
In independent and concurrent work, Huang, Sellke, and Sun~\cite{HuangSellkeSun26} determine exact algorithmic thresholds for optimizing the Hamiltonian of the spherical and Ising perceptron, characterized, as here, by a one-dimensional stochastic control problem.  Their setting differs from ours: they treat offline algorithms with dimension-free Lipschitz dependence on the disorder, with hardness from the branching overlap gap property, whereas we choose signs online and ask for feasibility.  Neither result subsumes the other.

\paragraph{Proof overview.}
We briefly describe the two directions of \cref{thm:main}.  For impossibility above the threshold, decompose each signed Gaussian increment into its predictable drift and martingale difference.  The one-column drift bound implies a uniform empirical $L^2$ bound.  Following a uniformly chosen row and passing to a subsequence gives
\[
    B_t+\int_0^t a_s\,ds,
    \qquad
    \E a_t^2\le\frac{2}{\pi\alpha},
\]
with terminal value at least $\kappa$.  The limiting filtration may contain more information than the natural filtration of $B$; projecting the drift onto the Brownian filtration by conditional expectation preserves the half-line terminal inequality and can only decrease the second moment of the drift.  This gives $\Estar(\kappa)\le2/(\pi\alpha)$.

For achievability below the threshold, start from any feasible Brownian control with uniform $L^2$ norm strictly below $\ca=\sqrt{2/(\pi\alpha)}$.  After introducing a fixed amount of slack, the half-line monotonicity described above lets us approximate this control by a bounded simple predictable control depending on finitely many past Brownian values.  The one-column Gaussian construction realizes its coordinatewise drift using actual signs, and a conditional Gaussian coupling keeps the signed partial sums close to their Brownian counterparts in $\ell^\infty$.

To force all $M$ inequalities at once, we stop the first stage before time one and reserve the remaining block for coordinatewise correction.  At the start of that block, each coordinate with positive shortfall receives a deterministic drift equal to its shortfall divided by the block length.  A F\"ollmer drift associated with a compactly supported terminal law is superimposed to confine the fresh Gaussian fluctuation.  An $L^p$ bound with $p>4$ controls the empirical size of these coordinatewise drifts uniformly over the final block, while the conditional coupling makes the Brownian comparison simultaneous over all rows.  This yields a positive margin above $\kappa$ and proves achievability.

For the lower bound on $\Estar(0)$, we replace the uniform-in-time constraint by weighted integrated $L^2$ costs.  For piecewise-constant weights, the resulting stochastic control problem has an exact entropic representation and a scalar dynamic programming recursion.  A computer-assisted global enclosure of this recursion gives the lower endpoint in \eqref{eq:zero-E}.  The upper endpoint comes from an explicit self-similar diffusion: verified differential inequalities control the mean and variance of its drift, after which a deterministic correction makes the second moment constant in time while preserving the terminal half-line constraint.  For large negative margin, the upper bound uses a compactly supported terminal law and its F\"ollmer drift.  The lower bound reduces to a bivariate Gaussian tail on the correlation scale $1-t\asymp\kappa^{-2}$, which yields the explicit constants in the quantitative bound.

\paragraph{Organization.}
\Cref{sec:preliminaries} defines the online model and records the one-column Gaussian construction.  \Cref{sec:mean-field} proves the impossibility direction of \cref{thm:main}.  \Cref{sec:achievability} gives the online algorithm and proves achievability below $2/(\pi\Estar(\kappa))$. \Cref{sec:quantitative} proves \cref{thm:zero}, the large-positive-margin asymptotic, and \cref{thm:large-negative}.  The appendices contain the Brownian compactness argument, the conditional Gaussian coupling, the F\"ollmer and Euler estimates, the finite-precision analysis, and the computer-assisted verification.

    \section{Preliminaries}
\label{sec:preliminaries}

\subsection{Online algorithms and the constraint density}

Throughout the paper, $G\in\R^{M\times N}$ has i.i.d.\ standard Gaussian entries.  We write
\[
    G=(g_1,\ldots,g_N),
    \qquad
    g_k\in\R^M,
\]
and consider integer sequences $M=M_N$ such that
\[
    \frac{M}{N}\longrightarrow\alpha\in(0,\infty).
\]
As usual for the binary perceptron, we call $\alpha$ the constraint density.

The columns are revealed sequentially.  After observing $g_k$, an online algorithm chooses the sign $\sigma_k\in\{-1,+1\}$ before $g_{k+1}$ is revealed.  Let
\[
    \cF_k^G:=\sigma(g_1,\ldots,g_k),
    \qquad
    \cF_0^G:=\{\varnothing,\Omega\}.
\]

\begin{definition}[Online algorithm]
\label{def:online-algorithm-class}
A deterministic online algorithm is a sequence $\sigma_1,\ldots,\sigma_N\in\{-1,+1\}$ such that $\sigma_k$ is $\cF_k^G$-measurable for every $1\le k\le N$.  Its normalized partial sums are
\begin{equation}
    X_k
    :=
    \frac1{\sqrt N}
    \sum_{j=1}^k\sigma_jg_j,
    \qquad
    0\le k\le N,
\label{eq:partial-margin}
\end{equation}
with $X_0=0$.  The algorithm succeeds at margin $\kappa$ if
\[
    X_N\ge\kappa\1_M
\]
coordinatewise.
\end{definition}

\begin{definition}[Online threshold]
\label{def:online-threshold}
A constraint density $\alpha>0$ is \emph{online-achievable} at margin $\kappa$ if, for every integer sequence $M_N/N\to\alpha$, there is a sequence of online algorithms such that
\[
    \Pp\left(
        \min_{1\le i\le M_N}X_{N,i}\ge\kappa
    \right)
    \longrightarrow1.
\]
The online threshold is
\[
    \aon(\kappa)
    :=
    \sup\{
        \alpha>0:
        \alpha\text{ is online-achievable at margin }\kappa
    \}.
\]
\end{definition}

Allowing an independent random seed does not change this definition.  Indeed, if a randomized algorithm succeeds with probability $p_{M,N}$ over the matrix and the seed, then for each $(M,N)$ there is a fixed seed for which the conditional success probability is at least $p_{M,N}$.  We may therefore work with deterministic online algorithms throughout.

\subsection{The Brownian control problem}

Write
\[
    \ph(x)
    :=
    \frac1{\sqrt{2\pi}}e^{-x^2/2},
    \qquad
    \Ph(x)
    :=
    \int_{-\infty}^x\ph(y)\,dy
\]
for the standard normal density and distribution function.  Let $B=(B_t)_{0\le t\le1}$ be a standard Brownian motion with completed natural filtration $\cF^B=(\cF_t^B)_{0\le t\le1}$.  For an $\cF^B$-progressively measurable process $u$, define
\begin{equation}
    Y_t^u
    :=
    B_t+\int_0^tu_s\,ds,
    \qquad
    \norm{u}_{\infty,2}
    :=
    \esssup_{0<t<1}
    \bigl(\E u_t^2\bigr)^{1/2}.
\label{eq:brownian-controlled-path}
\end{equation}
Thus $\norm{u}_{\infty,2}$ is the mixed norm $L_t^\infty L_\omega^2$, i.e., $L^\infty((0,1);L^2(\Omega))$. We consider controls with $\norm{u}_{\infty,2}<\infty$; in particular, $\E\int_0^1u_t^2\,dt<\infty$.  The control value appearing in \cref{thm:main} is
\[
    \Estar(\kappa)
    =
    \inf\left\{
        \norm{u}_{\infty,2}^2:
        Y_1^u\ge\kappa\ \text{a.s.}
    \right\}.
\]

The following elementary estimates will also be used in \cref{sec:quantitative}.

\begin{lemma}
\label{lem:elementary-E-bounds}
For every $\kappa\in\R$,
\begin{equation}
    \E(\kappa-Z)_+^2
    \le
    \Estar(\kappa)
    \le
    \bigl(\kappa_++\sqrt8\bigr)^2,
    \qquad
    Z\sim N(0,1).
\label{eq:elementary-E-bounds}
\end{equation}
In particular,
\[
    0<\Estar(\kappa)<\infty.
\]
\end{lemma}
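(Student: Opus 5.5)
The lower bound in \eqref{eq:elementary-E-bounds} should follow from Cauchy--Schwarz alone. The upper bound needs one explicit feasible control, and I would build it from a F\"ollmer drift.

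\textbf{Lower bound.} Fix any $\cF^B$-progressively measurable $u$ with $\norm{u}_{\infty,2}<\infty$ and $Y_1^u\ge\kappa$ almost surely. Feasibility gives $\int_0^1u_t\,dt\ge\kappa-B_1$, and hence
\[
\Bigl(\int_0^1u_t\,dt\Bigr)^2\ge(\kappa-B_1)_+^2 .
\]
Cauchy--Schwarz in $t$ gives $\bigl(\int_0^1u_t\,dt\bigr)^2\le\int_0^1u_t^2\,dt$. Taking expectations and using Fubini,
\[
\E(\kappa-B_1)_+^2\le\int_0^1\E u_t^2\,dt\le\norm{u}_{\infty,2}^2 .
\]
Since $B_1\sim N(0,1)$, taking the infimum over feasible $u$ gives the left inequality. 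Strict positivity holds because $(\kappa-Z)_+>0$ on the event $\{Z<\kappa\}$, which has positive probability for every $\kappa$.

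\textbf{Upper bound: reduction to $\kappa=0$.} Suppose $v$ is a progressively measurable control with $\norm{v}_{\infty,2}\le\sqrt8$ and $B_1+\int_0^1v_t\,dt\ge0$ almost surely. Set $u_t:=\kappa_++v_t$. Then
\[
Y_1^u=\kappa_++Y_1^v\ge\kappa_+\ge\kappa ,
\]
and Minkowski's inequality in $L^2(\Omega)$ gives $\norm{u}_{\infty,2}\le\kappa_++\sqrt8$ for almost every $t$. This yields the right inequality and finiteness, so everything reduces to constructing $v$.

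\textbf{Construction of $v$ and the main obstacle.} I would take $v$ to be the F\"ollmer drift toward a terminal law $\mu$ supported on $[0,\infty)$ whose density vanishes smoothly at $0$. A natural choice is $d\mu/d\gamma=f$ with $f(x)=2x^2\1_{\{x>0\}}$; this is a probability density against $\gamma$ because $\E[Z^2;Z>0]=\tfrac12$. With $s:=\sqrt{1-t}$, the drift is
\[
v_t=\partial_y\log h(Y_t,s),\qquad h(y,s)=\E f(y+sZ)=2\,\E(y+sZ)_+^2 .
\]
The standard F\"ollmer theory (Girsanov against the $h$-transform) then gives the following.
\begin{itemize}
\item $Y^v$ is a Brownian motion conditioned so that $Y_1^v\sim\mu$.
\item Consequently $Y_1^v\ge0$ almost surely.
\item $\E v_t^2=\int\gamma_t(y)\,h(y,s)\,\bigl(\partial_y\log h(y,s)\bigr)^2\,dy$, where $\gamma_t$ is the $N(0,t)$ density.
\end{itemize}

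The main work is bounding this integral by $8$ uniformly in $t$. Heuristically, for $y$ of order $s$ one has $h\asymp s^2$ and $\partial_y\log h\asymp1/s$. That region carries mass of order $s\cdot s^2$, so it contributes $O(s)$. The region $y\ll -s$ is killed by the Gaussian factor $\ph(y/s)$ inside $h$, and for $y\gg s$ one has $\partial_y\log h\approx2/y$. So the integral stays bounded, and in particular does not blow up as $t\to1$. This is exactly the failure of the naive choice $f=2\,\1_{\{x>0\}}$, whose Fisher-type integral grows like $(1-t)^{-1/2}$.

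To get the explicit constant $8$, I would first write $h$ in closed form:
\[
h(y,s)=2\bigl[(y^2+s^2)\Ph(y/s)+ys\,\ph(y/s)\bigr].
\]
Next I would bound $(\partial_yh)^2/h\le C\,\Ph(y/s)$ pointwise, using $\partial_yh=4\E(y+sZ)_+$ and Cauchy--Schwarz in the form
\[
\bigl(\E(y+sZ)_+\bigr)^2\le\E(y+sZ)_+^2\cdot\Pp(y+sZ>0).
\]
This gives $(\partial_yh)^2/h\le8\,\Ph(y/s)$. Integrating against $\gamma_t$ then gives $\E v_t^2\le8$. If this pointwise constant did not come out to exactly $8$, I would fall back on a slightly different compactly supported $\mu$, as in the large-negative-margin section, accepting a worse numerical constant.
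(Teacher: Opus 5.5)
Your proof is correct. The lower bound and the reduction to $\kappa=0$ match the paper; you use Cauchy--Schwarz in time where the paper uses Minkowski, which makes no difference. Your zero-margin feasible drift, however, is genuinely different.

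The paper takes a five-dimensional Bessel process $R$ started at $\sqrt2$ and the drift $v_t=\sqrt2+2/R_t$. Then $W_t+\int_0^tv_s\,ds=R_t-\sqrt2+\sqrt2\,t$ ends at $R_1>0$. The only computation is It\^o's formula showing that $R^{-2}$ is a supermartingale, which gives $\E v_t^2\le2\cdot2+8\cdot\tfrac12=8$. Adaptedness is immediate from strong existence for the Bessel SDE.

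Your F\"ollmer drift toward $2x_+^2\,d\gamma$ needs more care:
\begin{itemize}
\item It requires a Girsanov change of measure with an unbounded density. \Cref{lem:localized-follmer} cannot be quoted verbatim, because this density is not compactly supported.
\item You should make explicit that the drift is adapted to the driving Brownian motion. Since $x\mapsto2x_+^2$ is log-concave, \cref{lem:follmer-derivative-bounds} gives $-1/(1-t)\le\partial_yv(t,\cdot)\le0$. Hence the feedback SDE has a unique strong solution on each $[0,T]$ with $T<1$.
\item The bound $\E\int_0^1v_t^2\,dt<\infty$ then supplies the terminal limit, with $Y_1>0$ almost surely.
\end{itemize}

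In return, your estimate is sharper than you claim. For independent standard normals $Z,Z'$,
\[
\int\gamma_t(y)\Ph(y/s)\,dy=\Pp(sZ'\le\sqrt t\,Z)=\tfrac12,
\]
so your pointwise bound yields $\E v_t^2\le4$. This agrees with the score-martingale bound $\int_0^\infty(2/x)^2\,2x^2\,\gamma(dx)=4$. Consequently $\Estar(\kappa)\le(\kappa_++2)^2$, and no fallback construction is needed.
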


\begin{proof}
Let $u$ be feasible at level $\kappa$ and put
\[
    A:=\int_0^1u_t\,dt.
\]
On $\{B_1<\kappa\}$, feasibility gives $A\ge\kappa-B_1>0$, and hence
\[
    A^2
    \ge
    (\kappa-B_1)^2
    \1_{\{B_1<\kappa\}}.
\]
By Minkowski's inequality,
\[
    \norm{A}_{L^2}
    \le
    \int_0^1\norm{u_t}_{L^2}\,dt
    \le
    \norm{u}_{\infty,2}.
\]
Taking expectations and then the infimum over feasible controls proves the lower bound.

For the upper bound, let $R$ be a five-dimensional Bessel process started from $r_0=\sqrt2$:
\[
    dR_t=dW_t+\frac2{R_t}\,dt,
    \qquad
    R_0=r_0,
\]
where $W$ is Brownian motion.  The process $R$ is strictly positive. Applying It\^o's formula to $R^{-2}$ up to the usual localization times and then removing the localization gives
\[
    d(R_t^{-2})
    =
    -2R_t^{-3}\,dW_t
    -
    R_t^{-4}\,dt,
    \qquad
    \E R_t^{-2}\le\frac12.
\]
Therefore
\[
    R_t-r_0+r_0t
    =
    W_t+\int_0^t
    \left(
        r_0+\frac2{R_s}
    \right)ds
\]
has terminal value $R_1>0$ almost surely.  Its drift
\[
    v_t:=r_0+\frac2{R_t}
\]
satisfies
\[
    \E v_t^2
    \le
    2r_0^2+8\E R_t^{-2}
    \le8.
\]
Thus $v+\kappa_+$ is feasible at level $\kappa$, and Minkowski's inequality gives
\[
    \norm{v+\kappa_+}_{\infty,2}
    \le
    \sqrt8+\kappa_+.
\]
This proves the upper bound in \eqref{eq:elementary-E-bounds}.
\end{proof}

\subsection{Admissible drifts for one Gaussian column}

The discrete process also has a sharp one-step constraint on its predictable drift.  The following construction is the Gaussian one-column argument used in~\cite[Section~1.4 and Lemma~4.7]{FiedlerJacksonLackerNilesWeed26}; we state it in the $M\times N$ normalization used throughout this paper.

Let $\xi\sim N(0,1)$ and set
\begin{equation}
    \cgauss
    :=
    \E|\xi|
    =
    \sqrt{\frac2\pi},
    \qquad
    \ca
    :=
    \frac{\cgauss}{\sqrt\alpha}
    =
    \sqrt{\frac{2}{\pi\alpha}}.
\label{eq:steering-constants}
\end{equation}

\begin{lemma}
\label{lem:one-column-drift-bound}
Let $Z\sim N(0,I_M/N)$ be independent of a sigma-field $\cH$, and let $\varepsilon\in\{-1,+1\}$ be measurable with respect to $\cH\vee\sigma(Z)$.  Then
\[
    b
    :=
    \E[\varepsilon Z\mid\cH]
\]
satisfies
\[
    \norm{b}_2
    \le
    \frac{\cgauss}{\sqrt N}
    \qquad\text{a.s.}
\]
\end{lemma}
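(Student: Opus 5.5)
The bound follows from duality: I will test $b$ against an arbitrary $\cH$-measurable unit direction and apply the scalar Gaussian bound $\E[\varepsilon\xi]\le\E|\xi|$. Since $b$ is $\cH$-measurable, it is enough to show that for every $\cH$-measurable random unit vector $w\in\R^M$ (with $\norm{w}_2=1$) we have $\langle w,b\rangle\le\cgauss/\sqrt N$ almost surely. Applying this to $w:=b/\norm{b}_2$ on $\{b\neq0\}$, and to an arbitrary fixed unit vector elsewhere, then gives $\norm{b}_2\le\cgauss/\sqrt N$. Because $w$ is $\cH$-measurable, we can pull it inside the conditional expectation:
\[
    \langle w,b\rangle=\E\bigl[\varepsilon\langle w,Z\rangle\,\big|\,\cH\bigr].
\]
Since $\varepsilon\in\{\pm1\}$, pointwise $\varepsilon\langle w,Z\rangle\le\abs{\langle w,Z\rangle}$. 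Hence
\[
    \langle w,b\rangle\le\E\bigl[\abs{\langle w,Z\rangle}\,\big|\,\cH\bigr].
\]

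The remaining step is to evaluate this last conditional expectation. Because $Z$ is independent of $\cH$ and $w$ is $\cH$-measurable, we use the freezing (substitution) lemma for conditional expectations. It gives $\E[\abs{\langle w,Z\rangle}\mid\cH]=h(w)$, where $h(v):=\E\abs{\langle v,Z\rangle}$ for deterministic $v$. For a deterministic unit vector $v$, the variable $\langle v,Z\rangle$ is distributed as $N(0,1/N)$, so
\[
    h(v)=N^{-1/2}\E\abs{\xi}=\cgauss/\sqrt N.
\]
The map $v\mapsto \E\abs{\langle v,Z\rangle}$ is continuous, hence measurable, so the substitution lemma applies. This yields $\langle w,b\rangle\le\cgauss/\sqrt N$ almost surely.

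The only delicate point is measure-theoretic: the direction $w=b/\norm{b}_2$ is random, and we need $b$ to be integrable so that the conditional expectation is defined. Integrability is immediate since $\abs{\varepsilon Z}=\abs{Z}$ is integrable. The random direction is handled by the substitution lemma above, using independence of $Z$ and $\cH$. Everything else is a one-line computation.

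This argument also indicates when equality holds. The choice $\varepsilon=\operatorname{sign}\langle w,Z\rangle$ attains $\cgauss/\sqrt N$ in direction $w$, which is consistent with the converse realization statement mentioned in the introduction.
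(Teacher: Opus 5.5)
Your proof is correct and is essentially the paper's argument: test $b$ against the $\cH$-measurable unit direction $b/\norm{b}_2$, bound $\varepsilon\langle v,Z\rangle$ by $\abs{\langle v,Z\rangle}$, and evaluate the resulting conditional mean absolute value as $\cgauss/\sqrt N$ using independence of $Z$ from $\cH$. Your explicit appeal to the freezing lemma and your integrability check simply spell out the step that the paper states in one line.
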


\begin{proof}
On $\{b\ne0\}$, the vector $v:=b/\norm{b}_2$ is $\cH$-measurable and has unit norm.  Hence
\[
    \norm{b}_2
    =
    \langle v,b\rangle
    =
    \E[
        \varepsilon\langle v,Z\rangle
        \mid\cH
    ]
    \le
    \E[
        |\langle v,Z\rangle|
        \mid\cH
    ]
    =
    \sqrt{\frac{2}{\pi N}}.
\]
The claim is trivial on $\{b=0\}$.
\end{proof}

Thus every conditional drift of one signed Gaussian column lies in the Euclidean ball of radius $\cgauss/\sqrt N$.  Conversely, every predictable vector in this ball can be realized exactly while retaining a fresh Gaussian innovation.

\begin{lemma}
\label{lem:one-column-steering}
Let $Z\sim N(0,I_M/N)$ be independent of a sigma-field $\cH$, and let $b\in\R^M$ be $\cH$-measurable with
\[
    \norm{b}_2
    \le
    \frac{\cgauss}{\sqrt N}.
\]
Then there are an $\cH\vee\sigma(Z)$-measurable sign $\varepsilon\in\{-1,+1\}$ and a random vector $\Delta B$ such that
\[
    \E[\varepsilon Z\mid\cH]=b,
    \qquad
    \Delta B\sim N(0,I_M/N),
\]
with $\Delta B$ independent of $\cH$.  Moreover, given $\cH$, the map
\[
    Z\longmapsto(\varepsilon,\Delta B)
\]
is invertible.

More explicitly, on $\{b\ne0\}$ set
\[
    v
    :=
    \frac{b}{\norm{b}_2},
    \qquad
    R
    :=
    \sqrt N\norm{b}_2\in[0,\cgauss],
    \qquad
    \Psi(R)
    :=
    \sqrt{
        -2\log\!\left(
            1-\frac{R}{\cgauss}
        \right)
    },
\]
with $\Psi(\cgauss)=+\infty$.  After observing $Z$, define
\begin{equation}
    \varepsilon
    :=
    \begin{cases}
    -1,
    &-\Psi(R)<\sqrt N\langle Z,v\rangle\le0,\\
    +1,
    &\text{otherwise},
    \end{cases}
\label{eq:steering-sign}
\end{equation}
and
\begin{equation}
    \Delta B
    :=
    \langle Z,v\rangle v
    +
    \varepsilon
    (I-vv^{\mathsf T})Z.
\label{eq:brownian-innovation}
\end{equation}
On $\{b=0\}$, take $\varepsilon=1$ and $\Delta B=Z$.
\end{lemma}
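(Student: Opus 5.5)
The proof is a direct verification with the explicit construction \eqref{eq:steering-sign}--\eqref{eq:brownian-innovation}. On $\{b=0\}$ everything is immediate, so we work on $\{b\ne0\}$. There $v$, $R$ and $\Psi(R)$ are $\cH$-measurable, and since $Z$ is independent of $\cH$ we may condition on $\cH$ and treat $v$ as a fixed unit vector. Decompose
\[
    Z=\xi v/\sqrt N+W,
    \qquad
    \xi:=\sqrt N\langle Z,v\rangle\sim N(0,1),
    \qquad
    W:=(I-vv^{\mathsf T})Z .
\]
Here $\xi$ and $W$ are independent, because this is an orthogonal splitting of an isotropic Gaussian, and $W$ is centered Gaussian with covariance $(I-vv^{\mathsf T})/N$. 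The sign $\varepsilon$ is a function of $\xi$ (and of $\cH$) only.

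\textbf{Drift identity.} By independence and $\E W=0$,
\[
    \E[\varepsilon Z\mid\cH]=\frac{v}{\sqrt N}\,\E[\varepsilon\xi\mid\cH].
\]
Let $\psi:=\Psi(R)$. Then $\varepsilon\xi=|\xi|$ except on the event $\{-\psi<\xi\le0\}$, where $\varepsilon\xi=\xi=-|\xi|$. Hence
\[
    \E[\varepsilon\xi]=\E|\xi|-2\E\bigl[|\xi|\1_{\{-\psi<\xi\le0\}}\bigr]
    =\cgauss-2\int_0^\psi x\ph(x)\,dx
    =\cgauss-\frac{2}{\sqrt{2\pi}}\bigl(1-e^{-\psi^2/2}\bigr).
\]
Since $2/\sqrt{2\pi}=\cgauss$, this equals $\cgauss e^{-\psi^2/2}$. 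Substituting $e^{-\psi^2/2}=1-R/\cgauss$ gives $\cgauss-R$.

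This does not match the target: the drift should be $b=(R/\sqrt N)v$, i.e.\ $\E[\varepsilon\xi]$ should equal $R$, not $\cgauss-R$. So the sign convention as written must be checked. Two readings are possible. One is that the flipped window should give $\E[\varepsilon\xi]=\cgauss-2\cgauss(1-e^{-\psi^2/2})=2\cgauss e^{-\psi^2/2}-\cgauss$, which would call for $\Psi$ defined through $1-e^{-\psi^2/2}=(\cgauss-R)/(2\cgauss)$. The other is that $\varepsilon=-1$ should be taken on the whole negative half-line except the window, $\{\xi\le-\psi\}$. In the proof I would settle this carefully by computing $\E[\varepsilon\xi]$ as a function of $\psi$. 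The key qualitative fact, which makes any correct version work, is that this function is continuous and monotone from $0$ at $\psi=0$ to $\cgauss$ at $\psi=\infty$. Checking that the stated formula hits exactly $R$ is the main (if routine) obstacle.

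\textbf{Gaussianity and independence of $\Delta B$.} By construction
\[
    \Delta B=(\xi/\sqrt N)\,v+\varepsilon W .
\]
Conditionally on $\cH$ and $\xi$, the sign $\varepsilon$ is fixed, and $\varepsilon W\overset{d}{=}W$ by symmetry of the centered Gaussian $W$. So, conditionally on $\cH$, the pair $(\xi,\varepsilon W)$ has the same law as $(\xi,W)$, and therefore $\Delta B\overset{d}{=}Z\sim N(0,I_M/N)$. Because this conditional law does not depend on $\cH$, $\Delta B$ is independent of $\cH$.

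\textbf{Invertibility.} From $\Delta B$ we recover $\xi=\sqrt N\langle\Delta B,v\rangle$, and from $\xi$ we recover $\varepsilon$ via \eqref{eq:steering-sign}. Then $W=\varepsilon(I-vv^{\mathsf T})\Delta B$, which reconstructs $Z=\xi v/\sqrt N+W$. Thus, given $\cH$, the map $Z\mapsto(\varepsilon,\Delta B)$ is invertible.
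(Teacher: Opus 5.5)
\textbf{The drift computation has a sign error, and it is a genuine gap.} You conclude that the stated construction fails, but the lemma is correct as written. Outside the window $\{-\psi<\xi\le0\}$ the rule \eqref{eq:steering-sign} sets $\varepsilon=+1$, so there $\varepsilon\xi=\xi$, not $|\xi|$. You have implicitly replaced the rule by the greedy sign $\operatorname{sign}(\xi)$. Inside the window $\varepsilon=-1$ and $\xi\le0$, so $\varepsilon\xi=-\xi=|\xi|$. Writing $\varepsilon=1-2\1_{\{-\psi<\xi\le0\}}$ and using that $\xi$ is standard normal given $\cH$ with $\psi$ being $\cH$-measurable,
\[
    \E[\varepsilon\xi\mid\cH]
    =\E\xi+2\E\bigl[|\xi|\1_{\{-\psi<\xi\le0\}}\mid\cH\bigr]
    =2\int_0^\psi x\ph(x)\,dx
    =\cgauss\bigl(1-e^{-\psi^2/2}\bigr)
    =R,
\]
since $e^{-\Psi(R)^2/2}=1-R/\cgauss$. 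This gives $\E[\varepsilon Z\mid\cH]=(R/\sqrt N)v=b$, which is exactly the paper's computation.

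A sanity check would have caught the error. At $R=0$ the window is empty, so $\varepsilon\equiv1$ and the drift is $\E\xi=0$, whereas your formula $\cgauss e^{-\psi^2/2}$ gives $\cgauss$. Your own remark that $\psi\mapsto\E[\varepsilon\xi]$ increases from $0$ to $\cgauss$ is correct, and it contradicts the value you computed. As submitted, the proposal never establishes $\E[\varepsilon Z\mid\cH]=b$, the central claim of the lemma. The two alternative ``readings'' are unnecessary modifications of a correct statement. The first is also internally inconsistent, since $\cgauss-2\E[|\xi|\1_{\{-\psi<\xi\le0\}}]=\cgauss e^{-\psi^2/2}$, not $2\cgauss e^{-\psi^2/2}-\cgauss$. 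The second, flipping on $\{\xi\le-\psi\}$, is a different rule that would need its own threshold and a new invertibility check.

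Once the drift identity is corrected, the rest of your argument is correct and coincides with the paper's proof. This includes the orthogonal splitting with $\varepsilon$ depending only on $\xi$, so the orthogonal part has zero conditional mean. It also includes the conditional-symmetry argument giving $\Delta B\sim N(0,I_M/N)$ independently of $\cH$, and the inversion through $\langle\Delta B,v\rangle=\langle Z,v\rangle$.
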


\begin{proof}
On $\{b\ne0\}$ let
\[
    \eta:=\sqrt N\langle Z,v\rangle\sim N(0,1).
\]
Conditionally on $\cH$, $\eta$ is standard normal and
\[
    \E\left[
        \eta
        \left(
            1-2\1_{\{-\Psi(R)<\eta\le0\}}
        \right)
        \,\middle|\,
        \cH
    \right]
    =
    \sqrt{\frac2\pi}
    \left(
        1-e^{-\Psi(R)^2/2}
    \right)
    =
    R.
\]
The component of $Z$ orthogonal to $v$ is centered Gaussian, independent of $\eta$.  Since $\varepsilon$ depends only on $\eta$, its orthogonal contribution has conditional mean zero.  Hence
\[
    \E[\varepsilon Z\mid\cH]
    =
    \frac{R}{\sqrt N}v
    =
    b.
\]

Conditionally on $\cH$, the parallel component $\langle Z,v\rangle v$ and the orthogonal component $(I-vv^{\mathsf T})Z$ are independent Gaussian vectors.  Multiplying the orthogonal component by the sign $\varepsilon$, which depends only on the parallel component, preserves its centered Gaussian law and its independence from the parallel component.  Thus \eqref{eq:brownian-innovation} is $N(0,I_M/N)$ conditionally on $\cH$, and therefore is independent of $\cH$.

Finally,
\[
    \langle\Delta B,v\rangle
    =
    \langle Z,v\rangle,
\]
so \eqref{eq:steering-sign} recovers $\varepsilon$ from $\Delta B$ and the past.  Then
\[
    Z
    =
    \langle\Delta B,v\rangle v
    +
    \varepsilon
    (I-vv^{\mathsf T})\Delta B.
\]
This proves invertibility.
\end{proof}

For the coordinatewise controls used in \cref{sec:achievability}, we scale an admissible one-column drift as $b=q/N$.  By \cref{lem:one-column-steering}, this is admissible exactly when $\norm{q}_2\le\cgauss\sqrt N$, equivalently $\big(\frac1M\sum_{i=1}^Mq_i^2\big)^{1/2}\le\ca+o(1)$ when $M/N\to\alpha$.

For later use we name this map: whenever $\norm{q}_2\le\cgauss\sqrt N$, write
\[
    (\varepsilon,\Delta B)
    =\operatorname{Steer}_N(q,Z)
\]
for the sign and innovation produced by \cref{lem:one-column-steering} with $b=q/N$.  In words, $\operatorname{Steer}_N$ realizes the prescribed coordinatewise drift $q/N$ with a single sign $\varepsilon$ while returning a fresh Gaussian innovation $\Delta B$.

The constant $\ca$ is therefore the limiting $L^2$ budget for the coordinatewise Brownian control in the proportional limit.  This is the normalization that appears in both the upper bound of \cref{sec:mean-field} and the algorithm of \cref{sec:achievability}.

    \section{Upper bound on the online threshold}
    \label{sec:mean-field}

    The upper bound follows from the one-column drift constraint together with a Brownian scaling limit.  The latter is a direct specialization of the compactness argument of Fiedler--Jackson--Lacker--Niles-Weed \cite[Section~3]{FiedlerJacksonLackerNilesWeed26}; their proportional extension is stated in~\cite[Equations~(1.3)--(1.5)]{FiedlerJacksonLackerNilesWeed26}. For the present feasibility problem, it is enough to follow a row chosen uniformly at random.  We therefore state only the one-dimensional limit needed here; a self-contained proof is given in \cref{app:mean-field}.

    \begin{proposition}
    \label{prop:mean-field-limit}
    Suppose that $M/N\to\alpha\in(0,\infty)$ and that a sequence of online algorithms satisfies
    \[
        \Pp\left(\min_{1\le i\le M}X_{N,i}\ge\kappa\right)\longrightarrow1.
    \]
    Then there exist a filtered probability space with filtration $\cG$, a $\cG$-Brownian motion $B$, and a $\cG$-progressively measurable process $a$ such that
    \begin{equation}
        B_1+\int_0^1 a_t\,dt\ge\kappa\quad\text{a.s.},
        \qquad
        \E a_t^2\le\frac{2}{\pi\alpha}
        \quad\text{for a.e. }t\in[0,1].
    \label{eq:mean-field-limit}
    \end{equation}
    \end{proposition}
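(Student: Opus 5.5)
We will prove this by a Doob decomposition along a uniformly chosen row, followed by a tightness/compactness argument. Fix $N$ and let $I$ be uniform on $\{1,\dots,M\}$, independent of $G$. Write $t_k=k/N$ and define the predictable drifts
\[
    b_k:=\E\bigl[\sigma_kg_k/\sqrt N\,\big|\,\cF_{k-1}^G\bigr]\in\R^M .
\]
Then $X_k=\sum_{j\le k}b_j+\sum_{j\le k}m_j$, where $m_j:=\sigma_jg_j/\sqrt N-b_j$ are martingale differences. Applying \cref{lem:one-column-drift-bound} with $Z=g_k/\sqrt N$ and $\cH=\cF_{k-1}^G$ gives $\norm{b_k}_2\le\cgauss/\sqrt N$. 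Hence the rescaled drift $q_k:=Nb_k$ satisfies $\frac1M\norm{q_k}_2^2\le \cgauss^2N/M\to\ca^2$. Consequently, along the random row, the piecewise-constant control $a^N_t:=q_{k,I}$ for $t\in(t_{k-1},t_k]$ satisfies $\E(a^N_t)^2\le \cgauss^2 N/M$ for every $t$. The one-row process is $X^N_{\lfloor Nt\rfloor,I}=A^N_t+W^N_t$, with $A^N_t=\int_0^ta^N_s\,ds$ (up to interpolation) and $W^N$ a martingale in the filtration $\cG^N_t:=\sigma(I)\vee\cF^G_{\lfloor Nt\rfloor}$.

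Next we identify the martingale part as asymptotically Brownian. The increment $m_{k,I}$ has conditional variance
\[
    \E[(\sigma_kg_{k,I})^2\mid\cdot]/N-b_{k,I}^2 = 1/N-b_{k,I}^2 ,
\]
since $\sigma_k^2=1$. Summing, the quadratic variation is $t-\int_0^t(a^N_s)^2ds/N$, and the correction term is $O(1/N)$ in $L^1$. Conditional fourth moments of $m_{k,I}$ are $O(N^{-2})$. The martingale functional CLT therefore gives $W^N\Rightarrow$ Brownian motion in $D[0,1]$, and in fact in any joint limit $W^N$ remains a martingale with respect to the limiting filtration generated by $(W,A,\dots)$.

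For the drifts, the uniform bound $\sup_t\E(a^N_t)^2\le C$ makes $a^N$ bounded in $L^2(\Omega\times[0,1])$. We will regard the triple $(W^N,A^N,a^N)$ via the measure $a^N_t\,dt$, that is, $A^N$ together with the Young-measure/weak-$L^2$ topology. By Cauchy--Schwarz, $A^N$ is $1/2$-Hölder in $L^2$, so it is tight in $C[0,1]$. By Skorokhod representation we pass to a subsequence on one probability space along which $W^N\to B$, $A^N\to A$ uniformly, and $a^N\rightharpoonup a$ weakly in $L^2(\Omega\times[0,1])$, with $A_t=\int_0^ta_s\,ds$.

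It remains to pass the properties to the limit. For the bound $\E a_t^2\le 2/(\pi\alpha)$ for a.e. $t$, apply weak lower semicontinuity to $\int_s^{s'}\E a_t^2\,dt\le\liminf\int_s^{s'}\E (a^N_t)^2\,dt\le (s'-s)\cgauss^2N/M$ for all intervals, then use Lebesgue differentiation. For the terminal inequality, success with probability $\to1$ gives $\Pp(X^N_{N,I}<\kappa)\to0$, and the portmanteau theorem on the closed set $\{B_1+A_1\ge\kappa\}$ yields $B_1+A_1\ge\kappa$ a.s. For adaptedness, take $\cG$ to be the usual augmentation of the filtration generated by $(B,A)$ (plus any extra limit variables). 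The process $a$ can be chosen $\cG$-progressive because $A$ is $\cG$-adapted and absolutely continuous, so $a_t=\lim_h (A_t-A_{t-h})/h$ is progressive.

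The \textbf{main obstacle} is showing that $B$ is a $\cG$-Brownian motion, i.e.\ that $B_{t'}-B_t$ is independent of $\cG_t$, which includes the past of the drift. We handle this with the standard martingale-problem test. For bounded continuous $F$ of the path up to time $t$, the quantity
\[
    \E\bigl[F(W^N,A^N|_{[0,t]})\,(W^N_{t'}-W^N_t)\bigr]
\]
vanishes, as does the analogous quantity with $(W^N_{t'}-W^N_t)^2-(t'-t)$ up to $O(1/N)$. Uniform integrability, from the fourth-moment bounds, lets these identities pass to the limit. Lévy's characterization then makes $B$ a continuous $\cG$-martingale with quadratic variation $t$, hence a $\cG$-Brownian motion.
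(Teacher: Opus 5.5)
Your proposal is correct and follows essentially the same route as the paper's proof in \cref{app:mean-field}. The shared steps are the Doob decomposition with the one-column bound along a uniformly chosen row, tightness, lower semicontinuity of the drift energy, martingale identification of $B$ in the filtration generated by $(A,B)$, and Portmanteau for the terminal constraint. One reading point: the tightness of $A^N$ has to be taken pathwise, $|A^N_t-A^N_s|\le|t-s|^{1/2}\norm{a^N}_{L^2[0,1]}$ with a random constant bounded in $L^2(\Omega)$, as in the paper's $H^1$ argument; the moment bound $\E|A^N_t-A^N_s|^2\le C|t-s|$ alone would not give tightness.

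The remaining differences are technical. You get the energy bound from weak $L^2(\Omega\times[0,1])$ compactness on a Skorokhod space, rather than from the lower semicontinuous path functional $I_\psi$. You identify $B$ by L\'evy's characterization with test functions $x$ and $x^2$, using a step interpolation of $W^N$. Because the drift is predictable, $F(W^N,A^N|_{[0,t]})$ is then exactly measurable at step $\lfloor Nt\rfloor$, so you avoid the paper's $C_c^3$ martingale problem and its boundary-mesh correction.
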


    \begin{proof}[Proof sketch]
    Put $Z_k=g_k/\sqrt N$, and let $\cH_{k-1}$ denote the information available before the $k$th column is revealed.  The Doob decomposition of the signed increment is
    \[
        b_k:=\E[\sigma_k Z_k\mid\cH_{k-1}],
        \qquad
        d_k:=\sigma_k Z_k-b_k.
    \]
    By \cref{lem:one-column-drift-bound},
    \begin{equation}
        \norm{b_k}_2^2\le\frac{2}{\pi N}.
    \label{eq:upper-one-step-energy}
    \end{equation}
    Let $I_N$ be uniform on $\{1,\ldots,M\}$, independently of the matrix, and linearly interpolate the predictable and martingale parts of row $I_N$ from
    \[
        A_{I_N}^N(k/N)=\sum_{j=1}^k b_{j,I_N},
        \qquad
        W_{I_N}^N(k/N)=\sum_{j=1}^k d_{j,I_N}.
    \]
    On $((k-1)/N,k/N)$, averaging over the choice of $I_N$ gives
    \[
        \E\!\left[|\dot A_{I_N}^N(t)|^2\,\middle|\,b_k\right]
        =\frac{N^2}{M}\norm{b_k}_2^2
        \le\frac{2N}{\pi M}.
    \]
    Hence every subsequential limit has an absolutely continuous predictable part $A_t=\int_0^t a_s\,ds$ with $\E a_t^2\le2/(\pi\alpha)$ for almost every $t$.

    The centered increments satisfy
    \[
        \E[d_k\mid\cH_{k-1}]=0,
        \qquad
        \operatorname{Cov}(d_k\mid\cH_{k-1})
        =\frac1N I_M-b_kb_k^{\mathsf T},
    \]
    and have coordinatewise fourth moments $O(N^{-2})$.  The martingale functional central limit theorem, jointly with the predictable part, therefore identifies the second component of every subsequential limit as Brownian motion in the filtration generated by the limiting pair.  Finally,
    \[
        A_{I_N}^N(1)+W_{I_N}^N(1)=X_{N,I_N}.
    \]
    Since success forces every row above $\kappa$ with probability tending to one, the limiting terminal value is at least $\kappa$ almost surely.  This yields \eqref{eq:mean-field-limit}.  The compactness and filtration details are given in \cref{app:mean-field}.
    \end{proof}

    The filtration in \cref{prop:mean-field-limit} may be larger than the natural filtration of $B$.  The following projection reduces to the filtration used in the definition of $\Estar(\kappa)$.  The same conditioning argument appears in the proof of~\cite[Lemma~2.4]{FiedlerJacksonLackerNilesWeed26}.

    \begin{lemma}
    \label{lem:brownian-projection}
    Let $B$ be Brownian motion with respect to a filtration $\cG=(\cG_t)_{0\le t\le1}$, and let $a$ be $\cG$-progressively measurable with
    \[
        \E\int_0^1a_t^2\,dt<\infty.
    \]
    Suppose
    \[
        B_1+\int_0^1a_t\,dt\ge\kappa
        \qquad\text{a.s.}
    \]
    Let $\cF^B$ be the completed natural filtration of $B$, and choose an $\cF^B$-progressively measurable version of
    \[
        u_t=\E[a_t\mid\cF_t^B]
    \]
    for $dt\otimes d\Pp$-almost every $(t,\omega)$.  Then
    \begin{equation}
        B_1+\int_0^1u_t\,dt
        =\E\left[
            B_1+\int_0^1a_t\,dt
            \,\middle|\,
            \cF_1^B
        \right]
        \ge\kappa
        \qquad\text{a.s.},
    \label{eq:projection-terminal}
    \end{equation}
    and
    \[
        \E u_t^2\le\E a_t^2
        \qquad\text{for a.e. }t\in[0,1].
    \]
    \end{lemma}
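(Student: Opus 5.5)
The plan is to prove the two claims separately. The only substantive point is commuting the conditional expectation given $\cF_1^B$ with the time integral, and replacing $\E[a_t\mid\cF_1^B]$ by $\E[a_t\mid\cF_t^B]$.

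\emph{Step 1: the second-moment bound.} For almost every $t$ we have $\E a_t^2<\infty$ by Fubini, since $\E\int_0^1 a_t^2\,dt<\infty$. Conditional Jensen then gives $u_t^2\le\E[a_t^2\mid\cF_t^B]$. Taking expectations yields $\E u_t^2\le\E a_t^2$. The existence of a progressively measurable version of $u$ can be taken as given in the statement. If needed, it comes from the optional projection, or from approximating $a$ by simple processes in $L^2(dt\otimes d\Pp)$; the latter is cleaner.

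\emph{Step 2: commuting conditioning with the integral.} Since $a\in L^2(dt\otimes d\Pp)\subset L^1$, conditional Fubini gives
\[
    \E\Big[\int_0^1a_t\,dt\,\Big|\,\cF_1^B\Big]=\int_0^1\E[a_t\mid\cF_1^B]\,dt.
\]
I would justify this by testing against bounded $\cF_1^B$-measurable $\xi$, applying Fubini to $\E\int_0^1\xi a_t\,dt$, and using a jointly measurable version of $(t,\omega)\mapsto\E[a_t\mid\cF_1^B]$.

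\emph{Step 3, the key step: $\E[a_t\mid\cF_1^B]=\E[a_t\mid\cF_t^B]$.} The point is that $B$ is a $\cG$-Brownian motion, so the future increments $\sigma(B_s-B_t:s\ge t)$ are independent of $\cG_t$. Since $\cF_1^B=\cF_t^B\vee\sigma(B_s-B_t:s\ge t)$ up to null sets, and $\sigma(a_t)\vee\cF_t^B\subset\cG_t$ is independent of the future increments, the standard lemma applies: if $\mathcal{H}$ is independent of $\sigma(X)\vee\mathcal{A}$, then $\E[X\mid\mathcal{A}\vee\mathcal{H}]=\E[X\mid\mathcal{A}]$. I would prove this lemma by a $\pi$-$\lambda$ argument on sets $A\cap H$ with $A\in\mathcal{A}$ and $H\in\mathcal{H}$. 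This gives the identity for every $t$, hence for a.e. $t$. The completion is harmless, because completions do not change conditional expectations.

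\emph{Step 4: conclusion.} Since $B_1$ is $\cF_1^B$-measurable, Steps 2 and 3 give the equality in \eqref{eq:projection-terminal}. The inequality then follows because conditional expectation preserves the almost sure bound $\ge\kappa$; integrability holds because $B_1\in L^2$ and $\int_0^1a_t\,dt\in L^2$.

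I expect the main obstacle to be the measurability bookkeeping in Steps 2--3. Specifically, one needs a single jointly measurable version of the conditional expectations so that the $dt$-integral makes sense and agrees with $\int_0^1 u_t\,dt$. The cleanest route is to define both via the $L^2(dt\otimes d\Pp)$ projection onto $\cF^B$-progressive processes and to check the identities against test processes.
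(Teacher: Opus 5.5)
Your proposal is correct and follows the paper's proof essentially step for step: independence of the future increments of $B$ from $\cG_t$ gives $\E[a_t\mid\cF_1^B]=\E[a_t\mid\cF_t^B]$, conditional Fubini commutes the conditioning with the time integral, and conditional Jensen gives the second-moment bound. The only cosmetic difference is in justifying the Fubini step: the paper uses bounded truncations of $a_t$ and $L^1$ convergence, while you use integrability of $a_t$ for a.e.\ $t$ and test against bounded $\cF_1^B$-measurable variables.
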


    \begin{proof}
    Because $B$ is Brownian with respect to $\cG$, the future increments $(B_s-B_t)_{t\le s\le1}$ are independent of $\cG_t$.  Thus, for every integrable $\cG_t$-measurable random variable $H$,
    \[
        \E[H\mid\cF_1^B]=\E[H\mid\cF_t^B].
    \]
    Applying this identity to bounded truncations of $a_t$, followed by conditional Fubini and $L^1$ convergence, gives
    \[
        \E\left[
            \int_0^1a_t\,dt
            \,\middle|\,
            \cF_1^B
        \right]
        =\int_0^1u_t\,dt.
    \]
    Since $B_1$ is $\cF_1^B$-measurable, this proves the identity in \eqref{eq:projection-terminal}; the inequality follows by taking conditional expectations of the assumed terminal constraint.  Conditional Jensen gives
    \[
        \E u_t^2
        =\E\!\left[\E[a_t\mid\cF_t^B]^2\right]
        \le\E a_t^2
    \]
    for almost every $t$.
    \end{proof}

    \begin{corollary}
    \label{cor:threshold-upper}
    For every $\kappa\in\R$,
    \[
        \aon(\kappa)\le\frac{2}{\pi\Estar(\kappa)}.
    \]
    \end{corollary}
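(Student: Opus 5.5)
The plan is to combine \cref{prop:mean-field-limit} with \cref{lem:brownian-projection} and then take a supremum over achievable densities. Fix $\kappa$ and let $\alpha>0$ be online-achievable at margin $\kappa$. By \cref{def:online-threshold} it suffices to show $\alpha\le 2/(\pi\Estar(\kappa))$, equivalently $\Estar(\kappa)\le 2/(\pi\alpha)$. Taking the supremum over achievable $\alpha$ then gives the corollary. If no $\alpha$ is achievable, the supremum of the empty set is $0$ under the usual convention, or the bound is vacuous, and the inequality holds trivially since $\Estar(\kappa)<\infty$ by \cref{lem:elementary-E-bounds}.

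First, choose any integer sequence with $M_N/N\to\alpha$. Achievability provides online algorithms whose success probability tends to one. \cref{prop:mean-field-limit} then yields a filtered space with filtration $\cG$, a $\cG$-Brownian motion $B$, and a $\cG$-progressively measurable drift $a$. The terminal value $B_1+\int_0^1a_t\,dt$ is at least $\kappa$ almost surely, and $\E a_t^2\le 2/(\pi\alpha)$ for a.e.\ $t$. In particular $\E\int_0^1 a_t^2\,dt\le 2/(\pi\alpha)<\infty$, so the integrability hypothesis of \cref{lem:brownian-projection} holds.

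Second, apply \cref{lem:brownian-projection} to obtain an $\cF^B$-progressively measurable $u_t=\E[a_t\mid\cF^B_t]$. It satisfies $B_1+\int_0^1u_t\,dt\ge\kappa$ a.s.\ and $\E u_t^2\le\E a_t^2\le 2/(\pi\alpha)$ for a.e.\ $t$, so $\norm{u}_{\infty,2}^2\le 2/(\pi\alpha)$. Here $B$ is a standard Brownian motion with its completed natural filtration, possibly on a different probability space than the one fixed in the definition of $\Estar$. The value of the control problem is a law-invariant quantity: $u$ is a progressively measurable functional of the path of $B$, so its image in the canonical Wiener space is a feasible control with the same norm. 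Hence $u$ is admissible in the infimum defining $\Estar(\kappa)$, and $\Estar(\kappa)\le 2/(\pi\alpha)$.

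The main obstacle is the transfer between probability spaces. It is handled by representing $u_t=F(t,B_{\cdot\wedge t})$ for a progressively measurable functional $F$ on path space, which is possible because $u$ is adapted to the completed natural filtration; the modification on null sets does not affect $dt\otimes d\Pp$-a.e.\ statements. Everything else is immediate from the earlier results. Rearranging the resulting inequality gives $\alpha\le 2/(\pi\Estar(\kappa))$ for every achievable $\alpha$, and hence $\aon(\kappa)\le 2/(\pi\Estar(\kappa))$.
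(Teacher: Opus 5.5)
Your proposal is correct and follows the paper's proof: you apply \cref{prop:mean-field-limit} to an achievable $\alpha$, project the drift onto the Brownian filtration via \cref{lem:brownian-projection} to get $\Estar(\kappa)\le 2/(\pi\alpha)$, and take the supremum over achievable $\alpha$. Your two additions are sound. The transfer of the control to the canonical Wiener space makes explicit a law-invariance step that the paper leaves tacit, and the rearrangement to $\alpha\le 2/(\pi\Estar(\kappa))$ uses $0<\Estar(\kappa)<\infty$ from \cref{lem:elementary-E-bounds}.
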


    \begin{proof}
    Let $\alpha$ be online-achievable at margin $\kappa$.  By \cref{prop:mean-field-limit}, there are $B$ and $a$ satisfying \eqref{eq:mean-field-limit}.  Applying \cref{lem:brownian-projection} gives a control adapted to the Brownian filtration, feasible at level $\kappa$, with
    \[
        \E u_t^2\le\frac{2}{\pi\alpha}
        \qquad\text{for a.e. }t.
    \]
    Therefore
    \[
        \Estar(\kappa)\le\frac{2}{\pi\alpha}.
    \]
    Every online-achievable constraint density $\alpha$ satisfies this inequality; taking the supremum over such $\alpha$ proves the claim.
    \end{proof}

    \section{Achievability and the online algorithm}
    \label{sec:achievability}

    We now prove the lower bound
    \[
        \aon(\kappa)\ge \frac{2}{\pi\Estar(\kappa)}
    \]
    and give the corresponding online algorithm.  Recall that
    \[
        \ca=\sqrt{\frac{2}{\pi\alpha}}.
    \]
    Thus the assumption $\alpha<2/(\pi\Estar(\kappa))$ is exactly $\ca^2>\Estar(\kappa)$: the Brownian control problem admits a feasible control whose uniform-in-time $L^2$ norm is strictly below the drift budget available in the discrete problem.

    Two points must be addressed before such a Brownian control can be used by an online algorithm.  First, a general progressively measurable control need not have a representation that can be evaluated from a discrete history.  We therefore approximate it by a simple predictable control whose coefficients are bounded Lipschitz functions of finitely many past Brownian values.  Second, Brownian feasibility concerns a single coordinate, whereas the perceptron requires all $M=\Theta(N)$ constraints to hold simultaneously.  We handle the latter issue by reserving a short final block: its deterministic component eliminates the shortfalls, while a F\"ollmer drift associated with a compactly supported terminal law confines the additional Gaussian fluctuation.

    \subsection{Approximation by simple predictable controls}
    \label{sec:approximation}

    The half-line constraint gives a monotonicity that lets us start from an arbitrary feasible Brownian control; no regularity of an optimizer for $\Estar(\kappa)$ is required.

    \begin{lemma}
    \label{lem:strict-slack}
    Let $\kappa\in\R$ and $c^2>\Estar(\kappa)$.  There exist $\rho>0$ and a nonnegative control $u$, progressively measurable with respect to the Brownian filtration, on $[0,1]$ such that
    \[
        \norm{u}_{\infty,2}<c,
        \qquad
        Y_1^u\ge \kappa+4\rho
        \quad\text{a.s.}
    \]
    Moreover, for every $T<1$ sufficiently close to one, the rescaled control
    \begin{equation}
        u_t^{(T)}:=T^{-1/2}u_{t/T},
        \qquad 0\le t\le T,
    \label{eq:compressed-control}
    \end{equation}
    satisfies
    \[
        \norm{u^{(T)}}_{\infty,2}<c,
        \qquad
        Y_T^{u^{(T)}}\ge \kappa+3\rho
        \quad\text{a.s.}
    \]
    \end{lemma}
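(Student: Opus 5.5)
The plan is to build $u$ in three moves (a near-optimal feasible control, a constant shift, and a positive part) and then check the time compression by Brownian scaling.

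\emph{Choice of a near-optimal control.} Since $c^2>\Estar(\kappa)$, I would first fix $c'$ with $\Estar(\kappa)<c'^2<c^2$. By definition of the infimum there is an $\cF^B$-progressively measurable control $v$ with $\norm{v}_{\infty,2}<c'$ and $Y_1^v\ge\kappa$ almost surely.

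\emph{Shift and positive part.} Choose $\rho>0$ so small that $c'+4\rho<c$, and set
\[
    u_t:=(v_t+4\rho)_+ .
\]
Feasibility at the higher level comes from $u_t\ge v_t+4\rho$ pointwise, which gives
\[
    Y_1^u\ge Y_1^v+4\rho\ge\kappa+4\rho
\]
almost surely. For the norm, $u_t^2\le(v_t+4\rho)^2$, so Minkowski's inequality gives, for almost every $t$,
\[
    (\E u_t^2)^{1/2}\le(\E v_t^2)^{1/2}+4\rho .
\]
Hence $\norm{u}_{\infty,2}\le\norm{v}_{\infty,2}+4\rho<c$. Progressive measurability is preserved because $x\mapsto(x+4\rho)_+$ is continuous, and $u\ge0$ by construction. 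This is the half-line monotonicity: taking the positive part can only raise the terminal value while lowering the second moment.

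\emph{Rescaling.} The one point needing care is the meaning of \eqref{eq:compressed-control}. The control $u$ is a progressively measurable functional of the Brownian path, so I would write $u_s=F_s(B_{\cdot\wedge s})$. Put
\[
    \tilde B_s:=T^{-1/2}B_{Ts},\qquad 0\le s\le1,
\]
which is a standard Brownian motion, and interpret $u_{t/T}$ as $F_{t/T}(\tilde B_{\cdot\wedge t/T})$. This quantity depends only on $(B_r)_{r\le t}$, so $u^{(T)}$ is $\cF^B$-progressively measurable on $[0,T]$. Moreover $(\tilde B,u(\tilde B))$ has the same law as $(B,u)$.

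\emph{Terminal value of the compressed control.} The change of variables $t=Ts$ gives
\[
    \int_0^T u^{(T)}_t\,dt=T^{1/2}\int_0^1 u_s(\tilde B)\,ds,
\]
and $B_T=T^{1/2}\tilde B_1$. Therefore
\[
    Y_T^{u^{(T)}}=T^{1/2}\,Y_1^{u}(\tilde B)\ge T^{1/2}(\kappa+4\rho)\quad\text{a.s.},
\]
where the inequality uses the equality in law with the original feasible pair. The right-hand side tends to $\kappa+4\rho$ as $T\to1$, so it is at least $\kappa+3\rho$ for $T$ close to one.

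\emph{Norm of the compressed control.} We have $\E(u_t^{(T)})^2=T^{-1}\E u_{t/T}^2$, so $\norm{u^{(T)}}_{\infty,2}=T^{-1/2}\norm{u}_{\infty,2}$. This is still $<c$ for $T$ close to one because $\norm{u}_{\infty,2}<c$ strictly; the essential supremum is unaffected by the null-set ambiguity under the linear time change.

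I expect no serious obstacle. The only subtle step is the rescaling: the literal formula $u_{t/T}$ must be read as the path functional evaluated along the rescaled Brownian motion, which is what makes it adapted and lets the terminal constraint transfer through equality in law.
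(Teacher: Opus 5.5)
Your proposal is correct and follows essentially the same route as the paper: you take a positive part plus a constant shift to create slack, and then use Brownian scaling (the paper takes $u=v_++\eta$ with $4\rho=\eta$ rather than $(v+4\rho)_+$, an immaterial reordering). The only difference is the direction of the scaling. The paper defines the new driver $\widetilde B_t=\sqrt T\,B_{t/T}$ on $[0,T]$ from the original motion, so $\sqrt T\,Y^u_{t/T}=\widetilde B_t+\int_0^t T^{-1/2}u_{s/T}\,ds$ holds pathwise with the same random variables; you rescale the other way, which requires the path-functional representation of $u$ and equality in law, and both are valid.
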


    \begin{proof}
    Choose a feasible control $v$ with $\norm{v}_{\infty,2}<c$.  Replacing $v$ by $v_+:=\max\{v,0\}$ preserves feasibility and cannot increase $\norm{\cdot}_{\infty,2}$.  Choose $\eta>0$ so that $\norm{v_++\eta}_{\infty,2}<c$, set $u=v_++\eta$, and put $4\rho=\eta$.  Then $Y_1^u\ge\kappa+4\rho$ almost surely and $\norm{u}_{\infty,2}<c$.

    If $\widetilde B_t=\sqrt T B_{t/T}$, then $\widetilde B$ is Brownian motion on $[0,T]$ and
    \[
        \sqrt T\,Y_{t/T}^u
        =\widetilde B_t+\int_0^tT^{-1/2}u_{s/T}\,ds.
    \]
    Hence the control in \eqref{eq:compressed-control} has norm at most $\norm{u}_{\infty,2}/\sqrt T$, which is below $c$ for $T$ close enough to one.  Its terminal value is at least $\sqrt T(\kappa+4\rho)$, which is at least $\kappa+3\rho$ for all such $T$.
    \end{proof}

    We use the standard notion of a simple predictable process, specialized to deterministic time intervals.  For the algorithmic application we additionally choose its coefficients from an explicit dense class of functions of the Brownian history.

    \begin{definition}[Simple predictable control]
    \label{def:finite-observation}
    Let $0=s_0<\cdots<s_L=T$.  A bounded simple predictable control on $[0,T]$ is a process
    \begin{equation}
        \beta_t=\xi_j,
        \qquad t\in(s_j,s_{j+1}],
    \label{eq:finite-observation-control}
    \end{equation}
    where $\xi_j$ is bounded and $\cF_{s_j}^B$-measurable.  We will use controls for which, for each $j$,
    \[
        \xi_j
        =F_j(B_{r_{j,1}},\ldots,B_{r_{j,m_j}}),
        \qquad r_{j,\ell}\le s_j,
    \]
    with rational times $r_{j,\ell},s_j$ and bounded nonnegative globally Lipschitz rational piecewise-polynomial functions $F_j$.
    \end{definition}

    \begin{proposition}
    \label{prop:finite-observation-approximation}
    Suppose $u\ge0$ is progressively measurable with respect to the Brownian filtration on $[0,T]$, $\norm{u}_{\infty,2}<c$, and $Y_T^u\ge a$ almost surely.  Let $2\le p<\infty$ and $\varepsilon>0$.  Then there is a bounded nonnegative simple predictable control $\beta$ of the form in \cref{def:finite-observation} such that
    \begin{equation}
        \norm{\beta}_{\infty,2}<c,
        \qquad
        \norm{(a-Y_T^\beta)_+}_{L^p}<\varepsilon.
    \label{eq:finite-observation-output}
    \end{equation}
    \end{proposition}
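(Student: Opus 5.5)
The plan is to approximate $u$ in a suitable sense by simple predictable controls, and then use half-line monotonicity to turn $L^1$-in-time closeness of drifts into $L^p$ control of the terminal shortfall. First we create room in the norm: since $\norm{u}_{\infty,2}<c$, fix $c'$ with $\norm{u}_{\infty,2}<c'<c$. The key observation is the pointwise domination
\[
    (a-Y_T^\beta)_+
    \le
    (Y_T^u-Y_T^\beta)_+
    =
    \Bigl(\int_0^T(u_t-\beta_t)\,dt\Bigr)_+
    \le
    \int_0^T(u_t-\beta_t)_+\,dt,
\]
which uses $Y_T^u\ge a$ a.s. Thus it suffices to make $\norm{\int_0^T|u_t-\beta_t|\,dt}_{L^p}$ small while keeping $\norm{\beta}_{\infty,2}<c$. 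Since $p$ may exceed $2$ and $u$ is only in $L^\infty_tL^2_\omega$, we first truncate: replace $u$ by $u\wedge K$. This only lowers the drift, so the shortfall term becomes $\int_0^T(u_t-K)_+\,dt$. Its $L^p$ norm is not directly small, so instead we handle truncation differently. We use $u^K:=u\wedge K$ as the target and bound
\[
    (a-Y_T^{u^K})_+\le\int_0^T(u_t-K)_+\,dt .
\]
To get $L^p$ smallness, we exploit the slack in the terminal constraint by adding a small constant $\eta$. The paper's lemma preceding this produced $4\rho$ slack, but the proposition as stated has none. I therefore expect that the \textbf{main obstacle} is exactly this step: controlling $\norm{\int_0^T(u-K)_+dt}_{L^p}$ for $p>2$ when $u$ only has uniform second moments.

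My first attempt at this obstacle is to use the $L^\infty$-type structure of the half-line. Let $S:=\int_0^T(u_t-K)_+dt$ and replace the target by $u^K+\lambda$, with $\lambda>0$ a small constant, still of norm $<c'$. The shortfall is then at most $(S-\lambda T)_+$. Since $\E S^2\le T\int\E(u-K)_+^2dt\to0$ by dominated convergence in $t$ (each $\E u_t^2\le c^2$), we get $S\to0$ in $L^2$, hence in probability. But $(S-\lambda T)_+$ still needs an $L^p$ bound. If that fails, I would change the target: let $\tau$ be the stopping time at which $\int_0^t(u_s-K)_+ds$ first exceeds $\lambda T/2$, and keep the untruncated $u$ but only after a localization. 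Alternatively, and more simply, I would accept a bounded target $u^K+\lambda$, whose shortfall is at most $(S-\lambda T)_+\le S$, and note $S\le\int_0^T u_t\,dt$. The hope is that the terminal shortfall $(a-Y^{u^K+\lambda}_T)_+$ is also bounded by $(a-B_T-\lambda T)_+$, which has Gaussian tails. Combined with convergence to $0$ in probability, uniform integrability of its $p$-th power then gives $L^p$ convergence. This bound, $\min(S-\lambda T,\ a-B_T)_+$ with the Gaussian-tailed second term, is the cleanest route, and I will use it.

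The remaining steps use a bounded nonnegative target $w:=u^K+\lambda$ with $\norm{w}_{\infty,2}<c'$ and shortfall small in $L^p$.
\begin{enumerate}
\item Approximate $w$ in $L^2(dt\otimes d\Pp)$ by simple predictable processes on a fine grid $s_j$, taking $\xi_j$ to be the conditional average $\E[\frac{1}{s_{j+1}-s_j}\int_{s_j}^{s_{j+1}}w\,dt\mid\cF^B_{s_{j}}]$, shifted by one grid interval to ensure predictability. This is the standard density of simple predictable processes. By Jensen, $\E\xi_j^2$ is bounded by averages of $\E w_t^2$, so $\norm{\cdot}_{\infty,2}\le c'$ is preserved, up to the one-interval shift near $t=0$, where we set $\xi_0=0$.
\item Approximate each $\xi_j\in[0,K+\lambda]$ in $L^2$ by a function of finitely many rational-time Brownian values. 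This uses martingale convergence along dyadic rational times followed by density of bounded Lipschitz piecewise-polynomial rational functions in $L^2$ of a Gaussian vector. We clip to $[0,K+\lambda]$, which preserves nonnegativity and boundedness.
\end{enumerate}
All these errors are bounded, so $\int_0^T|w-\beta|\,dt$ is uniformly bounded and small in $L^2$, hence small in $L^p$. The norm stays below $c'+o(1)<c$. Combining these via the triangle inequality, $(a-Y_T^\beta)_+\le(a-Y_T^w)_++\int_0^T|w-\beta|\,dt$, and this yields \eqref{eq:finite-observation-output}.
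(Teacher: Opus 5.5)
Your argument is correct and uses the same ingredients as the paper's proof. These are the bound $(a-Y_T^\beta)_+\le\int_0^T\abs{u_t-\beta_t}\,dt$ from $Y_T^u\ge a$, the conditional-average projection onto simple predictable controls (which preserves nonnegativity and $\norm{\cdot}_{\infty,2}$), Lipschitz approximation of the coefficients in finitely many Brownian values, and the Gaussian envelope $(a-B_T)_+$ supplied by nonnegativity to upgrade convergence to $L^p$. The only difference is bookkeeping: you spend the envelope on a preliminary truncation $u\wedge K$ and then rely on boundedness, whereas the paper applies it once to the final nonnegative $\beta$. That makes the truncation of $u$, your constant $\lambda$, and the one-interval shift unnecessary, since the conditional average is already $\cF^B_{s_j}$-measurable.
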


    \begin{proof}
    Let $\mathsf H=L^2_{\mathrm{pred}}([0,T]\times\Omega)$ and let $\pi_n$ be the dyadic partition of $[0,T]$.  Let $\mathsf H_n$ be the closed subspace of processes that are constant on each interval $(t_{j,n},t_{j+1,n}]$ with an $\cF_{t_{j,n}}^B$-measurable coefficient.  The union of the $\mathsf H_n$ is dense in $\mathsf H$.  The orthogonal projection $P_nu$ has coefficient
    \[
        \xi_{j,n}
        :=
        \E\left[
           \frac1{\Delta_{j,n}}
           \int_{t_{j,n}}^{t_{j+1,n}}u_s\,ds
           \,\middle|\,\cF_{t_{j,n}}^B
        \right]
    \]
    on $(t_{j,n},t_{j+1,n}]$.  Since $u\ge0$, each $\xi_{j,n}$ is nonnegative, and conditional expectation followed by Minkowski's inequality gives
    \[
        \norm{\xi_{j,n}}_{L^2}
        \le
        \frac1{\Delta_{j,n}}
        \int_{t_{j,n}}^{t_{j+1,n}}\norm{u_s}_{L^2}\,ds
        \le\norm{u}_{\infty,2}.
    \]
    Moreover, $P_nu\to u$ in $\mathsf H$.

    For each coefficient $\xi_{j,n}$, bounded Lipschitz functions of finitely many rational-time Brownian evaluations are dense in $L^2(\cF_{t_{j,n}}^B)$.  After truncation, one first approximates the relevant sigma-field by Brownian values at rational times and then approximates the resulting function under the corresponding Gaussian law.  Tapering outside a rational box and triangulating a rational grid gives bounded rational piecewise-affine approximants; taking positive parts preserves nonnegativity.  Choosing these approximations sufficiently accurately and then letting $n\to\infty$ yields controls $\beta^{(n)}$ of the form in \cref{def:finite-observation} with
    \[
        \norm{\beta^{(n)}}_{\infty,2}<c,
        \qquad
        \norm{\beta^{(n)}-u}_{L^2([0,T]\times\Omega)}\longrightarrow0.
    \]

    Couple the controlled processes with the same Brownian motion.  Since $Y_T^u\ge a$,
    \[
        0\le(a-Y_T^{\beta^{(n)}})_+
        \le \abs{Y_T^{\beta^{(n)}}-Y_T^u}
        \le \int_0^T\abs{\beta_t^{(n)}-u_t}\,dt,
    \]
    so the shortfall converges to zero in $L^2$.  On the other hand, $\beta^{(n)}\ge0$ implies
    \begin{equation}
        (a-Y_T^{\beta^{(n)}})_+\le(a-B_T)_+.
    \label{eq:shortfall-domination}
    \end{equation}
    The right-hand side belongs to $L^r$ for every finite $r$.  Interpolation between the $L^2$ convergence and this uniform moment bound gives convergence in $L^p$, proving \eqref{eq:finite-observation-output}.
    \end{proof}

    \begin{remark}[Use of the half-line constraint]
    \label{rem:one-sidedness}
    The half-line constraint enters twice.  First, $u\mapsto u_+$ preserves feasibility.  Second, nonnegative approximants satisfy \eqref{eq:shortfall-domination}.  These two monotonicity properties allow us to approximate an arbitrary feasible Brownian control without assuming any regularity of an optimizer for $\Estar(\kappa)$.
    \end{remark}

    \subsection{The F\"ollmer drift on the final block}

    The first stage leaves a shortfall that is small in $L^p$, but this does not by itself control the minimum over $M=\Theta(N)$ coordinates.  On the final block we add a deterministic drift that compensates the shortfall of each coordinate.  The remaining Gaussian fluctuation is confined by a F\"ollmer drift associated with a compactly supported terminal law.

    \begin{lemma}[Final-block confinement]
    \label{lem:euler-follmer}
    Fix an even integer $p>4$ and $c_F>0$.  There are $h_0=h_0(p,c_F)>0$ and a radius $s_p(\cdot,c_F)$, continuous on $(0,h_0)$ and satisfying $s_p(h,c_F)\downarrow0$ as $h\downarrow0$, with the following property.  Let $L_N\in\{1,\ldots,N-1\}$, put $h_N=L_N/N$, and suppose that $h_N$ lies in a fixed compact subinterval of $(0,h_0)$.  There is an explicit feedback $v_N$ on $[0,h_N)$ such that, for independent Brownian coordinates, the Euler scheme
    \begin{equation}
    \widehat Z_{0,i}=0,
    \qquad
    \widehat a_{k,i}=v_N(k/N,\widehat Z_{k,i}),
    \qquad
    \widehat Z_{k+1,i}
    =\widehat Z_{k,i}+\frac{\widehat a_{k,i}}{N}
    +B_i((k+1)/N)-B_i(k/N)
    \label{eq:euler-follmer}
    \end{equation}
    satisfies the per-step drift bound
    \begin{equation}
        \sup_{k<L_N}\norm{\widehat a_{k,i}}_{L^p}\le c_F
    \label{eq:euler-request-bound}
    \end{equation}
    and the terminal confinement estimate
    \begin{equation}
    \E\max_{i\le M}
    \dist\left(
    \widehat Z_{L_N,i},
    [-s_p(h_N,c_F),s_p(h_N,c_F)]
    \right)
    =O(M^{1/p}N^{-1}).
    \label{eq:euler-endpoint-bound}
    \end{equation}
    In particular, if $M=O(N)$ the right-hand side is $o(1)$.
    \end{lemma}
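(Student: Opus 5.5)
The plan is to build $v_N$ from the F\"ollmer drift of an explicit compactly supported terminal law on $[0,h]$, with $h=h_N$. I would prove the two estimates first for the continuous-time diffusion, and then transfer them to the Euler scheme \eqref{eq:euler-follmer} by a coordinatewise strong-error bound. The maximum over the $M$ rows is then handled by a union-type $L^p$ bound. This bound is what produces the factor $M^{1/p}$ in \eqref{eq:euler-endpoint-bound}.

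\emph{Step 1: choice of terminal law and continuous drift bound.} For a radius $s>0$, let $\mu_s$ have density proportional to $(s^2-x^2)_+^m$, with an integer $m$ chosen large in terms of $p$. Let $f=d\mu_s/d\gamma_h$, where $\gamma_h=N(0,h)$, and set
\[
v(t,x)=\partial_x\log P_{h-t}f(x)=\frac{\E[X_h\mid X_t=x]-x}{h-t}.
\]
Then $dX=v(t,X)\,dt+dB$ with $X_0=0$ is the F\"ollmer process, and $X_h\sim\mu_s$ exactly. The drift $v$ is singular only near $t=h$, and only when $x$ is near or outside $\pm s$. For $x$ at distance $\delta$ inside the boundary, one has $|v|\lesssim$ the score $m/\delta$ smoothed at scale $\sqrt{h-t}$. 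Because the density of $X_t$ near the boundary is of order $(\delta/s)^m$, the quantity $\sup_{t<h}\E|v(t,X_t)|^p$ is finite once $m>p$. By Brownian scaling, $x\mapsto x/\sqrt h$, this supremum equals $h^{-p/2}C_{p,m}(s/\sqrt h)$, and $C_{p,m}(\lambda)\to0$ as $\lambda\to\infty$. Heuristically the drift is of order $\sqrt h/s^2$, as the entropy scaling suggests. I would therefore define $s_p(h,c_F)$ as the smallest $s$ for which this quantity is at most $(c_F/2)^p$. Continuity in $h$ follows from the continuity of $C_{p,m}$, and $s_p(h,c_F)\downarrow0$ as $h\downarrow0$ because $\sqrt h/s^2\le c_F/2$ already holds with $s\asymp h^{1/4}$. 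This step also fixes $h_0$.

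\emph{Step 2: Euler transfer.} Couple the Euler chain $\widehat Z_{k,i}$ with the continuous F\"ollmer process $X^i$ driven by the same Brownian coordinate $B_i$. I want
\[
\norm{\widehat Z_{k,i}-X^i_{k/N}}_{L^p}=O(N^{-1})\quad\text{and}\quad \norm{\widehat a_{k,i}-v(k/N,X^i_{k/N})}_{L^p}=o(1),
\]
uniformly for $k\le L_N$. The drift comparison gives \eqref{eq:euler-request-bound} for large $N$; finitely many $N$ are absorbed by shrinking $s_p$, or by truncating $v_N$ at level $c_F$. The endpoint estimate then follows from three facts: $X^i_h\in[-s,s]$; $\dist(\cdot,[-s,s])$ is $1$-Lipschitz; and
\[
\E\max_{i\le M}|Y_i|\le\Bigl(\sum_{i\le M}\E|Y_i|^p\Bigr)^{1/p}=M^{1/p}\max_i\norm{Y_i}_{L^p}.
\]
Because the noise is additive, there is no $N^{-1/2}$ term, and the local error is $\int(v(r,X_r)-v(k/N,X_{k/N}))\,dr$. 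Its It\^o expansion has a martingale part of size $N^{-1}\cdot N^{-1/2}$ per step, which sums in $L^p$ to $O(N^{-1})$, and a drift part of size $O(N^{-2})$ per step. The required moments come from $\partial_xv$ and $(\partial_t+\tfrac12\partial_x^2+v\partial_x)v$ evaluated along $X$.

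\emph{Main obstacle.} The difficulty is making the Step 2 error bound uniform up to the singular terminal time $t=h$, where $\partial_xv$ is of order $(h-t)^{-1}$ near the boundary. I would handle this with one-sided Lipschitz stability rather than a Gronwall bound on $|\partial_xv|$. We have
\[
\partial_xv=\frac{\operatorname{Var}(X_h\mid X_t=x)}{(h-t)^2}-\frac1{h-t}\le\frac{C s^2}{(h-t)^2}-\frac1{h-t},
\]
and near $t=h$ this is $\le0$ up to integrable terms. So error propagation is contractive, and local errors do not get amplified. If the last $O(1)$ Euler steps are still problematic, I would modify $v_N$ there to the explicit discrete bridge drift, namely $(\text{projection onto }[-s,s]-x)/(h-t)$ clipped. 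This costs only $O(1/N)$ in displacement, which is absorbed in \eqref{eq:euler-endpoint-bound}. The remaining local-error moments reduce to weighted integrals of the form $\int(h-t)^{-q}\Pp(\text{near boundary})$, and these are finite for $m$ large.
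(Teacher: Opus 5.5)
Your architecture matches the paper's: a F\"ollmer drift toward a compactly supported terminal law, the Euler chain coupled to the same Brownian coordinate, and the $\ell^p$ union bound producing $M^{1/p}$. However, the two key steps fail as written.

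\textbf{The stability step (Step 2).} From $\partial_xv=\operatorname{Var}_{\nu_{t,x}}(Y)/(h-t)^2-1/(h-t)$ and the crude bound $\operatorname{Var}\le Cs^2$, you get an upper bound that is \emph{positive} whenever $h-t<Cs^2$. It behaves like $Cs^2(h-t)^{-2}$, which is not integrable at $t=h$. So it shows expansion, not contraction, near the terminal time: the resulting Gronwall factor is of order $e^{Cs^2N}$. The trouble is not confined to the last $O(1)$ steps, so clipping to a bridge drift there does not repair it. The missing ingredient is $\operatorname{Var}_{\nu_{t,x}}(Y)\le h-t$. This holds when $f=d\mu/d\gamma_h$ is log-concave (Pr\'ekopa) and gives $-1/(h-t)\le\partial_xv\le0$. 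Then each Euler map $x\mapsto x+\Delta v(t,x)$ with $h-t\ge\Delta$ is nondecreasing and $1$-Lipschitz. The recursion $e_{k+1}=\theta_ke_k-\delta_k$ with $\theta_k\in[0,1]$ is then closed by summation by parts, $\max|e|\le2\max|\sum\delta|$. Your literal law lacks this property. If $(s^2-x^2)_+^m$ is the Lebesgue density, then $f\propto(s^2-x^2)_+^me^{x^2/(2h)}$, which fails log-concavity at $x=0$ once $s^2>2mh$. That is exactly your regime $s\asymp h^{1/4}$.

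\textbf{The drift bound (Step 1).} Under the Lebesgue reading, $s_f(x)=x/h-2mx/(s^2-x^2)$, and for $s\gg\sqrt{mh}$ the term $x/h$ dominates. The target is wider than $\gamma_h$, so the drift must push mass outward at cost $\asymp s/(\sqrt m\,h)$. In your scaling, $C_{p,m}(\lambda)\to\infty$ rather than $0$: already $\E\int_0^1v^2=2H(\mu_\lambda\mid\gamma_1)$ grows like $\lambda^2/m$. Hence, for fixed $m$ and small $h$, no $s$ meets the budget $c_F$. Your heuristic $\sqrt h/s^2$ is correct only if $(s^2-x^2)_+^m$ is the density \emph{relative to} $\gamma_h$. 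With that change both problems disappear and $s\asymp h^{1/4}$ is acceptable. You are then on essentially the paper's route; the paper uses the tent $H_h^p$ relative to $\gamma_h$, with radius $\asymp\sqrt{h\log(1/h)}$.

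\textbf{The martingale property.} You should also use that $v(t,X_t)=\E[s_f(X_h)\mid\cF_t]$ is a martingale. First, it gives the continuous drift bound as the static quantity $\norm{s_f}_{L^p(f\,d\gamma_h)}$. Second, it removes the drift term in your It\^o expansion, since the generator applied to $v$ vanishes. Third, writing $v(t,X_t)=\int\zeta\,dB$, the accumulated local error is $\int w_N\zeta\,dB$ with $|w_N|\le\Delta$. BDG and reverse BDG bound this by $C_p\Delta\norm{s_f(X_h)}_{L^p}$, with no pointwise moments of $\partial_xv$ near $t=h$.

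\textbf{The drift comparison at the last steps.} Your claim $\norm{\widehat a_k-v(k\Delta,X_{k\Delta})}_{L^p}=o(1)$ is not justified there. The Lipschitz constant is then $1/(h-t)=N$ while $e_k=O(1/N)$, so only an $O(1)$ bound follows. The paper gets only $O(\widetilde c_F)$ at these steps and absorbs it by running the construction with reduced budget $\widetilde c_F=c_F/(2C_p^\circ)$. Truncating $v_N$ at level $c_F$ instead would break the comparison with the exact F\"ollmer endpoint.
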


    The feedback $v_N$ comes from the F\"ollmer construction associated with a compactly supported terminal law and is evaluated by an explicit ratio of truncated Gaussian moments, so it is efficiently computable and no continuous-time SDE solver is used.  The construction of $v_N$ and the proof of \cref{lem:euler-follmer} are given in \cref{app:follmer}.

\subsection{The online algorithm}

The algorithm maintains the normalized margin vector $X$ together with the Brownian state generated by the steering map.  A vector $q\in\R^M$ represents the coordinatewise Brownian control during one time step; the corresponding conditional drift of the signed column is $q/N$.

Define the projection onto the admissible control ball by
\begin{equation}
    \Pi_N(q):=
    \begin{cases}
    q,&\norm{q}_2\le\cgauss\sqrt N,\\[1mm]
    \displaystyle
    \frac{\cgauss\sqrt N}{\norm{q}_2}\,q,
    &\norm{q}_2>\cgauss\sqrt N.
    \end{cases}
\label{eq:request-projection}
\end{equation}
By \cref{lem:one-column-steering}, $\Pi_N(q)/N$ is an admissible predictable drift for the next signed Gaussian column.

Fix an even integer $p>4$.  The parameters $T,a,\beta,c_F$ used below are chosen before the column stream begins.  For each $N$, let $v_N$ denote the final-block feedback in \cref{lem:euler-follmer} with horizon $h_N=(N-\lfloor TN\rfloor)/N$ and budget $c_F$.  When evaluating $\beta$, every Brownian observation time in its fixed representation is rounded down to the $1/N$ mesh, and only the required Brownian mesh values are stored.

\begin{algorithm}[t]
\caption{\textsc{OnlinePerceptron}}
\label{alg:online}
\begin{algorithmic}[1]

\Statex \textbf{Parameters:}
rational $T\in(0,1)$, target $a$, simple predictable control $\beta$
on $[0,T]$, even $p>4$, and final-block budget $c_F$
\Statex \textbf{Input:}
$M,N$ and columns $g_1,\ldots,g_N\in\R^M$, revealed one at a time
\Statex \textbf{Output:}
signs $\sigma_1,\ldots,\sigma_N\in\{\pm1\}$

\State $k_0\gets\lfloor TN\rfloor$;
       $h_N\gets(N-k_0)/N$;
       $X,B\gets0\in\R^M$

\Statex
\Statex \textbf{Stage I: Follow the Brownian control}
\For{$k=1,\ldots,k_0$}
    \State Evaluate $\beta$ coordinatewise from the stored Brownian mesh
           values to obtain $q^{\rm raw}$
    \State $q\gets\Pi_N(q^{\rm raw})$
    \State Observe $g_k$ and set $Z\gets g_k/\sqrt N$
    \State $(\sigma_k,\Delta B)\gets\operatorname{Steer}_N(q,Z)$
    \State $X\gets X+\sigma_kZ$;
           $B\gets B+\Delta B$
    \State Store any mesh value of $B$ required later by $\beta$
\EndFor

\Statex
\Statex \textbf{Stage II: Enforce the remaining constraints}
\State $q_i^{\rm det}\gets(a-X_i)_+/h_N$ for every $i\le M$;
       $\widehat Z\gets0\in\R^M$
\For{$\ell=0,\ldots,N-k_0-1$}
    \State $\widehat a_i\gets v_N(\ell/N,\widehat Z_i)$ for every $i\le M$
    \State $q\gets\Pi_N(q^{\rm det}+\widehat a)$
    \State Observe $g_{k_0+\ell+1}$ and set
           $Z\gets g_{k_0+\ell+1}/\sqrt N$
    \State $(\sigma_{k_0+\ell+1},\Delta B)
           \gets\operatorname{Steer}_N(q,Z)$
    \State $X\gets X+\sigma_{k_0+\ell+1}Z$
    \State $\widehat Z\gets\widehat Z+\widehat a/N+\Delta B$
\EndFor

\State \Return $(\sigma_1,\ldots,\sigma_N)$

\end{algorithmic}
\end{algorithm}

In both stages the control vector is completely determined and projected before the next column is observed.  Thus the online information constraint is visible directly from the pseudocode.  Both stages have the same per-arrival structure: compute a predictable control, project it into the admissible ball, apply the Gaussian steering map, and update the state.  The two stages differ only in how the control vector is chosen.

For fixed parameters, each iteration uses $O(M)$ arithmetic operations, so \cref{alg:online} uses $O(MN)$ arithmetic operations in total.  Since the first-stage control depends on only finitely many Brownian observations, the working memory is $O(M)$ for every fixed choice of parameters.  A finite-precision implementation with polynomial bit complexity is given in \cref{app:finite-precision}.

    \subsection{Analysis of the online algorithm}

    We compare the signed process produced by the algorithm with the Brownian systems used to define its two controls.  The required estimate is the conditional $M$-coordinate, $N$-arrival form of the coupling from \cite[Lemma~4.7]{FiedlerJacksonLackerNilesWeed26}.  We state it explicitly because we use it twice---once for the first stage and once for the final block---and this second application is conditioned on the entire first stage.

    Let $t_k=k/N$ and define the Brownian mesh filtration
    \begin{equation}
        \cG_k^B:=\sigma(B_{t_0},\ldots,B_{t_k}).
    \label{eq:mesh-filtration}
    \end{equation}

    \begin{proposition}[Conditional Gaussian coupling]
    \label{prop:conditional-coupling}
    Let $B$ be an $M$-dimensional Brownian motion, and let $\cH_0$ be a sigma-field independent of its future increments.  Suppose
    \[
        b_k\in L^0(\cH_0\vee\cG_k^B;\R^M),
        \qquad
        \norm{b_k}_2\le\frac{\cgauss}{\sqrt N},
        \qquad 0\le k<N.
    \]
    Iterating the inverse map in \cref{lem:one-column-steering} produces independent vectors $Z_{k+1}\sim N(0,I_M/N)$ and online signs $\varepsilon_{k+1}$ such that
    \[
        \E[\varepsilon_{k+1}Z_{k+1}\mid\cH_0\vee\cG_k^B]=b_k.
    \]
    After adjoining $\cH_0$, the completed sigma-fields generated by $Z_1,\ldots,Z_k$ and by $B_{t_0},\ldots,B_{t_k}$ agree.

    For every $p>2$ there is a constant $C_p$, independent of $M$ and $N$, such that for $0\le k_0<k_1\le N$,
    \begin{align}
    &\E\left[
    \left\|
    \sum_{k=k_0}^{k_1-1}(\varepsilon_{k+1}Z_{k+1}-b_k)
    -(B_{t_{k_1}}-B_{t_{k_0}})
    \right\|_\infty
    \,\middle|\,\cH_0
    \right]
    \notag\\
    &\hspace{2cm}\le
    C_pN^{1/4}
    \max_{k_0\le k<k_1}
    \left(
    \sum_{i=1}^M\E[|b_{k,i}|^p\mid\cH_0]
    \right)^{1/(2p)}
    \qquad\text{a.s.}
    \label{eq:conditional-coupling}
    \end{align}
    \end{proposition}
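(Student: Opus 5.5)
We prove the two parts of \cref{prop:conditional-coupling} separately: the construction and filtration identity first, then the quantitative bound \eqref{eq:conditional-coupling}.

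\textbf{Construction and filtrations.} Proceed by induction on $k$. Given $\cH_0\vee\cG_k^B$, the vector $b_k$ is known. The increment $\Delta B_{k+1}:=B_{t_{k+1}}-B_{t_k}$ is $N(0,I_M/N)$ and independent of $\cH_0\vee\cG_k^B$, since $\cH_0$ is independent of future increments. Apply the inverse of the map $Z\mapsto(\varepsilon,\Delta B)$ from \cref{lem:one-column-steering} with $b=b_k$. Concretely, with $v=b_k/\norm{b_k}_2$, recover $\varepsilon_{k+1}$ from $\sqrt N\langle\Delta B_{k+1},v\rangle$ via \eqref{eq:steering-sign}, and set
\[
Z_{k+1}=\langle\Delta B_{k+1},v\rangle v+\varepsilon_{k+1}(I-vv^{\mathsf T})\Delta B_{k+1}.
\]
The same parallel/orthogonal symmetry argument used in that lemma shows $Z_{k+1}\sim N(0,I_M/N)$ conditionally on $\cH_0\vee\cG_k^B$. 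Hence the $Z_k$ are i.i.d. and independent of $\cH_0$, and $\E[\varepsilon_{k+1}Z_{k+1}\mid\cdot]=b_k$. Bijectivity of the map given the past shows by induction that $\cH_0\vee\sigma(Z_{1..k})=\cH_0\vee\cG_k^B$ up to null sets.

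\textbf{Coupling estimate.} Write $D_{k+1}:=\varepsilon_{k+1}Z_{k+1}-b_k-\Delta B_{k+1}$. Since $\langle Z_{k+1},v\rangle=\langle\Delta B_{k+1},v\rangle$, we have
\[
D_{k+1}=(\varepsilon_{k+1}-1)\,\bigl[\langle\Delta B_{k+1},v\rangle v\bigr]\cdot(\text{sign bookkeeping})-b_k ,
\]
and a short computation collapses this to
\[
D_{k+1}=\bigl(\varepsilon_{k+1}\eta-\E[\varepsilon_{k+1}\eta\mid\cdot]\bigr)\frac{v}{\sqrt N}-(1-\varepsilon_{k+1})\frac{\eta}{\sqrt N}v
\]
up to rearrangement, where $\eta=\sqrt N\langle\Delta B_{k+1},v\rangle$. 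In every case $D_{k+1}$ is a martingale difference supported on the line spanned by $v$. Its $i$th coordinate is $\theta_{k+1}v_i/\sqrt N$, where $\theta_{k+1}$ is a centered scalar bounded in absolute value by $2|\eta|\1_{\{-\Psi(R)<\eta\le0\}}+R$. Its conditional second moment is $O(R)$, and $R=\sqrt N\norm{b_k}_2$.

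Coordinatewise, $\sum_k D_{k+1,i}$ is then a martingale with increments $\theta_{k+1}b_{k,i}/(\norm{b_k}_2\sqrt N)$. Bound $\E\max_i|\cdot|\le(\sum_i\E|\cdot|^{q})^{1/q}$ for a suitable $q$ (taking $q=2p$ or $p$), and apply Burkholder--Rosenthal to each coordinate. The key scalar estimate is that $\E[|\theta|^r\mid\cdot]\lesssim R$ for every $r$, because $\theta$ is nonzero beyond size $R$ only on an event of probability $O(R)$ (for small $R$, $\Psi(R)\approx\sqrt{2R/\cgauss}$). Combine this with $|v_i|/\sqrt N=|b_{k,i}|/(\norm{b_k}_2\sqrt N)$ and the conditional Jensen/Hölder inequality given $\cH_0$. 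Summing over at most $N$ steps is what should produce the factor $N^{1/4}$ together with $(\sum_i\E|b_{k,i}|^p)^{1/(2p)}$: there is a square-root loss from $R\le\sqrt N\norm{b_k}_2$ against the $N$ steps, and the $2p$-th root arises from the $\ell^\infty\to\ell^{q}$ step combined with $\E|\theta|^q\lesssim R$.

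\textbf{Main obstacle.} The obstacle is this exponent bookkeeping: obtaining exactly $N^{1/4}$ and the $1/(2p)$ power requires balancing the small-$R$ behaviour of $\theta$ against the normalization. If the direct route loses, I would instead follow the scalar computation in \cite[Lemma~4.7]{FiedlerJacksonLackerNilesWeed26} verbatim, conditioning throughout on $\cH_0$. That computation is legitimate here because the steering construction only uses the independence of each increment from the current sigma-field.
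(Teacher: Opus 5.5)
Your construction and the filtration identity are the paper's own induction (\cref{lem:coupling-filtration}). Your verbal description of the one-step residual is also right: $D_{k+1}=N^{-1/2}\theta_{k+1}v_k$ with $\theta_{k+1}=-2\1_{\{-\Psi(R_k)<\eta_{k+1}\le0\}}\eta_{k+1}-R_k$, centered given $\cH_0\vee\cG_k^B$. Your second display for $D_{k+1}$ is wrong as written: it simplifies to $N^{-1/2}\bigl((2\varepsilon_{k+1}-1)\eta_{k+1}-R_k\bigr)v_k$. The correct identity is $\varepsilon_{k+1}Z_{k+1}-\Delta B_{k+1}=(\varepsilon_{k+1}-1)\langle\Delta B_{k+1},v_k\rangle v_k$.

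\textbf{The gap.} It lies in the quantitative part, exactly at your ``key scalar estimate'' $\E[|\theta|^r\mid\cdot]\lesssim R$. Its justification is off: the event $\{-\Psi(R)<\eta\le0\}$ has probability $\asymp\Psi(R)\asymp\sqrt R$, not $O(R)$, and $|\theta|\lesssim\sqrt R$ on it. The real problem, though, is homogeneity. With only one power of $R_k=\sqrt N\norm{b_k}_2$, the quantity $R_k|v_{k,i}|^{2p}=\sqrt N\,|b_{k,i}|^{2p}\norm{b_k}_2^{1-2p}$ cannot be bounded by $p$th moments of $b_{k,i}$ alone. The best you can do is $\norm{b_k}_2|v_{k,i}|^{2p}\le|b_{k,i}|$, which controls only \emph{first} moments. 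Pushed through Marcinkiewicz--Zygmund and H\"older in $k$, this gives at best $\E\norm{S}_\infty\lesssim N^{1/(4p)}\max_k\bigl(\sum_i\E|b_{k,i}|\bigr)^{1/(2p)}$. That bound does not imply \eqref{eq:conditional-coupling}. In the paper's own application ($b_k=q_k/N$, $q$ bounded, $M\asymp N$) it is of order $N^{1/(4p)}$ rather than $o(1)$. So the ``exponent bookkeeping'' you flag is not a matter of rebalancing; an ingredient is missing.

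\textbf{The missing ingredient.} You need the $\sqrt R$ scaling of $\theta$ in every $L^r$. Since $\Psi(r)\le C\sqrt r$ for small $r$, and $|\eta|\le C\sqrt r\,|\eta|$ trivially when $r$ is bounded below, one has the pointwise bound $|\theta_{k+1}|\le C\sqrt{R_k}\,(1+|\eta_{k+1}|)$ uniformly in $R_k\in[0,\cgauss]$. Hence $\bigl(\E[|\theta_{k+1}|^{2p}\mid\cH_0\vee\cG_k^B]\bigr)^{1/p}\le C_pR_k=C_p\sqrt N\norm{b_k}_2$. Combine this with the identity $\norm{b_k}_2v_{k,i}^2=|b_{k,i}||v_{k,i}|\le|b_{k,i}|$ to get $\bigl(\E[|\theta_{k+1}v_{k,i}|^{2p}\mid\cH_0]\bigr)^{1/p}\le C_p\sqrt N\bigl(\E[|b_{k,i}|^p\mid\cH_0]\bigr)^{1/p}$.

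After that no balancing is needed. Put $S_i:=N^{-1/2}\sum_{k=k_0}^{k_1-1}\theta_{k+1}v_{k,i}$. Conditional Marcinkiewicz--Zygmund gives $\bigl(\E[|S_i|^{2p}\mid\cH_0]\bigr)^{1/p}\le C_pN^{-1/2}\sum_k\bigl(\E[|b_{k,i}|^p\mid\cH_0]\bigr)^{1/p}$. Then:
\begin{itemize}
\item raise this to the $p$th power using $(\sum_kx_k)^p\le N^{p-1}\sum_kx_k^p$;
\item sum over $i$;
\item use $\E[\norm{S}_\infty\mid\cH_0]\le\bigl(\sum_i\E[|S_i|^{2p}\mid\cH_0]\bigr)^{1/(2p)}$.
\end{itemize}
This yields exactly $C_pN^{1/4}\max_k\bigl(\sum_i\E[|b_{k,i}|^p\mid\cH_0]\bigr)^{1/(2p)}$, and it is the paper's computation. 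Your fallback of citing the reference's Lemma~4.7 is what the main text does. As a proof, however, the quantitative step is left undone, and the estimate you propose for it cannot close it.
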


    \begin{proof}
    The equality of the discrete filtrations follows by induction from the invertibility in \cref{lem:one-column-steering}.  For the maximal estimate, the proof of~\cite[Lemma~4.7]{FiedlerJacksonLackerNilesWeed26} represents the one-step discrepancy between the signed Gaussian increment and the Brownian increment as a scalar martingale difference in the current drift direction, and bounds its conditional moments.  The same coordinate martingale estimate, with $M$ coordinates and $N$ arrivals, gives \eqref{eq:conditional-coupling}; conditioning throughout on $\cH_0$ gives the stated conditional form.  The calculation is recorded in \cref{app:coupling}.
    \end{proof}

    The next proposition isolates the probabilistic estimates needed for the algorithm.  Its hypotheses are exactly the inequalities that will be arranged from the Brownian control problem.

    \begin{proposition}
    \label{prop:finite-transfer}
    Fix a constraint density $\alpha>0$ and an even integer $p>4$.  Let $a>r$ and $c_F>0$, let $T=1-h$ with $0<h<h_0(p,c_F)$, and let $\beta$ be a bounded simple predictable control on $[0,T]$ of the form in \cref{def:finite-observation} with
    \begin{equation}
        \norm{\beta}_{\infty,2}<\ca,
        \qquad
        \frac1h\norm{(a-Y_T^\beta)_+}_{L^p}+c_F<\ca,
        \qquad
        s_p(h,c_F)<a-r.
    \label{eq:transfer-hypotheses}
    \end{equation}
    Then \cref{alg:online}, with first-stage control $\beta$, target $a$, and final-block budget $c_F$, satisfies
    \begin{equation}
        \Pp\left(\min_{i\le M}X_{N,i}\ge r\right)\longrightarrow1
    \label{eq:finite-transfer-conclusion}
    \end{equation}
    whenever $M/N\to\alpha$.
    \end{proposition}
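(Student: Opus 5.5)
The plan is to analyze the two stages of \cref{alg:online} separately, each through \cref{prop:conditional-coupling}. The main step in each stage is to show that the projection $\Pi_N$ is inactive with high probability, so that the algorithm realizes the intended drift exactly.

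\textbf{Stage I.} Take $\cH_0$ trivial and let $b_k=\Pi_N(q^{\rm raw}_k)/N$. Here $q^{\rm raw}_{k,i}=\beta(t_k)$ is evaluated on coordinate $i$ of the Brownian mesh, with observation times rounded down. The coordinates $B_i$ are independent Brownian motions, and the coefficients of $\beta$ are bounded. The coordinates $q^{\rm raw}_{k,i}$ are therefore i.i.d.\ in $i$, with second moment converging to $\E\beta_{t}^2$. Two facts control the rounding: the $F_j$ are Lipschitz, and the observation times are finite in number. Because $\beta$ is bounded, the law of large numbers holds uniformly over the finitely many intervals of $\beta$, with a Chebyshev or Hoeffding bound at each of the $O(1)$ distinct values. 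This gives
\[
\tfrac1M\norm{q^{\rm raw}_k}_2^2\le\norm{\beta}_{\infty,2}^2+o(1)<\ca^2\tfrac{N}{M}(1+o(1)) .
\]
So with probability tending to one, $\Pi_N$ never truncates. On that event the first stage drift is $\sum_{k<k_0}b_k=\int_0^{T}\beta\,dt+o(1)$ coordinatewise and uniformly; the error comes from mesh rounding and Lipschitz continuity, up to an $\ell^\infty$ error. I would control that error through $\max_i\sup_t|B_i(t)-B_i(\lfloor tN\rfloor/N)|$, which is $O(\sqrt{\log M/N})$.

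Next, apply \cref{prop:conditional-coupling}. The entries satisfy $|b_{k,i}|\le C/N$, so the right-hand side is
\[
C_pN^{1/4}(M N^{-p})^{1/(2p)}=O(N^{1/4+1/(2p)-1/2})=o(1),
\]
using $p>4$. Hence $\norm{X_{k_0}-Y^\beta_{T}(B)}_\infty\to0$ in probability, where $Y^\beta$ is computed coordinatewise from the Brownian mesh. One point needs care: the realized drift depends on $\Pi_N$. I would handle this by noting that the coupling only uses the realized $b_k$, and comparing the realized drift to $\int\beta$ on the good event.

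\textbf{Stage II.} Condition on $\cH_0=\sigma(\text{Stage I})$. The future Brownian increments are independent of $\cH_0$. By the Stage I comparison, the shortfalls $(a-X_{k_0,i})_+$ are close in $\ell^\infty$ to the i.i.d.\ quantities $(a-Y^\beta_{T,i})_+$. The law of large numbers in $L^p$ then gives
\[
\Bigl(\tfrac1M\sum_i(q^{\rm det}_i)^p\Bigr)^{1/p}\le\tfrac1h\norm{(a-Y^\beta_T)_+}_{L^p}+o(1).
\]
The $\widehat a_{\ell,i}$ are driven by the Stage II innovations $\Delta B$, which are fresh Brownian increments. This is exactly the Euler scheme of \cref{lem:euler-follmer}. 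The per-step bound \eqref{eq:euler-request-bound} and an empirical $L^p$ law of large numbers, uniform over the $O(N)$ steps, are the step I expect to be the main obstacle. A union bound over steps needs concentration of $\frac1M\sum_i|\widehat a_{\ell,i}|^p$. I would instead use that $v_N$ is bounded on the compact interval of horizons, or use fourth-moment Chebyshev bounds together with the $O(N)$-step union bound; the higher moments are finite because $p$ is arbitrary and even.

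By Minkowski, $\ell^2$-mean $\le$ $\ell^p$-mean, so the hypothesis \eqref{eq:transfer-hypotheses} gives
\[
\tfrac1{\sqrt M}\norm{q^{\rm det}+\widehat a_\ell}_2<\ca-\delta
\]
for all $\ell$ with high probability. So $\Pi_N$ is again inactive. Applying \cref{prop:conditional-coupling} conditionally on $\cH_0$ then bounds the discrepancy between $X_N-X_{k_0}$ and $h_Nq^{\rm det}+\widehat Z_{L_N}$. The conditional $p$-moments are $O(N^{-p})$ per coordinate summed over $M$ rows, so the bound is $o(1)$ as before.

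\textbf{Conclusion.} Coordinatewise, on the good event,
\[
X_{N,i}\ge X_{k_0,i}+(a-X_{k_0,i})_++\widehat Z_{L_N,i}-o(1)\ge a-s_p(h_N,c_F)-\dist(\cdot)-o(1),
\]
uniformly in $i$. By \eqref{eq:euler-endpoint-bound}, the distance term is $o(1)$ in $L^1$ for the maximum. Since $h_N\to h$, continuity of $s_p$ gives $a-s_p(h_N,c_F)>r$ for large $N$, which proves \eqref{eq:finite-transfer-conclusion}.
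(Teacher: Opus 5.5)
Your two-stage skeleton matches the paper's proof. It uses Hoeffding over the finitely many control intervals to keep $\Pi_N$ inactive in Stage~I, the conditional coupling with error $O(N^{-1/4+1/(2p)})$, the law of large numbers for the empirical $L^p$ shortfall, and Minkowski to keep $\Pi_N$ inactive in Stage~II. It then applies a second coupling conditioned on $\cG^B_{k_0}$ and the confinement estimate.

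\textbf{The gap.} It lies in the step you flag yourself: controlling the empirical second moment of the final-block drifts $\widehat a_{\ell,i}$ uniformly in $\ell$. Neither of your proposed fixes works.
\begin{itemize}
\item The feedback $v_N$ is not bounded. Since $v_N(t,x)=(\E_{\nu_{t,x}}Y-x)/(h_N-t)$ with $\E_{\nu_{t,x}}Y$ confined to the bounded support of $f_{h_N}$, it grows linearly in $|x|$ with slope up to $1/(h_N-t)$, which equals $N$ at the last mesh step. Only the moment bound \eqref{eq:euler-request-bound} is available.
\item A Chebyshev bound based on $\E\widehat a^4$ gives per-step failure probability $O(1/M)$. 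After the union bound over $L_N\asymp N$ steps this is $O(N/M)=O(1)$, not $o(1)$.
\item A genuine fourth-moment bound for the centered average needs $\E\widehat a^{8}$, which \cref{lem:euler-follmer} does not provide when $4<p<8$. Moreover, $p$ is not at your disposal: it is fixed by the hypotheses and determines $v_N$, $h_0$, and $s_p$. For the tent law, the terminal score $pH_h'/H_h$ has finite moments only of order below $p+1$.
\item Concentrating $\frac1M\sum_i|\widehat a_{\ell,i}|^p$ is the wrong target, since it would require moments of order $2p$. The capacity constraint is Euclidean, so only the empirical second moment of $\widehat a$ is needed. The power-mean inequality is applied to $q^{\rm det}$ alone.
\end{itemize}

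\textbf{The missing step.} This is exactly where $p>4$ enters; you invoke $p>4$ for the coupling, where $p>2$ suffices. Fix $\ell$ and set $Y_i:=\widehat a_{\ell,i}^2-\E\widehat a_{\ell,1}^2$. These are i.i.d.\ and centered across $i$, because each $\widehat Z_i$ is driven by $B_i$ alone, and $\E|Y_i|^{p/2}\le C c_F^p$. Rosenthal's inequality at exponent $p/2>2$ gives $\E\bigl|\frac1M\sum_iY_i\bigr|^{p/2}\le CM^{-p/4}$. Markov's inequality and a union bound over at most $N$ steps then give
\[
\Pp\Bigl(\sup_\ell\Bigl|\frac1M\sum_{i=1}^MY_i\Bigr|>\varepsilon\Bigr)\le C_\varepsilon NM^{-p/4}=O(N^{1-p/4})=o(1).
\]
Together with $\E\widehat a_{\ell,1}^2\le c_F^2$, this yields $\sup_\ell\bigl(\frac1M\sum_i\widehat a_{\ell,i}^2\bigr)^{1/2}\le c_F+o_{\Pp}(1)$, which is what \eqref{eq:final-stage-capacity} requires. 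With this replacement the rest of your argument goes through.

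A minor slip: the Stage~I inactivity condition is $\frac1M\norm{q^{\rm raw}_k}_2^2\le\cgauss^2N/M\to\ca^2$, not $\ca^2N/M$.
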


    \begin{proof}
    Put
    \[
        k_0:=\lfloor TN\rfloor,
        \qquad
        h_N:=\frac{N-k_0}{N}=h+O(N^{-1}).
    \]
    The projection in \eqref{eq:request-projection} makes every step admissible. We show that, with probability $1-o(1)$, it is never active.

    By the strict inequalities in \eqref{eq:transfer-hypotheses} we may fix constants $c_\beta,c_D>0$ with
    \[
        \norm{\beta}_{\infty,2}<c_\beta<\ca,
        \qquad
        \frac1h\norm{(a-Y_T^\beta)_+}_{L^p}<c_D,
        \qquad
        c_D+c_F<\ca.
    \]

    \paragraph{Stage I.}
    Apply the functions defining $\beta$ separately to the $M$ Brownian coordinate paths.  The collection of observation and block times is fixed. Rounding these times down to the $1/N$ mesh gives
    \begin{equation}
        \max_{i\le M}\max_{t\in\mathcal R}
        |B_i(t)-B_i(\lfloor Nt\rfloor/N)|
        =O_{\Pp}\!\left(\sqrt{\frac{\log N}{N}}\right),
    \label{eq:rounded-brownian-times}
    \end{equation}
    where $\mathcal R$ denotes the collection of times appearing in the representation of $\beta$.  Since the coefficient functions are bounded and globally Lipschitz, the discretized Brownian system at time $k_0/N$ differs in $\ell^\infty$ from $M$ independent copies of $Y_T^\beta$ by $o_{\Pp}(1)$.

    At each first-stage step, the coordinates of the unprojected control vector are bounded and independent across rows.  Since $\norm{\beta}_{\infty,2}<c_\beta$, concentration of the empirical second moment, together with a union bound over the control intervals, gives
    \begin{equation}
        \frac{\norm{q_k^{\rm raw}}_2}{\sqrt M}
        <c_\beta+o(1)
        <\cgauss\sqrt{\frac NM}
    \label{eq:first-stage-capacity}
    \end{equation}
    uniformly over the first stage, with probability $1-o(1)$.  Thus the projection is inactive there.

    Let $q_k:=\Pi_N(q_k^{\rm raw})$ be the control vector actually used by the algorithm, and let $\widetilde Y^N$ denote the Brownian mesh process driven by $q_k/N$.  The concentration estimate above shows that the projection is inactive throughout the first stage with probability $1-o(1)$; on that event, the preceding discretization estimate gives
    \[
        \norm{\widetilde Y^N-Y_T^{\beta,\otimes M}}_\infty=o_{\Pp}(1),
    \]
    where $Y_T^{\beta,\otimes M}$ denotes a vector of independent copies of $Y_T^\beta$.  Let $K$ bound the scalar values of $\beta$.  Since radial projection can only decrease each coordinate in absolute value, $|q_{k,i}|\le |q_{k,i}^{\rm raw}|\le K$.  We may therefore apply \cref{prop:conditional-coupling} to the always-admissible drift $b_k=q_k/N$, obtaining
    \[
        \sum_{i=1}^M\norm{b_{k,i}}_{L^p}^p
        \le MK^pN^{-p}
        =O(N^{1-p}),
    \]
    and therefore
    \begin{equation}
        \E\norm{X_{k_0}-\widetilde Y^N}_\infty
        =O\!\left(N^{-1/4+1/(2p)}\right)=o(1).
    \label{eq:first-stage-coupling}
    \end{equation}
    Consequently,
    \begin{equation}
        \norm{X_{k_0}-Y_T^{\beta,\otimes M}}_\infty=o_{\Pp}(1).
    \label{eq:first-stage-terminal-coupling}
    \end{equation}

    Choose $\delta>0$ such that
    \[
        \norm{(a-Y_T^\beta)_+}_{L^p}
        \le h(c_D-3\delta).
    \]
    The law of large numbers for the independent Brownian coordinates gives
    \[
        \left(
           \frac1M\sum_{i=1}^M(a-Y_{T,i}^\beta)_+^p
        \right)^{1/p}
        \le h(c_D-2\delta)
    \]
    with probability tending to one.  Since $x\mapsto(a-x)_+$ is $1$-Lipschitz, the $\ell^\infty$ coupling above transfers this estimate to $X_{k_0}$.  Using $h_N=h+O(N^{-1})$, we obtain
    \begin{equation}
        \left(
           \frac1M\sum_{i=1}^M(a-X_{k_0,i})_+^p
        \right)^{1/p}
        <h_N(c_D-\delta)
    \label{eq:empirical-shortfall}
    \end{equation}
    with probability $1-o(1)$.

    \paragraph{Stage II.}
    At time $k_0$ the algorithm sets
    \begin{equation}
        q_i^{\rm det}:=\frac{(a-X_{k_0,i})_+}{h_N}.
    \label{eq:deterministic-lift}
    \end{equation}
    On the event \eqref{eq:empirical-shortfall}, the empirical $L^p$ norm of $q^{\rm det}$ is below $c_D-\delta$, and hence so is its empirical $L^2$ norm.

    Conditionally on the first-stage Brownian mesh sigma-field, the future Brownian coordinates are independent.  The discretized F\"ollmer drifts are therefore independent across rows and satisfy the uniform $p$th-moment bound in \eqref{eq:euler-request-bound}.  Rosenthal's inequality \cite{Rosenthal70}, applied to $\widehat a_{\ell,i}^2$, followed by a union bound over the $O(N)$ final-stage times, gives
    \[
        \sup_\ell
        \left|
           \frac1M\sum_{i=1}^M\widehat a_{\ell,i}^2
           -\E\widehat a_{\ell,1}^2
        \right|
        =o_{\Pp}(1),
    \]
    because $p>4$ and $M\asymp N$.  Hence, uniformly over the final stage,
    \[
        \left(
           \frac1M\sum_{i=1}^M\widehat a_{\ell,i}^2
        \right)^{1/2}
        \le c_F+o_{\Pp}(1).
    \]
    Together with $c_D+c_F<\ca$ and Minkowski's inequality, this gives
    \begin{equation}
        \sup_\ell\frac{\norm{q_\ell^{\rm raw}}_2}{\sqrt M}
        \le c_D+c_F+o_{\Pp}(1)<\ca
    \label{eq:final-stage-capacity}
    \end{equation}
    with probability tending to one.  Thus the projection is inactive there as well.

    Ignoring the signing--Brownian coupling error, the terminal value of coordinate $i$ is
    \begin{align}
        X_{k_0,i}+h_Nq_i^{\rm det}+\widehat Z_{L_N,i}
        &=X_{k_0,i}+(a-X_{k_0,i})_++\widehat Z_{L_N,i}
        \notag\\
        &\ge a+\widehat Z_{L_N,i}.
    \label{eq:ideal-repaired-coordinate}
    \end{align}
    By \cref{lem:euler-follmer} and the continuity of $s_p(\cdot,c_F)$, the strict inequality $s_p(h,c_F)<a-r$ gives $s_p(h_N,c_F)<a-r$ for all large $N$, since $h_N=h+O(N^{-1})$.  Hence the minimum of this Brownian comparison process is larger than $r$ with probability tending to one.

    It remains to control the signing--Brownian coupling error in the final stage.  Let
    \[
        q_\ell:=\Pi_N(q^{\rm det}+\widehat a_\ell)
    \]
    be the control vector actually used at final-stage time $\ell$.  Condition on the first-stage history and apply \cref{prop:conditional-coupling} to the always-admissible drift $b_\ell=q_\ell/N$, with $\cH_0=\cG_{k_0}^B$.  Radial projection does not increase coordinatewise absolute values, so \eqref{eq:empirical-shortfall}, \eqref{eq:euler-request-bound}, and conditional Minkowski give
    \[
        \sum_{i=1}^M
        \E\left[
           \left|
              \frac{q_{\ell,i}}{N}
           \right|^p
           \,\middle|\,\cG_{k_0}^B
        \right]
        \le C N^{1-p}.
    \]
    On the event \eqref{eq:final-stage-capacity}, which has probability $1-o(1)$, the projection is inactive and this Brownian comparison is exactly the ideal process in \eqref{eq:ideal-repaired-coordinate}.  Thus the conditional expected $\ell^\infty$ coupling error is $O(N^{-1/4+1/(2p)})=o(1)$.  Combining this with \eqref{eq:ideal-repaired-coordinate} and \eqref{eq:euler-endpoint-bound} proves \eqref{eq:finite-transfer-conclusion}.
    \end{proof}

    We can now instantiate the preceding proposition from any Brownian control whose $L^2$ norm lies strictly below the available drift budget.

    \begin{proposition}
    \label{prop:achievability}
    If
    \[
        \alpha<\frac{2}{\pi\Estar(\kappa)},
    \]
    then there exists $\rho>0$ such that \cref{alg:online}, for a suitable fixed choice of its control parameters, satisfies
    \[
        \Pp\left(
           \min_{i\le M}X_{N,i}\ge\kappa+\rho
        \right)\longrightarrow1
    \]
    whenever $M/N\to\alpha$.
    \end{proposition}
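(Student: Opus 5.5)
The plan is to verify the three hypotheses in \eqref{eq:transfer-hypotheses} of \cref{prop:finite-transfer}, with $r=\kappa+\rho$, and then apply that proposition. Since $\alpha<2/(\pi\Estar(\kappa))$, we have $\ca^2>\Estar(\kappa)$. Choose $c$ with $\Estar(\kappa)<c^2<\ca^2$, and apply \cref{lem:strict-slack} with this $c$. This produces $\rho>0$ and a nonnegative Brownian control $u$ with $\norm{u}_{\infty,2}<c$ and $Y_1^u\ge\kappa+4\rho$. For every $T<1$ close enough to one, the compressed control $u^{(T)}$ satisfies $\norm{u^{(T)}}_{\infty,2}<c$ and $Y_T^{u^{(T)}}\ge\kappa+3\rho$ almost surely. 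Set $a:=\kappa+3\rho$ and $r:=\kappa+\rho$, so that $a-r=2\rho$.

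The order of the parameter choices is the crux. Fix an even $p>4$ and put $c_F:=(\ca-c)/2>0$. Take $h_0=h_0(p,c_F)$ from \cref{lem:euler-follmer}. Since $s_p(h,c_F)\downarrow0$ as $h\downarrow0$, we may choose $h\in(0,h_0)$ so small that $s_p(h,c_F)<2\rho=a-r$, and also so small that $T:=1-h$ lies in the range allowed by \cref{lem:strict-slack}. These choices fix $h$, $T$, and $c_F$ before any approximation is made. Only then do we apply \cref{prop:finite-observation-approximation} to $u^{(T)}$ on $[0,T]$, with level $a$, exponent $p$, and tolerance $\varepsilon:=h(\ca-c_F-c)/2=h c_F$, say. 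Any positive tolerance strictly smaller than $h(\ca-c_F)$ would do. The result is a bounded nonnegative simple predictable control $\beta$ of the form in \cref{def:finite-observation} with $\norm{\beta}_{\infty,2}<c<\ca$ and $\norm{(a-Y_T^\beta)_+}_{L^p}<\varepsilon$. Then
\[
\frac1h\norm{(a-Y_T^\beta)_+}_{L^p}+c_F<\frac{\varepsilon}{h}+c_F\le\ca ,
\]
and since the first inequality is strict the second hypothesis holds. The third hypothesis, $s_p(h,c_F)<a-r$, was arranged when $h$ was chosen.

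With all of \eqref{eq:transfer-hypotheses} in place, \cref{prop:finite-transfer} gives $\Pp(\min_i X_{N,i}\ge\kappa+\rho)\to1$ for every sequence $M/N\to\alpha$, using \cref{alg:online} with the fixed parameters $T,a,\beta,p,c_F$. The only delicate point is this order of choices. The radius $s_p$ depends on $h$, so $h$ must be fixed before $\beta$. The shortfall term is divided by $h$, so the $L^p$ tolerance for $\beta$ must be chosen after $h$. This works because \cref{prop:finite-observation-approximation} allows an arbitrary $\varepsilon$ for a fixed $T$. We also check that $c_F$ can be chosen independently of $h$, which avoids circularity: $h_0$ depends only on $(p,c_F)$, and $c_F$ depends only on the gap $\ca-c$. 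Rationality of $T$, required by the algorithm, is obtained by choosing $h$ rational.
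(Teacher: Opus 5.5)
Your argument is correct and is essentially the paper's proof. It uses the same reduction to \cref{prop:finite-transfer} with the same order of choices: $p$ and $c_F$ first, then a rational $h<h_0(p,c_F)$ with $s_p(h,c_F)<a-r$ and $T=1-h$ in the range of \cref{lem:strict-slack}, and only then the tolerance $\varepsilon<h(\ca-c_F)$ and the control $\beta$. Two harmless points: the intermediate level $c<\ca$ is not needed, since the paper applies \cref{lem:strict-slack} directly with $c=\ca$ and any $c_F\in(0,\ca)$ because Stage I and Stage II use the drift budget separately. Also, $h(\ca-c_F-c)/2$ equals $hc_F/2$ rather than $hc_F$, though either tolerance meets your requirement.
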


    \begin{proof}
    The assumption is $\ca^2>\Estar(\kappa)$.  Fix an even integer $p>4$, apply \cref{lem:strict-slack} with $c=\ca$, and fix $c_F\in(0,\ca)$.  Choose a rational $T=1-h$ with $0<h<h_0(p,c_F)$ close enough to one that the rescaled control $u^{(T)}$ satisfies $\norm{u^{(T)}}_{\infty,2}<\ca$ and $Y_T^{u^{(T)}}\ge\kappa+3\rho_0$ almost surely for some $\rho_0>0$, and $s_p(h,c_F)<\rho_0$.  Fix $\varepsilon>0$ with $\varepsilon/h+c_F<\ca$, and apply \cref{prop:finite-observation-approximation} to $u^{(T)}$ with $c=\ca$, target $a=\kappa+3\rho_0$, and error $\varepsilon$, obtaining a simple predictable control $\beta$ with
    \[
        \norm{\beta}_{\infty,2}<\ca,
        \qquad
        \norm{(\kappa+3\rho_0-Y_T^\beta)_+}_{L^p}<\varepsilon.
    \]
    Set $a=\kappa+3\rho_0$ and $r=\kappa+2\rho_0$.  Then $\norm{\beta}_{\infty,2}<\ca$, $\tfrac1h\norm{(a-Y_T^\beta)_+}_{L^p}+c_F<\varepsilon/h+c_F<\ca$, and $s_p(h,c_F)<\rho_0=a-r$, so $\beta$ meets the hypotheses of \cref{prop:finite-transfer}.  Applying it and taking $\rho=2\rho_0$ proves the claim.
    \end{proof}

    Combining \cref{cor:threshold-upper,prop:achievability} gives
    \[
        \aon(\kappa)=\frac{2}{\pi\Estar(\kappa)},
    \]
    which completes the proof of the threshold identity in \cref{thm:main}. The implementation with polynomial bit complexity is proved in \cref{app:finite-precision}.
    \section{Quantitative analysis of the Brownian control value}
    \label{sec:quantitative}

    We now study $\Estar(\kappa)$ quantitatively.  At zero margin we obtain rigorous computer-assisted upper and lower bounds by two independent arguments.  The lower bound comes from a weighted $L^2$ relaxation and a one-dimensional dynamic-programming recursion; the upper bound comes from self-similar diffusion constructed below.  We then derive the large-positive-margin asymptotic from elementary estimates and analyze the large-negative-margin limit using a localized F\"ollmer drift and a Gaussian tail estimate near correlation one.

    \subsection{Zero margin}
    \label{sec:zero-margin}

    \paragraph{A weighted $L^2$ relaxation.}
    The definition of $\Estar(\kappa)$ constrains the second moment of the drift uniformly in time.  Replacing this by a weighted time average gives a family of lower bounds.  The corresponding unweighted integrated $L^2$ problem appears in \cite[Proposition~6.5 and Remark~6.6]{FiedlerJacksonLackerNilesWeed26}; we use a time-dependent weight.

    Let $\eta:(0,1)\to(0,\infty)$ be measurable with
    \[
        \int_0^1\eta(t)\,dt=1.
    \]
    Define
    \begin{equation}
        V_\kappa(\eta)
        :=
        \inf_{Y_1^u\ge\kappa\ \mathrm{a.s.}}
        \E\int_0^1\eta(t)u_t^2\,dt,
    \label{eq:weighted-value}
    \end{equation}
    where the infimum is over controls progressively measurable with respect to the Brownian filtration and with finite integrated second moment.

    \begin{lemma}
    \label{lem:weighted-lower-bound}
    For every such weight $\eta$,
    \[
        \Estar(\kappa)\ge V_\kappa(\eta).
    \]
    \end{lemma}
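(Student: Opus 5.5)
The plan is to show that every control admissible for $\Estar(\kappa)$ is also admissible for $V_\kappa(\eta)$, with weighted cost no larger than its uniform-in-time cost. Taking infima over the same feasible set then gives the inequality.

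Fix a progressively measurable control $u$ with $\norm{u}_{\infty,2}<\infty$ and $Y_1^u\ge\kappa$ almost surely. If no such control exists, then $\Estar(\kappa)=+\infty$ and there is nothing to prove; in any case \cref{lem:elementary-E-bounds} guarantees that feasible controls exist.

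First, $u$ belongs to the class over which $V_\kappa(\eta)$ is taken. It is progressively measurable with respect to the Brownian filtration. By Tonelli, $\E\int_0^1u_t^2\,dt=\int_0^1\E u_t^2\,dt\le\norm{u}_{\infty,2}^2<\infty$, so its integrated second moment is finite.

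Next, apply Tonelli again, using that $\eta>0$ is measurable, together with the bound $\E u_t^2\le\norm{u}_{\infty,2}^2$ for almost every $t$:
\[
    \E\int_0^1\eta(t)u_t^2\,dt
    =\int_0^1\eta(t)\,\E u_t^2\,dt
    \le\norm{u}_{\infty,2}^2\int_0^1\eta(t)\,dt
    =\norm{u}_{\infty,2}^2 .
\]
The exceptional null set of times does not affect the integral, because $\eta\,dt$ is absolutely continuous with respect to Lebesgue measure.

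Combining the two steps, $V_\kappa(\eta)\le\norm{u}_{\infty,2}^2$ for every control feasible for $\Estar(\kappa)$. Taking the infimum over such $u$ gives $V_\kappa(\eta)\le\Estar(\kappa)$.

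There is no real obstacle. The only points requiring care are the measurability needed to apply Tonelli to the product $\eta(t)u_t^2$, and the observation that the essential supremum bound holds only for almost every $t$, which suffices for the integral.
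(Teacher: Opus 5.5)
Your proof is correct and follows the paper's argument: bound the weighted cost of any feasible control by $\norm{u}_{\infty,2}^2\int_0^1\eta(t)\,dt=\norm{u}_{\infty,2}^2$, then take the infimum. Your explicit check that such a control has finite integrated second moment, and so is admissible for $V_\kappa(\eta)$, is a detail the paper leaves implicit.
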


    \begin{proof}
    For every feasible control $u$,
    \[
        \E\int_0^1\eta(t)u_t^2\,dt
        \le
        \left(\esssup_{0<t<1}\E u_t^2\right)
        \int_0^1\eta(t)\,dt
        =
        \norm{u}_{\infty,2}^2.
    \]
    Taking the infimum over feasible controls proves the claim.
    \end{proof}

    For piecewise-constant weights, the relaxed problem reduces to a one-dimensional recursion.  We first record the single-interval variational identity used in the recursion.

    \begin{lemma}[Entropic representation]
    \label{lem:entropic-identity}
    Let $a>0$ and $\Delta>0$, and let $\Gamma:\R\to(-\infty,+\infty]$ be Borel measurable and bounded below. For Brownian motion started from $x$, define
    \[
        \mathcal V(x)
        :=
        \inf_u
        \E\left[
            a\int_0^\Delta u_t^2\,dt
            +
            \Gamma\left(
                x+B_\Delta+\int_0^\Delta u_t\,dt
            \right)
        \right],
    \]
    where the infimum is over progressively measurable controls satisfying $\E\int_0^\Delta u_t^2\,dt<\infty$.  Then
    \begin{equation}
        \mathcal V(x)
        =
        -2a\log
        P_\Delta\!\left(e^{-\Gamma/(2a)}\right)(x),
    \label{eq:entropic-identity}
    \end{equation}
    with the convention $e^{-\infty}=0$.
    \end{lemma}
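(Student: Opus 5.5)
This is the Boué--Dupuis / Donsker--Varadhan variational formula. We plan to prove the two inequalities separately, after reducing by scaling to the case $a=1/2$. Replacing $\Gamma$ by $\Gamma/(2a)$ and dividing the cost by $2a$ turns the functional into $\E[\tfrac12\int_0^\Delta u_t^2\,dt+\tilde\Gamma(\cdot)]$ with $\tilde\Gamma=\Gamma/(2a)$. The claim then becomes
\[
\inf_u\E\Bigl[\tfrac12\int_0^\Delta u^2\,dt+\tilde\Gamma(x+B_\Delta+\textstyle\int_0^\Delta u\,dt)\Bigr]=-\log\E e^{-\tilde\Gamma(x+B_\Delta)}.
\]
Here $P_\Delta$ is the heat semigroup at time $\Delta$. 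If $\E e^{-\tilde\Gamma(x+B_\Delta)}=0$, the right side is $+\infty$. In that case every control yields a terminal law absolutely continuous with respect to the Gaussian (see below), so $\tilde\Gamma=+\infty$ with positive probability and both sides are infinite. We therefore assume the expectation is positive; it is at most $e^{-\inf\tilde\Gamma}<\infty$ because $\Gamma$ is bounded below.

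\textbf{Lower bound, via entropy.} Fix an admissible $u$ with finite cost. Let $\mathbb Q$ be the law of the path $X_t=x+B_t+\int_0^tu_s\,ds$ on $C[0,\Delta]$, and let $\mathbb W$ be Wiener measure started at $x$. By the standard Girsanov/Föllmer entropy inequality, $H(\mathbb Q\,|\,\mathbb W)\le\tfrac12\E\int_0^\Delta u_t^2\,dt$. It holds with equality when the filtration is the natural one and in general as an inequality. We would prove it by localizing $u$ to bounded stopped controls, applying Girsanov, and passing to the limit with lower semicontinuity of entropy. The Gibbs variational principle then gives $\E_{\mathbb Q}[\tilde\Gamma]+H(\mathbb Q|\mathbb W)\ge-\log\E_{\mathbb W}e^{-\tilde\Gamma(X_\Delta)}$. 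This is valid for $\tilde\Gamma$ bounded below, possibly taking the value $+\infty$, by truncating $\tilde\Gamma\wedge n$ and using monotone convergence. Combining the two inequalities gives ``$\ge$''. Since $H<\infty$ forces $\mathbb Q\ll\mathbb W$, the same argument also settles the degenerate case.

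\textbf{Upper bound, via the Föllmer/Doob $h$-transform.} Set $f=e^{-\tilde\Gamma}$, which is bounded and nonnegative with $P_\Delta f(x)>0$. Define $h(t,y)=P_{\Delta-t}f(y)$. This function is smooth and positive on $[0,\Delta)\times\R$, and $h(t,B_t)$ is a martingale. The optimal control is the feedback $u_t=\partial_y\log h(t,X_t)$. Under the measure $d\mathbb Q=f(X_\Delta)/P_\Delta f(x)\,d\mathbb W$, the coordinate process solves the SDE with this drift, and $H(\mathbb Q|\mathbb W)=\E_{\mathbb Q}[-\tilde\Gamma]-\log P_\Delta f(x)$. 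Substituting, the cost $\E_{\mathbb Q}[\tfrac12\int u^2+\tilde\Gamma]$ equals $-\log P_\Delta f(x)$, as required.

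Two technical points remain. First, we must realize this drift as a strong control on the given Brownian space. Second, we must show $\E\int u^2<\infty$, that is, $\tfrac12\E_{\mathbb Q}\int u^2=H(\mathbb Q|\mathbb W)<\infty$. The gradient may blow up as $t\to\Delta$, and $f$ may vanish on sets, so the singular behaviour near the terminal time is the main obstacle. To sidestep it we will approximate rather than use the exact optimizer. Replace $f$ by $f_n=\max(f,1/n)\wedge n$, so that $\log f_n$ is bounded, and use the feedback on $[0,\Delta-1/n]$ with $u=0$ afterwards. The feedback is then smooth with bounded derivatives on that interval, so a strong solution exists, and $\E\int u^2<\infty$ by the entropy identity. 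The resulting costs converge to $-\log P_\Delta f(x)$, with monotone convergence in $n$ handling the truncation of $\tilde\Gamma$. Since the infimum defining $\mathcal V$ only needs near-optimal controls, this yields ``$\le$'' and completes the proof of \eqref{eq:entropic-identity}.
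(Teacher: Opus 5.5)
Your lower bound (Girsanov entropy inequality plus the Gibbs variational principle, truncating $\tilde\Gamma$ and using monotone convergence) and your handling of the case $P_\Delta f(x)=0$ are correct and match the paper. The gap is in the upper bound, in the approximation you use to avoid the singular drift.

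\textbf{Why the approximation fails.} Your approximating controls have \emph{bounded} drift. Since $f_n\ge 1/n$ and $\Delta-t\ge 1/n$ on the active interval, $|\partial_y\log P_{\Delta-t}f_n|\le C_n$, and you set $u=0$ on $[\Delta-1/n,\Delta]$. Consequently $X_\Delta=X_{\Delta-1/n}+(B_\Delta-B_{\Delta-1/n})$ has an everywhere positive density. In fact the path law has density $P_{1/n}f_n(X_{\Delta-1/n})/P_\Delta f_n(x)\ge\bigl(nP_\Delta f_n(x)\bigr)^{-1}$ with respect to $\mathsf W_x$.

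So whenever $\{\Gamma=+\infty\}$ has positive Lebesgue measure, every one of your controls has cost $+\infty$. That is exactly the case in which the paper uses the lemma: the hard constraint $\Gamma_m=+\infty\cdot\1_{(-\infty,\kappa)}$ in \cref{thm:weighted-recursion}. The asserted convergence of costs is therefore false. The same failure occurs for finite $\Gamma$ with $\E\,\Gamma(x+B_\Delta)=+\infty$, even though $\mathcal V(x)<\infty$ there.

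Your ``monotone convergence in $n$'' only gives $-\log P_\Delta f_n(x)\uparrow-\log P_\Delta f(x)$. These are the values of the problems with terminal cost $\tilde\Gamma\wedge\log n$, which lie \emph{below} $\mathcal V(x)$. That is the wrong direction for ``$\le$''.

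\textbf{What is needed instead.} The singularity at $t=\Delta$ cannot be sidestepped. Any drift bounded on $[\Delta-\delta,\Delta]$ makes the terminal law charge every set of positive Lebesgue measure, so under a hard constraint every finite-cost control is unbounded near $\Delta$. You need the exact F\"ollmer drift $u^\star_t=\partial_y\log P_{\Delta-t}f(X_t)$ on all of $[0,\Delta)$, as the paper uses.

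Its energy is not the obstacle you feared. Because $\Gamma$ is bounded below, $dQ^\star/d\mathsf W_x=f(X_\Delta)/P_\Delta f(x)\le\sup f/P_\Delta f(x)$. Hence $\tfrac12\E_{Q^\star}\int_0^\Delta(u^\star_t)^2\,dt=H(Q^\star\mid\mathsf W_x)<\infty$, and $\tilde\Gamma(X_\Delta)$ is $Q^\star$-integrable.

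The step that genuinely needs an argument is admissibility on the Brownian space, which the paper handles as follows:
\begin{enumerate}[label=(\roman*)]
\item The map $y\mapsto\partial_y\log P_{\Delta-t}f(y)$ is locally Lipschitz, locally uniformly for $t\le T<\Delta$, so pathwise uniqueness holds up to explosion.
\item $Q^\star$ itself provides a nonexplosive weak solution.
\item A localized Yamada--Watanabe argument then gives a strong solution on each $[0,T]$, consistent as $T\uparrow\Delta$.
\end{enumerate}
Hence $u^\star$ is progressively measurable for the driving Brownian filtration.

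Your truncation scheme does work when $\Gamma$ is bounded. It does not cover the generality the lemma states, which the paper needs.
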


    \begin{proof}
    Let $\mathsf W_x$ denote Wiener measure on $C([0,\Delta])$ started from $x$ and put
    \[
        Z
        :=
        \E_{\mathsf W_x}
        e^{-\Gamma(X_\Delta)/(2a)}
        =
        P_\Delta\!\left(e^{-\Gamma/(2a)}\right)(x).
    \]
    The Gibbs variational principle (equivalently, the entropy form of the Bou\'e--Dupuis variational representation) gives~\cite{BoueDupuis98}
    \begin{equation}
        -2a\log Z
        =
        \inf_{Q\ll\mathsf W_x}
        \left\{
            2a\,H(Q\mid\mathsf W_x)
            +
            \E_Q\Gamma(X_\Delta)
        \right\},
    \label{eq:gibbs-terminal-cost}
    \end{equation}
    where $H(\,\cdot\,\mid\,\cdot\,)$ denotes relative entropy.

    If $u$ is admissible and $Q^u$ is the law of $x+B+\int_0^\cdot u_s\,ds$, the localized Girsanov inequality yields
    \[
        H(Q^u\mid\mathsf W_x)
        \le
        \frac12\E\int_0^\Delta u_t^2\,dt.
    \]
    Substituting $Q^u$ into \eqref{eq:gibbs-terminal-cost} shows that the right-hand side of \eqref{eq:entropic-identity} is a lower bound for $\mathcal V(x)$.

    Suppose now that $Z>0$ and define
    \[
        \frac{dQ^\star}{d\mathsf W_x}
        =
        \frac{e^{-\Gamma(X_\Delta)/(2a)}}{Z}.
    \]
    Since $\Gamma$ is bounded below, the density is bounded.  Moreover $\Gamma(X_\Delta)$ is $Q^\star$-integrable: on the set where $\Gamma$ is finite, the positive part of $\Gamma e^{-\Gamma/(2a)}$ is bounded, while the negative part is bounded by assumption.  Define
    \[
        q_t(y)
        :=
        P_{\Delta-t}\!\left(e^{-\Gamma/(2a)}\right)(y),
        \qquad
        u_t^\star
        :=
        \partial_y\log q_t(X_t).
    \]
    The F\"ollmer drift representation gives
    \[
        X_t
        =
        x+W_t+\int_0^t u_s^\star\,ds,
        \qquad
        H(Q^\star\mid\mathsf W_x)
        =
        \frac12
        \E_{Q^\star}\int_0^\Delta (u_t^\star)^2\,dt,
    \]
    where $W$ is $Q^\star$-Brownian~\cite{Follmer85}.  Under $Q^\star$, the canonical process is a nonexplosive weak solution of the feedback SDE $dX_t=u^\star_t\,dt+dW_t$ on $[0,\Delta)$.  For every $T<\Delta$, the feedback $b(t,y):=\partial_y\log q_t(y)$ is locally Lipschitz in $y$, locally uniformly for $t\le T$, so pathwise uniqueness holds up to explosion.  The localized Yamada--Watanabe argument then yields a strong solution on $[0,T]$, consistent as $T\uparrow\Delta$. Thus $u^\star$ is progressively measurable with respect to the completed natural filtration of $W$.  Equality in \eqref{eq:gibbs-terminal-cost} therefore gives
    \[
        a\E_{Q^\star}\int_0^\Delta (u_t^\star)^2\,dt
        +
        \E_{Q^\star}\Gamma(X_\Delta)
        =
        -2a\log Z.
    \]
    This proves the reverse inequality.

    If $Z=0$, then $e^{-\Gamma(X_\Delta)/(2a)}=0$ almost surely under $\mathsf W_x$.  The Gibbs variational principle and the Girsanov entropy bound imply that no finite-energy control can have finite terminal cost, so both sides of \eqref{eq:entropic-identity} are $+\infty$.
    \end{proof}

    \begin{theorem}
    \label{thm:weighted-recursion}
    Let
    \[
        0=t_0<t_1<\cdots<t_m=1,
        \qquad
        \eta(t)=a_i>0
        \quad (t_i\le t<t_{i+1}),
    \]
    and write $\Delta_i=t_{i+1}-t_i$.  Define
    \begin{equation}
        H_{m-1}
        :=
        P_{\Delta_{m-1}}\1_{[\kappa,\infty)}
    \label{eq:hjb-terminal}
    \end{equation}
    and, for $i=m-2,\ldots,0$,
    \begin{equation}
        H_i
        :=
        P_{\Delta_i}
        \left(
            H_{i+1}^{a_{i+1}/a_i}
        \right).
    \label{eq:hjb-recursion}
    \end{equation}
    Then
    \begin{equation}
        V_\kappa(\eta)
        =
        -2a_0\log H_0(0).
    \label{eq:hjb-value}
    \end{equation}
    \end{theorem}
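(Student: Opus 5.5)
Apply dynamic programming backward over the intervals $[t_i,t_{i+1})$, using \cref{lem:entropic-identity} once per interval. Define the value-to-go functions
\[
    \mathcal V_i(x):=\inf_u\E\left[\int_{t_i}^1\eta(t)u_t^2\,dt\right],
\]
where the infimum runs over controls on $[t_i,1]$ whose controlled path starts from $x$ at time $t_i$ and satisfies $x+(B_1-B_{t_i})+\int_{t_i}^1u_t\,dt\ge\kappa$ a.s. Then $V_\kappa(\eta)=\mathcal V_0(0)$. The claim to prove by backward induction on $i$ is
\[
    \mathcal V_i=-2a_i\log H_i .
\]

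\textbf{Base step.} On the last interval, encode the terminal constraint by the cost $\Gamma=+\infty\cdot\1_{(-\infty,\kappa)}$, i.e.\ $\Gamma=0$ on $[\kappa,\infty)$ and $+\infty$ elsewhere. This $\Gamma$ is bounded below and Borel, so \cref{lem:entropic-identity} with $a=a_{m-1}$ and $\Delta=\Delta_{m-1}$ gives $\mathcal V_{m-1}=-2a_{m-1}\log P_{\Delta_{m-1}}\1_{[\kappa,\infty)}$. This matches \eqref{eq:hjb-terminal}. One check is needed: a control with finite expected cost automatically gives almost-sure feasibility. This holds because the cost is $+\infty$ otherwise, and the lemma's convention $e^{-\infty}=0$ handles it.

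\textbf{Inductive step.} Given $\mathcal V_{i+1}=-2a_{i+1}\log H_{i+1}$, apply the dynamic programming principle on $[t_i,t_{i+1}]$ with terminal cost $\Gamma=\mathcal V_{i+1}\ge0$, which is bounded below and takes values in $[0,\infty]$. The lemma gives
\[
    e^{-\Gamma/(2a_i)}=H_{i+1}^{a_{i+1}/a_i},
\]
and hence $\mathcal V_i=-2a_i\log P_{\Delta_i}\bigl(H_{i+1}^{a_{i+1}/a_i}\bigr)=-2a_i\log H_i$, which is \eqref{eq:hjb-recursion}. Evaluating at $i=0$, $x=0$ gives \eqref{eq:hjb-value}. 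Measurability of $\mathcal V_{i+1}$ is immediate, since $H_{i+1}$ is an explicit heat-semigroup expression and hence continuous. Here $H_{m-1}>0$ everywhere, so all $\mathcal V_i$ are finite.

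\textbf{Main obstacle: the DPP itself.} I need equality, not just an inequality, between the global problem and the nested problems, and this has two directions.

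For the direction $\ge$: take any feasible control on $[t_i,1]$. Conditionally on $\cF_{t_{i+1}}$, its restriction to $[t_{i+1},1]$ is admissible for $\mathcal V_{i+1}$ started from $Y_{t_{i+1}}$, using the Markov property of Brownian increments and a regular conditional distribution of the control given the past. This gives cost on $[t_{i+1},1]\ge\mathcal V_{i+1}(Y_{t_{i+1}})$. Then apply the lower-bound half of \cref{lem:entropic-identity}.

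For the direction $\le$, which is the delicate one, I would avoid measurable selection of $\varepsilon$-optimal controls by concatenation. Instead, the proof of the lemma already exhibits explicit optimal Föllmer feedback drifts $u^\star_t=\partial_y\log q_t(X_t)$ on each interval. Concatenating these feedbacks gives a single Markov feedback on $[0,1)$. It is locally Lipschitz away from interval endpoints, so there is a strong solution interval by interval, which is progressively measurable for $\cF^B$. Its path law is the Gibbs measure with density proportional to
\[
    \prod_i\bigl(\text{ratio factors}\bigr)\cdot\1_{[\kappa,\infty)}(X_1).
\]
By the chain rule for relative entropy, this law attains cost $-2a_0\log H_0(0)$. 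Terminal feasibility holds a.s.\ because that law is supported on $\{X_1\ge\kappa\}$. This verification argument proves the upper bound directly, so only the lower bound needs the conditioning argument above.
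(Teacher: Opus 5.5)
Your proposal is correct and follows the same route as the paper: backward induction over the deterministic times $t_i$. On each interval you apply \cref{lem:entropic-identity} with terminal cost $\Gamma_{i+1}=-2a_{i+1}\log H_{i+1}$, so that $e^{-\Gamma_{i+1}/(2a_i)}=H_{i+1}^{a_{i+1}/a_i}$. Your conditioning argument for the lower bound and your concatenated F\"ollmer feedback for the upper bound spell out the dynamic-programming step that the paper states in one line. For the upper bound, note that because the weights $a_i$ differ, the cost is not a single relative entropy; it is the weighted per-interval identities $a_i\E\int_{t_i}^{t_{i+1}}u_t^2\,dt=\E[\Gamma_i(X_{t_i})-\Gamma_{i+1}(X_{t_{i+1}})]$ that telescope to $-2a_0\log H_0(0)$.
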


    \begin{proof}
    Represent the terminal constraint by the extended-valued terminal cost
    \[
        \Gamma_m(x)
        :=
        \begin{cases}
        0,&x\ge\kappa,\\
        +\infty,&x<\kappa.
        \end{cases}
    \]
    Dynamic programming at the deterministic times $t_i$ reduces the control problem to one interval at a time.  On the final interval, \cref{lem:entropic-identity} gives
    \[
        \Gamma_{m-1}
        =
        -2a_{m-1}\log H_{m-1}.
    \]
    If the continuation value at time $t_{i+1}$ is $-2a_{i+1}\log H_{i+1}$, then changing the running-cost coefficient from $a_{i+1}$ to $a_i$ gives
    \[
        \exp\!\left(
            -\frac{\Gamma_{i+1}}{2a_i}
        \right)
        =
        H_{i+1}^{a_{i+1}/a_i}.
    \]
    Applying \cref{lem:entropic-identity} on $[t_i,t_{i+1}]$ yields \eqref{eq:hjb-recursion}.  Backward iteration and evaluation at the initial state give \eqref{eq:hjb-value}.
    \end{proof}

    The recursion is monotone, which is the key fact used by the verified numerical lower bound.

    \begin{lemma}
    \label{lem:hjb-monotone-envelopes}
    Every $H_i$ in \cref{thm:weighted-recursion} takes values in $[0,1]$ and is nondecreasing.  If $U_{i+1}\ge H_{i+1}$ pointwise and $r_i=a_{i+1}/a_i$, then
    \[
        P_{\Delta_i}
        \left(
            \bigl((U_{i+1}\vee0)\wedge1\bigr)^{r_i}
        \right)
        \ge H_i.
    \]
    More generally, if an upper bound for $H_{i+1}^{r_i}$ is available on a bounded interval, extending that bound by $1$ outside the interval preserves the upper-bound direction after applying $P_{\Delta_i}$.  In particular, any rigorous bound $H_0(0)\le U_0$ implies
    \[
        V_\kappa(\eta)
        \ge
        -2a_0\log U_0.
    \]
    \end{lemma}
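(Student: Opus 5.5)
The plan is backward induction on $i$, using three elementary properties of the heat semigroup $P_\Delta$: it is positivity preserving, it maps constants to themselves, and it commutes with translations. Together these give order preservation and preservation of monotonicity.

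\textbf{Range and monotonicity.} We start with the terminal function $H_{m-1}=P_{\Delta_{m-1}}\1_{[\kappa,\infty)}$, which equals $\Pp(x+\sqrt{\Delta_{m-1}}\,\xi\ge\kappa)$ with $\xi\sim N(0,1)$. It takes values in $[0,1]$ and is nondecreasing in $x$. Now suppose $H_{i+1}$ has these two properties. Since $r_i>0$, the map $y\mapsto y^{r_i}$ is increasing on $[0,1]$ and maps $[0,1]$ into itself, so $H_{i+1}^{r_i}$ is again $[0,1]$-valued and nondecreasing. Applying $P_{\Delta_i}$, that is, averaging over the translates $x+\sqrt{\Delta_i}\,\xi$, preserves both the bounds $0\le\cdot\le1$ and monotonicity: if $x\le x'$, then $f(x+\sqrt{\Delta_i}\xi)\le f(x'+\sqrt{\Delta_i}\xi)$ pointwise, and we take expectations.

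\textbf{Envelope inequality.} Suppose $U_{i+1}\ge H_{i+1}$. Because $H_{i+1}\in[0,1]$, the clipped function $\widetilde U:=(U_{i+1}\vee0)\wedge1$ still satisfies $\widetilde U\ge H_{i+1}$ and takes values in $[0,1]$. Clipping is needed so that the power $r_i$ is well defined and monotone. Monotonicity of $y\mapsto y^{r_i}$ on $[0,1]$ gives $\widetilde U^{r_i}\ge H_{i+1}^{r_i}$, and positivity of $P_{\Delta_i}$ yields $P_{\Delta_i}(\widetilde U^{r_i})\ge H_i$.

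For the version with a bounded interval, let $g$ be an upper bound for $H_{i+1}^{r_i}$ on $I$, and set $\widehat g=g$ on $I$ and $\widehat g=1$ off $I$. Since $H_{i+1}^{r_i}\le1$ everywhere, we have $\widehat g\ge H_{i+1}^{r_i}$ on all of $\R$, and positivity of $P_{\Delta_i}$ again gives the claim. Measurability of $\widehat g$ requires only that $g$ be Borel, which is implicit in the statement.

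\textbf{Final consequence.} Iterating the envelope inequality propagates any rigorous upper bound down to $H_0$, so $H_0(0)\le U_0$. By \cref{thm:weighted-recursion}, $V_\kappa(\eta)=-2a_0\log H_0(0)$. Since $-\log$ is decreasing and $a_0>0$, we get $V_\kappa(\eta)\ge-2a_0\log U_0$. If $U_0\ge1$ the bound is trivial, so we may assume $0<U_0<1$; if $H_0(0)=0$ then both sides are consistent with the convention $-\log0=+\infty$.

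\textbf{Main obstacle.} The only care needed is that powers of non-$[0,1]$ or negative values are ill-defined. This is exactly why the clipping appears in the statement, and it is handled by using $H_{i+1}\in[0,1]$. Everything else is positivity of the Gaussian kernel.
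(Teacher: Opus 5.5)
Your proof is correct and follows essentially the same route as the paper. Both use backward induction from the nondecreasing $[0,1]$-valued terminal data, the fact that $y\mapsto y^{r_i}$ is increasing on $[0,1]$, order and monotonicity preservation by $P_{\Delta_i}$, the trivial bound $H_{i+1}^{r_i}\le 1$ outside the interval, and the fact that $-\log$ is decreasing. Your explicit treatment of the clipping and of the edge cases $U_0\ge 1$ and $H_0(0)=0$ simply spells out what the paper leaves implicit.
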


    \begin{proof}
    The terminal function in \eqref{eq:hjb-terminal} is nondecreasing and takes values in $[0,1]$.  The map $x\mapsto x^{r_i}$ is increasing on $[0,1]$, and the heat semigroup preserves both pointwise order and monotonicity. Backward induction proves the first two assertions.  Outside a verified interval, the trivial bound $H_{i+1}^{r_i}\le1$ has the correct direction. The final assertion follows because $-\log$ is decreasing.
    \end{proof}

    \begin{lemma}
    \label{lem:weighted-certificate}
    There are $m=68$ rational interval lengths $\Delta_i>0$ and strictly increasing rational weights $a_i>0$ satisfying
    \[
        \sum_{i=0}^{67}\Delta_i=1,
        \qquad
        \sum_{i=0}^{67}\Delta_i a_i=1,
    \]
    for which a rigorous outward-rounded evaluation of \eqref{eq:hjb-terminal}--\eqref{eq:hjb-recursion} at $\kappa=0$ proves
    \begin{equation}
        -2a_0\log H_0(0)
        >
        \frac{173639}{100000}.
    \label{eq:weighted-certificate-output}
    \end{equation}
    \end{lemma}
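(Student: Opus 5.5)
This is a computer-assisted certificate, so the plan has two parts: choose the data $(\Delta_i,a_i)$ well, then evaluate the recursion of \cref{thm:weighted-recursion} with outward rounding. The whole argument rests on \cref{lem:hjb-monotone-envelopes}, which says that only \emph{upper} enclosures of the $H_i$ are needed. For those, every over-estimate is safe: truncation of the domain, discretization of the heat semigroup, and rounding. We never need a two-sided enclosure, only a one-sided bound $H_0(0)\le U_0$ with $-2a_0\log U_0>1.73639$.

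\textbf{Step 1: choosing the weights.} Run a non-rigorous floating-point optimization of $-2a_0\log H_0(0)$ over piecewise-constant weights with $m=68$ pieces, under the normalization $\sum\Delta_ia_i=1$. Heuristically the optimal weight is the Lagrange multiplier density for the constraint $\E u_t^2\le E$, and it should blow up near $t=1$, where the optimal drift concentrates. So I would use a grid refined geometrically toward $t=1$ and increasing $a_i$. Then round $\Delta_i$ and $a_i$ to rationals, and renormalize exactly so that both sums equal $1$ in rational arithmetic. The recursion only needs $\eta>0$; the required strict monotonicity of the $a_i$ is then checked exactly.

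\textbf{Step 2: the last interval.} Here $H_{m-1}(x)=\Ph\bigl((x-\kappa)/\sqrt{\Delta_{m-1}}\bigr)$ in closed form, and we enclose it with verified interval evaluations of $\Ph$.

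\textbf{Step 3: backward induction.} Represent an upper envelope $U_{i+1}$ of $H_{i+1}^{r_i}$ on a bounded interval $[-R_i,R_i]$ as a nondecreasing step function on a uniform grid. Take the value at the right endpoint of each cell, which is valid because $H_{i+1}$ is nondecreasing, and set it to $1$ outside the interval, as \cref{lem:hjb-monotone-envelopes} permits. Convolve this step function exactly with the Gaussian kernel. The result is a finite sum of terms $c_j\bigl[\Ph((x-y_j)/\sqrt{\Delta_i})-\Ph((x-y_{j+1})/\sqrt{\Delta_i})\bigr]$, each evaluated with interval arithmetic. Evaluate it at grid right-endpoints to get the next nondecreasing step envelope, apply $x\mapsto x^{r_i}$ with upward rounding, and clip to $[0,1]$. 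Every operation is monotone, so $U_0(0)\ge H_0(0)$ follows by induction. Finally, compute $-2a_0\log U_0(0)$ with downward rounding and compare it to $173639/100000$.

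\textbf{Main obstacle.} I expect the hard part to be accumulated discretization error near $t=1$, where the $\Delta_i$ are tiny and $H$ changes sharply near $\kappa$. The step-function upper envelope loses about one grid cell of shift per stage, and the exponents $r_i=a_{i+1}/a_i>1$ amplify these errors. My first approach would be a grid spacing much smaller than $\min_i\sqrt{\Delta_i}$, possibly adaptive and dense near the transition region. If the margin over $1.73639$ is still insufficient, I would replace step envelopes with shifted copies of an analytic upper bound ($x\mapsto U(x+\delta)$) to control the loss. Choosing $R_i$ is easy, because the Gaussian tails beyond $R_i$ are negligible and the envelope is set to $1$ there anyway.
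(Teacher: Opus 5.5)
Your plan is sound, but it differs from the paper's verifier at the one point that matters: how nodewise upper bounds for $H_{i+1}$ are extended to all of $\R$ before the next convolution.

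\textbf{The paper's route.} It uses piecewise-affine envelopes. It adds the interpolation correction $Mh^2/8$, with $M=1/(\sqrt{2\pi e}\,\Delta_{i+1})$ from \cref{lem:heat-curvature}. It passes the affine pieces through $x\mapsto x^{r_i}$ using convexity. This is why the certificate needs $r_i=a_{i+1}/a_i\ge1$, i.e.\ increasing weights. It then convolves exactly, using the nonnegative cell coefficients $A_q,B_q$. Monotonicity is used only on the left exterior.

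\textbf{Your route.} You use monotonicity everywhere. Soundness therefore needs neither the curvature lemma nor convexity, and increasing weights are needed only to match the statement. The cost is a first-order error, but it behaves better than your ``main obstacle'' paragraph suggests. The right-endpoint step function satisfies $S\le H_{i+1}(\cdot+h)$. Translation commutes with $x\mapsto x^{r_i}$ and with $P_{\Delta_i}$, so each stage costs exactly a shift by $h$, and the $r_i$ do not amplify this loss.

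Up to rounding, your certified value is therefore at least
\[
V_{-(m-1)h}(\eta)\ge\bigl(\sqrt{V_0(\eta)}-(m-1)h\bigr)^2 .
\]
This follows from the Minkowski argument of \cref{lem:estar-lipschitz}, which applies verbatim to $V_\kappa(\eta)$ because $\int_0^1\eta=1$. The bound holds regardless of how small the final $\Delta_i$ are. So the real requirement is that $mh$ be small compared with the margin, not that $h\ll\min_i\sqrt{\Delta_i}$.

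\textbf{Trade-off.} The margin over $1.73639$ is at most about $10^{-3}$ (\cref{rem:weighted-numerics}), which forces $h$ of order $10^{-6}$. The paper's second-order envelope, with per-stage error $O(h^2/\Delta_{i+1})$, allows a much coarser grid. It pays for this with the curvature lemma and the monotone-weight restriction.

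Setting the left exterior equal to $1$ instead of using monotonicity is valid and costs only $O(\Phi(-R))$.
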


    The verification uses exact rational time intervals and weights together with piecewise-affine upper bounds for the functions in the recursion. A curvature estimate controls interpolation between grid points, directed rounding controls the Gaussian convolutions, and analytic tail bounds extend the inequalities to the whole real line.  The proof of soundness is given in \cref{app:lower-certificate}.

    \begin{proposition}
    \label{prop:zero-lower}
    \[
        \Estar(0)
        >
        \frac{173639}{100000}.
    \]
    \end{proposition}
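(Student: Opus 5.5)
The plan is to chain the three preceding results: the weighted relaxation of \cref{lem:weighted-lower-bound}, the exact value of the relaxed problem from \cref{thm:weighted-recursion}, and the numerical certificate of \cref{lem:weighted-certificate}.

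First, I take the $68$ rational lengths $\Delta_i$ and weights $a_i$ supplied by \cref{lem:weighted-certificate}. I set $t_0=0$ and $t_{i+1}=t_i+\Delta_i$, so $t_{68}=1$ by the first normalization. I then define the piecewise-constant weight $\eta(t)=a_i$ on $[t_i,t_{i+1})$. This $\eta$ is measurable and strictly positive on $(0,1)$. The second normalization $\sum_i\Delta_ia_i=1$ is exactly $\int_0^1\eta(t)\,dt=1$. Hence $\eta$ is an admissible weight, and \cref{lem:weighted-lower-bound} at $\kappa=0$ gives $\Estar(0)\ge V_0(\eta)$.

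Second, \cref{thm:weighted-recursion} with $\kappa=0$ and this partition identifies $V_0(\eta)=-2a_0\log H_0(0)$, where $H_0$ is produced by the backward recursion \eqref{eq:hjb-terminal}--\eqref{eq:hjb-recursion}. The certificate \eqref{eq:weighted-certificate-output} is a rigorous enclosure of this same quantity. Concretely, it is an upper bound $U_0\ge H_0(0)$, converted into a lower bound on $V_0(\eta)$ via \cref{lem:hjb-monotone-envelopes}. So $V_0(\eta)>173639/100000$, and the strict inequality passes through the non-strict $\Estar(0)\ge V_0(\eta)$. This proves the claim.

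The only substantive content is hidden in \cref{lem:weighted-certificate}, namely the soundness of the outward-rounded evaluation, which is deferred to the appendix. Within this proof, the one point to check carefully is that the certificate's recursion matches \cref{thm:weighted-recursion} exactly. That means the exponents must be $a_{i+1}/a_i$, the terminal function must be $P_{\Delta_{67}}\1_{[0,\infty)}$, and the bound must be evaluated at $x=0$. It also means the envelope direction in \cref{lem:hjb-monotone-envelopes} must give an upper bound on $H_0(0)$ and not a two-sided approximation. Once that identification is confirmed, the argument is a direct composition of the three results.
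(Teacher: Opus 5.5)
Your proposal is correct and matches the paper's proof, which simply composes \cref{lem:weighted-lower-bound}, \cref{thm:weighted-recursion}, and \cref{lem:weighted-certificate}. Your explicit check that the certificate data define an admissible piecewise-constant weight with $\int_0^1\eta(t)\,dt=1$, giving $\Estar(0)\ge V_0(\eta)=-2a_0\log H_0(0)>173639/100000$, only spells out what the paper leaves implicit.
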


    \begin{proof}
    Apply \cref{lem:weighted-lower-bound,thm:weighted-recursion,lem:weighted-certificate}.
    \end{proof}

    \paragraph{A feasible control.}
    For the upper bound, we construct a self-similar diffusion whose drift has a controlled mean and variance.  A deterministic correction then makes its second moment equal to a prescribed constant at every time while preserving the terminal inequality.  The only computer-assisted ingredient is the verification of the profile inequalities in the next lemma.

    \begin{lemma}
    \label{lem:certified-profile}
    Set
    \[
        C=\frac{787}{1000},
        \qquad
        \Lambda=\frac{10207}{10000},
        \qquad
        \lambda=\frac{5003}{2000},
    \]
    \[
        G_0=\frac{3359}{2000},
        \qquad
        R_0=\frac{30293}{200000},
        \qquad
        \mathcal E_+=\frac{243}{125}=1.944.
    \]
    There is a positive, globally $C^1$, piecewise $C^3$ function $g:\R\to(0,\infty)$ with bounded derivative and at most linear growth such that, with
    \[
        \cL f(z)
        :=
        \frac12f''(z)
        +
        \left(\frac z2+g(z)\right)f'(z),
        \qquad
        \mathfrak r(z)
        :=
        Cg(z)-g'(z)^2,
    \]
    the following properties hold.

    \begin{enumerate}[label=(\roman*),leftmargin=2.2em]
    \item For almost every $z\in\R$,
    \begin{equation}
        g(z)+z\ge0,
        \qquad
        -\Lambda g(z)\le\cL g(z)\le-g(z),
        \qquad
        \mathfrak r(z)\ge0,
        \qquad
        \cL\mathfrak r(z)\ge-\lambda\mathfrak r(z).
    \label{eq:profile-inequalities}
    \end{equation}

    \item At every interface at which $\mathfrak r'$ has a jump,
    \[
        [\mathfrak r']_\xi
        :=
        \mathfrak r'(\xi+)-\mathfrak r'(\xi-)
        \ge0.
    \]

    \item Writing
    \[
        \gamma
        :=
        g(0)
        =
        \frac{209924176232601}{125000000000000},
    \]
    one has $\gamma<G_0$ and $\mathfrak r(0)>R_0$.

    \item For $0\le s\le1$, define
    \begin{equation}
        \overline v(s)
        :=
        CG_0
        +
        \left(
            G_0^2-CG_0-\frac{R_0}{\lambda-2}
        \right)s
        +
        \frac{R_0}{\lambda-2}s^{\lambda-1}
        -
        \gamma^2s^{2\Lambda-1}.
    \label{eq:vbar}
    \end{equation}
    Then
    \[
        \overline v(s)<\mathcal E_+
        \qquad (0\le s\le1).
    \]

    \item One has
    \begin{equation}
        \int_0^1
        \sqrt{\mathcal E_+-\overline v(s)}\,ds
        >
        \frac{2G_0}{3}.
    \label{eq:barrier-integral}
    \end{equation}
    \end{enumerate}
    \end{lemma}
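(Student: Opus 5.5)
This is an existence statement certified by computer, so the plan is to write down an explicit $g$ and then verify each of (i)--(v) with interval arithmetic and outward rounding. The verification covers a bounded window $[-Z_0,Z_0]$ together with analytic tail arguments.

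\emph{Construction.} Take $g$ piecewise defined on finitely many rational intervals. On a central window, use cubic (or quartic) polynomials with rational coefficients, glued so that $g$ is globally $C^1$. On each tail, use explicit asymptotic forms.
\begin{itemize}
\item For $z\to+\infty$, $g$ should decay; the natural choice is a multiple of the Mills-type solution of $\cL g=-g$, handled through a linear-ODE comparison.
\item For $z\to-\infty$, $g(z)\approx -z+\text{const}$, which enforces $g+z\ge0$ with linear growth.
\end{itemize}
The value $g(0)=\gamma$ is fixed as the displayed rational. The coefficients themselves come from a numerical shooting or least-squares fit to the ODE $\cL g=-\theta g$ with $\theta$ between $1$ and $\Lambda$. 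After fitting, they are rounded to rationals so that slack remains in every inequality.

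\emph{Verification of (i)--(ii).} On each polynomial piece, $g$, $\cL g$, $\mathfrak r=Cg-g'^2$ and $\cL\mathfrak r$ are polynomials (or polynomials times $z$), since $\cL$ involves only $g',g''$ and $z$. This needs $g\in C^3$ piecewise, which holds by construction. The four inequalities in \eqref{eq:profile-inequalities} then reduce to positivity of explicit polynomials on rational intervals. I will certify these by subdivision plus Taylor or Bernstein enclosures in exact rational arithmetic.

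At each interface, $\mathfrak r'=Cg'-2g'g''$. Because $g'$ is continuous, the jump equals $-2g'(\xi)[g'']_\xi$. So condition (ii) is a sign check on finitely many rationals, and I will choose the gluing so that $g'(\xi)[g'']_\xi\le0$.

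For the tails, I will use monotonicity and asymptotic expansions: the leading terms dominate beyond $|z|\ge Z_0$. This gives analytic bounds showing each inequality holds with margin there.

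\emph{Verification of (iii)--(v).} Item (iii) is exact rational comparison: $\gamma<G_0$, and $\mathfrak r(0)=C\gamma-g'(0)^2>R_0$.

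For (iv), note that $\overline v$ is an explicit function of $s$ with rational and irrational powers $s^{\lambda-1}$ and $s^{2\Lambda-1}$. I will bound it on a fine rational partition of $[0,1]$, using interval enclosures of the power functions via monotonicity on each subinterval. Near $s=0$ the bound is $\overline v(0)=CG_0<\mathcal E_+$.

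For (v), I will lower-bound the integral by a Riemann sum built from lower enclosures of $\sqrt{\mathcal E_+-\overline v}$ on each cell, again using monotonicity of each term, and compare the result with $2G_0/3$.

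\emph{Main obstacle.} I expect the hardest step to be the differential inequality $\cL\mathfrak r\ge-\lambda\mathfrak r$, together with $\mathfrak r\ge0$, near points where $\mathfrak r$ is small. It is a third-order condition on $g$, so the margin is thin, and it also interacts with the tail regime. There I will first try to choose the tail form of $g$ so that $\mathfrak r$ has an explicit asymptotic expansion whose sign can be read off. Only then will I refine the central polynomial fit, adding subdivision points until the enclosures close.
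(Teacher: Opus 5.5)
Your verification machinery is essentially the paper's. It consists of a rational spline on a window (the paper uses $[-4,4]$) joined to analytic tails, exact Bernstein subdivision for the residuals $g$, $g+z$, $\cL g+\Lambda g$, $-g-\cL g$, $\mathfrak r$, $\cL\mathfrak r+\lambda\mathfrak r$, a sign check of $[\mathfrak r']_\xi$, and exact rational comparison for (iii). Your identity $[\mathfrak r']_\xi=-2g'(\xi)[g'']_\xi$ for the jump check is correct.

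\textbf{The gap.} The left-tail ansatz you commit to is incompatible with (i). If $g(z)=-z+c$ for $z\le -Z_0$, then $g'=-1$ and $g''=0$, so
\[
    \cL g(z)=-\Bigl(\frac z2+g(z)\Bigr)=\frac z2-c>z-c=-g(z)\qquad(z<0).
\]
Hence $\cL g\le -g$ fails on the whole left tail, by an amount $|z|/2$ that grows. More generally, suppose $g(z)=a|z|+O(1)$ with $g'\to-a$ and $g''\to0$. Then $\cL g=-a(a-\tfrac12)|z|+O(1)$, so
\[
    \cL g+g=-a\bigl(a-\tfrac32\bigr)|z|+O(1),
    \qquad
    \cL g+\Lambda g=a\bigl(\Lambda+\tfrac12-a\bigr)|z|+O(1).
\]
The two-sided bound in (i) therefore pins the slope to $\tfrac32\le a\le\Lambda+\tfrac12\approx1.5207$. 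So $g+z\ge0$ is not the binding constraint on the left; it holds with room $(a-1)|z|$, and the binding one is $\cL g\le -g$. The paper takes $g(z)=a(-z)+d(-z)^{-1/2}$ for $z\le-4$, with $a$ in this window and $d$ as the second parameter for $C^1$ matching. It then checks the tail residuals as rational polynomials in $y=(-z)^{-1/2}\in[0,\tfrac12]$. Note that your own fitting heuristic $\cL g=-\theta g$ would produce slope $\theta+\tfrac12$, so it already contradicts the tail you wrote down.

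\textbf{Smaller points.} Exact $C^1$ matching is not cosmetic. A jump of $g'$ at $\xi$ would add $\tfrac12[g']_\xi L^\xi$ to both the supermartingale and the submartingale arguments for $g$ in \cref{prop:zero-upper}, which forces $[g']_\xi=0$. Interval arithmetic cannot certify such an equality against a transcendental (Mills-type) tail. The paper's right tail is $g(z)=z(w+w^2)$ with $w=Ae^{-(z^2-16)/2}$. Its leading term is the exact decaying solution $ze^{-z^2/2}$ of $\tfrac12g''+\tfrac z2g'=-g$, and it is normalized so that $w(4)=A$ is rational, making the junction checks exact. The only nonlocal tail estimate is then a bound on the $w^k$, $k\ge2$, remainder in $\cL\mathfrak r+\lambda\mathfrak r$.

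For (iv)--(v), interval enclosures of $s^{\lambda-1}$ and $s^{2\Lambda-1}$ are a legitimate alternative to the paper's device. The paper uses the coefficient signs to replace these exponents by $\tfrac32$ and $\tfrac{25}{24}$, then substitutes $s=x^{24}$ to obtain a rational polynomial. Be aware, though, that the margin in (v) is only a few times $10^{-4}$, so your lower sum needs a fine partition.
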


    The profile is represented by a rational spline on a compact interval and analytic expressions on the two tails.  Exact Bernstein-form inequalities verify the spline pieces, analytic estimates handle the tails and the interface terms, and a rational lower sum verifies \eqref{eq:barrier-integral}.  These checks are detailed in \cref{app:upper-certificate}.

    \begin{proposition}
    \label{prop:zero-upper}
    \[
        \Estar(0)\le\frac{243}{125}=1.944.
    \]
    \end{proposition}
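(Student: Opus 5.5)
The plan is to exhibit an explicit feasible control at $\kappa=0$ whose second moment is at most $\mathcal E_+=243/125$ at every time, and then invoke the definition of $\Estar(0)$. The profile $g$ of \cref{lem:certified-profile} will serve as the drift of a self-similar diffusion written in the backward scaling variable. Concretely, we work on $s=-\log(1-t)\in[0,\infty)$ and let $Z_s$ solve
\[
    dZ_s=\Bigl(\tfrac{Z_s}{2}+g(Z_s)\Bigr)ds+dW_s,\qquad Z_0=0.
\]
Then $X_t:=\sqrt{1-t}\,Z_{-\log(1-t)}$ is a Brownian motion with drift $u_t=g(Z_s)/\sqrt{1-t}$, so its generator in $z$ is exactly $\cL$. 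Strong existence and uniqueness, and nonexplosion, follow from the facts that $g$ is $C^1$ with bounded derivative and has at most linear growth, so the drift is adapted to the Brownian filtration. The condition $g(z)+z\ge0$ is the barrier inequality: it keeps the rescaled process pushed toward $[0,\infty)$, and we will use it to identify the sign of the terminal value $\lim_{t\uparrow1}X_t$.

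The next step is to control the first two moments of $g(Z_s)$ by Dynkin's formula, localized and then passed to the limit using the growth bounds. From $-\Lambda g\le\cL g\le -g$ we get
\[
    \gamma e^{-\Lambda s}\le m(s):=\E g(Z_s)\le\gamma e^{-s}.
\]
For the second moment we compute $\cL(g^2)=2g\cL g+g'^2$ and write $g'^2=Cg-\mathfrak r$. The key step is a lower bound $\E\mathfrak r(Z_s)\ge R_0e^{-\lambda s}$, obtained from $\cL\mathfrak r\ge-\lambda\mathfrak r$. Here the jump condition (ii) on $\mathfrak r'$ is needed: it ensures that the local-time terms in the Itô--Tanaka formula for the piecewise $C^3$ function $\mathfrak r$ have the favorable sign. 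Combining these gives a linear differential inequality for $\E g(Z_s)^2$. We then pass to the variable $\sigma=1-t=e^{-s}$, using the bound $m\le\gamma e^{-s}$ in the $C\,m$ term and the lower bound $m\ge\gamma e^{-\Lambda s}$ in the variance. Integrating should give
\[
    \E u_t^2\le\overline v(1-t),\qquad\text{with $\gamma<G_0$ and $\mathfrak r(0)>R_0$ used at $\sigma=1$.}
\]
The terms $CG_0$, $G_0^2\sigma$, $\sigma^{\lambda-1}$ and $-\gamma^2\sigma^{2\Lambda-1}$ in \eqref{eq:vbar} are exactly what this integration produces. Matching these terms is the main calculation, and the one where I expect sign conventions to need care.

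The last step is the deterministic correction. Since $\E u_t^2\le\overline v(1-t)<\mathcal E_+$ by (iv), we add a deterministic drift $w(t)\ge0$ chosen so that $\E(u_t+w(t))^2\le\mathcal E_+$. By Minkowski it suffices to take $w(t)\le\sqrt{\mathcal E_+}-\norm{u_t}_{L^2}$, but the sharper choice is to solve $\E u_t^2+2w\,m+w^2=\mathcal E_+$. Its root satisfies $w(t)\ge\sqrt{\mathcal E_+-\overline v(1-t)}-\E u_t$, and $\E u_t\le\gamma\sqrt{1-t}\le G_0\sqrt{1-t}$. The deterministic total
\[
    \int_0^1 w\,dt\ \ge\ \int_0^1\sqrt{\mathcal E_+-\overline v(\sigma)}\,d\sigma-\frac{2G_0}{3}>0
\]
is then strictly positive by \eqref{eq:barrier-integral}. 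This positive deterministic mass is what compensates for whatever deficit the uncorrected diffusion has at time one. I would show that the uncorrected terminal value satisfies $X_1\ge -\int_0^1\E u_t\,dt$-type bounds, or at least that $X_t+\int_0^t w\ge0$ is forced through the barrier $g(z)+z\ge0$ as $t\uparrow1$.

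The main obstacle is this terminal feasibility: identifying $\lim_{t\to1}X_t$ and proving the almost-sure inequality $X_1+\int_0^1 w\ge0$. The moment bounds are routine Dynkin computations by comparison. What I would try first is a comparison argument on the $Z$-scale. Adding the deterministic drift shifts $Z$ by $\int w/\sqrt{1-t}$, and the inequality $g(z)+z\ge0$ forces $\tfrac z2+g(z)\ge-\tfrac z2$, which is a repelling drift for negative $z$. From this I would aim to show $\liminf\sqrt{1-t}\,Z\ge0$, using that $\sqrt{1-t}\,W_{s}\to0$ almost surely by the law of the iterated logarithm. Once feasibility is established, $\norm{u+w}_{\infty,2}^2\le\mathcal E_+$, and hence $\Estar(0)\le243/125$.
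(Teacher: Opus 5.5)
Your construction, moment estimates and correction formula are the same as the paper's. Two steps are wrong or unfinished, though.

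\textbf{Second moment versus variance.} The integration does not bound $\E u_t^2$; it bounds $\operatorname{Var}(u_t)$. The term $-\gamma^2\sigma^{2\Lambda-1}$ in $\overline v$ is exactly the subtracted square of the lower bound $\E u_t\ge\gamma(1-t)^{\Lambda-1/2}$. The second moment itself is not within budget near $t=0$. There $\E u_0^2=\gamma^2\approx2.82>\mathcal E_+=1.944$, whereas $\overline v(1)=G_0^2-\gamma^2<10^{-3}$. Hence no correction $w\ge0$ can give $\E(u_t+w(t))^2\le\mathcal E_+$ for small $t$, and the Minkowski choice is also negative there. Your root $w=\sqrt{\mathcal E_+-\operatorname{Var}(u_t)}-\E u_t$ is the paper's $\ell$. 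It gives $\E(u_t+w)^2=\mathcal E_+$ exactly, and your lower bound $w\ge\sqrt{\mathcal E_+-\overline v(1-t)}-\E u_t$ holds with the variance bound. But this $w$ is negative on an initial interval: $w(0)=\sqrt{\mathcal E_+}-\gamma<0$. So drop the claim $w\ge0$. This also changes the role of (v). It does not compensate a deficit of the uncorrected diffusion, which has none. It guarantees that the signed, energy-flattening correction has positive total mass.

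\textbf{Terminal feasibility, which you left open.} The correction is open loop: $u_t$ is computed from the uncorrected diffusion. Therefore $B_1+\int_0^1(u_t+w(t))\,dt=X_1+\int_0^1w\,dt$, and you only need $X_1\ge0$ for the uncorrected process. Folding $w$ into a comparison on the $Z$-scale is unnecessary, and harder because $w$ changes sign. Your law-of-the-iterated-logarithm idea closes the step once you notice that the $Z$-drift $b(z)=z/2+g(z)$ is strictly positive everywhere:
\begin{itemize}
\item for $z\ge0$, because $g>0$;
\item for $z<0$, because $b(z)\ge -z/2>0$.
\end{itemize}
Hence $Z_s\ge W_s$, and so $X_t=e^{-s/2}Z_s\ge e^{-s/2}W_s\to0$ as $t\uparrow1$. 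The limit $X_1$ exists because $\sup_{t<1}\E u_t^2\le\gamma^2+\max\overline v<\infty$, which makes $\int_0^1|u_t|\,dt$ finite almost surely.

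The paper reaches $X_1\ge0$ differently. From $g(z)+z\ge0$ it gets $u_t\ge -X_t/(1-t)$ and compares $X$ with the Brownian bridge $dK=dB-K\,dt/(1-t)$. Completed as above, your route is slightly more elementary. With these two repairs your proof is complete.
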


    \begin{proof}
    Let $Z$ solve
    \begin{equation}
        dZ_\tau
        =
        dW_\tau
        +
        \left(
            \frac{Z_\tau}{2}+g(Z_\tau)
        \right)d\tau,
        \qquad
        Z_0=0.
    \label{eq:self-similar-sde}
    \end{equation}
    The drift is globally Lipschitz and has linear growth, so the solution is strong and nonexplosive.  Set
    \[
        s=e^{-\tau},
        \qquad
        t=1-s,
        \qquad
        X_t=\sqrt s\,Z_\tau,
    \]
    and define
    \[
        B_t
        :=
        \int_0^{-\log(1-t)}
        e^{-r/2}\,dW_r.
    \]
    The quadratic variation of $B$ is $t$, hence $B$ is Brownian motion.  The time change is deterministic and invertible, so the completed natural filtration of $B$ at time $t$ is the completed natural filtration of $W$ at time $-\log(1-t)$.  It\^o's formula gives
    \begin{equation}
        dX_t
        =
        dB_t+h_t\,dt,
        \qquad
        h_t
        =
        \frac1{\sqrt{1-t}}
        g\left(
            \frac{X_t}{\sqrt{1-t}}
        \right).
    \label{eq:self-similar-feedback}
    \end{equation}
    Thus $h$ is progressively measurable with respect to the Brownian filtration.

    Define
    \[
        \mu(\tau):=\E g(Z_\tau),
        \qquad
        S(\tau):=\E g(Z_\tau)^2,
        \qquad
        \varrho(\tau):=\E\mathfrak r(Z_\tau).
    \]
    After localization, the inequalities in \eqref{eq:profile-inequalities} imply that $e^\tau g(Z_\tau)$ is a supermartingale and $e^{\Lambda\tau}g(Z_\tau)$ is a submartingale.  Therefore
    \begin{equation}
        \gamma e^{-\Lambda\tau}
        \le
        \mu(\tau)
        \le
        \gamma e^{-\tau}
        <
        G_0e^{-\tau}.
    \label{eq:mu-bounds}
    \end{equation}
    For $\mathfrak r$, the generalized It\^o--Tanaka formula contains the interface local-time terms $\frac12[\mathfrak r']_\xi L_\tau^\xi(Z)$, which are nonnegative by \cref{lem:certified-profile}.  Hence
    \begin{equation}
        \varrho(\tau)
        \ge
        \mathfrak r(0)e^{-\lambda\tau}
        >
        R_0e^{-\lambda\tau}.
    \label{eq:varrho-bound}
    \end{equation}

    Moreover,
    \[
        \cL(g^2)
        =
        2g\cL g+g'^2
        \le
        -2g^2+Cg-\mathfrak r.
    \]
    Combining this inequality with \eqref{eq:mu-bounds}--\eqref{eq:varrho-bound} gives
    \begin{equation}
        S'(\tau)
        \le
        -2S(\tau)
        +
        CG_0e^{-\tau}
        -
        R_0e^{-\lambda\tau}.
    \label{eq:S-differential}
    \end{equation}
    Since $S(0)=\gamma^2\le G_0^2$, integration yields
    \begin{equation}
    \begin{split}
        S(\tau)
        \le{}&
        G_0^2e^{-2\tau}
        +
        CG_0(e^{-\tau}-e^{-2\tau})\\
        &-
        \frac{R_0}{\lambda-2}
        \left(
            e^{-2\tau}-e^{-\lambda\tau}
        \right).
    \end{split}
    \label{eq:S-bound}
    \end{equation}

    Let
    \[
        m(t):=\E h_t,
        \qquad
        v(t):=\operatorname{Var}(h_t).
    \]
    Since $s=e^{-\tau}$,
    \[
        m(t)=s^{-1/2}\mu(\tau),
        \qquad
        v(t)
        =
        s^{-1}\bigl(S(\tau)-\mu(\tau)^2\bigr).
    \]
    Using the lower bound on $\mu$ in \eqref{eq:mu-bounds} and the upper bound \eqref{eq:S-bound} gives
    \begin{equation}
        v(t)\le\overline v(1-t),
    \label{eq:v-vbar}
    \end{equation}
    while the upper bound on $\mu$ gives
    \begin{equation}
        \int_0^1m(t)\,dt
        =
        \int_0^\infty e^{-\tau/2}\mu(\tau)\,d\tau
        <
        \frac{2G_0}{3}.
    \label{eq:mean-integral}
    \end{equation}
    In particular, $\sup_{t<1}\E h_t^2<\infty$.  Hence $\int_0^1|h_t|\,dt<\infty$ almost surely and $X_t$ has an almost sure terminal limit $X_1$.

    The first inequality in \eqref{eq:profile-inequalities} gives
    \[
        \frac1{\sqrt{1-t}}
        g\left(
            \frac{x}{\sqrt{1-t}}
        \right)
        \ge
        -\frac{x}{1-t}.
    \]
    On each interval $[0,T]$ with $T<1$, comparison with
    \[
        dK_t
        =
        dB_t-\frac{K_t}{1-t}\,dt,
        \qquad K_0=0,
    \]
    therefore gives $X_t\ge K_t$.  The process $K$ is a Brownian bridge pinned at zero, so $K_t\to0$ almost surely as $t\uparrow1$.  Consequently,
    \begin{equation}
        X_1\ge0
        \qquad\text{a.s.}
    \label{eq:X1-positive}
    \end{equation}

    It remains to control the second moment of the drift uniformly in time. Define the deterministic function
    \begin{equation}
        \ell(t)
        :=
        \sqrt{\mathcal E_+-v(t)}-m(t),
        \qquad
        u_t:=h_t+\ell(t).
    \label{eq:flattening-control}
    \end{equation}
    This is well-defined by \eqref{eq:v-vbar} and \cref{lem:certified-profile}.  Since $\ell$ is deterministic,
    \begin{equation}
        \E u_t^2
        =
        v(t)+(m(t)+\ell(t))^2
        =
        \mathcal E_+
    \label{eq:flat-energy}
    \end{equation}
    for almost every $t$.  Furthermore,
    \[
    \begin{split}
        \int_0^1\ell(t)\,dt
        &=
        \int_0^1\sqrt{\mathcal E_+-v(t)}\,dt
        -
        \int_0^1m(t)\,dt\\
        &\ge
        \int_0^1
        \sqrt{\mathcal E_+-\overline v(1-t)}\,dt
        -
        \frac{2G_0}{3}
        >0
    \end{split}
    \]
    by \eqref{eq:barrier-integral} and \eqref{eq:mean-integral}.  Combining this with \eqref{eq:X1-positive},
    \[
        B_1+\int_0^1u_t\,dt
        =
        X_1+\int_0^1\ell(t)\,dt
        >0
        \qquad\text{a.s.}
    \]
    Thus $u$ is feasible at zero margin and $\esssup_t\E u_t^2=\mathcal E_+=243/125$.
    \end{proof}

    \begin{proof}[Proof of \cref{thm:zero}]
    By \cref{prop:zero-lower,prop:zero-upper},
    \[
        1.73639<\Estar(0)\le1.944.
    \]
    Applying \cref{thm:main} at $\kappa=0$ and using that $x\mapsto2/(\pi x)$ is decreasing gives
    \[
        \frac{2}{1.944\pi}
        \le
        \aon(0)
        <
        \frac{2}{1.73639\pi}.
    \]
    Since $1.944=243/125$ and $1.73639=173639/100000$, these endpoints are
    \[
        \frac{250}{243\pi}
        =
        0.3274793\ldots,
        \qquad
        \frac{200000}{173639\pi}
        =
        0.3666340\ldots,
    \]
    which is \eqref{eq:zero-bracket}.
    \end{proof}

    \begin{remark}[Numerics for the weighted relaxation]
    \label{rem:weighted-numerics}
    Numerical optimization of the weighted $L^2$ relaxation suggests that its best zero-margin value is approximately $1.7373$.  For example, within the family
    \[
        \eta_a(t)
        =
        a+\frac{1-a}{2\sqrt{1-t}},
    \]
    mesh refinement places the optimum near $a=0.340$ and extrapolates the relaxed value to about $1.7372$--$1.7374$.  These computations concern the lower-bound relaxation only; whether $\sup_\eta V_0(\eta)=\Estar(0)$ is an open question.
    \end{remark}

    \subsection{Large positive margin}
    \label{sec:large-positive}

    We next consider $\kappa\to+\infty$.  Only elementary properties of $\Estar$ are needed.

    \begin{lemma}
    \label{lem:estar-lipschitz}
    If $\kappa_1\le\kappa_2$, then
    \[
        \Estar(\kappa_1)\le\Estar(\kappa_2),
    \]
    and
    \[
        \sqrt{\Estar(\kappa_2)}
        \le
        \sqrt{\Estar(\kappa_1)}
        +
        \kappa_2-\kappa_1.
    \]
    \end{lemma}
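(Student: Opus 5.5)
Both assertions follow directly from the definition of $\Estar$ as an infimum over feasible controls, once we observe how feasibility behaves under a change of the level $\kappa$. The monotonicity is a statement about inclusion of feasible sets. The Lipschitz bound comes from shifting any control feasible at the lower level by the deterministic constant $\kappa_2-\kappa_1$ and then applying Minkowski's inequality in the mixed norm $\norm{\cdot}_{\infty,2}$. This is the same device already used in the upper bound of \cref{lem:elementary-E-bounds}, where $v+\kappa_+$ was shown to be feasible.

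For monotonicity, take any progressively measurable $u$ with $Y_1^u\ge\kappa_2$ almost surely. Since $\kappa_2\ge\kappa_1$, we also have $Y_1^u\ge\kappa_1$ almost surely. So the feasible set at level $\kappa_2$ is contained in the feasible set at level $\kappa_1$, and taking infima of $\norm{u}_{\infty,2}^2$ gives $\Estar(\kappa_1)\le\Estar(\kappa_2)$.

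For the Lipschitz bound, fix $\delta>0$ and, using $\Estar(\kappa_1)<\infty$ from \cref{lem:elementary-E-bounds}, choose $u$ feasible at level $\kappa_1$ with
\[
    \norm{u}_{\infty,2}\le\sqrt{\Estar(\kappa_1)}+\delta.
\]
Define $w_t:=u_t+(\kappa_2-\kappa_1)$. This control is still progressively measurable, and its terminal value satisfies
\[
    Y_1^w=Y_1^u+(\kappa_2-\kappa_1)\ge\kappa_2
    \qquad\text{a.s.},
\]
so $w$ is feasible at level $\kappa_2$. Minkowski's inequality in $L^2(\Omega)$, applied for almost every $t$ and followed by the essential supremum, gives
\[
    \norm{w}_{\infty,2}\le\norm{u}_{\infty,2}+(\kappa_2-\kappa_1).
\]
It follows that
\[
    \sqrt{\Estar(\kappa_2)}\le\norm{w}_{\infty,2}\le\sqrt{\Estar(\kappa_1)}+\delta+\kappa_2-\kappa_1,
\]
and letting $\delta\downarrow0$ yields the claim.

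There is no genuine obstacle in this argument. The only points needing care are that the constant shift is nonnegative, since $\kappa_2\ge\kappa_1$, so it raises the terminal value rather than lowering it. It is also essential to work with $\sqrt{\Estar}$, i.e.\ with the norm itself rather than its square, because Minkowski's inequality is a triangle inequality for the norm and would not give an additive bound on $\Estar$ directly.
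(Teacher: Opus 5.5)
Your proposal is correct and follows the paper's proof: monotonicity from the inclusion of feasible sets, and the Lipschitz bound from adding the deterministic drift $\kappa_2-\kappa_1$ to a near-optimal control at level $\kappa_1$, applying Minkowski's inequality in $L^2$ at each time, and letting the approximation error vanish.
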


    \begin{proof}
    The first assertion follows from inclusion of the terminal constraints. For the second, add the deterministic drift $\kappa_2-\kappa_1$ to a control feasible at level $\kappa_1$ and use Minkowski's inequality in $L^2$ at each time.  Taking the infimum over controls arbitrarily close to optimal gives the claim.
    \end{proof}

    \begin{proposition}
    \label{prop:large-margin}
    As $\kappa\to+\infty$,
    \[
        \Estar(\kappa)
        =
        \kappa^2\bigl(1+O(\kappa^{-1})\bigr),
    \]
    and consequently
    \[
        \aon(\kappa)
        =
        \frac{2}{\pi\kappa^2}
        \bigl(1+O(\kappa^{-1})\bigr).
    \]
    \end{proposition}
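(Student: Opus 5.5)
The proof is a sandwich using the two-sided bound of \cref{lem:elementary-E-bounds}. For $\kappa>0$ the upper bound gives
\[
    \sqrt{\Estar(\kappa)}\le\kappa+\sqrt8,
\]
so $\Estar(\kappa)\le\kappa^2(1+\sqrt8/\kappa)^2=\kappa^2(1+O(\kappa^{-1}))$. Equivalently, apply \cref{lem:estar-lipschitz} with $\kappa_1=0$, $\kappa_2=\kappa$, and use $\Estar(0)\le 8$ from the same lemma.

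For the lower bound, use $\Estar(\kappa)\ge\E(\kappa-Z)_+^2$ with $Z\sim N(0,1)$. Expanding and using the Gaussian tail gives
\[
    \E(\kappa-Z)_+^2\ge\E(\kappa-Z)^2-\E\bigl[(\kappa-Z)^2\1_{\{Z>\kappa\}}\bigr].
\]
Here $\E(\kappa-Z)^2=\kappa^2+1$, and the subtracted term is at most $\E[Z^2\1_{\{Z>\kappa\}}]+\kappa^2\Pp(Z>\kappa)$, which decays like $e^{-\kappa^2/2}$ times a polynomial. It is therefore $o(1)$, and a fortiori $O(\kappa)$. More simply, $(\kappa-Z)_+\ge\kappa-Z$, so $\E(\kappa-Z)_+^2\ge(\E(\kappa-Z)_+)^2\ge(\E(\kappa-Z))^2=\kappa^2$ by Jensen. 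This gives $\Estar(\kappa)\ge\kappa^2$ directly, with no error term.

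Combining the two bounds gives $\kappa^2\le\Estar(\kappa)\le\kappa^2(1+O(\kappa^{-1}))$, which is the first claim. The second follows from \cref{thm:main}: $\aon(\kappa)=2/(\pi\Estar(\kappa))$, and $1/(1+O(\kappa^{-1}))=1+O(\kappa^{-1})$ for large $\kappa$.

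There is no real obstacle. The only point needing care is that the $O(\kappa^{-1})$ error comes entirely from the cross term $2\sqrt8\,\kappa$ in the upper bound.
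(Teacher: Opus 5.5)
Your proof is correct and follows the paper's sandwich argument: you use the lower bound $\Estar(\kappa)\ge\E(\kappa-Z)_+^2\ge\kappa^2$ from \cref{lem:elementary-E-bounds}, an upper bound of the form $(\kappa+C)^2$, and \cref{thm:main} for $\aon$. The paper takes $C=\sqrt{\Estar(0)}$ via \cref{lem:estar-lipschitz}, while you take the equivalent $C=\sqrt8$. The only difference is cosmetic: you get $\E(\kappa-Z)_+^2\ge(\E(\kappa-Z)_+)^2\ge(\E(\kappa-Z))^2=\kappa^2$ by Jensen, which is valid since $\kappa\ge0$, whereas the paper computes $\E(\kappa-Z)_+^2$ exactly and applies Mills' inequality.
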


    \begin{proof}
    By \cref{lem:elementary-E-bounds},
    \[
        \Estar(\kappa)
        \ge
        \E(\kappa-Z)_+^2
        =
        (\kappa^2+1)\Ph(\kappa)
        +
        \kappa\ph(\kappa),
        \qquad Z\sim N(0,1).
    \]
    For $\kappa>0$, Mills' inequality $1-\Ph(\kappa)\le\ph(\kappa)/\kappa$ gives
    \[
    \begin{aligned}
        \E(\kappa-Z)_+^2-\kappa^2
        &=
        \Ph(\kappa)
        +
        \kappa\ph(\kappa)
        -
        \kappa^2(1-\Ph(\kappa))\\
        &\ge
        \Ph(\kappa)>0.
    \end{aligned}
    \]
    Thus $\Estar(\kappa)\ge\kappa^2$.  On the other hand, \cref{lem:estar-lipschitz} with $\kappa_1=0$ gives
    \[
        \Estar(\kappa)
        \le
        \bigl(
            \kappa+\sqrt{\Estar(0)}
        \bigr)^2
        =
        \kappa^2\bigl(1+O(\kappa^{-1})\bigr).
    \]
    The assertion for $\aon(\kappa)$ follows from \cref{thm:main}.
    \end{proof}

    \begin{remark}[Relation to the offline upper bound]
    For $\kappa\ge0$, \cref{lem:elementary-E-bounds,thm:main} imply
    \[
        \aon(\kappa)
        \le
        \frac{2}{
            \pi\,\E(\kappa-Z)_+^2
        },
        \qquad Z\sim N(0,1).
    \]
    The right-hand side is Stojnic's rigorous upper bound for the offline storage capacity~\cite{Stojnic13}.  In particular, the online threshold and the known offline upper bound have the same first-order asymptotic $2/(\pi\kappa^2)$ as $\kappa\to+\infty$.
    \end{remark}

    \subsection{Large negative margin}
    \label{sec:large-negative}

    Write
    \[
        K:=-\kappa,
        \qquad
        q_K:=\Ph(-K),
        \qquad
        K\to\infty,
    \]
    and define
    \begin{equation}
        I_0
        :=
        \int_0^\infty
        \sqrt{
            2\Ph\!\left(-\sqrt{s/2}\right)
        }\,ds.
    \label{eq:I0}
    \end{equation}

    \begin{theorem}[Large negative margin]
    \label{thm:large-negative}
    As $K\to\infty$,
    \begin{equation}
        \left(I_0^{-2}+o(1)\right)K^2q_K
        \le
        \Estar(-K)
        \le
        (4+o(1))K^2q_K,
    \label{eq:large-negative-E}
    \end{equation}
    and consequently
    \begin{equation}
        \frac{1/(2\pi)+o(1)}{K^2q_K}
        \le
        \aon(-K)
        \le
        \frac{2I_0^2/\pi+o(1)}{K^2q_K}.
    \label{eq:large-negative-alpha}
    \end{equation}
    In particular, $\aon(\kappa)=\Theta(1/(\kappa^2\Ph(\kappa)))$ as $\kappa\to-\infty$.
    \end{theorem}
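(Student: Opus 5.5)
The consequence \eqref{eq:large-negative-alpha} follows from \eqref{eq:large-negative-E} by \cref{thm:main}, since $\aon(-K)=2/(\pi\Estar(-K))$. So the work is the two-sided bound on $\Estar(-K)$. Throughout I write $P_s:=\Pp(B_1<-K\mid\cF^B_s)=\Ph\bigl((-K-B_s)/\sqrt{1-s}\bigr)$.

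\emph{Lower bound (duality against the shortfall event).} Let $u$ be feasible at level $-K$. On $\{B_1<-K\}$, feasibility gives $\int_0^1u_s\,ds\ge -K-B_1\ge0$. Multiplying by the indicator, taking expectations, and using Fubini with conditioning on $\cF^B_s$ gives
\[
    \E(-K-B_1)_+
    \le
    \int_0^1\E[u_sP_s]\,ds
    \le
    \sqrt{\Estar(-K)+o(1)}\int_0^1\bigl(\E P_s^2\bigr)^{1/2}ds .
\]
Here $\E P_s^2=\Pp(B_1<-K,\ B_1'<-K)$, where $B_1'$ shares the path of $B$ up to time $s$ and then has an independent continuation. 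Thus the pair is a standard bivariate Gaussian with correlation $s$, and the key step is its tail asymptotics.

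Substituting $1-s=\sigma/K^2$, I expect
\[
    \Pp(B_1<-K,B_1'<-K)=q_K\bigl(2\Ph(-\sqrt{\sigma/2})+o(1)\bigr)
\]
uniformly for $\sigma$ in compacts. To see this, condition on $B_1=-K-x/K$ with $x$ approximately exponential, and use that $B_1'-B_1$ is approximately $N(0,2\sigma/K^2)$. For $\sigma\gtrsim\log K$ (correlation bounded away from one on this scale), I will use a crude bivariate tail bound to show that region contributes $o(\sqrt{q_K}/K^2)$ after taking square roots. This gives
\[
    \int_0^1\sqrt{\E P_s^2}\,ds=(I_0+o(1))\sqrt{q_K}/K^2 .
\]
The left side is $\E(-K-B_1)_+=(1+o(1))q_K/K$ by Mills' ratio. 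Rearranging gives
\[
    \sqrt{\Estar(-K)}\ge (I_0^{-1}+o(1))K\sqrt{q_K}.
\]
The delicate part here is the dominated-convergence bookkeeping over $\sigma\in(0,\infty)$: I need an integrable majorant of the form $\sqrt{2\Ph(-c\sqrt\sigma)}$, uniformly in $K$.

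\emph{Upper bound (localized F\"ollmer drift).} I would set $u\equiv0$ on $[0,1-h]$ with $h=\theta/K^2$, and on the last block use a F\"ollmer drift with a compactly supported target, as in \cref{lem:euler-follmer}. The drift is $u_t=\partial_y\log P_{1-t}f(X_t)$, where $f$ is the density, relative to the Brownian transition from time $1-h$, of the law obtained by moving the mass below $-K$ onto $[-K,-K+\delta/K]$. The target then charges only $[-K,\infty)$ and feasibility holds almost surely. The drift is essentially zero unless $X_{1-h}$ lies within $O(1/K)$ of $-K$, an event of probability $O(q_K)$. On that event, the rescaled problem with space scale $1/K$ and time scale $K^{-2}$ is a fixed, $K$-independent F\"ollmer problem whose drift is of order $K$.

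Computing $\E u_t^2$ on the rescaled picture should give $\sup_t\E u_t^2\le(c(\theta,\delta)+o(1))K^2q_K$. I would then optimize over $\theta,\delta$ and compare with the simple deterministic-ramp competitor: on the event, a constant drift of size about $2K$ over the block. The goal is $c=4$. If the F\"ollmer optimization does not reach exactly $4$, I would instead use the explicit drift $u_t=2K\,\1\{X_t<-K+\cdot\}$-type barrier pushes. I would bound their second moments directly with the same bivariate tail estimates, and use the deterministic time-flattening trick from the proof of \cref{prop:zero-upper} to make $\E u_t^2$ uniform in $t$.

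The main obstacle I expect is this upper bound. The difficulty is obtaining a uniform-in-time second-moment bound with the sharp constant $4$ while keeping almost-sure feasibility. The lower bound is comparatively mechanical once the bivariate Gaussian tail near correlation one is controlled.
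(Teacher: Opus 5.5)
\paragraph{Lower bound.} Your lower bound follows the paper's argument and works: test feasibility against $\1_{\{B_1<-K\}}$, apply Cauchy--Schwarz at each time, identify $\norm{\Pp(B_1<-K\mid\cF^B_s)}_{L^2}^2$ with a bivariate tail at correlation $s$, and rescale $1-s=\sigma/K^2$. The heuristic you give for the bivariate limit is wrong as written, though. Conditionally on $B_1=-K-x/K$, the increment $B_1'-B_1$ is $N\bigl((1-s)(K+x/K),\,1-s^2\bigr)$, so its mean is about $\sigma/K$, the same scale as $x/K$. Treating it as centered gives $\tfrac12+e^{\sigma}\Ph(-\sqrt{2\sigma})$, which is not integrable in $\sigma$. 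Keeping the mean gives the correct $2\Ph(-\sqrt{\sigma/2})$; the paper avoids the issue with the independent pair $U=(X+Y)/2$, $V=(X-Y)/2$. The majorant you flag as delicate takes one line: $\{X<-K,Y<-K\}\subseteq\{X+Y<-2K\}$ and Mills give $p_K(1-\sigma/K^2)\le Cq_Ke^{-\sigma/4}$ uniformly for $0\le\sigma\le K^2$. So no separate treatment of $\sigma\gtrsim\log K$ is needed.

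\paragraph{Upper bound: a genuine gap.} As you describe it, the block target is the conditional law of $B_1$ restricted to $[-K,\infty)$ plus relocated mass. Its density relative to the Gaussian transition therefore jumps from $0$ to a positive value at $-K$, so its relative Fisher information is infinite. Near the barrier the F\"ollmer drift is about $\ph(z)/(\Ph(z)\sqrt{1-t})$ with $z=(x+K)/\sqrt{1-t}$. Then $\E u_t^2$ blows up like $(1-t)^{-1/2}$ as $t\uparrow1$, and $\norm{u}_{\infty,2}=\infty$.

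Your fallback does not repair this. A drift bounded by $2K$ gives $Y_1\le B_1+2K$, which violates feasibility with positive probability. The flattening device of \cref{prop:zero-upper} only shifts the mean of the drift, so it cannot bring $\sup_t\E u_t^2$ below $\sup_t\operatorname{Var}(u_t)$. Above all, nothing in the proposal actually produces the constant $4$.

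The missing idea is that a F\"ollmer drift $u_t=\E[s_f(X_1)\mid\cF_t]$ is a martingale, so $\sup_t\E u_t^2\le\int s_f^2f\,d\gamma$ (\cref{lem:localized-follmer}). This turns the uniform-in-time constraint into a static Rayleigh quotient, and there is no need to switch the drift off before $1-h$. The paper runs the F\"ollmer drift on all of $[0,1]$ toward $f_K=H_K^2/\int H_K^2\,d\gamma$, with $H_K=0$ on $(-\infty,-K]$. Then $\int s_{f_K}^2f_K\,d\gamma=4\int(H_K')^2d\gamma/\int H_K^2d\gamma$. Near $-K$ it takes the boundary layer $H_K(x)=1-e^{-K(x+K)}$, which is the Euler--Lagrange minimizer of $\int(H')^2$ against the local weight $\ph(K)e^{K(x+K)}$. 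A cheap linear cutoff on $[K,K+1]$ gives compact support. The layer energy is $K^2\int_{-K}^0e^{-2K(x+K)}\ph(x)\,dx=(1+o(1))K\ph(K)=(1+o(1))K^2q_K$, and that, multiplied by the factor $4$ in the Rayleigh quotient, is where the constant $4$ comes from.
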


    We prove \cref{thm:large-negative} in two parts.  The upper bound is obtained by prescribing a compactly supported terminal distribution through a F\"ollmer drift.  For the lower bound, the relevant time scale is $1-t\asymp K^{-2}$; the required Gaussian estimate is isolated in \cref{lem:large-negative-bivariate}.

    \begin{proposition}
    \label{prop:large-negative-upper}
    As $K\to\infty$,
    \begin{equation}
        \Estar(-K)
        \le
        (4+o(1))K^2q_K.
    \label{eq:large-negative-upper}
    \end{equation}
    \end{proposition}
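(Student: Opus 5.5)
The plan is to exhibit, for each large $K$, an explicit feasible control at level $-K$ whose drift vanishes outside a final window of length $\asymp K^{-2}$ and whose second moment is governed by the probability that the uncontrolled path ends near or below $-K$. The guiding heuristic is the correlation scale $1-t\asymp K^{-2}$ mentioned before the statement. A coordinate that is within $O(K^{-1})$ of the barrier at time $1-\delta$, with $\delta=cK^{-2}$, must be moved a distance of order $\sqrt\delta\asymp K^{-1}$ in time $\delta$. This requires a drift of order $K$ on an event of probability about $q_K$, so one expects $\E u_t^2\asymp K^2q_K$.

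\textbf{Step 1: construction.} Set $u_t=0$ for $t\le 1-\delta$. On $[1-\delta,1]$, take $u$ to be the F\"ollmer drift of a compactly supported terminal law, following the remark after \cref{thm:large-negative}. Rescale the window to unit length via $X_{1-\delta+\delta s}=X_{1-\delta}+\sqrt\delta\,(\text{new Brownian path at time } s)$. Fix a profile $\nu$ on $[0,\infty)$, for instance a law with bounded compactly supported density $f$ relative to the Gaussian, and drive the rescaled process started from $y$ so that it ends at or above the rescaled barrier. Concretely, on the window use $u_t=\partial_x\log P_{1-t}F(X_t)$, where $F$ is a shift of $f$ in the unscaled variable, so that the terminal constraint $X_1\ge-K$ holds almost surely. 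I would choose $f$ so that $f\equiv1$ well above $-K$. Then the drift is essentially zero for paths far from the barrier, and only paths with $X_{1-\delta}\le -K+O(\sqrt\delta)$ feel it; far below the barrier the drift is large but the probability is Gaussian-small.

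\textbf{Step 2: energy.} For $t$ in the window, write $\E u_t^2=\int \gamma_{1-\delta}(dx)\,\E_x|\partial\log P_{1-t}F|^2$. Changing to the scaled variable $z=(x+K)/\sqrt\delta$ gives a factor $1/\delta=K^2/c$ from the drift scaling. The density of $X_{1-\delta}$ near $-K$ contributes $\varphi(K)\sqrt\delta\,e^{O(1)}$, and $\varphi(K)\sim Kq_K$. The $e^{O(1)}$ term is the tilt $e^{-K\sqrt\delta z}$, which stays of order one on this scale. This is precisely a bivariate Gaussian computation for $(B_{1-\delta},B_1)$ with correlation $\sqrt{1-\delta}$, and it yields $\sup_t\E u_t^2\le C(c,f)K^2q_K(1+o(1))$. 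The constant $4$ should then come from optimizing over $c$ and $f$, including the exponential tilt, with a limiting expression for the cost of pushing a Gaussian tail over a half-line. If the raw second moment is not flat in time, I would add a deterministic nonnegative drift $\ell(t)$, as in \eqref{eq:flattening-control}; this only helps feasibility. By Minkowski's inequality, the optimized profile then sets the $\esssup$.

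\textbf{Main obstacle.} I expect the main difficulty to be the uniform-in-$t$ control of the Fisher-type quantity near $t=1$. With a discontinuous target density, such as conditioning on $\{B_1\ge -K\}$, $\E u_t^2$ behaves like $\varphi(K)(1-t)^{-1/2}$ and is unbounded. This is why the terminal law must be chosen so that $\partial_x\log P_{1-t}f$ stays bounded; compact support together with a smoothly tapered density should achieve this. The Gaussian-tail bookkeeping that produces the exact constant $4$ should be manageable, since it is essentially a Laplace-type asymptotic as $K\to\infty$.
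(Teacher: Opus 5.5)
Your heuristic for the order $K^2q_K$ is right, but the statement is about the constant $4$, and the proposal never produces it. Step~2 gives $C(c,f)K^2q_K$ with an unspecified $C$, and the claim that optimizing over $c$ and $f$ yields $4$ is not justified. At the window scale you propose, it is in fact false.

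On the window your drift is a martingale under the law $Q$ of the controlled path, $u_t=\E_Q[s_F(X_1)\mid\cF_t]$. Hence, with $W$ a standard Brownian motion and after normalizing $\sup F=1$,
\[
\sup_t\E u_t^2=\E_Q\bigl[s_F(X_1)^2\bigr]=\E\left[\frac{F(W_1)\,s_F(W_1)^2}{P_\delta F(W_{1-\delta})}\right]>\int\frac{(F')^2}{F}\,d\gamma .
\]
The strict inequality holds because $F$ vanishes on $(-\infty,-K]$, so $P_\delta F<1$ everywhere. Switching the drift off before $1-\delta$ therefore only costs, compared with running the F\"ollmer drift of the same $F$ on all of $[0,1]$.

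Now use your scaling: $\delta=cK^{-2}$, $\mu=\sqrt c$, $F(-K+\sqrt\delta\,w)=\tilde H(w)^2$, and $\varphi(-K+\sqrt\delta\,w)\approx\varphi(K)e^{\mu w}$. The right-hand side is then asymptotically
\[
\frac{4K\varphi(K)}{\mu}\int_0^\infty\tilde H'(w)^2e^{\mu w}\,dw\;\ge\;4K\varphi(K),
\]
by Cauchy--Schwarz, since $\tilde H(0)=0$ and $\sup\tilde H=1$. So for every fixed $c$ and fixed profile, the limiting constant is strictly larger than $4$. The value $4$ is approached only as $c\to\infty$ with $\tilde H(w)=1-e^{-\mu w}$, i.e.\ when the window is no longer of order $K^{-2}$. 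Getting $(4+o(1))$ that way would need a diagonal choice $c_K\to\infty$ with uniform error control, which the proposal does not contain.

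The missing idea is to use the full horizon together with the martingale identity of \cref{lem:localized-follmer}. For the F\"ollmer drift of a terminal density $f$, $v(t,X_t)=\E[s_f(X_1)\mid\cF_t^W]$, so $\sup_{t<1}\E v(t,X_t)^2\le\int s_f^2f\,d\gamma$. This removes what you call the main obstacle: no separate analysis near $t=1$ is needed.

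Take $f_K=H_K^2/\int H_K^2\,d\gamma$. Its Fisher information is $4\int(H_K')^2\,d\gamma/\int H_K^2\,d\gamma$. On $[-K,0]$, the paper's choice $H_K=1-e^{-K(x+K)}$ is exactly the minimizer of $\int_0^\infty H'(y)^2e^{Ky}\,dy$ subject to $H(0)=0$, $H(\infty)=1$, with minimum $K$; the weight comes from $\varphi(-K+y)\approx\varphi(K)e^{Ky}$. This gives $\int(H_K')^2\,d\gamma=(1+o(1))K\varphi(K)$, and with $\varphi(K)\sim Kq_K$ the bound $(4+o(1))K^2q_K$.

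Two smaller points:
\begin{itemize}
\item The taper on $[K,K+1]$ makes the terminal law compactly supported, so the lemma applies, including adaptedness of the drift. Your choice $f\equiv1$ above $-K$ would have to address this separately.
\item The flattening step does nothing here. $\E u_t^2$ is already nondecreasing, and adding a deterministic $\ell\ge0$ to a drift with nonnegative mean only increases it.
\end{itemize}
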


    \begin{proof}
    Let $\gamma=N(0,1)$ and define
    \[
    H_K(x):=
    \begin{cases}
    0,&x\le-K,\\
    1-e^{-K(x+K)},&-K<x\le0,\\
    c_K,&0<x\le K,\\
    c_K(K+1-x),&K<x<K+1,\\
    0,&x\ge K+1,
    \end{cases}
    \qquad
    c_K:=1-e^{-K^2}.
    \]
    Set
    \[
        f_K(x)
        :=
        \frac{H_K(x)^2}{
            \int H_K^2\,d\gamma
        }.
    \]
    The probability measure $f_K\,d\gamma$ is supported on $[-K,K+1]$. By \cref{lem:localized-follmer}, the associated F\"ollmer process has terminal law $f_K\,d\gamma$.  On every interval $[0,T]$ with $T<1$ the drift is smooth in the state, so the corresponding SDE has a unique strong solution; hence the drift is adapted to the completed natural filtration of the driving Brownian motion.  It is therefore feasible at level $-K$.  Its second moment at every time is at most the squared $L^2(f_K\,d\gamma)$ norm of the score
    \[
        \int s_{f_K}^2 f_K\,d\gamma
        =
        4
        \frac{
            \int(H_K')^2\,d\gamma
        }{
            \int H_K^2\,d\gamma
        }.
    \]
    For each fixed $x$, $H_K(x)\to1$, and $0\le H_K\le1$.  Hence
    \begin{equation}
        \int H_K^2\,d\gamma=1-o(1).
    \label{eq:large-negative-denominator}
    \end{equation}
    The contribution of $[-K,0]$ to the derivative energy is
    \begin{align}
        A_K
        &:=
        K^2\int_{-K}^0
        e^{-2K(x+K)}\ph(x)\,dx
        \notag\\
        &=
        K^2\ph(K)
        \int_0^K e^{-Ky-y^2/2}\,dy
        \notag\\
        &=
        K\ph(K)
        \int_0^{K^2}
        e^{-z-z^2/(2K^2)}\,dz
        \notag\\
        &=
        (1+o(1))K\ph(K).
    \label{eq:large-negative-lower-layer}
    \end{align}
    The interval $[K,K+1]$ contributes
    \[
        c_K^2\gamma([K,K+1])
        =
        O(q_K)
        =
        o(K\ph(K)).
    \]
    Therefore
    \[
        \int(H_K')^2\,d\gamma
        =
        (1+o(1))K\ph(K).
    \]
    Using Mills' ratio $\ph(K)\sim Kq_K$ proves \eqref{eq:large-negative-upper}.
    \end{proof}

    The lower bound depends on the following asymptotic for two highly correlated Gaussian variables.

    \begin{lemma}[Bivariate Gaussian tail near correlation one]
    \label{lem:large-negative-bivariate}
    For $K>0$ and $0\le t<1$, let
    \[
        p_K(t)
        :=
        \Pp(X<-K,Y<-K),
    \]
    where $(X,Y)$ is a centered bivariate normal vector with unit variances and correlation $t$.  Then, for every fixed $s\ge0$,
    \begin{equation}
        \frac{
            p_K(1-s/K^2)
        }{
            q_K
        }
        \longrightarrow
        2\Ph\!\left(-\sqrt{s/2}\right)
    \label{eq:large-negative-bivariate-limit}
    \end{equation}
    as $K\to\infty$.  Moreover, there is a universal constant $C$ such that, for all sufficiently large $K$ and all $0\le s\le K^2$,
    \begin{equation}
        \frac{
            p_K(1-s/K^2)
        }{
            q_K
        }
        \le
        C e^{-s/4}.
    \label{eq:large-negative-bivariate-envelope}
    \end{equation}
    Consequently,
    \begin{equation}
        \int_0^1\sqrt{p_K(t)}\,dt
        =
        \frac{\sqrt{q_K}}{K^2}
        \left(
            I_0+o(1)
        \right),
    \label{eq:large-negative-J}
    \end{equation}
    where
    \[
        I_0
        :=
        \int_0^\infty
        \sqrt{
            2\Ph\!\left(-\sqrt{s/2}\right)
        }\,ds.
    \]
    \end{lemma}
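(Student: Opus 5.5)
Write $t=1-\epsilon$ with $\epsilon=s/K^2$.  The plan is to represent the correlated pair as
\[
    X=\sqrt{t}\,W+\sqrt{1-t}\,U,
    \qquad
    Y=\sqrt{t}\,W+\sqrt{1-t}\,V,
\]
with $W,U,V$ i.i.d.\ standard normal.  Then
\[
    p_K(t)=\E\,\Ph\!\left(\frac{-K-\sqrt t\,W}{\sqrt{1-t}}\right)^2.
\]
Conditioning on $W$ reduces everything to a one-dimensional Gaussian integral.  In that integral I substitute $W=-K/\sqrt t+y/K$, which is the natural scale near the tail.

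Under this substitution the density of $W$ becomes
\[
    \ph(K/\sqrt t)\,e^{y/\sqrt t-y^2/(2K^2)}\,dy/K.
\]
The argument of $\Ph$ becomes
\[
    -\frac{\sqrt t\,y/K}{\sqrt{1-t}}=-\frac{\sqrt t\,y}{\sqrt s}.
\]
We also have
\[
    \ph(K/\sqrt t)=\ph(K)\,e^{-K^2(1/t-1)/2}=\ph(K)\,e^{-s/2+O(s^2/K^2)}.
\]
Using $\ph(K)/K\sim q_K$ and dominated convergence for fixed $s$, this gives
\[
    \frac{p_K}{q_K}\to e^{-s/2}\int_{\R} e^{y}\,\Ph(-y/\sqrt s)^2\,dy .
\]
For $s=0$ the right-hand side is $\int_{-\infty}^0 e^y\,dy=1$, which is consistent with $p_K(1)=q_K$.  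Domination comes from $e^{y/\sqrt t}\Ph(\cdot)^2$: for $y\to-\infty$ the factor $e^{y}$ gives decay, and for $y\to+\infty$ the Gaussian tail of $\Ph^2$ does.

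It remains to identify the limit with $2\Ph(-\sqrt{s/2})$.  I would integrate by parts, writing $e^y$ as the derivative of $e^y$:
\[
    \int e^{y}\,\Ph(-y/\sqrt s)^2\,dy=\frac{2}{\sqrt s}\int e^{y}\,\Ph(-y/\sqrt s)\,\ph(y/\sqrt s)\,dy .
\]
Completing the square gives $e^{y}\ph(y/\sqrt s)=e^{s/2}\ph((y-s)/\sqrt s)$.  The remaining integral is then $\E\,\Ph(-Z')$ with $Z'\sim N(\sqrt s,1)$.  This equals $\Pp(Z''+Z'<0)=\Ph(-\sqrt{s/2})$ for independent standard $Z''$.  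After the factors $e^{\pm s/2}$ cancel, the limit is $2\Ph(-\sqrt{s/2})$, which proves \eqref{eq:large-negative-bivariate-limit}.

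The envelope \eqref{eq:large-negative-bivariate-envelope} is the step I expect to need the most care, because it must be uniform for $s$ up to $K^2$, where $t$ is no longer close to $1$.

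\emph{Moderate $s$.}  I would redo the computation above exactly, without limits.  The factor $e^{-y^2/(2K^2)}\le1$ can be dropped.  The factor $\ph(K/\sqrt t)\le\ph(K)e^{-s/2}$ holds exactly, since $1/t-1\ge\epsilon$.  What remains is $e^{-s/2}$ times $\int e^{y/\sqrt t}\Ph(-\sqrt t\,y/\sqrt s)^2\,dy$.  The same integration by parts, with $\sqrt t$ adjustments, bounds this integral by $Ce^{s/(2t)}\,\Ph(-c\sqrt s)$ with $t$-dependent constants.  Combined with the Gaussian tail of $\Ph$, this should give $Ce^{-s/4}$ whenever $t\ge1/2$.

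\emph{Large $s$ ($t<1/2$).}  Here I would use the crude bound $p_K\le\Pp(X+Y<-2K)=\Ph(-2K/\sqrt{2(1+t)})$.  Its ratio to $q_K$ is at most $e^{-K^2(1/(1+t)-1/2)}\cdot\mathrm{poly}(K)$.  Since
\[
    \frac1{1+t}-\frac12\ge\frac{1-t}{4}=\frac{s}{4K^2},
\]
the exponent is at most $-s/4$.  The remaining polynomial factor is harmless whenever $s\gtrsim K^2/2$, because there $e^{-s/4}$ absorbs it after slightly worsening the constant.  If needed I would replace $1/4$ by a slightly smaller constant in the intermediate bounds and recover $1/4$ via the strict gap in this regime.

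Finally, for \eqref{eq:large-negative-J} I substitute $t=1-s/K^2$, which gives
\[
    \int_0^1\sqrt{p_K(t)}\,dt=\frac{\sqrt{q_K}}{K^2}\int_0^{K^2}\sqrt{p_K(1-s/K^2)/q_K}\,ds .
\]
Pointwise convergence of the integrand, together with the integrable envelope $\sqrt C e^{-s/8}$, lets dominated convergence yield $I_0+o(1)$.
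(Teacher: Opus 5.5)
Your pointwise limit is correct and follows a different route from the paper. You condition on the common factor $W$ and evaluate $e^{-s/2}\int e^{y}\Ph(-y/\sqrt s)^2\,dy$ by parts. The paper instead rotates to the independent pair $U=(X+Y)/2$, $V=(X-Y)/2$, writes the event as $\{U<-K-|V|\}$, and reads off the limit as $e^{-s/4}\E e^{-\sqrt{s/2}|Z|}$, which needs no integration by parts. Your concluding dominated-convergence step is the same as the paper's.

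The gap is your ``moderate $s$'' envelope. Your own integration by parts evaluates the integral left after dropping $e^{-y^2/(2K^2)}$ exactly:
\[
    \int_{\R} e^{y/\sqrt t}\,\Ph\!\left(-\frac{\sqrt t\,y}{\sqrt s}\right)^2 dy
    =
    2\sqrt t\,e^{s/(2t^2)}\,\Ph\!\left(-\frac{\sqrt{s/2}}{t}\right).
\]
So the exponential factor is $e^{s/(2t^2)}$, not $e^{s/(2t)}$. Combined with your prefactor $e^{-s/2}$, the bound is, up to polynomial factors, $\exp(-s/2+s/(4t^2))$. This does not decay at all for $t\le1/\sqrt2$, and it is of size $e^{s/2}$ at $t=1/2$. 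Using the exact prefactor $\ph(K/\sqrt t)=\ph(K)e^{-s/(2t)}$ does not rescue it. You only get $e^{-s/4}\exp\bigl(s(1-t)^2/(4t^2)\bigr)=e^{-s/4}\exp\bigl(s^3/(4K^4t^2)\bigr)$, which is not $O(e^{-s/4})$ once $s\gg K^{4/3}$ and has no decay left at $t=1/2$. The discarded factor $e^{-y^2/(2K^2)}$ is not negligible when the relevant $y$ are of order $s\asymp K^2$. So the claim that this regime gives $Ce^{-s/4}$ for all $t\ge1/2$ is false. As planned, the envelope is missing on roughly $K\ll s\le K^2/2$.

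The repair is already in your large-$s$ argument, which needs no restriction to $t<1/2$ because there is no polynomial loss. Put $x:=K\sqrt{2/(1+t)}\ge K$. The Mills bounds $\Ph(-x)\le\ph(x)/x$ and $\Ph(-K)\ge K\ph(K)/(K^2+1)$ give, for all $0\le t<1$ and $K\ge1$,
\[
    \frac{p_K(t)}{q_K}
    \le
    \frac{\Ph(-x)}{\Ph(-K)}
    \le
    \frac{K^2+1}{xK}\,e^{-(x^2-K^2)/2}
    \le
    2\exp\!\left(-\frac{K^2(1-t)}{2(1+t)}\right)
    \le
    2e^{-s/4},
\]
using $1+t\le2$ in the last step. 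This single estimate is exactly the paper's proof of the envelope. With it, you can drop the moderate regime and the strict-gap adjustment entirely.
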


    \begin{proof}
    Fix $s\ge0$ and put $t=1-s/K^2$.  Set
    \[
        U:=\frac{X+Y}{2},
        \qquad
        V:=\frac{X-Y}{2}.
    \]
    Then $U$ and $V$ are independent and
    \[
        \operatorname{Var}(U)
        =
        1-\frac{s}{2K^2},
        \qquad
        \operatorname{Var}(V)
        =
        \frac{s}{2K^2},
    \]
    while
    \[
        \{X<-K,Y<-K\}
        =
        \{U<-K-|V|\}.
    \]
    If $Z\sim N(0,1)$, then
    \[
        |V|
        \stackrel{d}{=}
        \frac1K\sqrt{\frac s2}\,|Z|.
    \]
    Conditionally on $Z$, the relevant Gaussian threshold is
    \[
        \frac{K+|V|}{\sqrt{1-s/(2K^2)}}
        =K+\frac{c_K(Z)}{K},
    \]
    where
    \begin{equation}
        c_K(Z)
        :=K\left(
            \frac{K+|V|}{\sqrt{1-s/(2K^2)}}-K
        \right)
        \longrightarrow
        \frac{s}{4}+\sqrt{\frac s2}|Z|.
    \label{eq:large-negative-shift}
    \end{equation}
    The standard tail ratio
    \[
        \frac{\Ph(-(K+c/K))}{\Ph(-K)}
        \longrightarrow e^{-c}
    \]
    therefore gives the conditional pointwise limit.  To pass the limit through the expectation over $Z$, use the two-sided Mills bounds: for $y\ge0$ and all sufficiently large $K$,
    \[
        \frac{\Ph(-(K+y/K))}{\Ph(-K)}\le C e^{-y}.
    \]
    Since $c_K(Z)\ge\sqrt{s/2}|Z|$, this is an integrable dominating function under the standard Gaussian law.  Dominated convergence yields
    \[
    \begin{split}
        \frac{p_K(1-s/K^2)}{q_K}
        &\longrightarrow
        e^{-s/4}
        \E
        \exp\!\left(
            -\sqrt{\frac s2}|Z|
        \right)\\
        &=
        2\Ph\!\left(-\sqrt{s/2}\right),
    \end{split}
    \]
    proving \eqref{eq:large-negative-bivariate-limit}.

    For the uniform bound, observe that
    \[
        \{X<-K,Y<-K\}
        \subseteq
        \left\{
            \frac{X+Y}{\sqrt{2(1+t)}}
            <
            -K\sqrt{\frac{2}{1+t}}
        \right\}.
    \]
    The random variable on the left of the inequality is standard normal. Standard Mills bounds therefore give, uniformly for $0\le t<1$ and all large $K$,
    \[
        \frac{p_K(t)}{q_K}
        \le
        C
        \exp\!\left(
            -\frac{K^2(1-t)}{4}
        \right).
    \]
    Taking $t=1-s/K^2$ yields \eqref{eq:large-negative-bivariate-envelope}.

    Finally, changing variables $s=K^2(1-t)$ gives
    \[
        \int_0^1\sqrt{p_K(t)}\,dt
        =
        \frac{\sqrt{q_K}}{K^2}
        \int_0^{K^2}
        \sqrt{
            \frac{p_K(1-s/K^2)}{q_K}
        }\,ds.
    \]
    The pointwise limit is $\sqrt{2\Ph(-\sqrt{s/2})}$, while \eqref{eq:large-negative-bivariate-envelope} provides the integrable dominating function $C^{1/2}e^{-s/8}$.  Dominated convergence proves \eqref{eq:large-negative-J}.
    \end{proof}

    \begin{proposition}
    \label{prop:large-negative-lower}
    As $K\to\infty$,
    \begin{equation}
        \Estar(-K)
        \ge
        \left(
            I_0^{-2}+o(1)
        \right)
        K^2q_K.
    \label{eq:large-negative-lower}
    \end{equation}
    \end{proposition}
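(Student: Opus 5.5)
The plan is a duality argument: test the terminal constraint against the indicator of the bad event $\{B_1<-K\}$, and bound the resulting pairing time by time using the $L^2$ size of the conditional probability of that event. \Cref{lem:large-negative-bivariate} then evaluates the resulting integral.

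Fix a control $u$ that is feasible at level $-K$ and adapted to the Brownian filtration, and put $A:=\int_0^1u_t\,dt$. Feasibility gives $A\ge -K-B_1$ almost surely, so $A\ge(-K-B_1)_+$ on $\{B_1<-K\}$. Multiplying by the indicator and taking expectations,
\[
    \E\bigl[A\,\1_{\{B_1<-K\}}\bigr]\ge\E(-K-B_1)_+=\ph(K)-Kq_K .
\]
The Mills expansion $q_K=\ph(K)K^{-1}(1-K^{-2}+O(K^{-4}))$ shows that the right-hand side equals $(1+o(1))\,q_K/K$.

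For the upper bound on the left-hand side, I will use Fubini, which is justified because $\E\int_0^1|u_t|\,dt\le\norm{u}_{\infty,2}<\infty$. Since $u_t$ is $\cF_t^B$-measurable, I condition at time $t$:
\[
    \E\bigl[u_t\1_{\{B_1<-K\}}\bigr]=\E\bigl[u_t\,\pi_t\bigr],
    \qquad
    \pi_t:=\Pp(B_1<-K\mid\cF_t^B).
\]
Cauchy--Schwarz then bounds this by $\norm{u}_{\infty,2}\norm{\pi_t}_{L^2}$ for almost every $t$. The key identity is $\E\pi_t^2=p_K(t)$. To see it, let $B'$ agree with $B$ up to time $t$ and continue with an independent increment after $t$. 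Then $\E\pi_t^2=\Pp(B_1<-K,\,B_1'<-K)$, and $(B_1,B_1')$ is a standard bivariate normal pair with correlation $t$. Integrating over $t$ gives
\[
    \frac{(1+o(1))\,q_K}{K}\le\norm{u}_{\infty,2}\int_0^1\sqrt{p_K(t)}\,dt .
\]

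Finally I apply \eqref{eq:large-negative-J}, which says $\int_0^1\sqrt{p_K}\,dt=\sqrt{q_K}\,(I_0+o(1))/K^2$. Substituting, every feasible control satisfies
\[
    \norm{u}_{\infty,2}\ge\frac{(1+o(1))\,K\sqrt{q_K}}{I_0}.
\]
Squaring and taking the infimum over feasible $u$ yields \eqref{eq:large-negative-lower}.

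The main obstacle is recognizing that the correct test function is $\1_{\{B_1<-K\}}$ paired through conditional probabilities, and that its $L^2$ norm at time $t$ is exactly $p_K(t)$. Once that is in place, the only delicate point left is the second-order Mills expansion of $\ph(K)-Kq_K$, which must be carried out to relative accuracy $o(1)$.
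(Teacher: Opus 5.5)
Your proposal is correct and follows the paper's own argument. It tests the terminal constraint against $\1_{\{B_1<-K\}}$, passes to $\Pp(B_1<-K\mid\cF_t^B)$ via Fubini, and applies Cauchy--Schwarz at each time with $\E\pi_t^2=p_K(t)$, finishing with the Mills asymptotic for $\E(-K-B_1)_+$ and \eqref{eq:large-negative-J}. The only difference is that the paper first replaces $u$ by $u_+$. Your version shows that reduction is unnecessary: $A\1_{\{B_1<-K\}}\ge(-K-B_1)_+$ holds pointwise for any feasible $u$, and Cauchy--Schwarz bounds $\E[u_t\pi_t]$ regardless of the sign of $u_t$.
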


    \begin{proof}
    Let $u$ be feasible at level $-K$.  Replacing $u$ by $u_+:=\max\{u,0\}$ preserves feasibility and does not increase $\E u_t^2$, so we may assume $u_t\ge0$.  Put
    \[
        A:=\int_0^1u_t\,dt,
        \qquad
        I_K:=\1_{\{B_1<-K\}},
        \qquad
        M_t^K:=\E[I_K\mid\cF_t^B].
    \]
    On $\{B_1<-K\}$, feasibility gives
    \[
        A\ge -K-B_1.
    \]
    Consequently, if
    \[
        m_K
        :=
        \E(-K-B_1)_+,
    \]
    then
    \begin{align}
        m_K
        &\le
        \E[I_KA]
        =
        \int_0^1\E[M_t^Ku_t]\,dt
        \notag\\
        &\le
        \sqrt E
        \int_0^1\norm{M_t^K}_{L^2}\,dt,
    \label{eq:large-negative-test}
    \end{align}
    where
    \[
        E
        :=
        \esssup_{0<t<1}\E u_t^2.
    \]
    Hence
    \begin{equation}
        E
        \ge
        \left(
            \frac{m_K}{J_K}
        \right)^2,
        \qquad
        J_K
        :=
        \int_0^1\norm{M_t^K}_{L^2}\,dt.
    \label{eq:large-negative-E-lower-pre}
    \end{equation}

    Mills' ratio gives
    \begin{equation}
        m_K
        =
        \ph(K)-Kq_K
        =
        (1+o(1))
        \frac{q_K}{K}.
    \label{eq:large-negative-m}
    \end{equation}
    If $X,Y$ are standard Gaussians with correlation $t$, then
    \begin{equation}
        \norm{M_t^K}_{L^2}^2
        =
        \Pp(X<-K,Y<-K)
        =
        p_K(t).
    \label{eq:large-negative-bivariate}
    \end{equation}
    Thus \cref{lem:large-negative-bivariate} gives
    \[
        J_K
        =
        \frac{\sqrt{q_K}}{K^2}
        \left(
            I_0+o(1)
        \right).
    \]
    Substituting this and \eqref{eq:large-negative-m} into \eqref{eq:large-negative-E-lower-pre} yields
    \[
        E
        \ge
        \left(
            I_0^{-2}+o(1)
        \right)
        K^2q_K.
    \]
    Taking the infimum over feasible $u$ proves \eqref{eq:large-negative-lower}.
    \end{proof}

    \begin{proof}[Proof of \cref{thm:large-negative}]
    Combining \cref{prop:large-negative-upper,prop:large-negative-lower} gives
    \[
        \left(
            I_0^{-2}+o(1)
        \right)
        K^2q_K
        \le
        \Estar(-K)
        \le
        (4+o(1))K^2q_K.
    \]
    The numerical value
    \[
        I_0=5.06296006456\ldots
    \]
    is obtained by one-dimensional numerical integration and is used only for display.  Applying \cref{thm:main} gives
    \[
        \frac{1/(2\pi)+o(1)}{K^2q_K}
        \le
        \aon(-K)
        \le
        \frac{2I_0^2/\pi+o(1)}{K^2q_K},
    \]
    and in particular
    \[
        \aon(\kappa)
        =
        \Theta\!\left(
            \frac{1}{
                \kappa^2\Ph(\kappa)
            }
        \right)
        \qquad
        (\kappa\to-\infty).
    \]
    \end{proof}
\section*{AI Usage Disclosure}

OpenAI's GPT-5.6 Sol was used for exploratory discussions during the
development of the proofs and to assist with drafting and revising
the manuscript. The authors take full responsibility for the paper's
contents and correctness.

\appendix
    \crefalias{section}{appendix}

    \section{Proof of the Brownian limit}
    \label{app:mean-field}

    This appendix proves \cref{prop:mean-field-limit}.  The argument follows the Doob-decomposition and compactness method of \cite[Section~3]{FiedlerJacksonLackerNilesWeed26}, specialized to the Gaussian perceptron and to a uniformly chosen row.  We use the normalization $g_k/\sqrt N$ throughout, so the limiting process already lives on the time interval $[0,1]$.

    By the seed-fixing argument in \cref{sec:preliminaries}, it is enough to consider deterministic online algorithms.  Put
    \[
        Z_k:=\frac{g_k}{\sqrt N},
    \]
    and let $\cH_{k-1}$ be the sigma-field generated by the columns revealed before $g_k$.  Define
    \begin{equation}
        b_k:=\E[\sigma_k Z_k\mid\cH_{k-1}],
        \qquad
        d_k:=\sigma_k Z_k-b_k.
    \label{eq:appendix-doob}
    \end{equation}
    By \cref{lem:one-column-drift-bound},
    \begin{equation}
        \norm{b_k}_2^2\le\frac{2}{\pi N}
        \qquad\text{a.s.}
    \label{eq:appendix-drift-ball}
    \end{equation}
    For $1\le i\le M$, linearly interpolate the paths from the mesh values
    \begin{equation}
        A_i^N(k/N):=\sum_{j=1}^k b_{j,i},
        \qquad
        W_i^N(k/N):=\sum_{j=1}^k d_{j,i},
        \qquad 0\le k\le N.
    \label{eq:appendix-paths}
    \end{equation}
    Let $I_N$ be uniform on $\{1,\ldots,M\}$ and independent of the matrix, and write
    \[
        A^N:=A_{I_N}^N,
        \qquad
        W^N:=W_{I_N}^N.
    \]

    \subsection{Tightness and the drift bound}

    On the interval $((k-1)/N,k/N)$,
    \[
        \frac1M\sum_{i=1}^M|\dot A_i^N(t)|^2
        =\frac{N^2}{M}\norm{b_k}_2^2
        \le\frac{2N}{\pi M}.
    \]
    Hence, for every nonnegative continuous $\psi$ on $[0,1]$,
    \begin{equation}
        \E\int_0^1\psi(t)|\dot A^N(t)|^2\,dt
        \le
        \frac{2N}{\pi M}\int_0^1\psi(t)\,dt.
    \label{eq:appendix-weighted-energy-N}
    \end{equation}
    In particular, the $H^1$ norms of $A^N$ are bounded in probability.  Since $A^N(0)=0$, the compact embedding of bounded subsets of $H^1([0,1])$ into $C([0,1])$ implies tightness of the laws of $A^N$.

    The martingale increments satisfy
    \begin{equation}
        \E[d_k\mid\cH_{k-1}]=0,
        \qquad
        \operatorname{Cov}(d_k\mid\cH_{k-1})
        =\frac1N I_M-b_kb_k^{\mathsf T}
        \preceq\frac1N I_M.
    \label{eq:appendix-covariance}
    \end{equation}
    Moreover, for a universal constant $C$,
    \begin{equation}
        \E[d_{k,i}^4\mid\cH_{k-1}]
        \le\frac{C}{N^2},
        \qquad 1\le i\le M.
    \label{eq:appendix-fourth-moment}
    \end{equation}
    Indeed, $\sigma_k^2=1$, $Z_{k,i}\sim N(0,1/N)$ conditionally on $\cH_{k-1}$ before the current column is revealed, and $|b_{k,i}|^2\le\norm{b_k}_2^2=O(N^{-1})$. The standard martingale tightness criterion applied after adjoining $I_N$ to the filtration, using \eqref{eq:appendix-covariance} and \eqref{eq:appendix-fourth-moment}, gives tightness of the laws of $W^N$. Thus the joint laws of $(A^N,W^N)$ are tight in $C([0,1])^2$.

    Pass to a subsequence such that
    \begin{equation}
        (A^N,W^N)\Rightarrow(A,B)
        \qquad\text{in }C([0,1])^2.
    \label{eq:appendix-joint-limit}
    \end{equation}
    For nonnegative continuous $\psi$, define
    \[
    I_\psi(a):=
    \begin{cases}
    \displaystyle\int_0^1\psi(t)|\dot a(t)|^2\,dt,
    & a(0)=0\text{ and }a\text{ is absolutely continuous},\\
    +\infty,&\text{otherwise}.
    \end{cases}
    \]
    This functional is lower semicontinuous on $C([0,1])$.  By \eqref{eq:appendix-weighted-energy-N}, the Portmanteau theorem, and $N/M\to1/\alpha$,
    \begin{equation}
        \E I_\psi(A)
        \le
        \frac{2}{\pi\alpha}\int_0^1\psi(t)\,dt.
    \label{eq:appendix-weighted-energy}
    \end{equation}
    Taking $\psi\equiv1$ shows that $A$ is absolutely continuous almost surely. Write
    \[
        A_t=\int_0^t a_s\,ds.
    \]
    An adapted absolutely continuous process admits a progressively measurable version of its derivative; for example one may take a pointwise limit of left difference quotients on the set where the derivative exists.  From \eqref{eq:appendix-weighted-energy}, first for a countable dense family of nonnegative continuous $\psi$ and then by approximation,
    \begin{equation}
        \E a_t^2\le\frac{2}{\pi\alpha}
        \qquad\text{for a.e. }t\in[0,1].
    \label{eq:appendix-pointwise-energy}
    \end{equation}

    \subsection{Identification of the Brownian motion}

    Let $\cG_t$ be the completed filtration generated by $(A_s,B_s)_{0\le s\le t}$.  We show that $B$ is a $\cG$-Brownian motion by passing the Brownian martingale problem to the limit.

    Fix rational $0\le s<t\le1$, a function $\varphi\in C_c^3(\R)$, and a bounded Lipschitz cylinder functional $H$ of the two paths up to time $s$. Put
    \[
        k_s:=\lfloor Ns\rfloor,
        \qquad
        k_t:=\lfloor Nt\rfloor,
    \]
    and let $H_N$ be $H$ evaluated on the two linearly interpolated paths stopped at $k_s/N$.  After adjoining $I_N$ to the discrete filtration, $H_N$ is measurable before all increments with index $k>k_s$.  Replacing the stop at $s$ by the stop at $k_s/N$ creates an $o_{L^1}(1)$ error.  Indeed, the cylinder functional $H$ is Lipschitz, while over the single boundary mesh interval the predictable part has expected increment $O(N^{-1/2})$ by Cauchy--Schwarz and \eqref{eq:appendix-weighted-energy-N}, and the martingale part has expected increment $O(N^{-1/2})$ by \eqref{eq:appendix-covariance}.

    Write
    \[
        w_k:=W^N(k/N),
        \qquad
        \Delta w_k=d_{k,I_N}.
    \]
    Taylor's formula gives
    \[
        \varphi(w_k)-\varphi(w_{k-1})
        =\varphi'(w_{k-1})\Delta w_k
         +\frac12\varphi''(w_{k-1})(\Delta w_k)^2
         +R_k,
    \]
    with $|R_k|\le C|\Delta w_k|^3$.  Conditional expectation and \eqref{eq:appendix-covariance} yield
    \begin{align}
    &\E\left[
    H_N\left(
    \varphi(w_{k_t})-\varphi(w_{k_s})
    -\frac1{2N}\sum_{k=k_s+1}^{k_t}\varphi''(w_{k-1})
    \right)
    \right]
    \notag\\
    &\qquad
    =-\frac12\E\left[
    H_N\sum_{k=k_s+1}^{k_t}
    \varphi''(w_{k-1})b_{k,I_N}^2
    \right]
    +O\!\left(\sum_{k=k_s+1}^{k_t}\E|d_{k,I_N}|^3\right).
    \label{eq:appendix-martingale-problem-discrete}
    \end{align}
    The first term on the right is $O(M^{-1})$, because
    \[
        \E\sum_{k=1}^N b_{k,I_N}^2
        =\frac1M\sum_{k=1}^N\E\norm{b_k}_2^2
        \le\frac{2}{\pi M}.
    \]
    By \eqref{eq:appendix-fourth-moment}, $\E|d_{k,I_N}|^3=O(N^{-3/2})$, so the second term in \eqref{eq:appendix-martingale-problem-discrete} is $O(N^{-1/2})$. Finally, replacing $w_{k_s},w_{k_t}$ by $W_s^N,W_t^N$ and replacing the mesh interval $[k_s/N,k_t/N]$ by $[s,t]$ changes the left-hand side by $o(1)$ in $L^1$.  The endpoint terms use the same one-mesh increment bounds as above, and the $N^{-1}$ Riemann sum differs from $\int_s^t\varphi''(W^N_r)\,dr$ by $o(1)$ because $\varphi''$ is Lipschitz, the paths are linearly interpolated, and $\E|d_{k,I_N}|=O(N^{-1/2})$.

    After these replacements, the integrand is a bounded continuous functional of the pair of paths in the uniform topology.  Passing to the weak limit in \eqref{eq:appendix-joint-limit} gives
    \[
        \E\left[
    H\left(
    \varphi(B_t)-\varphi(B_s)
    -\frac12\int_s^t\varphi''(B_r)\,dr
    \right)
    \right]=0.
    \]
    To make the filtration statement simultaneous, choose rational $s<t$, a countable $C^2$-dense family of compactly supported $C^3$ test functions, and a countable algebra of bounded Lipschitz cylinder functions with rational observation times.  Intersecting the corresponding probability-one events gives the martingale identities for this determining class.  Approximation in the test function, a monotone-class argument for bounded $\sigma((A_r,B_r):r\le s)$-measurable variables, and continuity in $s,t$ extend the identity to the natural filtration generated by $(A,B)$. Completing this filtration does not change conditional expectations.  Thus $B$ solves the Brownian martingale problem with respect to the completed filtration $\cG$, and hence is a $\cG$-Brownian motion.

    \subsection{The terminal constraint}

    At terminal time,
    \begin{equation}
        A^N(1)+W^N(1)=X_{N,I_N}.
    \label{eq:appendix-terminal-relation}
    \end{equation}
    By assumption,
    \[
        \Pp(X_{N,I_N}<\kappa)
        \le
        \Pp\left(\min_{1\le i\le M}X_{N,i}<\kappa\right)
        \longrightarrow0.
    \]
    Together with \eqref{eq:appendix-joint-limit}, this implies
    \[
        A_1+B_1\ge\kappa
        \qquad\text{a.s.}
    \]
    Combining this terminal constraint with \eqref{eq:appendix-pointwise-energy} and the Brownian identification proves \cref{prop:mean-field-limit}.
    \section{Gaussian coupling for prescribed drifts}
\label{app:coupling}

This appendix proves \cref{prop:conditional-coupling}.  The construction is the Gaussian coupling of \cite[Lemma~4.7]{FiedlerJacksonLackerNilesWeed26}, written with $M$ coordinates and $N$ arrivals and conditioned on an initial sigma-field. The two points needed in \cref{sec:achievability} are that the Gaussian steering map is invertible, so the two discrete-time filtrations agree, and that the coupling error has a conditional $\ell^\infty$ bound uniform in $M$ and $N$.

Throughout, let
\[
    t_k:=\frac{k}{N},
    \qquad
    \Delta B_{k+1}:=B_{t_{k+1}}-B_{t_k}.
\]
When the coupling is started after an initial block, all variables determined before the starting time are included in the initial sigma-field.  Brownian independent increments then reduce the argument to the notation below.

\subsection{The discrete filtrations}

The invertibility in \cref{lem:one-column-steering} is needed here for more than reconstructing the Gaussian input: it shows that the Brownian mesh and the Gaussian inputs contain exactly the same information at every discrete time.

\begin{lemma}
\label{lem:coupling-filtration}
Let $\cH_0$ be a sigma-field independent of $\Delta B_1,\ldots,\Delta B_N$, and suppose that, for each $0\le k<N$,
\[
    b_k
    \in
    L^0\!\left(
        \cH_0\vee\sigma(B_{t_0},\ldots,B_{t_k});
        \R^M
    \right),
    \qquad
    \norm{b_k}_2\le\frac{\cgauss}{\sqrt N}.
\]
Construct $(Z_{k+1},\varepsilon_{k+1})$ from $\Delta B_{k+1}$ by the inverse map in \cref{lem:one-column-steering}, using the drift $b_k$.  Then $Z_1,\ldots,Z_N$ are independent $N(0,I_M/N)$ vectors, and for every $0\le k\le N$,
\begin{equation}
    \cH_0\vee\sigma(Z_1,\ldots,Z_k)
    =
    \cH_0\vee\sigma(B_{t_0},\ldots,B_{t_k})
\label{eq:coupling-filtration-equality}
\end{equation}
after completion.
\end{lemma}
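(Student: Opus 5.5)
We argue by induction on $k$, proving simultaneously the filtration identity \eqref{eq:coupling-filtration-equality} at level $k$ and the fact that $Z_{k+1}$ is $N(0,I_M/N)$ and independent of $\cF_k:=\cH_0\vee\sigma(B_{t_0},\ldots,B_{t_k})$. Independence of $Z_1,\ldots,Z_N$ then follows: each $Z_{k+1}$ is independent of $\cF_k$, which by the identity contains $\sigma(Z_1,\ldots,Z_k)$. The base case $k=0$ is trivial, since $B_{t_0}=0$ and both sides equal $\cH_0$ after completion.

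For the inductive step, set $\cH:=\cF_k$. Then $b_k$ is $\cH$-measurable with $\norm{b_k}_2\le\cgauss/\sqrt N$. Because $B$ has independent increments and $\cH_0$ is independent of all increments, $\Delta B_{k+1}$ is $N(0,I_M/N)$ and independent of $\cH$. Given $\cH$, the inverse map of \cref{lem:one-column-steering} reads
\[
\varepsilon_{k+1}=\text{the sign in \eqref{eq:steering-sign} evaluated at }\sqrt N\langle\Delta B_{k+1},v\rangle,
\]
\[
Z_{k+1}=\langle\Delta B_{k+1},v\rangle v+\varepsilon_{k+1}(I-vv^{\mathsf T})\Delta B_{k+1},
\]
with $Z_{k+1}=\Delta B_{k+1}$ on $\{b_k=0\}$. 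This map has the same form as the forward map. The forward map sends $N(0,I_M/N)$ to $N(0,I_M/N)$ conditionally on $\cH$: the sign depends only on the parallel component, and flipping the independent orthogonal Gaussian component preserves its law. For fixed $\cH$-measurable $v$ and $R$ the map is an involution, so the same argument shows that, conditionally on $\cH$, $Z_{k+1}\sim N(0,I_M/N)$. Hence $Z_{k+1}$ is independent of $\cH$. Moreover, $\E[\varepsilon_{k+1}Z_{k+1}\mid\cH]=b_k$ by the computation in \cref{lem:one-column-steering}, and the forward map applied to $Z_{k+1}$ returns $\Delta B_{k+1}$.

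It remains to check measurability. The maps $(\omega,x)\mapsto$ forward/inverse image are jointly measurable, since $v$, $R$ and $\Psi(R)$ are $\cH$-measurable and the formulas are Borel in $(v,R,x)$. Hence $Z_{k+1}$ is $\cH\vee\sigma(\Delta B_{k+1})$-measurable, and conversely $\Delta B_{k+1}$ is $\cH\vee\sigma(Z_{k+1})$-measurable. Therefore
\[
\cH\vee\sigma(Z_{k+1})=\cH\vee\sigma(\Delta B_{k+1})=\cF_{k+1}.
\]
Using the inductive hypothesis $\cH=\cH_0\vee\sigma(Z_1,\ldots,Z_k)$ gives the identity at level $k+1$.

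The main obstacle is care with completions and with the $\cH$-dependence of the map. The identity holds only up to null sets, so we work with completed sigma-fields throughout. On the boundary case $R=\cgauss$ ($\Psi=\infty$) the sign is $-1$ on the whole negative half-line, which is still a Borel involution. We also have to justify that a kernel depending measurably on $\cH$ and preserving the Gaussian law conditionally on $\cH$ yields independence from $\cH$. We handle this with conditional characteristic functions: for every bounded $\cH$-measurable $Y$, compute $\E[Y e^{i\langle\theta,Z_{k+1}\rangle}]$ by conditioning on $\cH$.
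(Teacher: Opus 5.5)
Your proof is correct and follows the paper's argument. Both proceed by induction on $k$: $b_k$ is measurable for the common sigma-field, the steering map preserves the conditional Gaussian law so $Z_{k+1}$ is independent of the past, and the forward and inverse maps show that adjoining $Z_{k+1}$ or $\Delta B_{k+1}$ generates the same completed sigma-field. Your observations that the map is an involution, together with the joint-measurability and characteristic-function details, make explicit what the paper leaves implicit.
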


\begin{proof}
The claim is proved by induction.  It is immediate at $k=0$.  Suppose that \eqref{eq:coupling-filtration-equality} holds at time $k$.  Then $b_k$ is measurable with respect to either side of that equality.

Conditionally on the past, $\Delta B_{k+1}\sim N(0,I_M/N)$.  Applying the inverse of the map in \cref{lem:one-column-steering} to $(b_k,\Delta B_{k+1})$ produces a vector $Z_{k+1}\sim N(0,I_M/N)$ independent of the past.  Conversely, the forward map in \cref{lem:one-column-steering} recovers $\Delta B_{k+1}$ from $(b_k,Z_{k+1})$.  Hence adjoining $Z_{k+1}$ or adjoining $\Delta B_{k+1}$ generates the same completed sigma-field.  This proves \eqref{eq:coupling-filtration-equality} at time $k+1$.

The same conditional argument shows that $Z_{k+1}$ is independent of $\cH_0\vee\sigma(Z_1,\ldots,Z_k)$.  Induction therefore also gives the independence of $Z_1,\ldots,Z_N$.
\end{proof}

The restriction to the Brownian mesh is important in \cref{lem:coupling-filtration}: a control depending on Brownian-bridge information inside an interval need not be reconstructible from the Gaussian inputs observed at the mesh points.  \cref{alg:online} uses only mesh values.

\subsection{The one-step coupling error}

Set
\[
    \mathcal K_k
    :=
    \cH_0\vee\cG_k^B.
\]
On $\{b_k\neq0\}$ define
\[
    v_k:=\frac{b_k}{\norm{b_k}_2},
    \qquad
    R_k:=\sqrt N\,\norm{b_k}_2\in[0,\cgauss],
\]
and on $\{b_k=0\}$ take any fixed unit vector for $v_k$ and set $R_k=0$. Also put
\[
    \eta_{k+1}
    :=
    \sqrt N\,\langle\Delta B_{k+1},v_k\rangle.
\]
Conditionally on $\mathcal K_k$, $\eta_{k+1}$ is standard normal.

\begin{lemma}
\label{lem:coupling-one-step}
For the construction in \cref{prop:conditional-coupling,lem:coupling-filtration}, define
\begin{equation}
    \xi_{k+1}
    :=
    -2
    \1_{\{-\Psi(R_k)<\eta_{k+1}\le0\}}
    \eta_{k+1}
    -
    R_k.
\label{eq:coupling-xi-definition}
\end{equation}
Then
\begin{equation}
    (\varepsilon_{k+1}Z_{k+1}-b_k)-\Delta B_{k+1}
    =
    \frac1{\sqrt N}\xi_{k+1}v_k,
\label{eq:coupling-rank-one-residual}
\end{equation}
and
\begin{equation}
    \E[\xi_{k+1}\mid\mathcal K_k]=0.
\label{eq:coupling-xi-centered}
\end{equation}
Moreover, for every $p>2$ there is $C_p<\infty$ such that
\begin{equation}
    \left(
        \E[
            |\xi_{k+1}|^{2p}
            \mid\mathcal K_k
        ]
    \right)^{1/p}
    \le
    C_p\sqrt N\,\norm{b_k}_2.
\label{eq:coupling-xi-moment}
\end{equation}
The constant $C_p$ depends only on $p$.
\end{lemma}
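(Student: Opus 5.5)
We compute directly from the explicit maps in \cref{lem:one-column-steering}, working on $\{b_k\ne0\}$ (the case $b_k=0$ is trivial: $\varepsilon_{k+1}=1$, $Z_{k+1}=\Delta B_{k+1}$, $R_k=0$, and $\xi_{k+1}=-2\1_{\{0<\eta\le0\}}\eta=0$).

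\emph{Identity \eqref{eq:coupling-rank-one-residual}.} Write $v=v_k$ and $\eta=\eta_{k+1}$. The inverse map gives
\[
Z_{k+1}=\langle\Delta B_{k+1},v\rangle v+\varepsilon_{k+1}(I-vv^{\mathsf T})\Delta B_{k+1}.
\]
Multiplying by $\varepsilon_{k+1}$ and using $\varepsilon_{k+1}^2=1$ yields
\[
\varepsilon_{k+1}Z_{k+1}=\varepsilon_{k+1}\langle\Delta B_{k+1},v\rangle v+(I-vv^{\mathsf T})\Delta B_{k+1}.
\]
Hence
\[
\varepsilon_{k+1}Z_{k+1}-\Delta B_{k+1}=(\varepsilon_{k+1}-1)\frac{\eta}{\sqrt N}v.
\]
By \eqref{eq:steering-sign}, with $\langle Z_{k+1},v\rangle=\langle\Delta B_{k+1},v\rangle$, we have $\varepsilon_{k+1}-1=-2\1_{\{-\Psi(R_k)<\eta\le0\}}$. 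Subtracting $b_k=R_kv/\sqrt N$ then gives exactly $\xi_{k+1}v/\sqrt N$.

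\emph{Centering \eqref{eq:coupling-xi-centered}.} Conditionally on $\mathcal K_k$, $\eta$ is standard normal and $R_k$ is fixed. The calculation in the proof of \cref{lem:one-column-steering} gives $\E[\eta(1-2\1_{\{-\Psi<\eta\le0\}})\mid\mathcal K_k]=R_k$. Since $\E\eta=0$, this means $\E[-2\1\eta\mid\mathcal K_k]=R_k$, which is the claim.

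\emph{Moment bound \eqref{eq:coupling-xi-moment}.} This is the only real step. Put $A=\{-\Psi(R)<\eta\le0\}$, so $|\xi|\le2|\eta|\1_A+R$. We have $\Pp(A)=\Phi(0)-\Phi(-\Psi(R))=\tfrac12-\Phi(-\Psi(R))$, which is at most $\min\{\tfrac12,\ \ph(0)\Psi(R)\}$.

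\emph{Small $R$.} The key comparison is
\[
\Psi(R)^2=-2\log(1-R/\cgauss)\le 4R/\cgauss \qquad\text{for } R\le\cgauss/2.
\]
Hence on $A$ we have $|\eta|\le\Psi(R)\lesssim\sqrt R$, and $\Pp(A)\lesssim\sqrt R$. Therefore
\[
\E[(|\eta|\1_A)^{2p}]\le\Psi^{2p}\,\Pp(A)\lesssim R^{p+1/2}.
\]
Taking the $1/p$ power gives $O(R^{1+1/(2p)})\le O(R)$, since $R\le\cgauss$.

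\emph{Large $R$.} For $R\ge\cgauss/2$, use the trivial bound $\E|\eta|^{2p}\le C_p$. Its $1/p$ power is at most $C_p\le C_p'R$.

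\emph{Combining.} The term $R^{2p}$ contributes $R^2\le\cgauss R$ after the $1/p$ power. By Minkowski in $L^{2p}$, squared, we obtain $(\E|\xi|^{2p})^{1/p}\le C_pR=C_p\sqrt N\norm{b_k}_2$. The constant depends only on $p$, because $\cgauss$ is universal.
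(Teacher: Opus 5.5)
Your proposal is correct and follows essentially the same route as the paper. Both use the rank-one identity obtained from the inverse steering map, centering via the defining identity for $\Psi$, and a moment bound that treats small $R_k$ (where $\Psi(R_k)\lesssim\sqrt{R_k}$) and large $R_k$ (trivial Gaussian moments) separately. The only cosmetic difference is that the paper merges the two regimes into the single pointwise bound $\1_{\{-\Psi(R_k)<\eta_{k+1}\le0\}}|\eta_{k+1}|\le C\sqrt{R_k}\,(1+|\eta_{k+1}|)$; your extra factor $\Pp(A)$ in the small-$R$ case is harmless but not needed.
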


\begin{proof}
The inverse one-column construction leaves the component in the direction $v_k$ unchanged and flips the orthogonal component by the sign $\varepsilon_{k+1}$.  Consequently,
\[
    \varepsilon_{k+1}Z_{k+1}-\Delta B_{k+1}
    =
    (\varepsilon_{k+1}-1)
    \langle\Delta B_{k+1},v_k\rangle v_k.
\]
By \eqref{eq:steering-sign},
\[
    \varepsilon_{k+1}-1
    =
    -2
    \1_{\{-\Psi(R_k)<\eta_{k+1}\le0\}},
\]
while
\[
    b_k
    =
    \frac{R_k}{\sqrt N}v_k.
\]
Substitution gives \eqref{eq:coupling-rank-one-residual} with $\xi_{k+1}$ as in \eqref{eq:coupling-xi-definition}.  The identity defining $\Psi$ in \cref{lem:one-column-steering} gives
\[
    \E\left[
        -2
        \1_{\{-\Psi(R_k)<\eta_{k+1}\le0\}}
        \eta_{k+1}
        \,\middle|\,
        \mathcal K_k
    \right]
    =
    R_k,
\]
which proves \eqref{eq:coupling-xi-centered}.

It remains to bound the conditional moments.  There are constants $r_0\in(0,\cgauss)$ and $C<\infty$ such that
\[
    \Psi(r)\le C\sqrt r,
    \qquad
    0\le r\le r_0.
\]
For $0\le R_k\le r_0$,
\[
    \1_{\{-\Psi(R_k)<\eta_{k+1}\le0\}}
    |\eta_{k+1}|
    \le
    C\sqrt{R_k}.
\]
For $R_k>r_0$, the bound
\[
    |\eta_{k+1}|
    \le
    \frac{\sqrt{R_k}}{\sqrt{r_0}}\,|\eta_{k+1}|
\]
is enough.  Thus, uniformly in $R_k\in[0,\cgauss]$,
\[
    \1_{\{-\Psi(R_k)<\eta_{k+1}\le0\}}
    |\eta_{k+1}|
    \le
    C\sqrt{R_k}\,(1+|\eta_{k+1}|).
\]
Since $R_k$ is $\mathcal K_k$-measurable and $\eta_{k+1}\sim N(0,1)$ conditionally on $\mathcal K_k$,
\[
    \left(
        \E[
            |\xi_{k+1}|^{2p}
            \mid\mathcal K_k
        ]
    \right)^{1/(2p)}
    \le
    C_p\sqrt{R_k}.
\]
Squaring and using $R_k=\sqrt N\,\norm{b_k}_2$ gives \eqref{eq:coupling-xi-moment}.
\end{proof}

\subsection{The conditional \texorpdfstring{$\ell^\infty$}{l-infinity} estimate}

We finish the proof of the coupling bound in \cref{prop:conditional-coupling}.  For a random variable $U$ and $q\ge1$, write
\[
    \norm{U}_{q\mid\cH_0}
    :=
    \left(
        \E[|U|^q\mid\cH_0]
    \right)^{1/q}.
\]
Since $b_k=\norm{b_k}_2v_k$ on $\{b_k\neq0\}$,
\[
    \norm{b_k}_2|v_{k,i}|^2
    =
    |b_{k,i}|\,|v_{k,i}|
    \le
    |b_{k,i}|.
\]
Using the tower property and \eqref{eq:coupling-xi-moment}, we obtain
\begin{align}
    \norm{\xi_{k+1}v_{k,i}}_{2p\mid\cH_0}^{\,2}
    &=
    \left(
        \E\left[
            |v_{k,i}|^{2p}
            \E[
                |\xi_{k+1}|^{2p}
                \mid\mathcal K_k
            ]
            \,\middle|\,
            \cH_0
        \right]
    \right)^{1/p}
    \notag\\
    &\le
    C_p\sqrt N
    \left(
        \E[
            \norm{b_k}_2^p
            |v_{k,i}|^{2p}
            \mid\cH_0
        ]
    \right)^{1/p}
    \notag\\
    &\le
    C_p\sqrt N\,
    \norm{b_{k,i}}_{p\mid\cH_0}.
\label{eq:coupling-coordinate-moment}
\end{align}

Fix $0\le k_0<k_1\le N$ and define
\[
    R_i
    :=
    \frac1{\sqrt N}
    \sum_{k=k_0}^{k_1-1}
    \xi_{k+1}v_{k,i}.
\]
By \eqref{eq:coupling-xi-centered}, the summands are martingale differences with respect to $(\mathcal K_k)$, and the conditional Marcinkiewicz--Zygmund inequality gives
\begin{align}
    \norm{R_i}_{2p\mid\cH_0}^{\,2}
    &\le
    \frac{C_p}{N}
    \sum_{k=k_0}^{k_1-1}
    \norm{\xi_{k+1}v_{k,i}}_{2p\mid\cH_0}^{\,2}
    \notag\\
    &\le
    \frac{C_p}{\sqrt N}
    \sum_{k=k_0}^{k_1-1}
    \norm{b_{k,i}}_{p\mid\cH_0}.
\label{eq:coupling-coordinate-MZ}
\end{align}
The conditional form follows from the usual martingale proof, with all expectations replaced by conditional expectations given $\cH_0$.

Raising \eqref{eq:coupling-coordinate-MZ} to the $p$th power and using
\[
    \left(
        \sum_{k=k_0}^{k_1-1}x_k
    \right)^p
    \le
    N^{p-1}
    \sum_{k=k_0}^{k_1-1}x_k^p
\]
gives
\[
    \E[
        |R_i|^{2p}
        \mid\cH_0
    ]
    \le
    C_pN^{p/2-1}
    \sum_{k=k_0}^{k_1-1}
    \E[
        |b_{k,i}|^p
        \mid\cH_0
    ].
\]
Summing over the coordinates,
\begin{equation}
    \sum_{i=1}^M
    \E[
        |R_i|^{2p}
        \mid\cH_0
    ]
    \le
    C_pN^{p/2}
    \max_{k_0\le k<k_1}
    \sum_{i=1}^M
    \E[
        |b_{k,i}|^p
        \mid\cH_0
    ].
\label{eq:coupling-summed-moment}
\end{equation}
Finally,
\[
    \E[
        \norm{R}_\infty
        \mid\cH_0
    ]
    \le
    \E[
        \norm{R}_{2p}
        \mid\cH_0
    ]
    \le
    \left(
        \sum_{i=1}^M
        \E[
            |R_i|^{2p}
            \mid\cH_0
        ]
    \right)^{1/(2p)}.
\]
Combining this with \eqref{eq:coupling-rank-one-residual} and \eqref{eq:coupling-summed-moment} yields
\[
\begin{split}
&\E\left[
\left\|
\sum_{k=k_0}^{k_1-1}
    (\varepsilon_{k+1}Z_{k+1}-b_k)
-
(B_{t_{k_1}}-B_{t_{k_0}})
\right\|_\infty
\,\middle|\,
\cH_0
\right]\\
&\hspace{1.5cm}
\le
C_pN^{1/4}
\max_{k_0\le k<k_1}
\left(
    \sum_{i=1}^M
    \E[
        |b_{k,i}|^p
        \mid\cH_0
    ]
\right)^{1/(2p)},
\end{split}
\]
which is \eqref{eq:conditional-coupling} and completes the proof of \cref{prop:conditional-coupling}.
\section{F\"ollmer drifts and Euler discretization}
\label{app:follmer}

This appendix constructs the final-block feedback and proves the confinement estimate of \cref{lem:euler-follmer}; it also supplies the F\"ollmer drift used in the large-negative-margin upper bound of \cref{prop:large-negative-upper}.  We first record a general form of the F\"ollmer drift for a compactly supported terminal law.  The statement is deliberately more general than the specialization used by \cref{alg:online}: \cref{prop:large-negative-upper} uses the same result with a different terminal density.  We then construct the shrinking-support terminal law of \cref{lem:compact-follmer}, establish the derivative bounds implied by log-concavity, and prove the Euler estimate.

\subsection{F\"ollmer drift for a compactly supported terminal law}

Let $\gamma_h=N(0,h)$ and let $P_s$ denote the one-dimensional heat semigroup.  The following is the form of the F\"ollmer construction used throughout the paper.  It is the compact-support case of the score-drift identity in~\cite[Lemma~4.1]{FiedlerJacksonLackerNilesWeed26}; we include the details needed when the density vanishes at the boundary of its support.

\begin{lemma}[F\"ollmer drift for a compactly supported density]
\label{lem:localized-follmer}
Let $h>0$, let $p\ge2$, and let $f:\R\to[0,\infty)$ be absolutely continuous and compactly supported, with
\[
    \int f\,d\gamma_h=1.
\]
Define
\[
    s_f(x)
    :=
    \begin{cases}
        f'(x)/f(x),&f(x)>0,\\
        0,&f(x)=0,
    \end{cases}
\]
and suppose $s_f\in L^p(f\,d\gamma_h)$.  For $0\le t<h$, set
\[
    q_t(x):=P_{h-t}f(x),
    \qquad
    v(t,x):=\partial_x\log q_t(x).
\]
Then $q_t(x)>0$ for every $t<h$ and $x\in\R$.

There is a Brownian motion $W$ and a continuous process $X$ satisfying
\begin{equation}
    X_t
    =
    W_t+\int_0^t v(s,X_s)\,ds,
    \qquad 0\le t\le h,
\label{eq:follmer-sde}
\end{equation}
such that $X_h$ has law $f\,d\gamma_h$.  The process $X$ is adapted to the completed natural filtration of $W$, and for $0\le t<h$,
\begin{equation}
    v(t,X_t)
    =
    \E[
        s_f(X_h)
        \mid
        \cF_t^W
    ].
\label{eq:follmer-score-identity}
\end{equation}
Consequently,
\begin{equation}
    \sup_{0\le t<h}
    \norm{v(t,X_t)}_{L^p}
    \le
    \norm{s_f}_{L^p(f\,d\gamma_h)}.
\label{eq:follmer-score-bound}
\end{equation}
\end{lemma}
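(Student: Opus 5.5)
The plan is to build $X$ as the canonical process under the tilted Wiener measure, show it is a strong solution of \eqref{eq:follmer-sde}, and read off the score identity from the heat semigroup.

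\emph{Positivity and path measure.} Positivity is immediate. Since $f\ge0$ has $\int f\,d\gamma_h=1$, $f$ is positive on a set of positive Lebesgue measure. The Gaussian kernel of $P_{h-t}$ is strictly positive for $t<h$, so $q_t(x)>0$ for every $x$. Moreover $q_t$ is smooth in $x$ and $v=\partial_x\log q_t$ is smooth on $[0,T]\times\R$ for each $T<h$. Let $\mathsf W$ be Wiener measure on $C([0,h])$ started at $0$, and define $Q$ by $dQ/d\mathsf W=f(\omega_h)$. Then $\omega_h$ has law $f\,d\gamma_h$ under $Q$. The density process is $\E_{\mathsf W}[f(\omega_h)\mid\cF_t]=q_t(\omega_t)$, which is a positive martingale for $t<h$. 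Itô's formula (using that $q$ solves the backward heat equation) gives
\[
q_t(\omega_t)=\exp\Bigl(\int_0^t v\,d\omega-\tfrac12\int_0^t v^2\,ds\Bigr).
\]
By Girsanov, $W_t:=\omega_t-\int_0^t v(s,\omega_s)\,ds$ is a $Q$-Brownian motion on $[0,T]$ for every $T<h$, hence on $[0,h)$. This gives a weak solution; it is nonexplosive because $\omega$ is continuous under $Q\ll\mathsf W$. Integrability of the drift up to time $h$ will follow from \eqref{eq:follmer-score-bound} with $p\ge2$, which makes $\int_0^h|v|\,ds$ finite a.s. Hence \eqref{eq:follmer-sde} holds on the closed interval $[0,h]$, and $W$ extends continuously to $h$.

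\emph{Adaptedness.} For adaptedness to $\cF^W$ I use the argument already given in the proof of \cref{lem:entropic-identity}. On $[0,T]$ with $T<h$, the drift $v$ is locally Lipschitz in $x$ uniformly in $t$, so pathwise uniqueness holds up to explosion. Yamada--Watanabe then yields a strong solution on each $[0,T]$, and these solutions are consistent as $T\uparrow h$. Consequently $X_t$ is $\cF^W_t$-measurable for $t<h$, and by continuity also for $t=h$.

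\emph{Score identity.} I first compute the conditional law of $X_h$ given the past under $Q$. For $t<h$, Bayes' rule gives that $X_h$ given $\cF_t$ has density
\[
y\mapsto \frac{p_{h-t}(X_t,y)\,f(y)}{q_t(X_t)}
\]
with respect to Lebesgue measure, where $p$ is the heat kernel. Next I differentiate under the integral:
\[
\partial_xq_t(x)=\int\partial_xp_{h-t}(x,y)f(y)\,dy=-\int\partial_yp_{h-t}(x,y)f(y)\,dy=\int p_{h-t}(x,y)f'(y)\,dy.
\]
The integration by parts is justified because $f$ is absolutely continuous with compact support, so there are no boundary terms. Here $f'$ must be read as the derivative against the Gaussian reference. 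If $f$ is a density relative to $\gamma_h$, the kernel is $p_{h-t}(x,y)\gamma_h(y)$ normalized, and the same computation goes through with the Gaussian factor absorbed into the $h$-bridge kernel. I expect this bookkeeping to be routine but error-prone, and I will write it with the bridge kernel $\E[\,\cdot\,(\omega_h)\mid\omega_t=x]$ under $\mathsf W$, for which $\partial_x$ moves to $\partial_y$ on $f$.

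Dividing by $q_t(x)$, and noting that $f'=s_ff$ a.e. (where $f=0$, $f'=0$ a.e.), gives
\[
v(t,X_t)=\E_Q[s_f(X_h)\mid\cF_t].
\]
Conditioning on $\cF_t$ is the same as conditioning on $\cF^W_t$ by the previous step, which proves \eqref{eq:follmer-score-identity}.

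\emph{Moment bound.} Conditional Jensen then gives
\[
\norm{v(t,X_t)}_{L^p}\le\norm{s_f(X_h)}_{L^p}=\norm{s_f}_{L^p(f\,d\gamma_h)},
\]
which is \eqref{eq:follmer-score-bound}.

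\emph{Main obstacle.} The delicate point is the behaviour near $t=h$ and the boundary of the support, where $v$ blows up. I handle it entirely through the martingale representation: uniform $L^p$ bounds on $v$ give integrability of the drift without any pointwise control. The terminal law identification holds by construction of $Q$.
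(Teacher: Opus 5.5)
Your proposal is correct and is essentially the paper's proof: tilt Wiener measure by $f(\omega_h)$, apply Girsanov on each $[0,T]$ with $T<h$, and obtain the score identity from Bayes' formula together with $\partial_xP_{h-t}f=P_{h-t}f'$; then deduce the $L^p$ bound by conditional Jensen, which also gives $\int_0^h|v(s,X_s)|\,ds<\infty$ and extends the equation to $t=h$, and get $\cF^W$-adaptedness from pathwise uniqueness, where the only variation is that the paper writes $v(t,x)$ as a posterior mean and uses the variance bound $(b-a)^2/4$ on the support to get a global Lipschitz constant on $[0,T]$, while your local-Lipschitz plus Yamada--Watanabe argument (as in the paper's entropic lemma) works equally well. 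Your side remark about reading $f'$ ``against the Gaussian reference'' should be dropped: $\gamma_h$ enters only as the law of $\omega_h$ under $\mathsf W$, the kernel from time $t$ to $h$ is the plain heat kernel $p_{h-t}(x,y)$, and your displayed computation with the ordinary derivative $f'$ is already exactly right.
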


\begin{proof}
Work first on canonical Wiener space under a probability measure $P$, with coordinate process $B$.  Since the Gaussian kernel is strictly positive, $q_t(x)>0$ for every $t<h$ and $x\in\R$.  Put
\[
    L_t:=q_t(B_t).
\]
By the Markov property,
\[
    L_t
    =
    \E_P[f(B_h)\mid\cF_t^B],
    \qquad
    L_0
    =
    P_hf(0)
    =
    \int f\,d\gamma_h
    =
    1.
\]
Define $Q$ on $\cF_h^B$ by
\[
    \frac{dQ}{dP}=f(B_h).
\]
Fix $T<h$.  On $[0,T]$, localize at stopping times on which $L_t$ stays in a compact subset of $(0,\infty)$ and $|v(t,B_t)|$ remains bounded. It\^o's formula and the backward heat equation give
\[
    dL_t
    =
    L_tv(t,B_t)\,dB_t.
\]
After removing the localization, $L$ is the density process of $Q$ on $\cF_T^B$.  Girsanov's theorem therefore shows that
\[
    W_t
    :=
    B_t-\int_0^t v(s,B_s)\,ds
\]
is $Q$-Brownian on $[0,T]$.

Because $f$ is absolutely continuous and compactly supported, integration by parts gives
\[
    \partial_xP_rf=P_rf'
    \qquad (r>0).
\]
Bayes' formula now yields
\begin{align*}
    \E_Q[s_f(B_h)\mid\cF_t^B]
    &=
    \frac{
        \E_P[
            f(B_h)s_f(B_h)
            \mid
            \cF_t^B
        ]
    }{
        \E_P[
            f(B_h)
            \mid
            \cF_t^B
        ]
    }\\
    &=
    \frac{
        P_{h-t}f'(B_t)
    }{
        P_{h-t}f(B_t)
    }
    =
    v(t,B_t).
\end{align*}
Conditional Jensen gives
\[
    \norm{v(t,B_t)}_{L^p(Q)}
    \le
    \norm{s_f(B_h)}_{L^p(Q)}
    =
    \norm{s_f}_{L^p(f\,d\gamma_h)}.
\]
In particular,
\[
    \E_Q\int_0^h v(t,B_t)^2\,dt<\infty.
\]
Thus $W_t=B_t-\int_0^t v(s,B_s)\,ds$ extends continuously to $t=h$, and the Brownian motions obtained on $[0,T]$ are consistent as $T\uparrow h$.

It remains to verify that the drift in \eqref{eq:follmer-sde} is adapted to the Brownian filtration generated by $W$.  Let the support of $f$ be contained in a bounded interval $[a,b]$.  For $t<h$, differentiating the Gaussian kernel gives
\begin{equation}
    v(t,x)
    =
    \frac{
        \E_{\nu_{t,x}}[Y]-x
    }{
        h-t
    },
\label{eq:follmer-posterior-mean}
\end{equation}
where $\nu_{t,x}$ is the probability measure on $[a,b]$ with density proportional to
\[
    f(y)\exp\!\left(
        -\frac{(y-x)^2}{2(h-t)}
    \right).
\]
Hence, on every interval $[0,T]$ with $T<h$, the function $x\mapsto v(t,x)$ has linear growth uniformly in $t\le T$ and is globally Lipschitz uniformly in $t\le T$.  Indeed,
\[
    \partial_xv(t,x)
    =
    -\frac1{h-t}
    +
    \frac{
        \operatorname{Var}_{\nu_{t,x}}(Y)
    }{
        (h-t)^2
    },
\]
and the variance is at most $(b-a)^2/4$. Therefore the SDE
\[
    dX_t=v(t,X_t)\,dt+dW_t
\]
has a unique strong solution on every $[0,T]$, $T<h$.  Under $Q$, the canonical process $B$ is such a solution, so pathwise uniqueness implies that $B_{[0,T]}$ is measurable with respect to $W_{[0,T]}$.  Conversely, $W$ is defined from $B$, and hence the completed natural filtrations of $B$ and $W$ agree at every time $t<h$.  The score identity above is therefore exactly \eqref{eq:follmer-score-identity} with conditioning on $\cF_t^W$.

Finally, by the definition of $Q$,
\[
    Q(B_h\in dx)
    =
    f(x)\,\gamma_h(dx).
\]
Taking $X=B$ under $Q$ proves all assertions.
\end{proof}

\subsection{A shrinking-support terminal law}

We construct the compactly supported terminal law and its F\"ollmer drift used by the online algorithm.

\begin{lemma}
\label{lem:compact-follmer}
Fix an even integer $p>4$ and $c_F>0$.  For every sufficiently small $h>0$ there are an even, log-concave, compactly supported piecewise-polynomial density $f_h$ with respect to $\gamma_h$ and a F\"ollmer drift
\[
    v_h(t,x):=\partial_x\log P_{h-t}f_h(x),
    \qquad 0\le t<h,
\]
such that the solution of $dZ_t=v_h(t,Z_t)\,dt+dW_t$, $Z_0=0$, satisfies
\begin{equation}
    \sup_{0\le t<h}\norm{v_h(t,Z_t)}_{L^p}\le c_F,
    \qquad
    Z_h\in[-r_h,r_h]\quad\text{a.s.},
\label{eq:follmer-output}
\end{equation}
where $r_h=r_h(p,c_F)\downarrow0$ as $h\downarrow0$.
\end{lemma}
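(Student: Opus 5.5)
We will build $f_h$ by scaling a fixed profile to the Brownian scale $\sqrt h$, and then read off both conclusions of \eqref{eq:follmer-output} from \cref{lem:localized-follmer}. First fix an even, log-concave, compactly supported piecewise-polynomial function $\phi:\R\to[0,\infty)$. One concrete choice is $\phi(y)=(1-y^2/R^2)_+^{m}$ with an integer $m\ge2$, so that $\phi$ is absolutely continuous and $\phi'/\phi=-2my/(R^2-y^2)$ near the endpoints. This gives $\int|\phi'/\phi|^p\phi\,dy<\infty$ as soon as $m>p-1$: near $y=\pm R$ the integrand behaves like $(R-|y|)^{m-p}$.

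Next set $f_h(x):=\phi(x/\sqrt h)/\int\phi(y/\sqrt h)\,\gamma_h(dy)$. This is the natural scaling, since $Z_h$ has law $\gamma_h$-scale fluctuations and we want its support to have radius $r_h:=R\sqrt h$. Then $f_h\,d\gamma_h$ is the image under $y\mapsto\sqrt h\,y$ of the fixed law $\mu:=\phi\,d\gamma_1/\!\int\phi\,d\gamma_1$. The density $f_h$ is log-concave, even and compactly supported on $[-r_h,r_h]$. It is piecewise polynomial after normalization; the normalizing constant is a number, so this is allowed.

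By \cref{lem:localized-follmer} applied with $f=f_h$, the F\"ollmer process has terminal law $f_h\,d\gamma_h$. Hence $Z_h\in[-r_h,r_h]$ almost surely, and \eqref{eq:follmer-score-bound} gives
\[
\sup_{t<h}\norm{v_h(t,Z_t)}_{L^p}\le\norm{s_{f_h}}_{L^p(f_h\,d\gamma_h)}=h^{-1/2}\norm{\phi'/\phi}_{L^p(\mu)}.
\]

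The main obstacle is visible here: the score grows like $h^{-1/2}$, whereas we need the fixed budget $c_F$ uniformly as $h\downarrow0$. Pure Brownian scaling therefore cannot work. The remedy is to let the support radius be much larger than $\sqrt h$ while still tending to zero. Take $\phi_R(y)=(1-y^2/R^2)_+^m$ with $R=R(h)\to\infty$, and again use support radius $r_h=R(h)\sqrt h$.

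Under $\gamma_1$ weighting the measure $\mu_R$ is essentially Gaussian, because $\phi_R\approx1$ on the bulk. Its score $\phi_R'/\phi_R=-2my/(R^2-y^2)$ is $O(m|y|/R^2)$ on the bulk. Near the boundary the mass is only $e^{-R^2/2}$, times the integrable singular factor from the previous paragraph. A direct estimate then gives $\norm{\phi_R'/\phi_R}_{L^p(\mu_R)}\le C_{p,m}/R$ for large $R$. So the drift bound becomes $C_{p,m}/(R\sqrt h)=C_{p,m}/r_h$.

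This still blows up as $r_h\to0$, which is the genuine obstruction. It is resolved by noting that the admissible conclusion in \eqref{eq:follmer-output} is only that $r_h\downarrow0$, so we may choose $r_h\downarrow0$ slowly. We also want to choose the profile so that the Gaussian-weighted score is controlled by the slack of the law, not by $1/r_h$ alone. The refined plan is the following. The tail mass outside $[-r,r]$ under $\gamma_h$ is about $e^{-r^2/(2h)}$, and a profile like $H_K$ in \cref{prop:large-negative-upper} (an exponential cutoff layer $1-e^{-\lambda(r-|x|)}$) costs score energy of order $\lambda^2$ times the layer mass. Optimizing the layer as in \eqref{eq:large-negative-lower-layer} gives an $L^2$ score of order $\sqrt{\lambda\,\varphi(r/\sqrt h)/\sqrt h}$, and the $L^p$ analogue is similar. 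Choosing $r_h=h^{1/4}$, say, makes $\varphi(r_h/\sqrt h)$ superpolynomially small in $h$, so the score norm is $\le c_F$ for small $h$.

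To keep the piecewise-polynomial form, I would replace the exponential layer by a polynomial one, $(1-((|x|-r_h+\delta)/\delta)^2)_+^m$ on $[r_h-\delta,r_h]$ and $1$ inside; this is still log-concave. Bounding the $L^p$ score near the endpoint uses $m>p-1$ exactly as before. The only delicate verification is this endpoint $L^p$ integrability together with the uniformity of the Gaussian tail estimate.
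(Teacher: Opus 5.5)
Your final construction is correct and is essentially the paper's shrinking-tent argument. You use a plateau with a boundary layer $(1-u^2)^m$, $m>p-1$, of radius $r_h=h^{1/4}$ (so $r_h\to0$ but $r_h/\sqrt h\to\infty$), and feed it into \cref{lem:localized-follmer} for both the $L^p$ drift bound and the terminal support. The paper instead takes $f_h\propto H_h^p$ with a linear ramp on $[r_h/2,r_h]$, so that $\abs{s_{f_h}}^pf_h=p^p\abs{H_h'}^p/\!\int H_h^p\,d\gamma_h$ has no endpoint singularity, and the sharper radius $r_h=2\sqrt{ph\log(4K^{-2}/h)}$ with $K=c_F/p$; the ramp mass $\gamma_h([r_h/2,\infty))\le(K/2)^ph^{p/2}$ then absorbs $(2/r_h)^p$ and gives the explicit bound $pK=c_F$ rather than an $o(1)$ one.

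To make your version airtight, fix the layer width, e.g.\ $\delta=r_h/2$, so that $\int\abs{s_{f_h}}^pf_h\,d\gamma_h\le C_{p,m}\,\delta^{1-p}\varphi_h(r_h/2)\to0$, where $\varphi_h$ is the $N(0,h)$ density. You can drop your intermediate heuristics, which are inaccurate. The global profile $(1-y^2/R^2)_+^m$ has $L^p(\mu_R)$ score norm $O(R^{-2})$, not just $O(R^{-1})$, so it already works once $r_h\gg h^{1/4}$. Conversely, the exponential layer with $f=H^2$ has $s_f\notin L^p(f\,d\gamma_h)$ for $p\ge3$.
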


The construction is the shrinking-tent construction from the proof of \cite[Lemma~4.2]{FiedlerJacksonLackerNilesWeed26}, with the scale written for a time interval of length $h$, and the drift is the one supplied by \cref{lem:localized-follmer}.

Fix an even integer $p>4$ and a target bound $c_F>0$, and put
\[
    K:=\frac{c_F}{p}.
\]
For sufficiently small $h>0$, define
\[
    r_h
    :=
    2\sqrt{
        ph\log\!\left(
            4K^{-2}/h
        \right)
    }
\]
and
\[
    H_h(x)
    :=
    \begin{cases}
        1,&|x|\le r_h/2,\\
        2(r_h-|x|)/r_h,&r_h/2<|x|<r_h,\\
        0,&|x|\ge r_h.
    \end{cases}
\]
The Gaussian tail estimate
\[
    \gamma_h([x,\infty))
    \le
    e^{-x^2/(2h)}
    \qquad (x\ge0)
\]
gives
\begin{equation}
    \gamma_h([r_h/2,\infty))
    \le
    e^{-r_h^2/(8h)}
    =
    \left(\frac K2\right)^p h^{p/2}.
\label{eq:tent-tail}
\end{equation}
For all sufficiently small $h$,
\[
    \gamma_h([0,r_h/2])\ge\frac14,
    \qquad
    r_h\ge2\sqrt h.
\]
Since $H_h'=0$ on $[0,r_h/2]$ and $|H_h'|=2/r_h$ on $(r_h/2,r_h)$,
\[
    \norm{H_h'}_{L^p(\gamma_h)}^p
    =
    2\left(\frac2{r_h}\right)^p
    \gamma_h([r_h/2,r_h]),
\]
whereas
\[
    \norm{H_h}_{L^p(\gamma_h)}^p
    \ge
    2\gamma_h([0,r_h/2]).
\]
Using \eqref{eq:tent-tail}, together with $r_h\ge2\sqrt h$ and $\gamma_h([0,r_h/2])\ge1/4$, yields
\[
    \gamma_h([r_h/2,r_h])
    \le
    \left(
        \frac{Kr_h}{2}
    \right)^p
    \gamma_h([0,r_h/2]).
\]
Consequently,
\begin{equation}
    \norm{H_h'}_{L^p(\gamma_h)}
    \le
    K\norm{H_h}_{L^p(\gamma_h)}.
\label{eq:tent-gradient-bound}
\end{equation}

Set
\[
    f_h
    :=
    \frac{
        H_h^p
    }{
        \int H_h^p\,d\gamma_h
    }.
\]
The function $H_h$ is even and log-concave, and hence so is $f_h$.  On the interior of its support,
\[
    s_{f_h}
    =
    p\,\frac{H_h'}{H_h}.
\]
Therefore \eqref{eq:tent-gradient-bound} gives
\[
    \int
    |s_{f_h}|^p
    f_h\,d\gamma_h
    =
    p^p
    \frac{
        \int|H_h'|^p\,d\gamma_h
    }{
        \int H_h^p\,d\gamma_h
    }
    \le
    c_F^p.
\]
Applying \cref{lem:localized-follmer} proves the $L^p$ bound and the terminal-support assertion in \cref{lem:compact-follmer}.  Finally,
\[
    r_h
    =
    O\!\left(
        \sqrt{h\log(1/h)}
    \right)
    \longrightarrow0,
\]
which completes the proof of that lemma.

The feedback evaluated by the online algorithm is $v_N=v_{h_N}$, which by the definition of $P_{h_N-t}$ has the explicit form
\begin{equation}
 v_N(t,x)=
 \frac{\displaystyle
 \int_{-r_{h_N}}^{r_{h_N}}
 \frac{y-x}{s}f_{h_N}(y)e^{-(y-x)^2/(2s)}\,dy}
 {\displaystyle
 \int_{-r_{h_N}}^{r_{h_N}}
 f_{h_N}(y)e^{-(y-x)^2/(2s)}\,dy},
 \qquad s=h_N-t>0.
\label{eq:explicit-follmer-feedback}
\end{equation}
Because $f_{h_N}$ is piecewise polynomial, the numerator and denominator are finite sums of truncated Gaussian moments; no continuous-time SDE solver is used by the online algorithm.

\subsection{Derivative bounds for the feedback}

The stability of the Euler scheme follows from log-concavity.  We state the derivative estimate in a form that makes the nonexpansiveness of one Euler step immediate.

\begin{lemma}
\label{lem:follmer-derivative-bounds}
Let $f:\R\to[0,\infty)$ be log-concave and not identically zero, and set
\[
    v(t,x)
    :=
    \partial_x\log P_{h-t}f(x),
    \qquad 0\le t<h.
\]
Then
\[
    -\frac1{h-t}
    \le
    \partial_xv(t,x)
    \le0
\]
for every $t<h$ and $x\in\R$.  If $f$ is even, then
\[
    v(t,0)=0,
    \qquad
    |v(t,x)|
    \le
    \frac{|x|}{h-t}.
\]
\end{lemma}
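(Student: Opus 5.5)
The plan is to write the derivative of the feedback as a posterior variance, as in \eqref{eq:follmer-posterior-mean}, and then use log-concavity to compare that variance with the Gaussian variance $h-t$. Put $s:=h-t>0$ and let $\nu_{t,x}$ be the probability measure with density proportional to $f(y)\exp(-(y-x)^2/(2s))$. This is well defined because a log-concave $f$ that is not identically zero is bounded on compacts and has at most exponential growth, so all Gaussian moments are finite.

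Differentiating under the integral, which is justified by the Gaussian factor, gives
\[
    v(t,x)=\frac{\E_{\nu_{t,x}}[Y]-x}{s},
    \qquad
    \partial_xv(t,x)=-\frac1s+\frac{\operatorname{Var}_{\nu_{t,x}}(Y)}{s^2}.
\]
The lower bound $\partial_xv\ge-1/s$ is then immediate, since a variance is nonnegative.

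The upper bound $\partial_xv\le0$ is equivalent to $\operatorname{Var}_{\nu_{t,x}}(Y)\le s$. Here log-concavity enters. The density of $\nu_{t,x}$ is the product of the log-concave function $f$ and a Gaussian of variance $s$, so it is $(1/s)$-strongly log-concave. The Brascamp--Lieb inequality, or equivalently the fact that a strongly log-concave perturbation of a Gaussian satisfies the Poincaré inequality with the Gaussian constant, gives $\operatorname{Var}(Y)\le s$.

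If I want to avoid citing Brascamp--Lieb, I would argue differently. The function $P_sf$ is log-concave by Prékopa--Leindler, being the marginal of a jointly log-concave function. Hence $v=\partial_x\log P_sf$ is nonincreasing in $x$. This yields $\partial_xv\le0$ directly, at least in the a.e./distributional sense, and smoothness of $P_sf$ upgrades it to every $x$. I expect this to be the only nontrivial step, and I would use the Prékopa route since it is the cleanest.

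For the even case, $P_sf$ is even, so its logarithmic derivative $v(t,\cdot)$ is odd, which gives $v(t,0)=0$. Integrating the bound $-1/s\le\partial_xv\le0$ from $0$ to $x$ then shows that $v(t,x)$ has sign opposite to $x$ and that $|v(t,x)|\le|x|/s$.
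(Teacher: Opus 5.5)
Your proposal is correct and follows the paper's proof essentially step for step: the upper bound $\partial_x v\le 0$ comes from Pr\'ekopa's theorem (Gaussian convolution preserves log-concavity), the lower bound from the posterior-variance formula $\partial_x v=-1/s+\operatorname{Var}_{\nu_{t,x}}(Y)/s^2$, and the even case from oddness of $v(t,\cdot)$ followed by integrating the derivative bounds from $0$. The Brascamp--Lieb variant you mention is a valid alternative way to get $\operatorname{Var}_{\nu_{t,x}}(Y)\le s$, but the Pr\'ekopa route you settle on is exactly the one the paper uses.
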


\begin{proof}
Gaussian convolution preserves log-concavity \cite{Prekopa71}.  Hence
\[
    \partial_{xx}\log P_{h-t}f(x)\le0.
\]
For the opposite bound, put $\tau=h-t$ and let $\nu_{x,\tau}$ be the probability measure with density proportional to
\[
    f(y)e^{-(y-x)^2/(2\tau)}.
\]
Differentiating the Gaussian kernel gives
\[
    \partial_{xx}\log P_\tau f(x)
    =
    -\frac1\tau
    +
    \frac{
        \operatorname{Var}_{\nu_{x,\tau}}(Y)
    }{
        \tau^2
    }
    \ge
    -\frac1\tau.
\]
Since $\partial_xv=\partial_{xx}\log P_{h-t}f$, the first assertion follows.

If $f$ is even, so is $P_{h-t}f$, and therefore $v(t,0)=0$. Integrating the derivative bound between $0$ and $x$ yields
\begin{equation}
    |v(t,x)|
    \le
    \frac{|x|}{h-t}.
\label{eq:follmer-linear-bound}
\end{equation}
\end{proof}

In particular, for the feedback $v_h$ in \cref{lem:compact-follmer},
\[
    -\frac1{h-t}
    \le
    \partial_xv_h(t,x)
    \le0,
\]
which controls one Euler step.  If $\Delta=1/N$ and $h-t\ge\Delta$, the map
\[
    x\longmapsto x+\Delta v_h(t,x)
\]
is nondecreasing and $1$-Lipschitz.

We also record the elementary deterministic estimate used to accumulate the local discretization errors.

\begin{lemma}
\label{lem:dissipative-recursion}
Let $e_0=P_0=0$ and suppose
\[
    e_{n+1}
    =
    \theta_ne_n+(P_{n+1}-P_n),
    \qquad
    0\le\theta_n\le1.
\]
Then
\[
    \max_{0\le n\le L}|e_n|
    \le
    2\max_{0\le n\le L}|P_n|.
\]
\end{lemma}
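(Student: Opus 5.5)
The plan is to unroll the recursion into a discrete summation by parts. By induction,
\[
    e_n=\sum_{j=0}^{n-1}\Bigl(\prod_{m=j+1}^{n-1}\theta_m\Bigr)(P_{j+1}-P_j),
\]
with the empty product equal to $1$. Write $\Theta_{j,n}:=\prod_{m=j}^{n-1}\theta_m$ for $0\le j\le n$, where $\Theta_{n,n}=1$. Then $e_n=\sum_{j=0}^{n-1}\Theta_{j+1,n}(P_{j+1}-P_j)$.

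Abel summation, using $P_0=0$, gives
\[
    e_n=\Theta_{n,n}P_n-\sum_{j=1}^{n-1}\bigl(\Theta_{j+1,n}-\Theta_{j,n}\bigr)P_j
    =P_n-\sum_{j=1}^{n-1}\bigl(\Theta_{j+1,n}-\Theta_{j,n}\bigr)P_j.
\]
Since $0\le\theta_m\le1$, we have $\Theta_{j,n}=\theta_j\Theta_{j+1,n}\le\Theta_{j+1,n}$. So every coefficient $\Theta_{j+1,n}-\Theta_{j,n}$ is nonnegative.

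The key step is that these coefficients telescope:
\[
    \sum_{j=1}^{n-1}\bigl(\Theta_{j+1,n}-\Theta_{j,n}\bigr)=\Theta_{n,n}-\Theta_{1,n}=1-\Theta_{1,n}\le1.
\]
Writing $P^\ast:=\max_{0\le n\le L}|P_n|$, this yields
\[
    |e_n|\le|P_n|+(1-\Theta_{1,n})P^\ast\le2P^\ast
\]
for every $n\le L$, which is the claim.

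An alternative that avoids products is a direct induction on the bound $|e_n-P_n|\le P^\ast$. Setting $d_n:=e_n-P_n$, the recursion becomes $d_{n+1}=\theta_n d_n-(1-\theta_n)P_n$. This is a convex combination of $d_n$ and $-P_n$, so $|d_{n+1}|\le\max(|d_n|,|P_n|)\le P^\ast$. Hence $|e_n|\le|P_n|+|d_n|\le 2P^\ast$.

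I would present this second argument, since it is shorter. The only point requiring care is checking the identity $d_{n+1}=\theta_nd_n-(1-\theta_n)P_n$, and I expect no real obstacle there.
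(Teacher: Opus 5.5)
Your proposal is correct, and your first argument is exactly the paper's proof. The paper writes $e_n=P_n-\sum_{k=1}^{n-1}(\theta_{k+1:n}-\theta_{k:n})P_k$ with $\theta_{k:n}=\prod_{m=k}^{n-1}\theta_m$, then notes that the coefficients are nonnegative and telescope to total mass $1-\theta_{1:n}\le1$.

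The second argument, which you say you would present, is a shorter route to the same conclusion. The identity $d_{n+1}=\theta_nd_n-(1-\theta_n)P_n$ for $d_n=e_n-P_n$ holds: substitute $e_n=d_n+P_n$ into the recursion. Together with $d_0=0$, it gives $|d_n|\le\max_{k<n}|P_k|$ by induction. Step by step, it builds the same sub-convex combination of $-P_1,\ldots,-P_{n-1}$ that the summation-by-parts formula writes down explicitly.

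What each version buys:
\begin{itemize}
\item The induction avoids the products and the telescoping bookkeeping entirely.
\item The paper's closed form displays the weights explicitly.
\end{itemize}
Both are purely pathwise, which is all that is needed when the lemma is applied in the Euler estimate with random $\theta_{k,i}\in[0,1]$.
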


\begin{proof}
For fixed $n$, write
\[
    \theta_{k:n}
    :=
    \prod_{m=k}^{n-1}\theta_m,
    \qquad
    \theta_{n:n}:=1.
\]
Summation by parts gives
\[
    e_n
    =
    P_n
    -
    \sum_{k=1}^{n-1}
    \bigl(
        \theta_{k+1:n}-\theta_{k:n}
    \bigr)P_k.
\]
The coefficients in the sum are nonnegative and have total mass $1-\theta_{1:n}\le1$.  The claimed bound follows.
\end{proof}

\subsection{Euler discretization}

We now prove \cref{lem:euler-follmer}.  Let $C_p^{\circ}\ge1$ be a constant, depending only on $p$, large enough to dominate the constants in the martingale and truncation estimates below.  Run the construction of \cref{lem:compact-follmer} with the reduced budget
\[
    \widetilde c_F:=\frac{c_F}{2C_p^{\circ}},
\]
take $v_N=v_{h_N}$ for the resulting feedback, and set
\[
    s_p(h,c_F):=r_h\!\left(p,\widetilde c_F\right),
\]
so that the terminal support radius of \cref{lem:compact-follmer} equals $r_{h_N}=s_p(h_N,c_F)$; as an explicit elementary function of $h$, $s_p(\cdot,c_F)$ is continuous on $(0,h_0)$.  Put
\[
    \Delta:=\frac1N
\]
and let $Z_i$ denote the continuous F\"ollmer process with the same Brownian driver as the Euler scheme \eqref{eq:euler-follmer}.  For $t<h_N$, define
\[
    M_i(t)
    :=
    v_N(t,Z_i(t)),
\]
and at the endpoint set
\[
    M_i(h_N)
    :=
    s_{f_{h_N}}(Z_i(h_N)).
\]
By \eqref{eq:follmer-score-identity}, $(M_i(t))_{0\le t\le h_N}$ is a martingale.  Since $f_{h_N}$ is even, $M_i(0)=0$, and by the choice of $\widetilde c_F$,
\[
    \norm{M_i(h_N)}_{L^p}
    \le
    \widetilde c_F.
\]
By martingale representation, write
\[
    M_i(t)
    =
    \int_0^t\zeta_i(s)\,dB_i(s).
\]

For $0\le k<L_N$, define the local truncation error
\[
    \delta_{k,i}
    :=
    \int_{k\Delta}^{(k+1)\Delta}
    \bigl(
        M_i(t)-M_i(k\Delta)
    \bigr)\,dt.
\]
Stochastic Fubini gives
\begin{equation}
    \delta_{k,i}
    =
    \int_{k\Delta}^{(k+1)\Delta}
    \bigl(
        (k+1)\Delta-s
    \bigr)
    \zeta_i(s)\,dB_i(s).
\label{eq:euler-local-error}
\end{equation}
Let
\[
    e_{k,i}
    :=
    \widehat Z_{k,i}-Z_i(k\Delta).
\]
Comparing the exact and Euler updates and applying the mean-value theorem,
\[
    e_{k+1,i}
    =
    \theta_{k,i}e_{k,i}
    -
    \delta_{k,i},
\]
where
\[
    \theta_{k,i}
    =
    1+\Delta
    \partial_xv_N(k\Delta,\xi_{k,i})
\]
for a point $\xi_{k,i}$ between $\widehat Z_{k,i}$ and $Z_i(k\Delta)$.  Since
\[
    h_N-k\Delta
    =
    (L_N-k)\Delta
    \ge\Delta,
\]
\cref{lem:follmer-derivative-bounds} gives
\[
    0\le\theta_{k,i}\le1.
\]

Let
\[
    P_{n,i}
    :=
    -\sum_{k=0}^{n-1}\delta_{k,i}.
\]
By \eqref{eq:euler-local-error}, $P_{n,i}$ is the value at $n\Delta$ of the martingale
\[
    \int_0^t
    w_N(s)\zeta_i(s)\,dB_i(s),
\]
where on the interval $[k\Delta,(k+1)\Delta)$,
\[
    w_N(s):=s-(k+1)\Delta,
    \qquad
    |w_N(s)|\le\Delta.
\]
The preceding recursion and \cref{lem:dissipative-recursion} therefore imply
\[
    \max_{j\le L_N}|e_{j,i}|
    \le
    2\max_{j\le L_N}|P_{j,i}|.
\]
Doob's maximal inequality and the Burkholder--Davis--Gundy inequality \cite{RevuzYor99} give
\[
\begin{split}
    \norm{
        \max_{j\le L_N}|P_{j,i}|
    }_{L^p}
    &\le
    C_p
    \norm{
        \left(
            \int_0^{h_N}
            w_N(s)^2\zeta_i(s)^2\,ds
        \right)^{1/2}
    }_{L^p}\\
    &\le
    C_p\Delta
    \norm{
        \left(
            \int_0^{h_N}
            \zeta_i(s)^2\,ds
        \right)^{1/2}
    }_{L^p}.
\end{split}
\]
The reverse Burkholder--Davis--Gundy inequality applied to $M_i(h_N)=\int_0^{h_N}\zeta_i(s)\,dB_i(s)$ gives
\[
    \norm{
        \left(
            \int_0^{h_N}
            \zeta_i(s)^2\,ds
        \right)^{1/2}
    }_{L^p}
    \le
    C_p
    \norm{M_i(h_N)}_{L^p}.
\]
Combining these estimates and using the choice of $C_p^\circ$, we obtain
\begin{equation}
    \norm{
        \max_{j\le L_N}|e_{j,i}|
    }_{L^p}
    \le
    C_p^\circ\Delta
    \norm{M_i(h_N)}_{L^p}
    \le
    C_p^\circ\Delta\widetilde c_F.
\label{eq:euler-max-error}
\end{equation}

We next bound the drift evaluated by the Euler scheme.  By \cref{lem:follmer-derivative-bounds},
\[
\begin{split}
    \norm{
        \widehat a_{k,i}-M_i(k\Delta)
    }_{L^p}
    &=
    \norm{
        v_N(k\Delta,\widehat Z_{k,i})
        -
        v_N(k\Delta,Z_i(k\Delta))
    }_{L^p}\\
    &\le
    \frac{
        \norm{e_{k,i}}_{L^p}
    }{
        h_N-k\Delta
    }\\
    &\le
    C_p^\circ\widetilde c_F.
\end{split}
\]
Since $M_i$ is a martingale, conditional Jensen gives
\[
    \norm{M_i(k\Delta)}_{L^p}
    \le
    \norm{M_i(h_N)}_{L^p}
    \le
    \widetilde c_F.
\]
By the triangle inequality with the drift-error bound above, and since $\widetilde c_F=c_F/(2C_p^\circ)$ with $C_p^\circ\ge1$,
\[
    \sup_{k<L_N}
    \norm{\widehat a_{k,i}}_{L^p}
    \le
    (1+C_p^\circ)\widetilde c_F
    \le
    c_F,
\]
which is \eqref{eq:euler-request-bound}.

Finally, the continuous endpoint satisfies
\[
    Z_i(h_N)
    \in
    \left[
        -s_p(h_N,c_F),
        s_p(h_N,c_F)
    \right]
    \qquad\text{a.s.}
\]
by \cref{lem:compact-follmer}.  Hence
\[
\begin{split}
    \E
    \max_{i\le M}
    \dist\!\left(
        \widehat Z_{L_N,i},
        [-s_p(h_N,c_F),s_p(h_N,c_F)]
    \right)
    &\le
    \E\max_{i\le M}|e_{L_N,i}|\\
    &\le
    \left(
        \sum_{i=1}^M
        \E|e_{L_N,i}|^p
    \right)^{1/p}\\
    &=
    O(M^{1/p}N^{-1}),
\end{split}
\]
where the implicit constant depends only on $p$ and $c_F$ on a fixed compact range of $h_N$.  This proves \eqref{eq:euler-endpoint-bound} and completes the proof of \cref{lem:euler-follmer}.
\section{Finite-precision implementation}
\label{app:finite-precision}

The $O(MN)$ bound in the main text counts arithmetic operations.  This appendix verifies that the algorithm does not rely on exact real arithmetic.  For every fixed $(\alpha,\kappa)$ below the threshold and every fixed choice of parameters satisfying the strict inequalities in \cref{prop:finite-transfer}, we give a dyadic implementation with bit complexity polynomial in $M+N$ while preserving a positive terminal margin.

Two points require care.  First, the steering rule \eqref{eq:steering-sign} is discontinuous at its switching thresholds, so a finite-precision implementation may fail to determine the exact branch when the Gaussian coordinate is very close to a threshold.  Gaussian anti-concentration makes the probability of any such ambiguity negligible. Second, the final-stage feedback \eqref{eq:explicit-follmer-feedback} is a ratio of truncated Gaussian integrals.  We show that every feedback value required by the algorithm can be evaluated to inverse-polynomial accuracy using polynomially many bits.

We use the following input model.  Each Gaussian coordinate is an exact random variable with distribution $N(0,1)$, accessed through an oracle that, on input $L\in\N$, returns a dyadic approximation with absolute error at most $2^{-L}$ in time polynomial in $L$.  Bit complexity counts ordinary bit operations together with the number of oracle bits requested.  This model separates the probabilistic input distribution from its numerical representation.

\begin{theorem}
\label{thm:finite-precision}
Fix $\kappa\in\R$ and $\alpha<\aon(\kappa)$.  Fix parameters for \cref{alg:online} satisfying the strict inequalities in \cref{prop:finite-transfer}, with terminal level $r>\kappa$.  Then the algorithm has a dyadic implementation using a number of bit operations polynomial in $M+N$ and satisfying
\[
    \Pp\left(
        \min_{1\le i\le M}X_{N,i}
        \ge
        \frac{\kappa+r}{2}
    \right)
    \longrightarrow1
\]
whenever $M/N\to\alpha$.
\end{theorem}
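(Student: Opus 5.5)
The plan is to run \cref{alg:online} with every real quantity replaced by a dyadic approximation at precision $L=\lceil C\log_2(M+N)\rceil$ bits, for a constant $C$ to be fixed, and to show that the resulting \emph{perturbed} run agrees with the exact run on a high-probability event, except for an $\ell^\infty$ drift of the margin vector of size $o(1)$. Since the exact run satisfies $\min_iX_{N,i}\ge r$ with probability $1-o(1)$ by \cref{prop:finite-transfer}, this perturbation leaves room for the weaker level $(\kappa+r)/2$. I would split the error into two kinds. The first is \emph{continuous} errors in the control vectors $q$, the projection $\Pi_N$, the innovation $\Delta B$, and the Brownian and F\"ollmer states. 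The second is \emph{discrete} errors, namely a wrong sign $\sigma_k$ caused by misclassifying the branch in \eqref{eq:steering-sign}.

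For the discrete errors I will use anti-concentration. Conditionally on the past, $\eta=\sqrt N\langle Z,v\rangle\sim N(0,1)$. The branch is determined once $\eta$ is known to within distance $\delta$ of the two thresholds $0$ and $-\Psi(R)$. Here $\Psi(R)$, $v$ and $R$ are computed from $q$ with error $2^{-\Omega(L)}\mathrm{poly}(M,N)$, and $\Psi$ is Lipschitz away from $R=\cgauss$. The projection keeps $R$ at most $\cgauss$, but to stay safe I would run the algorithm with a slightly shrunken ball radius $(1-N^{-1})\cgauss\sqrt N$. This shrinkage is harmless because \cref{prop:finite-transfer} shows the projection is inactive with strict slack. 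The probability of ambiguity at one step is then $O(\delta)$, and a union bound over $N$ steps gives total failure probability $O(N\delta)=o(1)$ once $C$ is large. On the complementary event every sign agrees with the sign the exact algorithm would produce from the same Gaussian input. The margin vector $X=\sum\sigma_kZ_k/\sqrt N$ is then computed from oracle values with accumulated error at most $NM\cdot2^{-L}$ per coordinate, which is $o(1)$.

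The continuous errors are the main obstacle, because errors in the Brownian state $B$ and in $\widehat Z$ feed back into later controls. I would handle them in two stages. In Stage~I, the control $\beta$ is built from bounded globally Lipschitz rational piecewise-polynomial $F_j$ evaluated at finitely many stored mesh values. So an error $\epsilon$ in the stored $B$ values produces an error $O(\epsilon)$ in $q^{\rm raw}$, and the recursion involves only finitely many evaluation stages with fixed Lipschitz constants. Errors therefore grow by at most a constant factor per stage plus $N2^{-L}$ accumulation. In Stage~II, the Euler map $x\mapsto x+\Delta v_N(t,x)$ is $1$-Lipschitz by \cref{lem:follmer-derivative-bounds}. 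As in \cref{lem:dissipative-recursion}, local evaluation errors therefore add up without amplification, giving total error at most $N$ times the per-step evaluation error.

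It then remains to evaluate \eqref{eq:explicit-follmer-feedback} to accuracy $N^{-3}$ in polynomial time. The numerator and denominator are finite sums of polynomial moments of truncated Gaussians over intervals of length $O(1)$, and these reduce to $\Ph$ and elementary functions, which are computable to $L$ bits in $\mathrm{poly}(L)$ time. The denominator could be small. I would lower-bound it using $f_{h_N}\ge$ const on $[-r_h/2,r_h/2]$ and $|x|=O(\sqrt{\log N})$ with high probability, so that its value is at least $e^{-O(\log N)}$, and relative precision then costs only $O(\log N)$ extra bits. Collecting the pieces, the perturbed terminal vector differs from the exact one by $o(1)$ in $\ell^\infty$ outside an event of probability $o(1)$. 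The total cost is $O(MN)$ operations on $O(\log(M+N))$-bit numbers plus polylogarithmic special-function evaluations, which is polynomial.
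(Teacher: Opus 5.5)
Your ingredients are the right ones: anti-concentration at the switching levels, a guard below the ball radius, the $1$-Lipschitz Euler map, and truncated Gaussian moments for $v_N$. However, there is a genuine gap in how you propagate the continuous errors, and it comes from your choice of comparison process. You couple the dyadic run to the \emph{exact run on the same Gaussian input}. The innovation \eqref{eq:brownian-innovation}, though, depends on the direction $v=q/\norm{q}_2$. So once the two runs use slightly different controls, their innovations differ even when the signs agree: on $\{\varepsilon=-1\}$ the difference is $2(\langle Z,v\rangle v-\langle Z,v^\#\rangle v^\#)$. That difference enters the stored $B$ or $\widehat Z$, which changes the next control, hence the next direction, and so on around the loop.

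This error depends on the current state error; it is not a local evaluation error of fixed size. Your Stage~I argument (Lipschitz $F_j$) and your Stage~II argument ($1$-Lipschitz Euler map plus \cref{lem:dissipative-recursion}) therefore do not bound it. In Stage~II the loop passes through two amplifying factors. One is $\partial_x v_N$, which is as large as $N$ at the end of the block. The other is $1/\norm{q}_2$, which is unbounded because you impose no lower cutoff on $\norm{q}_2$. In particular, your claim that $v$ is computed with error $2^{-\Omega(L)}\mathrm{poly}(M,N)$ fails near $q=0$, for instance at the start of Stage~II where $q=q^{\rm det}$ can be tiny.

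Without a bound on this amplification, your sign-agreement union bound does not close. It needs $\sum_k\norm{v_k-v_k^\#}_2=o(1)$, not merely numerical resolution of $\eta$ against the thresholds. For the same reason, the $O(\log(M+N))$-bit precision is unsupported.

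The paper avoids this loop entirely. It couples the implementation to the exact steering map driven by the \emph{actual} dyadic controls $q^\#$, with the lower cutoff $\norm{q^\#}_2\ge\tau_N$ for conditioning. The comparison Brownian and Euler states then share their innovations with the stored ones, so only rounding errors accumulate (\cref{lem:first-stage-rounding,lem:repair-rounding}). The gap between $q^\#$ and the ideal control evaluated on $B^\#$ is $O(N^2\varepsilon_N+\tau_N)$, and the strict inequalities in \cref{prop:finite-transfer} absorb it. The numerical signs never need to coincide with those of an exact-arithmetic run.

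Two smaller points. First, your lower bound on the denominator of \eqref{eq:explicit-follmer-feedback} is off. At remaining time $s$ as small as $1/N$, knowing only $|x|=O(\sqrt{\log N})$ and $f_{h_N}\ge c$ on $[-r_h/2,r_h/2]$ gives $\exp(-O(N\log N))$, not $e^{-O(\log N)}$. The reason is that $x$ may sit near the edge of the support, about $\sqrt N$ multiples of $\sqrt s$ away from the plateau. This still yields polynomially many bits, which is exactly how \cref{lem:feedback-evaluation} argues with a crude $\exp(-CN^d)$ bound, but it does not give your $O(\log N)$-bit count. Second, your cost bound holds only on a high-probability event. To get polynomial bit complexity on every input you need an explicit budget cap with a default output, as at the end of the paper's proof.
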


The numerical exponents below are chosen only for convenience.  No attempt is made to optimize them.

\subsection{Numerical implementation}

Set
\begin{equation}
    \tau_N:=N^{-4},
    \qquad
    \varepsilon_N:=N^{-6}.
\label{eq:precision-scales}
\end{equation}
Here $\tau_N$ provides an inverse-polynomial guard from the boundary of the one-column drift ball, and $\varepsilon_N$ is the accuracy used for stored states, scalar feedback evaluations, and the comparisons of the scalar Gaussian coordinate with the switching thresholds in \eqref{eq:steering-sign}.

We implement \cref{alg:online} as follows.

\begin{enumerate}[label=(\roman*),leftmargin=2.5em]
\item Every stored coordinate of $X$, $B$, and $\widehat Z$ is rounded after each update to a dyadic number with absolute error at most $\varepsilon_N$.

\item The coefficient functions defining the first-stage simple predictable control and the final-stage feedback $v_N$ are evaluated to absolute error at most $\varepsilon_N$.

\item Before applying the one-column signing rule, the raw control vector $q^{\rm raw}$ is replaced by a dyadic vector $q^\#$ satisfying
\begin{equation}
    \norm{q^\#}_2
    \le
    (\cgauss-\tau_N)\sqrt N.
\label{eq:safe-capacity}
\end{equation}
If
\[
    \norm{q^{\rm raw}}_2
    \le
    (\cgauss-3\tau_N)\sqrt N
\]
and $\norm{q^{\rm raw}}_2>2\tau_N$, the approximation is chosen so that
\begin{equation}
    \norm{q^\#-q^{\rm raw}}_\infty
    \le
    \varepsilon_N,
    \qquad
    \norm{q^\#}_2\ge\tau_N.
\label{eq:request-rounding}
\end{equation}
If the norm of $q^{\rm raw}$ cannot be certified to exceed $2\tau_N$, we set $q^\#=0$.  Outward-rounded norm evaluation, coordinatewise dyadic rounding, and, when necessary, radial contraction produce such a vector in polynomial time.

\item For $q^\#\neq0$, put
\[
    R:=\frac{\norm{q^\#}_2}{\sqrt N},
    \qquad
    v:=\frac{q^\#}{\norm{q^\#}_2}.
\]
Using interval arithmetic, evaluate both switching levels in \eqref{eq:steering-sign} and the scalar projection $\langle g_k,v\rangle$ until the resulting intervals have width at most $\varepsilon_N$.  If the intervals determine the branch of \eqref{eq:steering-sign}, use that branch.  Otherwise choose $+1$ and use a dyadic approximation to $g_k/\sqrt N$ as the stored Gaussian increment.

\item When the branch is determined, evaluate \eqref{eq:brownian-innovation} coordinatewise to absolute error at most $\varepsilon_N$ before updating the stored Brownian state.
\end{enumerate}

The lower cutoff in (iii) avoids dividing by a vector whose norm is too small. The contraction in \eqref{eq:safe-capacity} keeps the logarithm defining $\Psi$ away from its singular endpoint.  Indeed, for every nonzero $q^\#$,
\[
    1-\frac{R}{\cgauss}
    \ge
    \frac{\tau_N}{\cgauss}
    =
    \frac{1}{\cgauss N^4}.
\]
Hence
\[
    \Psi(R)
    =
    O(\sqrt{\log N}),
\]
and the switching levels can be enclosed to width $\varepsilon_N$ with polynomially many bits.  The normalization by $\norm{q^\#}_2$ is also well-conditioned at inverse-polynomial precision because $\norm{q^\#}_2\ge\tau_N$ whenever $q^\#\neq0$.

\subsection{Polynomial bounds on the numerical state}

Let
\[
    \mathcal A_N
    :=
    \left\{
        \max_{1\le k\le N}
        \max_{1\le i\le M}
        |g_{k,i}|
        \le N
    \right\}.
\]
For fixed $\alpha\in(0,\infty)$ and $M/N\to\alpha$, a Gaussian tail bound gives
\begin{equation}
    \Pp(\mathcal A_N^c)
    \le
    C N^2e^{-N^2/2}
    =
    o(1).
\label{eq:gaussian-input-box}
\end{equation}

\begin{lemma}
\label{lem:state-box}
On $\mathcal A_N$, every exact state used to couple the numerical implementation to \cref{alg:online}, every stored dyadic state, and every unprojected scalar control value has absolute value at most $N^4$ for all sufficiently large $N$.
\end{lemma}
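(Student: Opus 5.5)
The proof is a deterministic induction over the $N$ arrivals. It rests on three a priori bounds that hold on $\mathcal A_N$ for every realization. Every state listed in \cref{lem:state-box} is either a running sum of at most $N$ increments or a bounded function of such sums. So it suffices to show that each increment, and each control value, is polynomially bounded by a lower power of $N$ than $N^4$.

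\textbf{Bounded controls.} In Stage I the raw control values are $F_j$ evaluated at stored Brownian values. Each $F_j$ is bounded by a fixed constant $K$, regardless of its arguments, so $|q^{\rm raw}_i|\le K$. In Stage II, $q^{\rm det}_i=(a-X_{k_0,i})_+/h_N\le(|a|+|X_{k_0,i}|)/h_N$. This is linear in the Stage I margin state, and $h_N$ is bounded below. The feedback $v_N(\ell/N,\widehat Z_i)$ satisfies $|v_N(t,x)|\le|x|/(h_N-t)\le N|x|$ by \cref{lem:follmer-derivative-bounds}, using $h_N-t\ge 1/N$. Alternatively, by \eqref{eq:follmer-posterior-mean}, $|v_N|\le(r_{h_N}+|x|)N$. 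Either way this is linear in $|\widehat Z_i|$ with an $O(N)$ factor.

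\textbf{Increments of the margin and Brownian states.} Each update of $X$ adds $\sigma_k g_k/\sqrt N$, whose coordinates are at most $N/\sqrt N=\sqrt N$ on $\mathcal A_N$. Hence $|X_{k,i}|\le N^{3/2}$ throughout, up to rounding errors of size $N\varepsilon_N$. The innovation \eqref{eq:brownian-innovation} is an orthogonal-type reflection of $Z$ with $\norm{\Delta B}_2=\norm{Z}_2$. So each coordinate of $\Delta B$ is at most $\norm{g_k}_2/\sqrt N\le\sqrt M\,N/\sqrt N=O(N)$. In the ambiguous branch the stored increment is $g_k/\sqrt N$, which is even smaller. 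Summing gives $|B_{k,i}|=O(N^2)$, and the same bound holds for the exact coupled Brownian states.

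\textbf{The Euler state, which is the main obstacle.} The naive bound $|\widehat a_i|\le N|\widehat Z_i|$ compounds multiplicatively. To avoid this, use the one-step map instead. By \cref{lem:follmer-derivative-bounds}, $x\mapsto x+\Delta v_N(t,x)$ is nondecreasing and $1$-Lipschitz, and it fixes $0$ because $f_{h_N}$ is even. Therefore $|\widehat Z_i+\widehat a_i/N|\le|\widehat Z_i|$. Each step then increases $|\widehat Z_i|$ by at most $|\Delta B_i|+\varepsilon_N=O(N)$, which gives $|\widehat Z_i|=O(N^2)$. Consequently $|\widehat a_i|\le N|\widehat Z_i|=O(N^3)$, and $|q^{\rm det}_i|=O(N^{3/2})$. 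All unprojected Stage II control values are thus $O(N^3)$, which is below $N^4$ for large $N$.

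\textbf{Rounding.} The dyadic implementation differs from the exact one by at most $\varepsilon_N$ per rounding operation, giving at most $O(N)$ accumulated error. Feedback evaluation error is also $\varepsilon_N$. None of these can raise the bounds beyond lower-order terms.

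The only delicate point is avoiding geometric blow-up of $\widehat Z$ through the $N|x|$ feedback bound; the contraction property of the Euler step removes it.
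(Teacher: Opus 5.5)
Your proposal is correct and follows the paper's proof essentially step by step. The ingredients match: the $O(N^{3/2})$ bound on $X$ from $\sqrt N$-sized increments; norm preservation of the innovation \eqref{eq:brownian-innovation}, giving $O(N^2)$ Brownian states; and the nondecreasing, $1$-Lipschitz, zero-fixing Euler map from \cref{lem:follmer-derivative-bounds}, which keeps $\widehat Z$ at $O(N^2)$ without multiplicative blow-up. From there, $|v_N|\le N|x|=O(N^3)$, $q^{\rm det}=O(N^{3/2})$, and the first-stage controls are bounded. Framing the argument as an explicit step-by-step induction is a small presentational gain: it makes clear that the $\varepsilon_N$-accurate feedback evaluation of \cref{lem:feedback-evaluation} is only invoked at states already known to be polynomially bounded, a dependence the paper leaves implicit.
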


\begin{proof}
Each coordinate increment of the normalized margin $X$ has magnitude at most $\sqrt N$ on $\mathcal A_N$, so
\[
    \norm{X}_\infty=O(N^{3/2}).
\]
For $Z=g_k/\sqrt N$, each coordinate has magnitude at most $\sqrt N$ and, since $M=O(N)$,
\[
    \norm{Z}_2=O(N).
\]
The transformation in \eqref{eq:brownian-innovation} is orthogonal on the component perpendicular to the current drift direction and leaves the parallel component unchanged.  Thus
\[
    \norm{\Delta B}_2=\norm{Z}_2=O(N),
\]
and the accumulated Brownian state is bounded by $O(N^2)$ in $\ell^\infty$.

In the final stage, \cref{lem:follmer-derivative-bounds} implies that each Euler map
\[
    x\longmapsto x+\frac1N v_N(k/N,x)
\]
is nondecreasing, $1$-Lipschitz, and fixes zero.  Consequently the exact Euler state is bounded by the accumulated Brownian increments, and the stored state differs only by the accumulated rounding errors.  Both are therefore $O(N^2)$ in absolute value.  Moreover, \eqref{eq:follmer-linear-bound} gives, at a mesh time,
\[
    |v_N(k/N,x)|
    \le
    N|x|
    =
    O(N^3).
\]
The deterministic correction
\[
    q_i^{\rm det}
    =
    \frac{(a-X_i)_+}{h_N}
\]
is $O(N^{3/2})$ because $h_N$ converges to the fixed positive number $h$. The first-stage control is bounded by definition.  All quantities above are therefore at most $N^4$ for sufficiently large $N$.
\end{proof}

The only scalar evaluation whose denominator can become exponentially small is the final-stage feedback in \eqref{eq:explicit-follmer-feedback}.  The next lemma shows that even the crude polynomial state bound above is enough for polynomial bit complexity.

\begin{lemma}
\label{lem:feedback-evaluation}
On the event $\mathcal A_N$, every value of $v_N(t,x)$ required by \cref{alg:online} can be evaluated to absolute error $\varepsilon_N$ using polynomially many bits of precision and polynomially many bit operations.
\end{lemma}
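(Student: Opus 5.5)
The plan is to rewrite \eqref{eq:explicit-follmer-feedback} in a form where the exponentially small Gaussian factor cancels analytically, so that only well-conditioned quantities remain.

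\textbf{Bounded inputs.} On $\mathcal A_N$, \cref{lem:state-box} gives $|x|\le N^4$ for every state at which $v_N$ is evaluated. Since the feedback is only called at mesh times $t=k/N$ with $k<L_N$, we have $s=h_N-t\in[N^{-1},1]$. The support radius $r:=r_{h_N}$ is a fixed elementary function of $h_N$, hence bounded and computable to any precision. The density $f_{h_N}$ is $H^p/Z$ with $H$ the tent, and the normalizer $Z$ cancels in the ratio. So it suffices to evaluate the ratio of $\int_{-r}^{r}(y-x)H(y)^pe^{-(y-x)^2/(2s)}\,dy$ to $\int_{-r}^{r}H(y)^pe^{-(y-x)^2/(2s)}\,dy$, divided by $s$.

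\textbf{Removing the small factor.} The key step is to factor out $e^{-d^2/(2s)}$, where $d:=\dist(x,[-r,r])$. This quantity can be as small as $e^{-N^9}$, which is the obstacle: it has no polynomial-bit representation. After factoring it out, the remaining integrand is $H(y)^p\exp(-((y-x)^2-d^2)/(2s))$. The exponent lies in $[-\mathrm{poly}(N),0]$, and it equals $0$ at the point $y^\ast$ of $[-r,r]$ nearest $x$.

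Because $H^p$ vanishes at $\pm r$, the denominator is not simply bounded below by a constant near $y^\ast$. I would lower-bound it by integrating over a subinterval of length $\min(r/2,s/(N^4+r))$ adjacent to $y^\ast$ on the side toward the interior. This gives a denominator of size at least $N^{-C}$ once the exponent is written as $-(y-y^\ast)(y+y^\ast-2x)/(2s)$, whose slope is $O(N^5)$. The crudeness is harmless: we only need an inverse-polynomial lower bound.

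\textbf{Evaluation.} On each of the $O(1)$ polynomial pieces of $H^p$, expand the integrand in monomials in $(y-x)$ after completing the square. Each term becomes a truncated Gaussian moment expressible through $\operatorname{erf}$ differences times elementary factors, or computed directly by rigorous quadrature of an analytic integrand on $[-r,r]$. The latter is simpler to certify. The integrand is smooth on each piece, with derivatives bounded by $\mathrm{poly}(N)$ because the exponent has polynomial slope $O(N^5)$ and curvature $1/s\le N$. Composite Gauss–Legendre or a Taylor-model integration with $\mathrm{poly}(N)$ nodes and $O(\log N)$-bit guard digits therefore yields both integrals to absolute error $N^{-C'}$ for any fixed $C'$.

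Since the denominator is at least $N^{-C}$ and the numerator is at most $\mathrm{poly}(N)$, choosing $C'$ large gives the ratio to error $\varepsilon_N s$. Multiplying by $1/s\le N$ and adding $x$ as needed then gives absolute error $\varepsilon_N$.

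All arithmetic is on dyadics with $\mathrm{poly}(N)$ bits, and $\exp$ and $\operatorname{erf}$ are evaluated to polynomial precision in polynomial time. The total bit complexity per call is therefore polynomial, and summing over the $O(MN)$ calls keeps it polynomial.
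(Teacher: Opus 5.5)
Your argument is correct, but it takes a different route from the paper, and the reason you give for it is mistaken. The paper never removes the small factor. It lower-bounds the raw denominator of \eqref{eq:explicit-follmer-feedback} by $c\int_J e^{-(y-x)^2/(2s)}\,dy\ge\exp(-CN^d)$, using an interval $J$ inside the plateau where $f_{h_N}$ is bounded below, together with $|x|\le N^4$ and $s\ge1/N$. It bounds the numerator by a polynomial multiple of the denominator via $|v_N(t,x)|\le|x|/s$. It then evaluates the truncated Gaussian moments with directed rounding to absolute error $\exp(-2CN^d)\varepsilon_N$, which costs only $O(N^d)$ bits.

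So your claim that a quantity like $e^{-N^9}$ ``has no polynomial-bit representation'' is false. Resolving it to absolute accuracy $e^{-2N^9}$ takes $O(N^9)$ bits, which is polynomial, and this is exactly what the paper's proof uses. For the lemma as stated, the obstacle you identify is not an obstacle.

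That said, each step of your own route checks out:
\begin{itemize}
\item Factoring out $e^{-d^2/(2s)}$ works, since the remaining exponent lies in $[-\mathrm{poly}(N),0]$.
\item The denominator bound is valid. On the interval of length $\ell=\min(r/2,s/(N^4+r))$ next to $y^\ast$, toward the origin, one has $H^p\ge(2(y^\ast-y)/r)^p$ and the exponential factor is at least $e^{-1}$. This gives a lower bound of order $N^{-5(p+1)}$, once you note that $r_{h_N}$ is bounded away from zero because $h_N\to h>0$.
\item Fixed-order composite quadrature on each polynomial piece is justified by derivative bounds of order $N^{5m}$.
\end{itemize}

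What your route buys is conditioning and efficiency: every quantity stays at inverse-polynomial scale, so $O(\log N)$ bits per evaluation suffice instead of roughly $N^9$. What the paper's route buys is brevity. It needs neither the boundary analysis of $H^p$ near $y^\ast$ nor a quadrature error estimate, only the observation that a number exponentially small in a polynomial of $N$ still has polynomial bit length.
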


\begin{proof}
For all sufficiently large $N$, the final-stage length $h_N$ lies in a fixed compact neighborhood of the fixed number $h>0$.  The density $f_{h_N}$ defining $v_N$ is bounded below on an interval $J$ contained in the interior of its plateau, with the length of $J$ and the lower bound uniform over that neighborhood of $h$.

Write $s=h_N-t$.  At every mesh point at which the feedback is evaluated, $s\ge1/N$.  On the event $\mathcal A_N$, \cref{lem:state-box} gives a polynomial bound on $|x|$.  Hence there are constants $c,C,d>0$, independent of $N$, such that the denominator in \eqref{eq:explicit-follmer-feedback} satisfies
\[
    \int
    f_{h_N}(y)e^{-(y-x)^2/(2s)}\,dy
    \ge
    c\int_J
    e^{-(y-x)^2/(2s)}\,dy
    \ge
    \exp(-C N^d).
\]

By \eqref{eq:follmer-linear-bound}, $|v_N(t,x)|\le|x|/s\le N^{O(1)}$ at every such mesh point, since $s\ge1/N$ and \cref{lem:state-box} bounds $|x|$ polynomially.  As the numerator of \eqref{eq:explicit-follmer-feedback} equals $v_N(t,x)$ times the denominator, it is at most $N^{O(1)}$ times the denominator.

Because $f_{h_N}$ is piecewise polynomial, the numerator and denominator of \eqref{eq:explicit-follmer-feedback} reduce on each piece to finite linear combinations of truncated Gaussian moments.  The endpoints and coefficients, including the quantities obtained from $h_N$ by elementary functions, can be computed to any prescribed precision in time polynomial in that precision. Since the denominator is at least $\exp(-CN^d)$ and the numerator at most a polynomial multiple of it, evaluating each term with directed rounding to absolute error $\exp(-2CN^d)\varepsilon_N$ determines the quotient to absolute error $\varepsilon_N$, using polynomially many bits.  Standard arbitrary-precision algorithms for rational arithmetic and elementary functions have bit complexity polynomial in the requested precision; see, for example, \cite{FousseHanrotLefevrePelissierZimmermann07}.  This proves the claim.
\end{proof}

\subsection{Threshold comparisons and rounding stability}

The only discontinuous operation in the one-column signing map is the comparison with the two switching levels in \eqref{eq:steering-sign}. Gaussian anti-concentration makes the event that finite precision cannot resolve one of these comparisons negligible.

\begin{lemma}
\label{lem:switching-ambiguity}
Condition on the past of the numerical algorithm and suppose $q^\#\neq0$. The probability that the interval computation in step~(iv) does not determine the branch of \eqref{eq:steering-sign} is at most $C\varepsilon_N$, uniformly in the past.  Consequently, the probability that any such ambiguity occurs during the entire execution is
\[
    O(N\varepsilon_N)
    =
    O(N^{-5}).
\]
\end{lemma}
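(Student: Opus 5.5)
The plan is to reduce the ambiguity event to a small-ball event for a single conditionally standard Gaussian, and then use a union bound over the $N$ arrivals.

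\emph{Conditioning.} Fix the past, which determines $q^\#\neq0$. Then $R$, $v$ and the two switching levels $0$ and $-\Psi(R)$ are all determined. Conditionally on the past, the fresh column $g_k$ is independent $N(0,I_M)$. Since $\norm{v}_2=1$, the scalar $\eta:=\langle g_k,v\rangle=\sqrt N\langle Z,v\rangle$ is exactly $N(0,1)$.

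\emph{Reduction to a small interval.} In step~(iv) the enclosures of $\eta$, of $0$, and of $-\Psi(R)$ each have width at most $\varepsilon_N$, and each contains its true value. If the branch of \eqref{eq:steering-sign} is undetermined, then the enclosure of $\eta$ must overlap the enclosure of one of the two levels. For the true values this forces
\[
    |\eta|\le2\varepsilon_N
    \qquad\text{or}\qquad
    |\eta+\Psi(R)|\le2\varepsilon_N.
\]
There is one subtlety: $v$ is computed from the dyadic $q^\#$, so I take $v$ to mean the exact normalization of the stored $q^\#$. That vector is past-measurable, so conditional Gaussianity applies. The numerical error in forming $\langle g_k,v\rangle$ is already absorbed into the width-$\varepsilon_N$ enclosure. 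The analysis must hold for this enclosure whatever the input oracle returns. This works because the oracle error is deterministic given a requested precision, and the interval arithmetic is outward-rounded, so the enclosure always contains the true $\eta$.

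\emph{Anti-concentration.} Since the standard normal density is bounded by $\ph(0)=1/\sqrt{2\pi}$, each of the two events has conditional probability at most $4\varepsilon_N/\sqrt{2\pi}$. This gives $C\varepsilon_N$ uniformly in the past. If $\Psi(R)$ happens to be close to $0$, both events are still covered by the same bound.

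\emph{Union bound.} There are at most $N$ arrivals. Taking expectation of the conditional bound at each step and summing gives total probability at most $CN\varepsilon_N=O(N^{-5})$.

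The main obstacle is measurability and exactness. One must confirm that $v$ and $\Psi(R)$ are functions of the past alone, and not of the current column's rounding. One must also confirm that the oracle approximations of $g_k$ do not destroy the Gaussian law of the exact $\eta$. Both points hold because the bound is stated for the exact $\eta$, and the enclosure is guaranteed to contain it.
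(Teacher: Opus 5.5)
Your proposal is correct and follows the paper's proof. Conditionally on the past, $\eta=\langle g_k,q^\#/\norm{q^\#}_2\rangle$ is standard normal; an unresolved comparison forces $\eta$ within $O(\varepsilon_N)$ of one of the past-measurable levels $0$ or $-\Psi(R)$; the bounded Gaussian density gives conditional probability $O(\varepsilon_N)$; and a union bound over the $N$ arrivals gives $O(N\varepsilon_N)$. Your extra remarks make explicit two points the paper leaves implicit: that $v$ is the exact normalization of the stored $q^\#$, and that the outward-rounded enclosure always contains the true $\eta$.
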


\begin{proof}
Conditionally on the past,
\[
    \eta
    :=
    \left\langle
        g_k,
        \frac{q^\#}{\norm{q^\#}_2}
    \right\rangle
\]
is standard normal.  The exact sign can change only when $\eta$ crosses one of the two past-measurable levels
\[
    0,
    \qquad
    -\Psi\!\left(
        \frac{\norm{q^\#}_2}{\sqrt N}
    \right).
\]
If the interval computation fails to determine the branch, $\eta$ lies within $C\varepsilon_N$ of one of these levels.  The standard normal density is uniformly bounded, so this event has conditional probability $O(\varepsilon_N)$.  A union bound over the $N$ arrivals proves the second claim.
\end{proof}

On the event that every threshold comparison is resolved, we couple the numerical implementation to the exact Gaussian steering map using the actual dyadic control vectors $q^\#$.  Let $B_k^\#$ denote the cumulative exact Gaussian increments produced by that map, and let $\widetilde B_k$ denote the stored dyadic Brownian state.

\begin{lemma}
\label{lem:first-stage-rounding}
On the event that all threshold comparisons are resolved,
\begin{equation}
    \max_{k\le k_0}
    \norm{
        \widetilde B_k-B_k^\#
    }_\infty
    =
    O(N\varepsilon_N).
\label{eq:first-stage-state-error}
\end{equation}
Until the guarded projection in \eqref{eq:safe-capacity} is activated, the dyadic first-stage control differs coordinatewise from the exact simple predictable control evaluated on $B^\#$ by
\begin{equation}
    O(N\varepsilon_N+\tau_N).
\label{eq:first-stage-request-error}
\end{equation}
The resulting error in the accumulated predictable drift is of the same order.
\end{lemma}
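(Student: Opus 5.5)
The plan is a per-step error recursion for the stored Brownian state, followed by a Lipschitz transfer to the control. Fix a first-stage step $k\le k_0$ on the event that the branch of \eqref{eq:steering-sign} is resolved. The exact increment $\Delta B^\#_k$ is produced by \eqref{eq:brownian-innovation} from the exact $Z=g_k/\sqrt N$ and the dyadic direction $v=q^\#/\norm{q^\#}_2$. Since the branch is resolved, the sign $\varepsilon$ used by the implementation is the exact one for $q^\#$. The only discrepancies between $\widetilde B_k-\widetilde B_{k-1}$ and $\Delta B^\#_k$ therefore come from evaluating \eqref{eq:brownian-innovation} to absolute error $\varepsilon_N$ (step (v)) and from rounding the stored state to error $\varepsilon_N$ (step (i)). Each step thus contributes $O(\varepsilon_N)$ in $\ell^\infty$. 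Because the Brownian state is updated additively, with no feedback of $\widetilde B$ into the increment beyond the control, the errors add, and summing over $k\le k_0\le N$ gives \eqref{eq:first-stage-state-error}. One point to check is that the coordinatewise evaluation of $\langle Z,v\rangle v+\varepsilon(I-vv^{\mathsf T})Z$ to absolute error $\varepsilon_N$ is achievable with polynomial precision. This follows because $\norm{q^\#}_2\ge\tau_N$, so normalization is well conditioned, and because \cref{lem:state-box} bounds all entries polynomially on $\mathcal A_N$.

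For \eqref{eq:first-stage-request-error}, I will compare three control vectors. The first is the exact coefficient $F_j$ evaluated at the rounded-mesh values of $B^\#$. The second is the same $F_j$ evaluated at the stored values $\widetilde B$. The third is the dyadic $q^\#$. The first two differ by $\operatorname{Lip}(F_j)\cdot O(N\varepsilon_N)$, since the $F_j$ are globally Lipschitz with fixed constants (\cref{def:finite-observation}) and the observation times are finitely many mesh values. Evaluating $F_j$ to error $\varepsilon_N$ (step (ii)) adds $O(\varepsilon_N)$, which gives $q^{\rm raw}$. Passing from $q^{\rm raw}$ to $q^\#$ requires a case split along the rules in (iii). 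If $2\tau_N<\norm{q^{\rm raw}}_2\le(\cgauss-3\tau_N)\sqrt N$, then \eqref{eq:request-rounding} gives error $\varepsilon_N$. If $\norm{q^{\rm raw}}_2\le 2\tau_N$ or cannot be certified to exceed it, then $q^\#=0$ and every coordinate error is at most $\norm{q^{\rm raw}}_2\le 2\tau_N+O(\varepsilon_N)=O(\tau_N)$. The remaining case, $\norm{q^{\rm raw}}_2>(\cgauss-3\tau_N)\sqrt N$, is exactly when the guarded projection is activated, and it is excluded by hypothesis. Combining the cases yields the bound $O(N\varepsilon_N+\tau_N)$.

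Finally, the accumulated predictable drift is $\sum_{k\le k_0} q_k/N$. A coordinatewise error of $O(N\varepsilon_N+\tau_N)$ per step, summed over at most $N$ steps with weight $1/N$, stays of the same order. The main obstacle I anticipate is the circularity between the control and the state: the control at step $k$ depends on the stored state, and the state's increment depends on the control through $v$. I expect this to be harmless because \eqref{eq:first-stage-state-error} only compares $\widetilde B$ with $B^\#$, where $B^\#$ is \emph{defined} using the actual dyadic $q^\#$. The error recursion therefore never has to propagate control errors into the Brownian state, and that is why the two bounds can be proved in this order.
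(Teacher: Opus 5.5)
Your proposal is correct and follows essentially the same route as the paper. In both, the Brownian increment carries an $O(\varepsilon_N)$ evaluation-plus-storage error per step, summed over at most $N$ steps; that $O(N\varepsilon_N)$ state error is transferred through the finitely many Lipschitz coefficients $F_j$, with $O(\varepsilon_N)$ added for evaluation and $O(\tau_N)$ for the lower norm cutoff; and the drift bound is the $1/N$-weighted sum. Your explicit case split in step (iii), and your remark that $B^\#$ is defined with the actual dyadic $q^\#$ (so control errors never feed back into the state comparison), only make explicit what the paper's short proof leaves implicit.
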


\begin{proof}
For a fixed dyadic control vector and a resolved threshold comparison, the exact and numerical updates use the same sign.  The numerical Gaussian increment is evaluated and stored with coordinatewise error $O(\varepsilon_N)$, so summing over at most $N$ steps gives \eqref{eq:first-stage-state-error}.

The first-stage control is defined by finitely many globally Lipschitz coefficient functions of finitely many stored Brownian values.  Replacing the exact mesh values by the stored values therefore changes each scalar control value by $O(N\varepsilon_N)$.  Numerical evaluation contributes an additional $O(\varepsilon_N)$.  If the lower norm cutoff is activated, the discarded vector has Euclidean norm $O(\tau_N)$ and hence changes every coordinate by $O(\tau_N)$.  Before the guarded projection is active, these are the only errors, proving \eqref{eq:first-stage-request-error}.

A control vector enters the predictable drift with factor $1/N$ at one step.  Summing the coordinatewise errors over at most $N$ first-stage steps therefore preserves the order $O(N\varepsilon_N+\tau_N)$.
\end{proof}

For the final stage, let $Z_k^\#$ be the exact Euler scheme driven by the exact Gaussian increments associated with the dyadic control vectors, and let $\widetilde Z_k$ be the stored dyadic state.

\begin{lemma}
\label{lem:repair-rounding}
On the event that all threshold comparisons are resolved and until the guarded projection in \eqref{eq:safe-capacity} is activated,
\begin{equation}
    \max_{k\le L_N}
    \norm{
        \widetilde Z_k-Z_k^\#
    }_\infty
    =
    O(N\varepsilon_N).
\label{eq:repair-state-error}
\end{equation}
At final-stage step $k$, the dyadic control differs coordinatewise from
\[
    q^{\rm det}+v_N(k/N,Z_k^\#)
\]
by
\begin{equation}
    O(N^2\varepsilon_N+\tau_N).
\label{eq:repair-request-error}
\end{equation}
\end{lemma}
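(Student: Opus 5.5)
The proof will mirror that of \cref{lem:first-stage-rounding}, with one new ingredient: the Euler feedback is a large-Lipschitz but monotone map. Throughout, work on the event that every threshold comparison is resolved and the guarded projection in \eqref{eq:safe-capacity} is inactive. On this event the exact and numerical updates use the same sign at every step, so the exact Gaussian increment $\Delta B^\#$ is common to both recursions.

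For \eqref{eq:repair-state-error}, compare the two recursions
\[
    Z_{k+1}^\#=Z_k^\#+\tfrac1N v_N(k/N,Z_k^\#)+\Delta B^\#_{k+1},
    \qquad
    \widetilde Z_{k+1}=\widetilde Z_k+\tfrac1N\widetilde v_k+\widetilde{\Delta B}_{k+1}+\rho_{k+1}.
\]
Here $\widetilde v_k$ is the computed feedback at $\widetilde Z_k$, with error at most $\varepsilon_N$ by \cref{lem:feedback-evaluation}. The stored increment satisfies $\norm{\widetilde{\Delta B}-\Delta B^\#}_\infty=O(\varepsilon_N)$ by step~(v), and the rounding term satisfies $|\rho_{k+1}|\le\varepsilon_N$.

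The key observation is the one recorded after \cref{lem:follmer-derivative-bounds}. Since $h_N-k/N\ge1/N$, the map $x\mapsto x+\frac1N v_N(k/N,x)$ is nondecreasing and $1$-Lipschitz. The coordinatewise error $e_k=\widetilde Z_k-Z_k^\#$ therefore satisfies
\[
    |e_{k+1}|\le|e_k|+O(\varepsilon_N),
\]
where the $O(\varepsilon_N)$ collects the feedback evaluation error divided by $N$, the increment error, and the rounding error. Summing over at most $L_N\le N$ steps gives $O(N\varepsilon_N)$ uniformly in the coordinate $i$.

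For \eqref{eq:repair-request-error}, split the difference between the dyadic control and $q^{\rm det}+v_N(k/N,Z_k^\#)$ into three parts.

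First, the deterministic correction. The computed $q^{\rm det}$ uses the stored $X_{k_0}$, whose error is $O(N\varepsilon_N)$ by the same accumulation argument as in \cref{lem:first-stage-rounding}. Since $(a-x)_+$ is $1$-Lipschitz and $h_N$ is bounded below, this contributes $O(N\varepsilon_N)$.

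Second, the feedback. Evaluating at $\widetilde Z_k$ instead of $Z_k^\#$ changes the value by at most $|e_k|/(h_N-k/N)$, because $|\partial_x v_N|\le1/(h_N-t)$ by \cref{lem:follmer-derivative-bounds}. As $h_N-k/N\ge1/N$, this is at most $N\cdot O(N\varepsilon_N)=O(N^2\varepsilon_N)$. Numerical evaluation adds a further $\varepsilon_N$.

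Third, step~(iii). Coordinatewise rounding gives $\varepsilon_N$. If the lower cutoff forces $q^\#=0$, the discarded vector has norm $O(\tau_N)$ and hence changes each coordinate by $O(\tau_N)$. Radial contraction is excluded by the standing assumption.

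The main obstacle is the amplification factor $1/(h_N-t)$ near the end of the final block, which reaches size $N$. The plan handles it by using only the $1$-Lipschitz property of the full Euler step in the state recursion, so errors merely add there, and by paying the factor $N$ just once when converting the state error into a control error. This single payment is exactly what produces the $N^2\varepsilon_N$ term.
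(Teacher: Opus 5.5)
Your proposal is correct and follows the paper's proof essentially step for step. The $1$-Lipschitz property of the Euler map (from $-1/(h_N-t)\le\partial_xv_N\le0$ with $h_N-k/N\ge1/N$) makes the per-step errors add up to $O(N\varepsilon_N)$. The single factor $|\partial_xv_N|\le N$ then converts this state error into the $O(N^2\varepsilon_N)$ control error. The deterministic correction, the evaluation error and the lower cutoff contribute only lower-order or $O(\tau_N)$ terms; you bound the deterministic-correction term explicitly by $O(N\varepsilon_N)$, where the paper only remarks that it is smaller.
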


\begin{proof}
By \cref{lem:follmer-derivative-bounds}, the exact Euler map
\[
    x\longmapsto
    x+\frac1N v_N(k/N,x)
\]
is $1$-Lipschitz at every final-stage mesh point.  The error in the evaluated feedback contributes $O(\varepsilon_N/N)$ to one state update, the evaluated Gaussian increment contributes $O(\varepsilon_N)$, and storing the updated state contributes another $O(\varepsilon_N)$.  Thus the coordinatewise state error satisfies
\[
    e_{k+1}\le e_k+C\varepsilon_N,
\]
which gives \eqref{eq:repair-state-error}.

At a mesh point,
\[
    |\partial_xv_N|
    \le N
\]
by \cref{lem:follmer-derivative-bounds}.  The state error $O(N\varepsilon_N)$ therefore induces an $O(N^2\varepsilon_N)$ error in the feedback value.  The deterministic correction is computed from the stored margin $X$; since that state has accumulated only $O(N\varepsilon_N)$ coordinatewise rounding error and $h_N$ is bounded away from zero, its contribution is smaller.  Numerical evaluation contributes $O(\varepsilon_N)$, and the lower norm cutoff contributes $O(\tau_N)$.  This proves \eqref{eq:repair-request-error}.
\end{proof}

\subsection{Proof of the finite-precision theorem}

\begin{proof}[Proof of \cref{thm:finite-precision}]
Let $\mathcal E_N$ be the event that $\mathcal A_N$ holds and every threshold comparison is resolved.  By \cref{eq:gaussian-input-box,lem:switching-ambiguity},
\[
    \Pp(\mathcal E_N)\longrightarrow1.
\]
On $\mathcal E_N$, each one-column update of the numerical algorithm is coupled to the exact Gaussian signing map driven by its actual dyadic control vector.  Thus the only differences from the exact analysis of \cref{prop:finite-transfer} are the deterministic numerical errors quantified above.

The first- and final-stage stability estimates give the uniform error scale
\begin{equation}
    N^2\varepsilon_N+\tau_N
    =
    O(N^{-4})
    =
    o(1).
\label{eq:total-rounding-error}
\end{equation}
The hypotheses of \cref{prop:finite-transfer} contain only strict inequalities.  In particular, there are fixed positive gaps in
\[
    \norm{\beta}_{\infty,2}<\ca,
    \qquad
    \frac1h\norm{(a-Y_T^\beta)_+}_{L^p}+c_F<\ca,
    \qquad
    s_p(h,c_F)<a-r.
\]
Therefore the perturbation \eqref{eq:total-rounding-error} is absorbed by these gaps for all sufficiently large $N$.

It follows first that, with probability $1-o(1)$, every exact raw control vector in the proof of \cref{prop:finite-transfer} lies a fixed distance inside the one-column drift ball.  Since $\tau_N\to0$, the guarded projection \eqref{eq:safe-capacity} is then never activated.  The lower norm cutoff changes the accumulated predictable drift by at most $O(\tau_N)$ and is also absorbed by \eqref{eq:total-rounding-error}.  The Brownian comparison, empirical shortfall estimate, final-stage confinement, and conditional Gaussian coupling from the proof of \cref{prop:finite-transfer} therefore remain valid after an additional $o(1)$ perturbation.

The exact arithmetic analysis gives terminal margin at least $r$ with probability tending to one.  The stored normalized margin differs from the exact signed sum generated by the numerical signs by at most
\[
    O(N\varepsilon_N)=o(1)
\]
coordinatewise.  Since $r-\kappa>0$ is fixed, for all sufficiently large $N$ the total numerical error is smaller than $(r-\kappa)/2$.  Hence
\[
    \min_{1\le i\le M}X_{N,i}
    \ge
    \frac{\kappa+r}{2}
\]
with probability tending to one.

It remains to bound the running time.  On $\mathcal A_N$, all stored states and scalar control values have polynomial magnitude by \cref{lem:state-box}.  The first-stage coefficient functions have fixed rational piecewise-polynomial descriptions, so they can be evaluated to $\varepsilon_N$ accuracy with polynomially many bit operations.  Vector norms, inner products, radial contractions, and the Gaussian steering map require only polynomially many bits because every nonzero dyadic control vector has norm at least $\tau_N$ and remains at least $\tau_N\sqrt N$ in radial distance from the singular endpoint of the drift ball.  By \cref{lem:feedback-evaluation}, each final-stage feedback evaluation uses polynomially many bits and polynomially many bit operations.

There are $O(MN)$ scalar state updates and feedback evaluations.  Since $M/N\to\alpha$, the product of this count with any fixed polynomial bound for one scalar operation remains polynomial in $M+N$.  The total bit complexity is therefore polynomial in $M+N$ on $\mathcal A_N$.

For a bound on every realization of the input oracle, run the implementation under a fixed polynomial budget $P$, depending only on the fixed algorithmic parameters, on the requested oracle precision, the stored-state magnitude, and the number of bit operations, taking $P$ larger than the bounds above; if the budget would be exceeded---in particular if some $|g_{k,i}|$ cannot be certified below $N$---output $+1$ for all remaining signs.  Set $\mathcal A_N':=\{\max_{k,i}|g_{k,i}|\le N/2\}$.  Since $\mathcal A_N'\subseteq\mathcal A_N$ and each coordinate is then certified below $N$ with $O(\log N)$ bits, the cap is never reached on $\mathcal A_N'$ for all large $N$.  Intersecting with the event that all switching comparisons of \eqref{eq:steering-sign} are resolved, of probability $1-o(1)$ by \cref{lem:switching-ambiguity}, the capped implementation agrees with the execution analyzed above.  Since $\Pp((\mathcal A_N')^c)\le CN^2e^{-N^2/8}=o(1)$, the terminal-margin guarantee is unchanged, while the budget makes the bit complexity polynomial in $M+N$ on every input.
\end{proof}

    \section{Computer-assisted verification at zero margin}
\label{app:computer-assisted}

This appendix proves the two computer-assisted statements used in \cref{sec:zero-margin}.  The upper-bound verifier checks the analytic inequalities in \cref{lem:certified-profile}; the lower-bound verifier checks a global upper enclosure for the dynamic-programming recursion in \cref{thm:weighted-recursion}.  These are logically independent computations.

In both cases, the search procedure used to discover the supplied data is outside the trusted argument.  The mathematical proof consists of the soundness statements below together with acceptance of the fixed certificate data by the corresponding verifier.

\subsection{Verification of the zero-margin upper bound}
\label{app:upper-certificate}

The certificate for \cref{lem:certified-profile} represents the function $g$ by a rational piecewise-polynomial spline on $[-4,4]$, joined with matching values and first derivatives to the analytic tails
\[
    g(z)
    =
    a(-z)+d(-z)^{-1/2},
    \qquad z\le-4,
\]
and
\[
    g(z)
    =
    z(w+w^2),
    \qquad
    w=Ae^{-(z^2-16)/2},
    \qquad z\ge4.
\]
The constants $a,d,A$ and all spline coefficients are rational.  The verifier checks exact matching of $g$ and $g'$ at every interface.

\paragraph{Profile inequalities.}
On every interval on which $g$ is smooth, define
\[
\begin{aligned}
    Q_0&:=g,
    &
    Q_1&:=g+z,
    &
    Q_2&:=\cL g+\Lambda g,
\\
    Q_3&:=-g-\cL g,
    &
    Q_4&:=\mathfrak r,
    &
    Q_5&:=\cL\mathfrak r+\lambda\mathfrak r,
\end{aligned}
\qquad
    \mathfrak r:=Cg-(g')^2.
\]
Thus the pointwise assertions in \eqref{eq:profile-inequalities} are exactly
\[
    Q_0>0,
    \qquad
    Q_j\ge0
    \quad (1\le j\le5).
\]

For each spline piece, an affine change of variables maps its domain to $[0,1]$.  The verifier converts every $Q_j$ exactly to Bernstein form. Nonnegative Bernstein coefficients imply nonnegativity on the whole interval; when this test is inconclusive, the interval is subdivided at a rational midpoint and the test is repeated.  Since all input coefficients are rational, this part of the verification uses exact rational arithmetic.

At an interface $\xi$ where $\mathfrak r'$ is discontinuous, the verifier also checks
\begin{equation}
    [\mathfrak r']_\xi
    :=
    \mathfrak r'(\xi+)-\mathfrak r'(\xi-)
    \ge0.
\label{eq:upper-jump-check}
\end{equation}
This is the sign required by the generalized It\^o--Tanaka formula in the proof of \cref{prop:zero-upper}.

For the left tail, put $x=-z$ and then $y=x^{-1/2}$, so $0\le y\le1/2$.  After substitution into the residual inequalities and multiplication by positive monomials, each required sign reduces to the sign of a rational polynomial in $y$.  Recursive Bernstein subdivision verifies these signs on the entire interval $[0,1/2]$, and therefore for all $z\le-4$.

For the right tail, write again
\[
    w=Ae^{-(z^2-16)/2}.
\]
The profile inequalities reduce to rational estimates in $z$ and $w$.  The only residual requiring a nonlocal tail estimate has the exact form
\begin{equation}
    \cL\mathfrak r(z)+\lambda\mathfrak r(z)
    =
    C(\lambda-1)zw
    +
    \sum_{k=2}^{6}p_k(z)w^k,
\label{eq:right-tail-remainder-decomposition}
\end{equation}
where every $p_k$ is a rational polynomial of degree at most six. For every monomial $z^mw^k$ appearing in the remainder, division by the positive leading factor $zw$ gives
\[
    \frac{|z^mw^k|}{zw}
    \le
    \begin{cases}
        4^{m-1}A^{k-1},&m\ge1,\\[1mm]
        A^{k-1}/4,&m=0,
    \end{cases}
    \qquad z\ge4.
\]
Indeed, each function $z^{m-1}w^{k-1}$ that occurs here is decreasing on $[4,\infty)$.  The verifier sums these rational upper bounds and checks that the resulting bound on the remainder is strictly smaller than $C(\lambda-1)$.  Hence \eqref{eq:right-tail-remainder-decomposition} is positive on the whole half-line $[4,\infty)$, rather than only at sampled points.

The tail estimates also prove that $g'$ is bounded and that $g$ has at most linear growth.  As a consistency check, the leading left-tail behavior $g(z)\sim a|z|$ together with
\[
    -\Lambda g\le\cL g\le-g
\]
forces
\[
    \frac32\le a\le\Lambda+\frac12,
\]
and the rational value of $a$ in the certificate lies in this interval.

To connect the interface checks with the stochastic argument, recall that for a continuous piecewise $C^2$ function $f$ and a continuous semimartingale $Z$,
\begin{align}
    f(Z_t)
    &=
    f(Z_0)
    +
    \int_0^t f'(Z_s)\,dZ_s
    +
    \frac12
    \int_0^t f''(Z_s)\,d\langle Z\rangle_s
    \notag\\
    &\quad
    +
    \frac12
    \sum_\xi
    [f']_\xi L_t^\xi(Z).
\label{eq:piecewise-ito-tanaka}
\end{align}
Applied to $\mathfrak r$, condition \eqref{eq:upper-jump-check} makes every local-time contribution nonnegative.  The localization used in \cref{prop:zero-upper} may therefore be removed using the growth bounds verified above.

\paragraph{The integral inequality.}
It remains to verify parts~(iv)--(v) of \cref{lem:certified-profile}.  Set
\[
    K:=\frac{R_0}{\lambda-2},
    \qquad
    A_0:=\mathcal E_+-CG_0,
    \qquad
    A_1:=-\left(G_0^2-CG_0-K\right).
\]
For $0\le s\le1$, the coefficient signs in \eqref{eq:vbar}, together with the exact rational inequalities
\[
    2\Lambda-1\le\frac{25}{24},
    \qquad
    \lambda-1\ge\frac32,
\]
give
\begin{equation}
    \mathcal E_+-\overline v(s)
    \ge
    A_0+A_1s+\gamma^2s^{25/24}-Ks^{3/2}.
\label{eq:barrier-rational-envelope}
\end{equation}
With the substitution $s=x^{24}$, the right-hand side becomes the rational polynomial
\[
    P(x)
    =
    A_0+A_1x^{24}+\gamma^2x^{25}-Kx^{36},
    \qquad 0\le x\le1.
\]
The certificate contains a rational partition of $[0,1]$.  On every cell, the verifier proves a positive rational lower bound for $P$ and a rational lower bound for its square root.  Summing the corresponding lower bounds for the transformed integral proves both
\[
    \mathcal E_+-\overline v(s)>0
    \qquad (0\le s\le1)
\]
and
\[
    \int_0^1
    \sqrt{\mathcal E_+-\overline v(s)}\,ds
    >
    \frac{2G_0}{3},
\]
without relying on floating-point quadrature.

\begin{proposition}
\label{prop:upper-certificate-soundness}
If the upper-bound verifier accepts the supplied certificate, then every assertion of \cref{lem:certified-profile} holds.  Consequently,
\[
    \Estar(0)\le\frac{243}{125}=1.944.
\]
\end{proposition}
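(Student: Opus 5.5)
The plan is to treat \cref{prop:upper-certificate-soundness} as a soundness statement. Acceptance by the verifier certifies finitely many exact rational facts. We will show that these facts imply each item (i)--(v) of \cref{lem:certified-profile}, and then quote \cref{prop:zero-upper}. The second assertion, $\Estar(0)\le 243/125$, is then immediate, since \cref{prop:zero-upper} uses only the conclusions of \cref{lem:certified-profile}. Everything therefore reduces to checking that each verifier test is logically sufficient for the corresponding analytic claim on the whole real line.

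For item (i) we split $\R$ into the spline region $[-4,4]$ and the two tails.

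\textbf{Spline region.} On each spline piece $g$ is a rational polynomial. Hence $Q_0,\dots,Q_5$ are rational polynomials: $\cL$ multiplies $g'$ by the polynomial $z/2+g$, and $\mathfrak r=Cg-g'^2$ is polynomial. The standard Bernstein enclosure property applies: a polynomial on $[0,1]$ is bounded below by the minimum of its Bernstein coefficients. Subdivision preserves this, since de Casteljau gives exact Bernstein forms on subintervals. Nonnegative coefficients on every leaf of the subdivision tree therefore prove $Q_j\ge0$ on the piece. For $Q_0$ we need strict positivity, which follows from strictly positive coefficients.

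\textbf{Left tail.} We use the substitution $y=(-z)^{-1/2}\in(0,1/2]$. This is a bijection onto $z\le-4$, and the multipliers are positive monomials, so signs transfer exactly.

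\textbf{Right tail.} Positivity of $Q_0,Q_1$ and the inequalities for $Q_2,Q_3,Q_4$ reduce to rational estimates in $(z,w)$ with $0<w\le A$. For $Q_5$ we use decomposition \eqref{eq:right-tail-remainder-decomposition}. The needed monotonicity is elementary: $\frac{d}{dz}\log(z^{m-1}w^{k-1})=(m-1)/z-(k-1)z<0$ for $z\ge4$, $k\ge2$ and $m\le7$. This gives the termwise bound, so the verified inequality $\sum|{\rm bounds}|<C(\lambda-1)$ yields $Q_5>0$.

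\textbf{Global regularity.} We check that $g$ is globally $C^1$, piecewise $C^3$, and has bounded $g'$ and linear growth. Exact matching of $g,g'$ at the interfaces gives the $C^1$ property. The tail formulas give the remaining two properties directly: on the left $g'\to -a$, and on the right $w$ decays superexponentially.

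For item (ii), $\mathfrak r'$ is explicit on each side of every interface, so the jump $[\mathfrak r']_\xi$ is an exact rational number whose sign is checked. Item (iii) consists of exact rational comparisons, since $\gamma$ is rational and $\mathfrak r(0)$ is a rational polynomial value.

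For items (iv)--(v) the key step is the envelope \eqref{eq:barrier-rational-envelope}. It rests on the following facts for $s\in[0,1]$:
\begin{itemize}
\item $s^{2\Lambda-1}\le s^{25/24}$, because $2\Lambda-1\le25/24$; the term $-\gamma^2 s^{2\Lambda-1}$ enters $\mathcal E_+-\overline v$ with a plus sign, so this lowers it correctly;
\item $s^{\lambda-1}\le s^{3/2}$, because $\lambda-1\ge3/2$, and $K>0$;
\item the linear coefficient is exactly $A_1$.
\end{itemize}
After the substitution $s=x^{24}$, with $ds=24x^{23}dx$, the integral becomes $\int_0^1 24x^{23}\sqrt{P(x)}\,dx$. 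On each cell $[x_j,x_{j+1}]$ with certified $P\ge p_j>0$, we bound the integrand below by $24x_j^{23}\sqrt{p_j}$, using a rational lower square root. The positive lower bounds $p_j$ give $\mathcal E_+-\overline v(s)>0$ throughout, which is (iv), and summing the cell bounds gives (v).

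The main obstacle is the right tail. There the residual is only exponentially small and not polynomial, so soundness depends on the exact form of the decomposition \eqref{eq:right-tail-remainder-decomposition} and on the monotonicity used to bound each term uniformly on the unbounded half-line. We would verify that decomposition symbolically first.
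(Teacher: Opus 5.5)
Your proposal is correct and follows the paper's soundness argument step for step: Bernstein certificates on the spline pieces, the $y=(-z)^{-1/2}$ and $(z,w)$ reductions on the tails together with the monotone monomial bound for \eqref{eq:right-tail-remainder-decomposition}, exact rational interface and point checks, the envelope \eqref{eq:barrier-rational-envelope} with $s=x^{24}$ and a rational lower sum, and then \cref{prop:zero-upper}. One slip needs fixing: for $0\le s\le1$, the inequality $2\Lambda-1\le 25/24$ gives $s^{2\Lambda-1}\ge s^{25/24}$, not $\le$ as you wrote, and it is this direction that makes replacing $\gamma^2s^{2\Lambda-1}$ by $\gamma^2s^{25/24}$ lower $\mathcal E_+-\overline v$, which is the conclusion you (correctly) draw.
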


\begin{proof}
The exact spline, tail, matching, and interface checks prove parts~(i)--(iii) of \cref{lem:certified-profile}.  The polynomial lower bounds and rational lower sum above prove parts~(iv)--(v).  Thus the whole lemma follows from acceptance of the certificate.  The final inequality is then \cref{prop:zero-upper}.
\end{proof}

\subsection{Verification of the zero-margin lower bound}
\label{app:lower-certificate}

The lower-bound certificate consists of the $68$ rational interval lengths and strictly increasing rational weights from \cref{lem:weighted-certificate}, together with the numerical data used to propagate rigorous upper bounds through the recursion \eqref{eq:hjb-terminal}--\eqref{eq:hjb-recursion}.  The optimization that found these weights is not used by the proof.

The verifier works on a uniform rational grid
\[
    x_j=-R+jh
    \qquad (-R\le x_j\le R).
\]
At every stage it stores upper bounds for the exact values $H_i(x_j)$. The main issue is to extend nodewise bounds to the continuum before applying the next heat convolution.

For $d>0$, write
\[
    \ph_d(x)
    :=
    \frac{1}{\sqrt{2\pi d}}
    e^{-x^2/(2d)}
\]
for the density of $N(0,d)$.

\begin{lemma}
\label{lem:heat-curvature}
If $F:\R\to[0,1]$ is nondecreasing and $H=P_dF$, then
\[
    \norm{H''}_\infty
    \le
    \frac{1}{\sqrt{2\pi e}\,d}.
\]
\end{lemma}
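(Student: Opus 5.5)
We express $H'$ as the convolution of the Gaussian kernel with the Stieltjes measure $dF$, so that one derivative falls on the step function and the other on the kernel. Since $F$ is nondecreasing with values in $[0,1]$, it has a distributional derivative $\mu=dF$, a nonnegative Borel measure of total mass $\mu(\R)=F(+\infty)-F(-\infty)\le1$. Writing $H(x)=\int F(y)\ph_d(x-y)\,dy$ and integrating by parts, which is justified by Gaussian decay and the boundedness of $F$, we get
\[
    H'(x)=\int\ph_d(x-y)\,\mu(dy),
    \qquad
    H''(x)=\int\ph_d'(x-y)\,\mu(dy).
\]
Differentiation under the integral sign is allowed because $\ph_d'$ is bounded and $\mu$ is finite.

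From the second representation, positivity of $\mu$ gives
\[
    |H''(x)|\le\norm{\ph_d'}_\infty\,\mu(\R)\le\norm{\ph_d'}_\infty .
\]
The remaining step is a direct calculus computation. We have $\ph_d'(x)=-(x/d)\ph_d(x)$, so $|\ph_d'(x)|=\frac{|x|}{d\sqrt{2\pi d}}e^{-x^2/(2d)}$. This expression is maximized at $|x|=\sqrt d$, where it equals
\[
    \frac{\sqrt d}{d\sqrt{2\pi d}}e^{-1/2}=\frac1{\sqrt{2\pi e}\,d}.
\]
This is exactly the claimed bound.

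The only point needing care is the integration by parts when $F$ is merely monotone and possibly discontinuous. We can handle it by writing $F(y)=F(-\infty)+\mu((-\infty,y])$ up to a null set and applying Fubini:
\[
    \int\mu((-\infty,y])\,\ph_d'(x-y)\,dy=\int\ph_d(x-z)\,\mu(dz).
\]
Alternatively, we can differentiate the representation $H'(x)=\int\ph_d(x-y)\,\mu(dy)$ directly. Either route is routine, so there is no real obstacle; the bound is sharp only up to the use of $\mu(\R)\le1$.
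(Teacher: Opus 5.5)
Your proof is correct and follows the paper's argument: you write $H''=\ph_d'\ast\mu$ with $\mu=dF$ a positive measure of total mass at most one, and bound $|H''|$ by $\norm{\ph_d'}_\infty$. You go slightly beyond the paper by justifying the integration by parts via Fubini and by computing explicitly that the maximum of $|\ph_d'|$, attained at $|x|=\sqrt d$, equals $1/(\sqrt{2\pi e}\,d)$; the paper leaves both steps implicit.
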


\begin{proof}
The distributional derivative $\mu=dF$ is a positive measure of total mass at most one.  Since
\[
    H''=\ph_d'\ast\mu,
\]
we have
\[
    |H''(x)|
    \le
    \norm{\ph_d'}_\infty
    =
    \frac{1}{\sqrt{2\pi e}\,d}.
\]
\end{proof}

Suppose that $U_j\ge H(x_j)$ and $\norm{H''}_\infty\le M$.  For
\[
    x=(1-\theta)x_j+\theta x_{j+1},
    \qquad 0\le\theta\le1,
\]
the linear-interpolation error gives
\begin{equation}
    H(x)
    \le
    (1-\theta)U_j+\theta U_{j+1}
    +
    \frac{Mh^2}{8}.
\label{eq:lower-interpolation-envelope}
\end{equation}
Put
\[
    c:=\frac{Mh^2}{8},
    \qquad
    \widetilde U_j:=U_j+c,
    \qquad
    \widetilde U_{j+1}:=U_{j+1}+c.
\]
Then
\[
    H(x)
    \le
    (1-\theta)\widetilde U_j
    +
    \theta\widetilde U_{j+1}.
\]

For the supplied certificate, every exponent
\[
    r_i:=\frac{a_{i+1}}{a_i}
\]
is greater than one.  Hence $x\mapsto x^{r_i}$ is increasing and convex on $[0,1]$.  Whenever both corrected endpoints are at most one,
\begin{align}
    H(x)^{r_i}
    &\le
    \left(
        (1-\theta)\widetilde U_j
        +
        \theta\widetilde U_{j+1}
    \right)^{r_i}
    \notag\\
    &\le
    (1-\theta)\widetilde U_j^{r_i}
    +
    \theta\widetilde U_{j+1}^{r_i}.
\label{eq:lower-power-envelope}
\end{align}
If a corrected endpoint exceeds one, the verifier uses the trivial upper bound $1$ instead.  On the two exterior half-lines it uses monotonicity on the left and $0\le H_i\le1$ on the right.  Thus the nodewise data determine a global piecewise-affine upper bound for $H_{i+1}^{r_i}$.

\paragraph{Exact convolution of a piecewise-affine upper bound.}
Consider a grid cell $[x_{k-1},x_k]$ whose affine upper bound takes values $L_k$ and $R_k$ at the two endpoints.  When evaluating its heat convolution at $x_j$, put $q=k-j$ and define
\begin{align*}
    Q_q(d)
    &:=
    \Ph\!\left(\frac{qh}{\sqrt d}\right)
    -
    \Ph\!\left(\frac{(q-1)h}{\sqrt d}\right),
\\
    M_q(d)
    &:=
    \frac d h
    \left(
        \ph_d((q-1)h)
        -
        \ph_d(qh)
    \right),
\\
    A_q(d)
    &:=
    qQ_q(d)-M_q(d),
\\
    B_q(d)
    &:=
    M_q(d)-(q-1)Q_q(d).
\end{align*}
Direct integration of the two affine basis functions gives the exact cell contribution
\begin{equation}
    L_kA_q(d)+R_kB_q(d).
\label{eq:exact-affine-cell-convolution}
\end{equation}
Both coefficients are nonnegative because they are integrals of nonnegative barycentric coordinates against the Gaussian kernel.  The two exterior constant pieces contribute Gaussian tail probabilities.  It follows that upward rounding of every endpoint value, convolution coefficient, and exterior-tail contribution preserves the upper-bound direction.

The verifier stores all nodewise upper bounds on a dyadic lattice. Gaussian tails, Gaussian densities, powers, and the final logarithm are evaluated with directed MPFR rounding.  The nonnegative integer convolution corresponding to \eqref{eq:exact-affine-cell-convolution} is computed exactly by zero-padded number-theoretic transforms and Chinese remaindering. At startup, the verifier checks that the transform lengths are supported and that the product of the CRT moduli exceeds a worst-case coefficient bound. These implementation choices affect running time but not the induction establishing soundness.

\begin{proposition}
\label{prop:lower-certificate-soundness}
If the lower-bound verifier accepts the supplied certificate, then at every stage $i$ and every grid point $x_j$, its stored value is an upper bound for the exact value $H_i(x_j)$.  Consequently,
\[
    -2a_0\log H_0(0)
    >
    \frac{173639}{100000},
\]
and therefore
\[
    \Estar(0)
    >
    \frac{173639}{100000}.
\]
\end{proposition}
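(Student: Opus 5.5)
The proof is an induction over the stages of the backward recursion \eqref{eq:hjb-terminal}--\eqref{eq:hjb-recursion}, from $i=m-1$ down to $i=0$. The inductive claim is that every stored grid value $U_i(x_j)$ satisfies $U_i(x_j)\ge H_i(x_j)$. Once this holds at $i=0$, \cref{lem:hjb-monotone-envelopes} turns the verified bound $H_0(0)\le U_0(0)$ into $V_0(\eta)\ge-2a_0\log U_0(0)$. The verifier's final directed-rounding check then gives the strict inequality $-2a_0\log U_0(0)>173639/100000$. Since the certificate weights satisfy $\sum\Delta_ia_i=1$, the function $\eta$ is an admissible weight, and \cref{lem:weighted-lower-bound} together with \cref{thm:weighted-recursion} yields $\Estar(0)\ge V_0(\eta)>173639/100000$.

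For the base case, $H_{m-1}=P_{\Delta_{m-1}}\1_{[0,\infty)}=\Ph(x/\sqrt{\Delta_{m-1}})$ is given in closed form. The verifier evaluates it at each node with upward MPFR rounding, so the stored values are upper bounds.

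For the inductive step, suppose $U_{i+1}(x_j)\ge H_{i+1}(x_j)$ at all nodes. I would extend these nodewise bounds to a global upper bound in four steps.

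First, I need a curvature bound for $H_{i+1}$. It equals $P_{\Delta_{i+1}}F$ with $F=H_{i+2}^{r_{i+1}}$, or $F=\1_{[0,\infty)}$ at the last stage. By \cref{lem:hjb-monotone-envelopes}, $F$ is nondecreasing with values in $[0,1]$, so \cref{lem:heat-curvature} gives $\norm{H_{i+1}''}_\infty\le 1/(\sqrt{2\pi e}\,\Delta_{i+1})$. The verifier uses this value, rounded up, as $M$.

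Second, the interpolation estimate \eqref{eq:lower-interpolation-envelope} with the shift $c=Mh^2/8$ gives an affine upper bound for $H_{i+1}$ on each grid cell.

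Third, I pass this bound through the power map. The certificate has $r_i>1$, so $x\mapsto x^{r_i}$ is increasing and convex on $[0,1]$, and \eqref{eq:lower-power-envelope} gives an affine upper bound for $H_{i+1}^{r_i}$ on each cell. Where a corrected endpoint exceeds one, the verifier caps the value at $1$, which is valid because $H_{i+1}\le1$.

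Fourth, I handle the two exterior half-lines. On the right, I use the bound $1$. On the left, monotonicity gives $H_{i+1}(x)\le H_{i+1}(-R)\le U_{i+1}(-R)$, so the leftmost corrected value works as a constant bound. This is exactly the extension-by-$1$ device of \cref{lem:hjb-monotone-envelopes}.

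These pieces give a global piecewise-affine function $\Phi_i\ge H_{i+1}^{r_i}$. Since the heat semigroup is positive, $P_{\Delta_i}\Phi_i\ge H_i$ everywhere.

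It remains to show that the computed $U_i(x_j)$ dominates $P_{\Delta_i}\Phi_i(x_j)$. By \eqref{eq:exact-affine-cell-convolution}, $P_{\Delta_i}\Phi_i(x_j)$ is a sum of terms $L_kA_q+R_kB_q$ plus the exterior tail terms. All coefficients are nonnegative, since they are Gaussian integrals of nonnegative barycentric weights. The endpoint values are nonnegative as well. Therefore rounding every factor upward (endpoint values, the coefficients $A_q$ and $B_q$, and the tail masses) can only increase the total.

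The integer convolution itself is exact. This requires the startup check that the product of the CRT moduli exceeds the worst-case coefficient bound, so the number-theoretic transforms return the true integer sums rather than residues. Rescaling from the dyadic lattice is then exact, with upward rounding applied at storage. This completes the induction.

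I expect the main obstacle to be bookkeeping rather than mathematics: every numerical quantity must be rounded in the direction that keeps the inequality an upper bound. The curvature constant, the power evaluations, the Gaussian coefficients and tails, and the final logarithm all need upper rounding of $U$, and the final logarithm must be rounded so that $-2a_0\log U_0$ is a lower bound. The argument must also check that the corrected endpoints never leave the domain $[0,1]$ where convexity of $x^{r_i}$ is used. The capping at $1$ takes care of this.
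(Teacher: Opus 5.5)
Your proposal is correct and follows the paper's proof essentially step for step: backward induction from the directed-rounded base case $\Ph(x/\sqrt{\Delta_{m-1}})$, the curvature bound of \cref{lem:heat-curvature} applied with $d=\Delta_{i+1}$, the interpolation and convexity envelopes together with the exterior bounds, the exact affine-cell convolution with nonnegative coefficients and upward rounding, and finally the directed logarithm combined with \cref{lem:weighted-lower-bound,thm:weighted-recursion}. One small precision concerns a corrected endpoint that exceeds one. Either use the constant $1$ on that whole cell, as the paper does, or keep the uncapped endpoint values, which is also valid because $x\mapsto x^{r_i}$ is increasing and convex on all of $[0,\infty)$, so no restriction to $[0,1]$ is needed. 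Capping only that one endpoint at $1$ and then interpolating is not justified by $H_{i+1}\le1$ alone.
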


\begin{proof}
At the terminal stage, directed evaluation of the Gaussian distribution function gives valid upper bounds at all grid points.

Assume inductively that the stored values at stage $i+1$ are upper bounds for $H_{i+1}$ at the grid points. \Cref{lem:heat-curvature,eq:lower-interpolation-envelope} extend these nodewise bounds to an upper bound between adjacent nodes. Convexity of the power map, together with the exterior bounds, then gives a global piecewise-affine upper bound for
\[
    H_{i+1}^{a_{i+1}/a_i}.
\]
Formula \eqref{eq:exact-affine-cell-convolution}, the nonnegativity of its coefficients, and directed rounding produce valid nodewise upper bounds for $H_i=P_{\Delta_i}(H_{i+1}^{a_{i+1}/a_i})$. Backward induction therefore gives
\[
    H_0(0)\le U_0,
\]
where $U_0$ is the final value accepted by the verifier.

Using the exact rational value of $a_0$ and a directed lower bound for $-\log U_0$, the verifier establishes
\[
    -2a_0\log U_0
    >
    \frac{173639}{100000}.
\]
Since $-\log$ is decreasing,
\[
    -2a_0\log H_0(0)
    \ge
    -2a_0\log U_0
    >
    \frac{173639}{100000}.
\]
Combining this with \cref{lem:weighted-lower-bound,thm:weighted-recursion} proves the claimed lower bound for $\Estar(0)$.
\end{proof}

\subsection{Artifact and reproducibility}
\label{app:artifact}

The artifact is available at \url{https://github.com/ainta/exact-online-threshold-artifact}. A single script runs both verifiers; acceptance, combined with \cref{prop:upper-certificate-soundness,prop:lower-certificate-soundness}, establishes the two bounds in \eqref{eq:zero-E}.

\end{document}